\title{A geometric approach for the upper bound theorem for Minkowski
  sums of convex polytopes}
\author{Menelaos I. Karavelas$^{1,2}$
\hfil
Eleni Tzanaki$^{2}$\\[5pt]
\it{}$^1$Department of Mathematics \& Applied Mathematics\\
\it{}University of Crete\\
\it{}GR-700 13 Voutes, Heraklion, Greece\\[5pt]
\it{}$^2$Institute of Applied and Computational Mathematics,\\
\it{}Foundation for Research and Technology -- Hellas,\\
\it{}P.O. Box 1385, GR-711 10 Heraklion, Greece\\[5pt]
{\small\texttt{\{mkaravel,etzanaki\}@uoc.gr}}\\[5pt]}
\newcommand{\reals}{\mathbb{R}}
\newcommand{\naturals}{\mathbb{N}}
\newcommand{\cC}{\mathcal{C}}
\newcommand{\wW}{\mathcal{W}}
\newcommand{\fF}{\mathcal{F}}
\newcommand{\gG}{\mathcal{G}}
\newcommand{\qQ}{\mathcal{Q}}
\newcommand{\yY}{\mathcal{Y}}
\newcommand{\kK}{\mathcal{K}}
\newcommand{\zZ}{\mathcal{Z}}
\newcommand{\str}[2]{\mathrm{star}(#1,\allowbreak{}#2)}  %star 
\newcommand{\Sl}[1]{\mathsf{S}(#1)}  %shelling
\newcommand{\CC}{\mathscr{C}}  %polytopal complex
\newcommand{\Wavg}{\overline{W}}
\newcommand{\arrow}{\raisebox{-5pt}%
  {\includegraphics[height=10 pt]{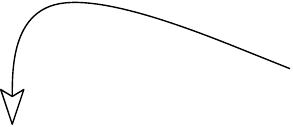}}}
\newcommand{\UU}{\mathscr{U}}
\newcommand{\VV}{\mathscr{V}}
\renewcommand{\to}{\rightarrow}
\newcommand{\conv}{{\rm conv}}
\newcommand{\sm}{\setminus}
\newcommand{\mb}[1]{{\boldsymbol{#1}}}
\newcommand{\me}{\mb{e}}
\newcommand{\mbv}{\mb{v}}
\newcommand{\mbu}{\mb{u}}
\newcommand{\mc}{\mb{\gamma}}
\newcommand{\lexp}[1]{\lfloor\frac{#1}{2}\rfloor}
\newcommand{\ltexp}[1]{\lfloor\tfrac{#1}{2}\rfloor}
\newcommand{\pddo}{(d-r+1)}
\newcommand{\bx}[1]{\partial{#1}}
\newcommand{\ptn}[1][V]{\mathsf{#1}}
\newcommand{\range}[2][r]{{#2}_1,\allowbreak{#2}_2,\allowbreak\ldots,\allowbreak{#2}_{#1}}
\newcommand{\MS}[1][P]{{#1}_{[r]}}
\newcommand{\GVD}{\text{GVD}}
\newcommand{\mesa}{\text{relint}}
\newcommand{\stel}[2]{\mathrm{st}(#1,#2)}
\newcommand{\z}{ {\color{black}{\zeta}}}
\newcommand{\hp}{$h$-polynomial } 
\newcommand{\fp}{$f$-polynomial } 
\newcommand{\h}[1]{ {\mathtt h}({#1};t)} 
\newcommand{\f}[1]{ {\mathtt f}({#1};t)}
\newcommand{\gmt}[3]{ {\mathtt g^{\tt (#1)}}({#2};{#3})} 
\newcommand{\hh}[1]{ {\mathtt h'}({#1};t)} 
\newcommand{\htt}[2]{ {\mathtt h}({#1};{#2})} 
\newcommand{\ft}[2]{ {\mathtt f}({#1};{#2})}
\newcommand{\Sgen}[2]{ {\mathtt S_{#1}}({#2}) }
\newcommand{\E}[2]{ {\mathtt E_{#1}}({#2})} 
\newcommand{\EE}[2]{ {\mathtt E'_{#1}}({#2})} 
\newcommand{\fs}{}
\newcommand{\Sh}{{\mathsf S}}
\newcommand{\SA}{{\mathsf A}}
\newcommand{\SB}{{\mathsf B}}
\newcommand{\dgr}[1]{\VV^{\scalebox{0.6}{$\triangle$}}(#1)}
\newcommand{\stirl}[2]{{\mathsf{}S}_{#1}^{#2}}
\newcommand{\eul}[2]{{\mathsf{}E}_{#1}^{#2}}
\newcommand{\D}[4]{{\mathsf D}(#1,#2,#3,#4)}
\newcommand{\0}{ \color{black!22}{0}\color{black}  }
\newcommand{\cd}{ \color{black!22}\cdots \color{black}  }
\newcommand{\dd}{ \color{black!22} \ddots \color{black}  }
\newcommand{\vd}{ \color{black!22} \vdots \color{black}  }
  \newcommand\undermat[3][0pt]{%
  \makebox[0pt][l]{$\smash{\underbrace{\phantom{%
    \begin{matrix}\phantom{\rule{-10pt}{#1}}#3\end{matrix}}}_{\text{#2}}}$}#3}
\newcommand{\De}[2]{ \Delta_{ \scalebox{0.7}{$#1$}}(#2)}
\newcommand{\y}[2]{ y_{ \scalebox{0.8}{$#1,#2$}}}
\newcommand{\zz}[1]{ \color{blue}{\zeta^{#1}} \color{black}}
\newcommand{\ye}[2]{ y_{ \scalebox{1}{\hspace{-0.2 cm}  
$_{_\epsilon}$}\scalebox{0.8}{\hspace{-0.1 cm} ${#1},{#2}$}}}
\newcommand{\tubh}[3][0]{\Phi_{#2,d}^{({#1})}(\mb{n}_{#3})}
\newcommand{\tubk}[2]{\Psi_{#1,d}(\mb{n}_{#2})}
\newcommand{\spans}[2]{\Xi_{#1,d}(\mb{n}_{#2})}
\newcommand{\spsa}{\mathcal{A}}
\newcommand{\spsb}{\mathcal{B}}
\newcommand{\spsac}{{\sf{}A}}
\newcommand{\spsbc}{{\sf{}B}}
\newtheorem{theorem}{Theorem}
\newtheorem{corollary}[theorem]{Corollary}
\newtheorem{definition}[theorem]{Definition}
\newtheorem{lemma}[theorem]{Lemma}
\begin{document}

\maketitle

\begin{abstract}
  We derive tight expressions for the maximum number of $k$-faces,
$0\le{}k\le{}d-1$, of the Minkowski sum, $P_1+\cdots+P_r$, of $r$
convex $d$-polytopes $P_1,\ldots,P_r$ in $\reals^d$, where
$d\ge{}2$ and $r<d$, as a (recursively defined) function on the
number of vertices of the polytopes.
Our results coincide with those recently proved by Adiprasito and
Sanyal \cite{as-ubt-14}. In contrast to Adiprasito and Sanyal's
approach, which uses tools from Combinatorial Commutative Algebra,  
our approach is purely geometric and uses
basic notions such as $f$- and $h$-vector calculus and shellings,
and generalizes the methodology used in \cite{kt-mnfms-12} and
\cite{kkt-mnfms-13} for proving upper bounds on the $f$-vector of
the Minkowski sum of two and three convex polytopes, respectively.
The key idea behind our approach is to express the Minkowski sum
$P_1+\cdots+P_r$ as a section of the Cayley polytope $\cC$ of the
summands; bounding the $k$-faces of $P_1+\cdots+P_r$ reduces to
bounding the subset of the $(k+r-1)$-faces of $\cC$ that contain
vertices from each of the $r$ polytopes.
We end our paper with a sketch of an explicit construction that
establishes the tightness of the upper bounds.

\end{abstract}

%\clearpage

\setcounter{equation}{0}
\numberwithin{equation}{section}
\numberwithin{theorem}{section}

%
%%%%%%%%%%%%%%%%%%%%%%%%%%%%%%%%%%%%%%%%%%%%%%%%%%%%%%%%%%%%%%%%%%
%%%%%%%%%%%%%%%%%%%%%%%%%%%%%%%%%%%%%%%%%%%%%%%%%%%%%%%%%%%%%%%%%%
% introduction
%%%%%%%%%%%%%%%%%%%%%%%%%%%%%%%%%%%%%%%%%%%%%%%%%%%%%%%%%%%%%%%%%%
%%%%%%%%%%%%%%%%%%%%%%%%%%%%%%%%%%%%%%%%%%%%%%%%%%%%%%%%%%%%%%%%%%
\section{Introduction}
\label{sec:intro}

Given two sets $A$ and $B$ in $\reals^d$, $d\ge{}2$, their Minkowski
sum $A+B$ is the set $\{a+b\mid{}a\in{}A, b\in{}B\}$. The Minkowski
sum definition can be extended naturally to any number of summands:
$A_{[r]}:=A_1+A_2+\cdots+A_r=\{a_1+a_2+\cdots+a_r\mid{}a_i\in{}A_i,1\le{}i\le{}r\}$.
Minkowski sums have a wide range of applications, including
algebraic geometry, computational commutative algebra, collision
detection, computer-aided design, graphics, robot motion planning and
game theory, just to name a few
(see also \cite{as-ubt-14}, \cite{kkt-mnfms-13} and the references therein).

In this paper we focus on convex polytopes, and we are interested in
computing the worst-case complexity of their Minkowski sum.
More precisely, given $r$ $d$-polytopes $P_1,\ldots,P_r$
in $\reals^d$, we seek tight bounds on the number of $k$-faces
$f_k(\MS)$, $0\le{}k\le{}d-1$, of their Minkowski sum
$\MS:=P_1+P_2+\cdots+P_r$.
This problem, which can be seen as a generalization of the Upper Bound
Theorem (UBT) for polytopes \cite{m-mnfcp-70}, has a history of more than 20
years. Gritzmann and Sturmfels \cite{gs-mapcc-93} were the first to
consider the problem, and gave a complete answer to it, for any number
of $d$-polytopes in $\reals^d$, in terms of the number of non-parallel
edges of the $r$ polytopes. More than 10 years later, Fukuda and Weibel
\cite{fw-fmacp-07} proved tight upper bounds on the number of
$k$-faces of the Minkowski sum of two 3-polytopes, expressed either in
terms on the number of vertices or number of facets of the
summands. Fogel, Halperin, and Weibel \cite{fhw-emcms-09} extended one
of the results in \cite{fw-fmacp-07}, and expressed the number of
facets of the Minkowski sum of $r$ 3-polytopes in terms of the number
of facets of the summands. Quite recently Weibel \cite{w-mfmsl-12}
provided a relation for the number of $k$-faces of the Minkowski sum
of $r\ge{}d$ summands in terms of the $k$-faces of the Minkowski sums
of subsets of size $d-1$ of these summands. This result should be
viewed in conjunction with a result by Sanyal \cite{s-tovnm-09} stating that
the number of vertices of the Minkowski sum of $r$ $d$-polytopes,
where $r\ge{}d$, is strictly less than the product of the vertices of
the summands (whereas for $r\le{}d-1$ this is indeed possible).
About 3 years ago, the authors of this paper proved the first tight
upper bound on the number of $k$-faces for the Minkowski sum of two
$d$-polytopes in $\reals^d$, for any $d\ge{}2$ and for all $0\le{}k\le{}d-1$
(cf. \cite{kt-mnfms-12}), a result which was subsequently extended
to three summands in collaboration with Konaxis
(cf. \cite{kkt-mnfms-13}).

In a recent paper, Adiprasito and Sanyal
\cite{as-ubt-14} provide the complete resolution of the \emph{Upper Bound
Theorem for Minkowski sums (UBTM)}. In particular, they show that
there exists, what they call, a \emph{Minkowski-neighborly} family of
$r$ $d$-polytopes $N_1,\ldots,N_r$, with $f_0(N_i)=n_i$,
$1\le{}i\le{}r$, such that for any $r$ $d$-polytopes
$P_1,P_2,\ldots,P_r\subset{}\reals^d$ with $f_0(P_i)=n_i$, $1\le{}i\le{}r$, 
$f_k(\MS)$ is bounded by above by $f_k(\MS[N])$, for all $0\le{}k\le{}d-1$.
The majority of the arguments in the UBTM proof by Adiprasito and
Sanyal make use of powerful tools from Combinatorial Commutative
Algebra. The high-level layout of the proof is analogous to
McMullen's proof of the UBT, as well as the proofs of the UBTM in
\cite{kt-mnfms-12} and \cite{kkt-mnfms-13} for two and three
summands, respectively:
\begin{enumerate}[1.]
\item\label{step:cayley}
  Consider the Cayley polytope $\cC\subset\reals^{d+r-1}$ of the $r$
  polytopes $P_1,P_2,\ldots,P_r$, and identify their Minkowski sum as
  a section of $\cC$ with an appropriately defined $d$-flat $\Wavg$.
  Let $\fF\subset\reals^{d+r-1}$ be the faces of $\cC$ that intersect
  $\Wavg$, and let $\kK$ be the closure of $\fF$ under subface inclusion
  ($\kK$ is a $(d+r-1)$-polytopal complex).
  By the Cayley trick, there is a bijection between the faces of $\fF$ and
  the faces of $P_{[r]}$; as a result, to bound the number
  of faces of $P_{[r]}$ it suffices to bounds the number of
  faces of $\fF$.
\item\label{step:DS}
  Define the $h$-vector $\mb{h}(\fF)$ of $\fF$, and prove the
  Dehn-Sommerville equations for $\mb{h}(\fF)$, relating its elements
  to the elements of $\mb{h}(\kK)$.
\item\label{step:recurrence}
  Prove a recurrence relation for the elements of $\mb{h}(\fF)$.
\item\label{step:ubF}
  Use the recurrence relation above to prove upper bounds for
  $h_k(\fF)$, for all $0\le{}k\le{}\lexp{d+r-1}$.
\item\label{step:ubK}
  Prove upper bounds for $h_k(\kK)$, for all
  $0\le{}k\le\lexp{d+r-1}$.
\item\label{step:conditions}
  Provide necessary and sufficient
  conditions under which the elements of both $\mb{h}(\fF)$ and
  $\mb{h}(\kK)$ are maximized for all $k$. These conditions are
  conditions on the \emph{lower half} of the $h$-vector of $\fF$.
  Due to the relation between the $f$- and $h$-vectors of $\fF$, these
  are also conditions for the maximality of the elements of
  $\mb{f}(\fF)$.
\item\label{step:construction}
  Describe a family of polytopes for which the necessary and
  sufficient conditions hold; clearly, such a family establishes the
  tightness of the upper bounds.
\end{enumerate}
In Adiprasito and Sanyal's proof steps \ref{step:DS},
\ref{step:recurrence} and \ref{step:ubF} are proved by introducing a
powerful new theory that they call the \emph{relative Stanley-Reisner
  theory} for simplicial complexes. The focus of this theory is on
relative simplicial complexes, and is able to reveal properties of
such complexes not only under topological restrictions, but also
account for their combinatorial and geometric structure. To apply
their theory, Adiprasito and Sanyal consider the simplicial complex
$\kK$ and then define $\fF$ as a relative simplicial complex 
(they call them the Cayley and \emph{relative Cayley}
complex, respectively). They then apply their relative Stanley-Reisner theory to
$\fF$ to establish the Dehn-Sommerville equations of step
\ref{step:DS}, the recurrence relation of step \ref{step:recurrence}
and finally the upper bounds for $\mb{h}(\fF)$ in \ref{step:ubF}.
Steps \ref{step:ubK} and \ref{step:conditions} are done by clever algebraic
manipulation of the $h$-vectors of $\fF$ and $\kK$, by exploiting the
geometric properties of $\kK$, and by making use of the
recurrence relation in step \ref{step:recurrence}.
Step \ref{step:construction} is reduced to results by Matschke,
Pfeifle, and Pilaud \cite{mpp-pnp-11} and Weibel \cite{w-mfmsl-12}.

\emph{Our contribution.} In what follows, we provide a completely
geometric proof of the UBTM,
that generalizes the technique we used in \cite{kt-mnfms-12} and
\cite{kkt-mnfms-13} for two and three summands to the case of $r$
summands, when $r<d$.
Instead of relying on algebraic tools, we use basic notions from
combinatorial geometry, such as stellar subdivisions and shellings.
Our proof, in essence, differs from that of Adiprasito and Sanyal
in steps \ref{step:DS}, \ref{step:recurrence}, \ref{step:ubF} and
\ref{step:ubK} of the layout above (the remaining steps do not
use tools from Combinatorial Commutative Algebra anyway).

In more detail, to prove the various intermediate
results, towards the UBTM, we consider the Cayley polytope $\cC$ and we
perform a series of stellar subdivisions to get a simplicial polytope
$\qQ$. From the analysis of the combinatorial structure of $\qQ$, we
derive the Dehn-Sommerville equations of step \ref{step:DS} (see
Sections \ref{sec:construction} and \ref{sec:DS}), as well as the
recurrence relation of step \ref{step:recurrence} (see Section \ref{sec:rec}).
This recurrence relation is then used for establishing the upper bounds for
the elements of $\mb{h}(\fF)$ and $\mb{h}(\kK)$ (see Section \ref{sec:ub}).
We end with a construction similar to the one presented in
\cite[Theorem 2.6]{mpp-pnp-11}, that establishes the tightness
of the upper bounds (see Section \ref{sec:tbconstruction}).

%%%%%%%%%%%%%%%%%%%%%%%%%%%%%%%%%%%%%%%%%%%%%%%%%%%%%%%%%%%%%%%%%%%%%%%%%%%%%%
%%%%%%%%%%%%%%%%%%%%%%%%%%%%%%%%%%%%%%%%%%%%%%%%%%%%%%%%%%%%%%%%%%%%%%%%%%%%%%%
%%%%%%%%%%%% preliminaries%%%%%%%%%%%%%%%%%%%%%%%%%%%%%%%%%%%%%%%%%%%%%%%%%%%%%
\section{Preliminaries}
\label{ssec:prelim} 
Let $P$ be a $d$-dimensional polytope, or $d$-polytope for short.
Its dimension is the dimension of its affine span.
The faces of $P$ are $\emptyset,P$, and the intersections of $P$ with its 
supporting hyperplanes. The $\emptyset$ and $P$ faces are called  
\emph{improper}, while the remaining faces are called \emph{proper}. Each face 
of $P$ is itself a polytope, and a face of dimension $k$ is called a $k$-face.
Faces of $P$ of dimension $0,1,d-2$ and $d-1$ are called vertices, edges, 
ridges, and facets, respectively.

A $d$-dimensional \emph{polytopal complex} or, simply,
\emph{$d$-complex}, $\CC$ is a
finite collection of polytopes in $\reals^d$ such that (i) $\emptyset\in\CC$,
(ii) if $P\in\CC$ then all the faces of $P$ are also in $\CC$ and (iii) the 
intersection $P\cap{}Q$ for two polytopes $P$ and $Q$ in $\CC$ is a face of 
both. The dimension $\dim(\CC)$ of $\CC$ is the largest dimension of a polytope 
in $\CC$. A polytopal complex is called \emph{pure} if all its maximal (with 
respect to inclusion) faces have the same dimension. In this case the maximal 
faces are called the \emph{facets} of $\CC$. A polytopal complex is 
\emph{simplicial} if all its faces are simplices. A polytopal complex 
$\CC'$ is called a \emph{subcomplex} of a polytopal complex $\CC$ if all faces 
of $\CC'$ are also faces of $\CC$.
For a polytopal complex $\CC$, the \emph{star} of $v$ in $\CC$, denoted by
$\str{v}{\CC}$, is the subcomplex of $\CC$ consisting of all faces that contain 
$v$, and their faces. The \emph{link} of $v$, denoted by $\CC/v$, is the 
subcomplex of $\str{v}{\CC}$ consisting of all the faces of $\str{v}{\CC}$ that 
do not contain $v$.

A $d$-polytope $P$, together with all its faces, forms a $d$-complex, denoted
by $\CC(P)$. The polytope $P$ itself is the only maximal face of $\CC(P)$, 
i.e., the only facet of $\CC(P)$, and is called the \emph{trivial} face of 
$\CC(P)$. Moreover, all proper faces of $P$ form a pure $(d-1)$-complex, called 
the \emph{boundary complex} $\CC(\bx{}P)$, or simply $\bx{}P$, of $P$. The 
facets of $\bx{}P$ are just the facets of $P$.

For a $(d-1)$-complex $\CC$, its $f$-vector is defined as
$\mb{f}(\CC) = (f_{-1},f_0, f_1,\dots,\allowbreak{}f_{d-1})$, where
$f_k = f_k(\CC)$
denotes the number of $k$-faces of $P$ and $f_{-1}(\CC):=1$ corresponds to the 
empty face of $\CC$. From the $f$-vector of $\CC$ we define its $h$-vector as 
the vector $\mb{h}(\CC)=(h_0,h_1,\ldots,h_d)$, where $h_k=h_k(\CC):=$ 
$\sum_{i=0}^k(-1)^{k-i}\binom{d-i}{d-k}f_{i-1}(\CC),$  $0\leq{}k\leq d$.

%%%%%%%%%%%%%%%%%%%%%%%%%%%%%%%%%%%%%%%%%%%%%%%%%%%%%%%%%%%%%%%%%%%%%%%%%%%%%%%%
%%%%%%%%%%%%%%%%%%%%%%%%%%%%%%%%%%%%%%%%%%%%%%%%%%%%%%%%%%%%%%%%%%%%%%%%%%%%%%%%
Denote by $\yY$ a generic subset of faces of a polytopal complex
$\CC$, and define its dimension $\dim(\yY)$ as the maximum of the
dimensions of its faces. Let $\dim(\yY)=\delta-1$; then we may define
(if not already properly defined), the $h$-vector $\mb{h}(\yY)$ of $\yY$ as:
	\begin{equation}
	\label{def:f-h}
	  h_k(\yY)=\sum_{i=0}^{\delta}(-1)^{k-i}
	  \binom{\delta-i}{\delta-k}f_{i-1}(\yY).
	\end{equation}
We can further define the $m$-order $g$-vector of $\yY$ according to
the following recursive formula:
\begin{equation}\label{equ:gm-def}
  g_k^{(m)}(\yY)=
  \begin{cases}
    h_k(\yY),&m=0,\\
    g_k^{(m-1)}(\yY)-g_{k-1}^{(m-1)}(\yY),&m>0.
  \end{cases}
\end{equation}
Clearly, $\mb{g}^{(m)}(\yY)$ is nothing but the backward $m$-order
finite difference of $\mb{h}(\yY)$; therefore:
\begin{equation}
g_k^{(m)}(\yY)=\sum_{i=0}^m(-1)^i\binom{m}{i}h_{k-i}(\yY),\qquad k,m\ge{}0.
\end{equation}
Observe that for $m=0$ we get the $h$-vector of $\yY$, while for $m=1$
we get what is typically defined as the $g$-vector.

%%%%%%%%%%%%%%%%%%%%%%%%%%%%%%%%%%%%%%%%%%%%%%%%%%%%%%%%%%%%%%%%%%%%%%%%%%%%%%%%
The relation between the $f$- and  $h$-vector of $\yY$ is better manipulated  
using generating functions. We define the \fp and \hp of $\yY$ as follows:
\begin{equation*}
  \f{\yY}=\sum_{i=0}^\delta 
  f_{i-1}t^{\delta-i}=f_{\delta-1}+f_{\delta-2}t+\cdots+f_{-1}t^\delta,\quad
  %%%
  \h{\yY}=\sum_{i=0}^\delta h_{i}t^{\delta-i}
  =h_\delta +h_{\delta-1}t+\cdots+ h_0 t^\delta,
  \label{hpol}
\end{equation*}
where, we simplified $f_i(\yY)$ and $h_i(\yY)$ to $f_i$ and $h_i$.
In this set-up, the relation between the $f$-vector and 
$h$-vector (cf. \eqref{def:f-h}) can be expressed as: 
\begin{equation}
  \ft{\yY}{t}=\htt{\yY}{t+1},
  \qquad\text{or, equivalently, as }\qquad
  \htt{\yY}{t}=\ft{\yY}{t-1}.
  \label{hf}
\end{equation}

%%%%%%%%%%%%%%%%%%%%%%%%%%%%%%%%%%%%%%%%%%%%%%%%%%%%%%%%%%%%%%%%%%%%%%%%%%%%%%%%
%%%%%%%%%%%%%%%%%%%%%%%%%%%%%%%%%%%%%%%%%%%%%%%%%%%%%%%%%%%%%%%%%%%%%%%%%%%%%%%%
%%%%%%%%%%%%%%%%%%%%%%%%%%%%%%%%%%%%%%%%%%%%%%%%%%%%%%%%%%%%%%%%%%%%%%%%%%%%%%%%

\subsection{The Cayley embedding, the Cayley polytope and the Cayley trick}
\label{subsec:cayley}

\newcommand{\e}[1]{\mb e_{#1}}

Let $\range{P}$ be $r$ $d$-polytopes with vertex sets $\range{\VV}$, 
respectively. Let $\e{0},\e{1},\allowbreak\ldots,\allowbreak\e{r-1}$ be an affine basis of 
$\reals^{r-1}$ and call $\mu_i:\reals^d\to\reals^{r-1}\times\reals^d$  the 
affine inclusion given by $\mu_i(\mb{x})=(\e{i},\mb{x})$. The \emph{Cayley 
embedding} $\cC(\range{\VV})$ of the point sets $\range{\VV}$ is defined as 
$\cC(\range{\VV})=\bigcup_{i=1}^r\mu_i(\VV_i)$. The polytope corresponding to 
the convex hull $\conv\bigl(\cC(\range{\VV})\bigr)$ of the Cayley embedding
$\cC(\range{\VV})$ of $\range{\VV}$ is typically referred to as the 
\emph{Cayley polytope} of $\range{P}$.

The following lemma, known as \emph{the Cayley trick for Minkowski sums},
relates the Minkowski sum of the polytopes
$\range{P}$ with their Cayley polytope.
\begin{lemma}[{\cite[Lemma 3.2]{hrs-ctlsb-00}}]
  \label{lem:cayley-embedding}
  Let $\range{P}$ be $r$ $d$-polytopes with vertex sets   
  $\range{\VV}\subset\reals^d$. Moreover, let $\Wavg$ be the $d$-flat 
  defined as $\{\tfrac{1}{r}\e{1}+\cdots+\tfrac{1}{r}\e{r}\}\times
  \reals^d\subset\reals^{r-1}\times\reals^d$. Then, the Minkowski sum
  $\MS$ has the  following representation as a section of the Cayley 
  embedding  $\cC(\range{\VV})$ in $\reals^{r-1}\times\reals^d$:
  \begin{align*}
    \MS  &\cong  \cC(\range{\VV})\cap \Wavg\\
    &:=\Big\{\conv\{(\e{i},\mbv_i)\mid{}1\le{}i\le{}r\}\cap{}\Wavg \,:
    (\e{i},\mbv_i)\in{}\cC(\range{\VV}), 1\le{}i\le{}r\Big\}.
  \end{align*}
  Moreover, $F$ is a facet of $\MS$ if and only if it is of the 
  form   $F=F'\cap{}\Wavg$ for a facet $F'$ of $\cC(\range{\VV})$
  containing at least one point $(\e{i},\mbv_i)$ for all $1\le{}i\le{}r$.
\end{lemma}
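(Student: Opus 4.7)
The plan is to exploit the affine independence of the basis points $\e{1},\ldots,\e{r}\in\reals^{r-1}$ to decouple the ``which summand'' information (carried by the first $r-1$ coordinates) from the geometric information (the last $d$ coordinates), and thereby identify $\cC(\range{\VV})\cap\Wavg$ with a rescaled copy of $\MS$.

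First I would write an arbitrary point of the Cayley polytope as a convex combination $p=\sum_{i=1}^r\sum_{v\in\VV_i}\alpha_{i,v}(\e{i},v)$ with $\alpha_{i,v}\ge 0$ and $\sum_{i,v}\alpha_{i,v}=1$. Affine independence of $\e{1},\ldots,\e{r}$ forces the layer masses $\beta_i:=\sum_v\alpha_{i,v}$ to be uniquely determined by the first $r-1$ coordinates of $p$; hence $p\in\Wavg$ if and only if $\beta_i=\tfrac{1}{r}$ for every $i$, and in that case the last $d$ coordinates of $p$ read $\tfrac{1}{r}\sum_{i=1}^r w_i$ with $w_i=r\sum_v\alpha_{i,v}v\in P_i$. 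The map $(y,z)\mapsto rz$ therefore restricts to an affine bijection $\cC(\range{\VV})\cap\Wavg\to\MS$, which is automatically a combinatorial isomorphism. To recover the explicit set-theoretic formula in the statement, I would apply Carath\'eodory's theorem within each layer to produce, for every such $p$, one vertex $\mbv_i\in\VV_i$ per layer so that $p\in\conv\{(\e{i},\mbv_i):1\le i\le r\}\cap\Wavg$.

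For the facet characterization I would pass between supporting hyperplanes. Given a facet $F$ of $\MS$ with outward normal $\ell\in\reals^d$, I would lift it to the linear functional $\tilde\ell(y,z)=\ell\cdot z-\sum_{i=1}^r M_i\,y_i^\ast$ on $\reals^{r-1}\times\reals^d$, where $M_i:=\max_{v\in\VV_i}\ell\cdot v$ and $y_i^\ast$ is the barycentric coordinate of $y$ at $\e{i}$; this functional is non-positive on the Cayley polytope and vanishes exactly on the convex hull of $\bigcup_i\mu_i(\VV_i^\ast)$, where $\VV_i^\ast\subset\VV_i$ denotes the set of $\ell$-maximizers in $\VV_i$. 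Each $\VV_i^\ast$ is non-empty because each $P_i$ is bounded, so the resulting facet $F'$ meets every layer; and the decoupling of the previous paragraph, applied inside $F'$, yields $F'\cap\Wavg\cong\sum_i\conv(\VV_i^\ast)=F$. Conversely, if $F'$ is any facet of the Cayley polytope containing at least one vertex from each layer, the same decoupling exhibits $F'\cap\Wavg$ as the Minkowski sum of the non-empty sets $\conv(\mu_i^{-1}(F'\cap\mu_i(\VV_i)))$; a dimension count (facet dimension $d+r-2$ minus $r-1$ independent layer constraints $\beta_i=\tfrac{1}{r}$) confirms that this intersection is $(d-1)$-dimensional, hence a facet of $\MS$.

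The main obstacle lies in the construction of the lift $\tilde\ell$ in the forward direction of the facet statement: a naive extension of $\ell$ by zero to the first $r-1$ coordinates does not in general select a face of $\cC(\range{\VV})$ that meets every layer. The fix, as above, is to subtract off the per-layer maxima of $\ell$ against the barycentric coordinates, which encodes the product structure of the outer-normal cone of a Minkowski-sum facet directly into the Cayley geometry.
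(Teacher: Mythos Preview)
The paper does not prove this lemma; it is quoted verbatim from \cite[Lemma 3.2]{hrs-ctlsb-00} and used as a black box, so there is no in-paper argument to compare against. Your sketch is the standard proof of the Cayley trick and is correct: the affine independence of $\e{1},\ldots,\e{r}$ forces the layer masses to equal $\tfrac{1}{r}$ on $\Wavg$, giving the affine isomorphism with $\tfrac{1}{r}\MS$, and your lift $\tilde\ell(y,z)=\ell\cdot z-\sum_i M_i\,y_i^\ast$ is exactly the right construction for the facet correspondence (and the one used in the original reference). One small remark: the appeal to Carath\'eodory for the displayed set-theoretic description is not quite what that formula encodes---each $\conv\{(\e{i},\mbv_i)\}\cap\Wavg$ is a single point, so the right-hand side is a finite point set whose convex hull is $\cC(\range{\VV})\cap\Wavg$; your bijection already delivers this without Carath\'eodory.
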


Let $\cC_{[r]}$ be the Cayley polytope of $\range{P}$, and call $\fF_{[r]}$
the set of faces of $\cC_{[r]}$ that have non-empty intersection with the
$d$-flat $\Wavg$. A direct consequence of Lemma \ref{lem:cayley-embedding} is a 
bijection between the $(k-1)$-faces of $\Wavg$ and the $(k-r)$-faces of 
$\fF_{[r]}$, for $r\le{}k\le{}d+r-1$. This further implies that:
\begin{equation}\label{equ:fkW}
  f_{k-1}(\fF_{[r]})=f_{k-r}(\MS),\quad \mbox{for all } r\le{}k\le{}d+r-1.
\end{equation}

In what follows, to  keep the notation lean, we identify $V_i$
with its pre-image $\VV_i$.
For any $\emptyset\subset{}R\subseteq{}[r]$, we denote by $\cC_R$ the Cayley 
polytope of the polytopes $P_i$ where $i\in{}R$. In particular, if 
$R=\{i\}$ for some $i\in[r]$, then $\cC_{\{i\}}\equiv{}P_i$. We shall assume 
below that $\cC_{[r]}$ is \emph{``as simplicial as possible''}. This means that 
we consider all faces of $\cC_{[r]}$ to be simplicial, except possibly 
for the trivial faces $\{\cC_R\}$\footnote{We denote by $\{\cC_R\}$ the 
polytope $\cC_R$ as a trivial face itself (without its non-trivial faces).},   
$\emptyset\subset{}R\subseteq{}[r]$. Otherwise, we can employ the so called 
\emph{bottom-vertex triangulation} \cite[Section 6.5, pp.~160--161]{Mat02} to 
triangulate all proper faces of $\cC_{[r]}$ except for the trivial ones, i.e., 
$\{\cC_R\},$ $\emptyset\subset{}R\subseteq{}[r]$. The resulting complex is 
polytopal (cf.~\cite{EwSh74}) with all of its faces being simplicial, except 
possibly for the trivial ones. Moreover, it has the same number of vertices as 
$\cC_{[r]}$, while the number of its $k$-faces is never less than the number of 
$k$-faces of $\cC_{[r]}$. 

For each  $\emptyset\subset{}R\subseteq[r]$, we  denote by $\fF_R$ the set of 
faces of $\cC_R$ having at least one vertex from each $V_i$, $i\in{}R$ 
and we call it the set of \emph{mixed faces of $\cC_R$}. We trivially have that 
$\fF_{\{i\}}\equiv{}\bx{}P_i$. We define the dimension of $\fF_R$ to be the 
maximum dimension of the faces in $\fF_R$, i.e.,  
$\dim(\fF_R)=\max_{F\in\fF_R}\dim(F)=d+|R|-2$.
Under the \emph{``as simplicial as possible''} assumption above,
the faces in $\fF_R$ are simplicial.
We denote by $\kK_R$ the \emph{closure}, under subface inclusion, of $\fF_R$. 
By construction, $\kK_R$ contains: (1) all faces in $\fF_R$,
(2) all faces that are subfaces of faces in $\fF_R$, and (3) the
empty set. It is easy to see that $\kK_R$ does not contain any of the
trivial faces  $\{\cC_S\}$, $\emptyset\subset{}S\subseteq{}R$, and thus,
$\kK_R$ is a pure simplicial $(d+|R|-2$)-complex.
It is also easy to verify that
	\begin{equation}\label{equ:fkKR-def}
		f_k(\kK_R)=\sum_{\emptyset\subset{}S\subseteq{}R}f_k(\fF_S),
		\qquad -1\le{}k\le{}d+|R|-2,
	\end{equation}
where in order for the above equation to hold for $k=-1$, we set
$f_{-1}(\fF_S)=(-1)^{|S|-1}$ for all $\emptyset\subset{}S\subseteq{}R$.
 In what follows we use the convention that $f_k(\fF_R)=0$, for any $k<-1$ or 
 $k>d+|R|-2$.

A general form of the Inclusion-Exclusion Principle states that if 
$f$ and $g$ are two functions defined over the subsets of a finite
set $A$, such that $f(A)=\sum_{\emptyset\subset{}B\subseteq{}A}g(B)$, then
$g(A)=\sum_{\emptyset\subset{}B\subseteq{}A}(-1)^{|A|-|B|}f(B)$
\cite[Theorem 12.1]{ggl-hc-95}.
Applying this principle in \eqref{equ:fkKR-def}, we deduce 
that:
\begin{equation}\label{equ:fkFR-ie}
	f_k(\fF_R)=\sum_{\emptyset\subset{}S\subseteq{}R}(-1)^{|R|-|S|}\,f_k(\kK_S),
	\qquad -1\le{}k\le{}d+|R|-2.
\end{equation}

%%%%%%%%%%%%%%%%%%%%%%%%%%%%%%%%%%%%%%%%%%%%%%%%%%%%%%%%%%%%%%%%%%%%%%%%%%%%%%%%
%%%%%%%%%%%%%%%%%%%%%%%%%%%%%%%%%%%%%%%%%%%%%%%%%%%%%%%%%%%%%%%%%%%%%%%%%%%%%%%%
%%%%%%%%%%%%%%%%%%%%%%%%%%%%%%%%%%%%%%%%%%%%%%%%%%%%%%%%%%%%%%%%%%%%%%%%%%%%%%%%

In the majority of our proofs that involve evaluation of 
$f$- and $h$-vectors, we use generating functions as they 
significantly simplify calculations. The starting point is to evaluate 
$\ft{\kK_R}{t}$ (resp., $\ft{\fF_R}{t}$) in terms of the generating functions 
$\ft{\fF_S}{t}$  (resp., $\ft{\kK_S}{t}$), $\emptyset\subset{}S\subseteq{}R$, 
for each fixed choice of $\emptyset\subset{}R\subseteq{}[r]$.
Then, using \eqref{hf} we derive the 
analogous relations between their $h$-vectors.

Recalling that $\dim(\kK_R) = d+|R|-2$  and $\dim(\fF_S) = d+|S|-2$ we have:
\begin{equation} 
  \begin{aligned}
    \ft{\kK_R}{t}&=\sum_{k=0}^{d+|R|-1}f_{k-1}(\kK_R)t^{d+|R|-1-k}
    \overset{\eqref{equ:fkKR-def}}{=}
    \sum_{k=0}^{d+|R|-1}\sum_{\emptyset\subset{}S\subseteq{}R}
    f_{k-1}(\fF_S)t^{d+|R|-1-k}\\
    &=\sum_{S\subseteq{}R}t^{|R|-|S|}\sum_{k=0}^{d+|R|-1}f_{k-1}(\fF_S)
    t^{d+|S|-1-k}
    =\sum_{\emptyset\subset{}S\subseteq{}R}t^{|R|-|S|}\ft{\fF_S}{t}.
  \end{aligned}\label{fKF}
\end{equation}
Rewriting the above relation as 
$t^{-|R|}\ft{\kK_R}{t}=
\sum_{\emptyset\subset{}S\subseteq{}R}t^{-|S|}\ft{\fF_S}{t}$ 
and using M\"obious inversion, we get:
\begin{align}
  \ft{\fF_R}{t}=\sum_{\emptyset\subset{}S\subseteq{}R}(-1)^{|R|-|S|}
  t^{|R|-|S|}\ft{\kK_S}{t}.\label{fFK}
\end{align}
Setting $t:=t-1$ in \eqref{fKF} we have:
\begin{equation}
  \begin{aligned} 
    \htt{\kK_R}{t}=\ft{\kK_R}{t-1}&=\sum_{\emptyset\subset{}S\subseteq{}R}
    (t-1)^{|R|-|S|}\ft{\fF_S}{t-1}\\
    &=\sum_{\emptyset\subset{}S\subseteq{}R}(t-1)^{|R|-|S|}
    \htt{\fF_S}{t}
    =\sum_{\emptyset\subset{}S\subseteq{}R}\gmt{{|R|-|S|}}{\fF_S}{t}.
  \end{aligned}
  \label{gen:hKF}
\end{equation}
And similarly, from \eqref{fFK}  we obtain: 	
\begin{align} 
\htt{\fF_R}{t}&=\sum_{\emptyset\subset{}S\subseteq{}R}(-1)^{|R|-|S|}\gmt{{|R|-|S|}}{\kK_S}{t}.
\label{gen:hFK}
\end{align}
Comparing coefficients in the above generating functions, we  deduce that:  	
\begin{align}
	h_k(\kK_R)& =\sum_{\emptyset\subset{}S\subseteq{}R}g_{k}^{(|R|-|S|)}(\fF_S),
	 && \mbox{ for all }0\le{}k\le{}d+|R|-1, \;\mbox{ and }	
	\label{hkKF} \\
	h_k(\fF_R)& =\sum_{\emptyset\subset{}S\subseteq{}R}(-1)^{|R|-|S|}
	g_{k}^{(|R|-|S|)}(\kK_S),
	 && \mbox{ for all }0\le{}k\le{}d+|R|-1.
\label{hkFK}
\end{align}	

%%%%%%%%%%%%%%%%%%%%%%%%%%%%%%%%%%%%%%%%%%%%%%%%%%%%%%%%%%%%%%%%%%%%%%%%%%%%%%%%
%%%%%%%%%%%%  The auxiliary polytope Q  %%%%%%%%%%%%%%%%%%%%%%%%%%%%%%%%%%%%%%%%
%%%%%%%%%%%%%%%%%%%%%%%%%%%%%%%%%%%%%%%%%%%%%%%%%%%%%%%%%%%%%%%%%%%%%%%%%%%%%%%%
\section{\texorpdfstring{The construction of the auxiliary simplicial polytope $\qQ_{[r]}$.}{The construction of the auxiliary simplicial polytope Qr}}
\label{sec:construction}

The non-trivial faces of the Cayley polytope $\cC_{[r]}$ of 
$P_1,\ldots,P_r$ are the  faces in each $\fF_R$, 
$\emptyset\subset{}R\subseteq{}[r]$ as well as all trivial faces  $\{\cC_R\}$ 
with $\emptyset{}\subset{}R\subset{}[r]$.
Since the latter are not necessarily simplices, the 
Cayley polytope $\cC_{[r]}$ may not be simplicial.
In order to exploit the combinatorial structure of $\cC_{[r]}$,
we add auxiliary points on $\cC_{[r]}$ so that the resulting polytope,
denoted by $\qQ_{[r]}$, is simplicial.

The main tool for describing our construction is \emph{stellar subdivisions}.
Let $P\subset{}\mathbb R^d$ be a $d$-polytope, and consider a point
$y_F$ in the relative interior of a face $F$ of $\bx{}P$.
The \emph{stellar subdivision} $\stel{y_F}{\bx{}P}$ of $\bx{}P$ over
$F$, replaces $F$ by the set of faces $\{y_F,F'\}$ where $F'$ is a
non-trivial face of $F$. It is a well-known fact that stellar
subdivisions preserve polytopality (cf.~\cite[pp. 70--73]{e-ccag-96}),
in the sense that the newly constructed complex is combinatorially
equivalent to a polytope each facet of which lies on a
distinct supporting hyperplane.

Our goal is to triangulate each face $\{\cC_R\}$,
$\emptyset\subset{}R\subset{}[r]$, of $\cC_{[r]}$ so that
the boundaries of the resulting complexes, denoted by
$\qQ_S$, $\emptyset\subset{}S\subseteq{}[r]$, are simplicial polytopes.
We obtain this by performing a series of stellar subdivisions.
First set $\qQ_S:=\cC_{S}$, for all $\emptyset\subset{}S\subseteq{}[r]$.
Then, we add auxiliary vertices as follows:
%%%%%%%%%%%construction algorithm%%%%%%%%%%%%%%%%%%%%%%%%%%%%%%%%%%%%%%%%%%%%%%
\begin{equation}
  \label{algorithm}
    \begin{aligned}
      &\mbox{for }s\mbox{ from }1\mbox{ to }r-1\\
      %%%%%%%%%%%%%%		
      &\hspace*{0.75 cm}
      \mbox{for all }S\subseteq{}[r]\mbox{ with }|S|=s\\
      %%%%%%%%%%%%%%
      &\hspace*{1.5cm}
      \mbox{choose }y_S\in\mesa(\qQ_S)\\
      %%%%%%%%%%	
      &\hspace*{1.5cm}
      \mbox{for all }T\mbox{ with }S\subset{}T\subseteq[r]\\
      %%%%%%%%%%	
      &\hspace*{2.25cm}
      \qQ_T:=\stel{y_S}{\qQ_T}
  \end{aligned}
\end{equation}
%%%%%%%end of construction algorithm%%%%%%%%%%%%%%%%%%%%%%%%%%%%%%%%%%%%%%%%%%%
The recursive step of the previous definition is well defined due to
the fact that for any fixed $s$, the order in which we add the
auxiliary points $y_S$ is independent of the $S$ chosen, since the
relative interiors of all $\qQ_S$ with $|S|=s$ are pairwise disjoint.
At the end of the $s$-th iteration, the faces of
each $\qQ_T$ of dimension less than $d+s-1$ are simplices.
At the end of the iterative procedure above, and in view of the fact that
stellar subdivisions preserve polytopality, the above construction
results in simplicial $(d+|R|-1)$-polytopes $\qQ_{R}$, for all
$\emptyset\subset{}R\subseteq{}[r]$.
%%%%%%%%%%%%%%%%%%%%%f-generating functions involving Q,\fF and \kK%%%%%%%%%%%%%
%%%%%%%%%%%%%%%%%%%%%%%%%%%%%%%%%%%%%%%%%%%%%%%%%%%%%%%%%%%%%%%%%%%%%%%%%%%%%%%%

The following two lemmas express the faces of $\bx\qQ_R$ in terms of the
sets $\fF_S$, $\kK_S$, $\emptyset{}\subset{}S\subseteq{}R$, and the auxiliary
vertices added. Unless otherwise stated, all set unions are disjoint. 
\begin{lemma}
\label{lem:QR-FSjoin}
For $\emptyset{}\subset{}R\subseteq{}[r],$  the non-trivial faces of the 
simplicial polytope $\qQ_R$ are: 
\begin{equation} 
  \label{equ:QR-FSjoin}
  \bx{}\qQ_R=\bigcup\limits_{\emptyset\subset{}S\subseteq{}R}\fF_S
  %%%%%%%		
  \bigcup_{\scalebox{0.7}{$\substack{\emptyset\subset{}S\subset{}R\\
	S\subseteq{}S_1{}\subset{}S_2\subset{}\cdots\subset{}S_\ell\subset{}R}$}}
  \{y_{S_1},y_{S_2},\ldots,y_{S_\ell},\fF_S\}
  %%%%%%%		
  \bigcup_{\scalebox{0.7}{$\substack{
	\emptyset{}\subset{}S_1{}\subset{}S_2\subset{}\cdots\subset{}
	S_\ell\subset{}R}$}}\{y_{S_1},y_{S_2},\ldots,y_{S_\ell}\},
\end{equation}
where $\{y_{S_1},y_{S_2},\ldots,y_{S_\ell},\fF_S\}$ is the set of faces 
formed by the vertices $y_{S_1},\ldots,y_{S_\ell}$ and a face in $\fF_S$. 
\end{lemma}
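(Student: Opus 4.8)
The plan is to analyze the effect of the stellar subdivision algorithm \eqref{algorithm} on the face poset of $\cC_R$ and to track exactly which faces of $\qQ_R$ survive, which are destroyed, and which new faces appear. Recall that the non-trivial faces of $\cC_R$ are the mixed faces $\fF_S$ for $\emptyset\subset S\subseteq R$, together with the trivial faces $\{\cC_S\}$ for $\emptyset\subset S\subset R$. A stellar subdivision $\stel{y_S}{\qQ_T}$ over the face $\cC_S$ removes $\cC_S$ (as a face) and every face containing it, and replaces each such face $G\supseteq\cC_S$ by the joins $\{y_S, G'\}$ where $G'$ ranges over the non-trivial faces of $G$ not containing $\cC_S$. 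The key observation is that, since the $y_S$ are added in order of increasing $|S|$, when we subdivide over $\cC_S$ inside $\qQ_T$, the face $\cC_S$ has itself already been triangulated on its interior by the vertices $y_{S'}$ with $S'\subsetneq S$; hence ``$\cC_S$ as a face to be starred'' should be read as the already-subdivided ball whose boundary is $\bx\qQ_S$. I would therefore prove the statement by induction on $|R|$, and for fixed $R$ by induction on the number of subdivisions performed.

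First I would set up the base case: when $|R|=1$, no subdivisions occur, $\qQ_R=\cC_R=P_i$, and $\bx\qQ_R=\fF_{\{i\}}=\bx P_i$, matching \eqref{equ:QR-FSjoin} since the second and third unions are empty. For the inductive step, I would describe the faces of $\bx\qQ_R$ right before the final subdivision and then apply the stellar subdivision rule. A cleaner route is the following direct characterization, which I would establish and then verify against \eqref{equ:QR-FSjoin}: a subset $\sigma$ of $\mathrm{vert}(\qQ_R)$ is a face of $\bx\qQ_R$ if and only if the auxiliary vertices it contains, say $y_{S_1},\dots,y_{S_\ell}$, form a chain $S_1\subsetneq S_2\subsetneq\cdots\subsetneq S_\ell\subsetneq R$ (with all $S_j$ proper non-empty subsets of $R$), and the ``original'' vertices of $\sigma$ lie in some $\fF_S$ with $S\subseteq S_1$ (or $\sigma$ has no original vertices at all, giving the third union; or no auxiliary vertices, giving the first union with the fact that $\fF_S$ is itself unchanged). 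The chain condition is exactly what the nested stellar subdivisions enforce: $y_{S}$ and $y_{S'}$ are ``compatible'' in a common face precisely when one of $S,S'$ was a face inside the subdivided copy of $\cC$ over the other, i.e. when $S\subset S'$ or $S'\subset S$; and an original face $F\in\fF_S$ survives together with a chain starting at $S_1$ precisely when $F$ was a proper face of the ball $\qQ_{S_1}$ when $y_{S_1}$ was inserted, which (using \eqref{equ:fkKR-def}/\eqref{equ:fkKR-def}-type bookkeeping and the ``as simplicial as possible'' setup) amounts to $S\subseteq S_1$.

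I would carry out the verification of this characterization by a careful bookkeeping argument: proceed through iteration $s$ of \eqref{algorithm}; assume inductively that after all subdivisions of level $<s$ the faces of each relevant $\qQ_T$ are exactly those described by the analogue of \eqref{equ:QR-FSjoin} truncated to chains of subsets of size $<s$; then insert $y_S$ for $|S|=s$ into each $\qQ_T$ with $S\subset T$. The face $\cC_S$ inside $\qQ_T$ is, by the level-$(s{-}1)$ induction, the cone-like ball consisting of all faces $\{y_{S_1},\dots,y_{S_\ell},F\}$ with $\emptyset\subset S'\subseteq S_1\subset\cdots\subset S_\ell\subseteq S$ and $F\in\fF_{S'}$ (plus purely auxiliary faces); starring over it replaces every face of $\qQ_T$ containing $\cC_S$ by its join with $y_S$ over the non-trivial subfaces of $\cC_S$, which is precisely how a new top element $S$ gets appended to the admissible chains. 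Collecting the faces over all $S$ with $|S|=s$ and all $T\supseteq S$, and then specializing $T=R$ at the end, yields \eqref{equ:QR-FSjoin}. The main obstacle I anticipate is purely combinatorial hygiene rather than a conceptual difficulty: one must argue that the subdivisions at a fixed level commute (already noted in the text, since the $\mesa(\qQ_S)$ are pairwise disjoint for $|S|$ fixed) and, more delicately, that subdividing over $\cC_S$ inside $\qQ_T$ does not interfere with later subdivisions over $\cC_{S'}$ for $S\subsetneq S'$ — i.e. that the relative interior of $\qQ_{S'}$, hence the face of $\qQ_T$ to be starred next, is not itself cut by the $y_S$'s, which holds because $y_S$ lies in $\mesa(\qQ_S)\subset\mesa(\qQ_{S'})$ was already a vertex inside that ball before $y_{S'}$ is chosen. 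Keeping the disjointness of all the listed unions straight (no face is counted twice, and the three families are genuinely disjoint because they are distinguished by whether the auxiliary-vertex chain is empty, nonempty-with-original-part, or nonempty-without-original-part) is the last piece to nail down.
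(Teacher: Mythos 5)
Your proposal is correct and takes essentially the same approach as the paper: an induction on $|R|$ (the paper's inductive step implicitly contains your level-$s$ bookkeeping, since it uses that the subdivision process for smaller index sets has already terminated before the last step), tracking which faces survive each stellar subdivision and observing that compatible auxiliary vertices must form a chain in the Boolean lattice of proper subsets.
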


\begin{proof}
We use induction on the size of $|R|$, the case $|R|=1$ being trivial. 
We next assume that our result holds true for $|R|=\rho$ and we prove it for 
$|R|=\rho+1.$

When $|R|=\rho+1$ the  recursion in \eqref{algorithm} coincides with that of the
case $|R|=\rho$, until the last but one step, i.e., when $s=\rho-1$.  Thus, 
before doing the last recursion, we have:
\begin{enumerate}
\item[(a)] By induction:
  \begin{equation*}
    \bx{}\qQ_{S'}=\bigcup\limits_{\emptyset\subset{}S\subseteq{}S'}
    \fF_{S}
    %%%%%%%%%%%%%%%%%%%%%%%%%%
    \bigcup\limits_{\scalebox{0.7}
      {$\substack{\emptyset\subset{}S\subset{}S'\\
	S\subseteq{}S_1{}\subset{}S_2\subset{}\cdots\subset{}S_\ell\subset{}S'}$}}
    \{y_{S_1},y_{S_2},\ldots,y_{S_\ell},\fF_S\}
    %%%%%%%%%%%%%%%%%%%%%%%%%%	
    \bigcup\limits_{\scalebox{0.7}{$\substack{\emptyset{}\subset{}S_1{}\subset{}
	  S_2\subset{}\cdots\subset{}S_\ell\subset{}S'}$}}
    \{y_{S_1},y_{S_2},\ldots,y_{S_\ell}\},
  \end{equation*}
  for all $\emptyset\subset{}S'\subseteq{}R$ with $|S'|\leq\rho-1$.
\item[(b)] By our construction, the faces in $\bx{}\qQ_R$ are: 
  \begin{enumerate}[1)]
  \item faces in each $\bx\qQ_{R'},|R'|=\rho-1$, 
  \item the (trivial) faces 
    $\{\qQ_{R'}\}$ for $|R'|=\rho-1$, and
  \item  faces in $\fF_R.$	
  \end{enumerate}
\end{enumerate}	
The faces in (b.1)-(b.3) are not necessarily disjoint. However, 
using (a) we can write them  disjointly as follows: 
\begin{equation} 
  \bx{}\qQ_R=\bigcup\limits_{\emptyset\subset{}S\subseteq{}R}\fF_S
  \bigcup_{\scalebox{0.7}{$\substack{\emptyset\subset{}S\subset{}R\\
	S\subseteq{}S_1{}\subset{}S_2\subset{}\cdots\subset{}S_\ell\subset{}R\\
	|S|\leq{}\rho-1}$}}
  \mbox{\hspace{-0.4cm}}\{y_{S_1},y_{S_2},\ldots,y_{S_\ell},\fF_S\}
  %%%	
  \bigcup_{\scalebox{0.7}{$\substack{\emptyset\subset{}S_1{}\subset{}S_2
	\subset{}\cdots\subset{}S_\ell\subset{}R\\|S|\leq{}\rho-1}$}}
  \mbox{\hspace{-0.4cm}}\{y_{S_1},y_{S_2},\ldots,y_{S_\ell}\}
  \bigcup\limits_{\scalebox{0.7}{$\substack{S\subset{}R\\|S|=\rho-1}$}}\{\qQ_S\}.
  \label{algo:1}
\end{equation}
The faces in \eqref{algo:1} that will be stellarly subdivided in the last
recursion of \eqref{algorithm} are all in some $\{\qQ_{S}\}$ with
$|S|=\rho-1$. These, will be replaced by:
\begin{equation}
  \underset{\underset{|S|=\rho}{S\subseteq{}R}}{\bigcup}
  %%%%%%%%%%%%%%%%%%%%%%%%%%%%%%	
  \biggl(
  \underset{\scalebox{0.6}{$\emptyset\subset{}X\subseteq{}S$}}{\bigcup}
  \{y_{S},\fF_{X}\}\mbox{\hspace{-0.5cm}}
  %%%%%%	
  \bigcup_{\scalebox{0.7}{$\substack{\emptyset\subset{}X\subset{}S\\
	X\subseteq{}S_1{}\subset{}S_2\subset{}\cdots\subset{}S_\ell\subset{}S}$}}
  \mbox{\hspace{-0.4cm}}
  \{y_{S_1},y_{S_2},\ldots,y_{S_\ell},y_{S},\fF_{X}\}
  %%%%%%%%%%%%%%%%%%%%%%%%%%%%%%%%
  \bigcup\limits_{\scalebox{0.7}{$\substack{\emptyset{}\subset{}S_1{}\subset{}
	S_2\subset{}\cdots\subset{}S_\ell\subset{}S}$}}
  \mbox{\hspace{-0.4cm}}\{y_{S_1},y_{S_2},\ldots,y_{S_\ell},y_{S}\}
  \biggr).
  \label{algo:2}
\end{equation}
Combining \eqref{algo:1} and \eqref{algo:2} and recalling that 
$|R|=\rho+1$ we conclude that indeed 
\begin{equation*} 
  \bx{}\qQ_R=\bigcup
  \limits_{\emptyset\subset{}S\subseteq{}R}\fF_S\mbox{\hspace{-0.5cm}}
  \bigcup_{\scalebox{0.7}{$\substack{\emptyset\subset{}S\subset{}R\\
 	S\subseteq{}S_1{}\subset{}S_2\subset{}\cdots\subset{}S_\ell\subset{}R}$}}
  \mbox{\hspace{-0.4cm}}\{y_{S_1},y_{S_2},\ldots,y_{S_\ell},\fF_S\}
  %%%%%%%%%%%%%%%%%%%%% 		
  \bigcup_{\scalebox{0.7}{$\substack{
	\emptyset{}\subset{}S_1{}\subset{}S_2\subset{}\cdots\subset{}
	S_\ell\subset{}R}$}}
  \mbox{\hspace{-0.4cm}}\{y_{S_1},y_{S_2},\ldots,y_{S_\ell}\}.\qedhere
\end{equation*}
\end{proof}

\medskip

%%%%%%%%%%%%%%%%%%%%%%%%Lemma%%%%%%%%%%%%%%%%%%%%%%%%%%%%%%%%%%%%%%%%%%%%%%%%%%%
%%%%%%%%%%%%%%%%%%%%%%%%%%%%%%%%%%%%%%%%%%%%%%%%%%%%%%%%%%%%%%%%%%%%%%%%%%%%%%%%
\begin{lemma}
\label{lem:QR-KSjoin} 
For $\emptyset{}\subset{}R\subseteq{}[r],$ the non-trivial faces of the 
simplicial polytope 
$\qQ_R$ are:
\begin{equation}
  \label{equ:QR-KSjoin} 
  \bx{}\qQ_R=\kK_R\bigcup_{\scalebox{0.7}
    {$\substack{\emptyset\subset{}S\subset{}R\\		
	S=S_1{}\subset{}S_2\subset{}\cdots\subset{}S_\ell\subset{}R}$}}
  \{y_{S_1},y_{S_2},\ldots,y_{S_\ell},\kK_S\}.
\end{equation}
\end{lemma}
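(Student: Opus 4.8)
The plan is to derive \eqref{equ:QR-KSjoin} directly from Lemma \ref{lem:QR-FSjoin} by regrouping the three families of faces appearing in \eqref{equ:QR-FSjoin} according to the \emph{smallest} index set $S$ in the ascending chain attached to each face. Recall that $\kK_R$ is the closure under subface inclusion of $\bigcup_{\emptyset\subset S\subseteq R}\fF_S$; the key elementary observation is that $\kK_R$ is precisely the union, over $\emptyset\subset S\subseteq R$, of the faces in $\fF_S$ \emph{together with} the empty face, i.e. $\kK_R=\bigcup_{\emptyset\subset S\subseteq R}\fF_S\cup\{\emptyset\}$ as a set of faces, since every face of $\cC_{[r]}$ that has a vertex from some but not all $V_i$, $i\in R$, already appears in some $\fF_S$ with $S\subsetneq R$. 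Thus the first union in \eqref{equ:QR-FSjoin} is exactly $\kK_R$.

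Next I would handle the mixed terms $\{y_{S_1},\dots,y_{S_\ell},\fF_S\}$ with $\emptyset\subset S\subset R$ and $S\subseteq S_1\subset\cdots\subset S_\ell\subset R$, and the pure-auxiliary terms $\{y_{S_1},\dots,y_{S_\ell}\}$ with $\emptyset\subset S_1\subset\cdots\subset S_\ell\subset R$. The point is that in \eqref{equ:QR-KSjoin} the chains are indexed so that the \emph{first} element $S_1$ of the chain \emph{equals} $S$ (rather than merely containing it), and the face attached is a face of $\kK_S$ (rather than of $\fF_S$). So I would argue: given a face $\{y_{S_1},\dots,y_{S_\ell},\varphi\}$ of $\bx\qQ_R$ with $\varphi$ a face of some $\fF_S$, $S\subseteq S_1$, the face $\varphi$ is also a face of $\fF_{S_1}$ only if it has vertices from every $V_i$, $i\in S_1$ — which need not hold — but $\varphi$ \emph{is} in all cases a face of $\kK_{S_1}$, because $\kK_{S_1}$ is closed under taking subfaces and contains $\fF_S$ for every $S\subseteq S_1$. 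Conversely, every face counted on the right-hand side of \eqref{equ:QR-KSjoin}, namely $\{y_{S_1},\dots,y_{S_\ell},\psi\}$ with $\psi$ a face of $\kK_{S_1}$, either has $\psi\in\fF_X$ for some $\emptyset\subset X\subseteq S_1$ (so it matches a term in the second union of \eqref{equ:QR-FSjoin} with $S=X$), or $\psi=\emptyset$ (so it matches a term in the third union of \eqref{equ:QR-FSjoin}). This establishes the set equality \eqref{equ:QR-KSjoin}; moreover the union on the right is disjoint because the chain $S=S_1\subset\cdots\subset S_\ell\subset R$ and the face $\psi\in\kK_S$ are uniquely recoverable from the face (the $y$-vertices determine the chain, and removing them leaves $\psi$).

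The main obstacle, and the step I would write most carefully, is verifying that the regrouping is a genuine \emph{bijection} of face sets and not merely a containment — in particular that no face is double-counted and that the chains on the right-hand side are uniquely determined. The subtlety is that a single face $\varphi\in\fF_S$ sits inside $\kK_{S_1}$ for \emph{every} $S_1\supseteq S$, so in passing from the $\fF$-description to the $\kK$-description one must pin down which $S_1$ is the legitimate ``first link'' of the chain; the algorithm \eqref{algorithm} resolves this because the auxiliary vertex $y_{S_1}$ is added precisely when subdividing $\{\qQ_{S_1}\}$, so a face containing $y_{S_1}$ but no $y_{S'}$ with $S'\subsetneq S_1$ forces $S_1$ to be the minimal index of the chain. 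I would therefore phrase the argument so that the chain $S_1\subset\cdots\subset S_\ell$ attached to a face is read off from its auxiliary vertices, and then observe that the face obtained by deleting those vertices lies in $\kK_{S_1}$, with $S_1$ uniquely determined. With that uniqueness in hand, comparing \eqref{equ:QR-KSjoin} termwise against \eqref{equ:QR-FSjoin} (splitting each $\kK_S$ into $\bigcup_{\emptyset\subset X\subseteq S}\fF_X\cup\{\emptyset\}$) completes the proof. Alternatively — and this may be the cleanest write-up — one can mimic the induction on $|R|$ used in Lemma \ref{lem:QR-FSjoin}, checking that the base case $|R|=1$ gives $\bx\qQ_{\{i\}}=\fF_{\{i\}}=\bx P_i=\kK_{\{i\}}$ and that the last stellar-subdivision step turns each $\{\qQ_S\}$, $|S|=\rho$, into $\bigcup_{S=S_1\subset\cdots\subset S_\ell\subset R}\{y_{S_1},\dots,y_{S_\ell},\kK_S\}$; I would present whichever of the two is shorter once the bijection bookkeeping is made explicit.
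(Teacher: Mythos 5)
Your proposal is correct and follows essentially the same route as the paper: both expand each $\kK_S$ on the right of \eqref{equ:QR-KSjoin} into $\bigcup_{\emptyset\subset X\subseteq S}\fF_X$ together with the empty face, and then regroup so as to recover the three disjoint unions in \eqref{equ:QR-FSjoin}. The paper's write-up is a compact chain of set identities performing exactly this regrouping; your extra remarks on why the chain $S_1\subset\cdots\subset S_\ell$ is uniquely recoverable from the $y$-vertices of a face are a sound way to make the disjointness explicit, but do not change the argument.
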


%%%%%%%%%%%%%%%%%%%%%%%%%Proof%%%%%%%%%%%%%%%%%%%%%%%%%%%%%%%%%%%%%%%%%%%%%%%%%%
%%%%%%%%%%%%%%%%%%%%%%%%%%%%%%%%%%%%%%%%%%%%%%%%%%%%%%%%%%%%%%%%%%%%%%%%%%%%%%%%
\begin{proof}
Recall that the faces of $\kK_R$ are all faces in
$\bigcup_{\emptyset\subset{}S\subseteq{}R}\fF_S$ together with the empty set. 
We can therefore  write the right-hand side of \eqref{equ:QR-KSjoin} as:
\begin{align*}
  &\underset{\emptyset\subset{}S\subseteq{}R}{\bigcup}\fF_S
  \bigcup_{\scalebox{0.7}{$\substack{\emptyset\subset{}S\subset{}R\\
	S=S_1{}\subset{}S_2\subset{}\cdots\subset{}S_\ell\subset{}R}$}}
  \{y_{S_1},y_{S_2},\ldots,y_{S_\ell},
  \underset{\emptyset\subset{}S'\subseteq{}S}{\bigcup}\fF_{S'}\}
  \bigcup_{\scalebox{0.7}{$\substack{\emptyset\subset{}S\subset{}R\\
	S=S_1{}\subset{}S_2\subset{}\cdots\subset{}S_\ell\subset{}R}$}}
  \{y_{S_1},y_{S_2},\ldots,y_{S_\ell}\}
  %%%%%%%%%%%%%%%%%%%%%%%	
  \\
  %%%%%%%%%%%%%%%%%%%%%%%		
  &=\underset{\emptyset\subset{}S\subseteq{}R}{\bigcup}\fF_S
  \bigcup_{
    \scalebox{0.7}{$\substack{\emptyset\subset{}S'\subseteq{}S\subset{}R\\
	S'\subseteq{}S=S_1{}\subset{}S_2\subset{}\cdots\subset{}S_\ell\subset{}R}$}}
  \{y_{S_1},y_{S_2},\ldots,y_{S_\ell},\fF_{S'}\}
  \bigcup_{\scalebox{0.7}
    {$\substack{\emptyset\subset{}S\subset{}R\\
	S=S_1{}\subset{}S_2\subset{}\cdots\subset{}S_\ell\subset{}R}$}}
  \{y_{S_1},y_{S_2},\ldots,y_{S_\ell}\}
  %%%%%%%%%%%%%%%%%%%%%%%	
  \\
  %%%%%%%%%%%%%%%%%%%%%%%%	
  &=\underset{\emptyset\subset{}S\subseteq{}R}{\bigcup}\fF_S
  \bigcup_{\scalebox{0.7}{
      $\substack{\emptyset\subset{}S'\subset{}R\\
	S'\subseteq{}S_1{}\subset{}S_2\subset{}\cdots\subset{}S_\ell\subset{}R}$}}	
  \{y_{S_1},y_{S_2},\ldots,y_{S_\ell},\fF_{S'}\}
  \bigcup_{\scalebox{0.7}{$\substack{\emptyset\subset{}S\subset{}R\\
	S=S_1{}\subset{}S_2\subset{}\cdots\subset{}S_\ell\subset{}R}$}}
  \{y_{S_1},y_{S_2},\ldots,y_{S_\ell}\},
\end{align*}
which is precisely the quantity in \eqref{equ:QR-FSjoin} and thus equal to 
the set of faces of $\bx{}\qQ_R.$
\end{proof}

The next lemma shows how the iterated stellar subdivisions performed 
in \eqref{algorithm} are captured in the enumerative structure of
$\qQ_R$.
%%%%%%%%%%%%%%%%%%%%%%%%%%%%%%%%%%%%%%%%%%%%%%%%%%%%%%%%%%%%%%%%%%%%%%%%%%%%%%%%
%%%%%%%%%%%%%%%%%lemma:face numbers of Q w.r. to face numbers of F and K%%%%%%%%
\begin{lemma}\label{lem:stirling}
  For any $\emptyset{}\subset{}R\subseteq[r]$ and
  $-1\le{}k\le{}d+|R|-2$, we have:
  \begin{align}
    f_k(\bx{}\qQ_{R})&=f_{k}(\fF_R)+\sum_{\emptyset\subset{}S\subset{}R}
    \left(\sum_{i=0}^{|R|-|S|}i!\,\stirl{|R|-|S|+1}{i+1}\,f_{k-i}(\fF_S)
    \right),
    \label{equ:fkQR-FS}\\
    f_k(\bx{}\qQ_{R})&=f_k(\kK_{R})+
    \sum_{\emptyset\subset{}S\subset{}R}\left(\sum_{i=0}^{|R|-|S|-1}
    (i+1)!\,\stirl{|R|-|S|}{i+1}\,f_{k-1-i}(\kK_S)\right),
    \label{equ:fkQR-KS}
  \end{align}
  where $\stirl{m}{k}$ are the Stirling numbers of the second kind
  \cite{stirling-2ndkind}:
  \begin{equation*}
    \stirl{m}{k}=\frac{1}{k!}\sum_{i=0}^{k}(-1)^{k-i}\binom{k}{i}i^m,
    \qquad
    m\ge{}k\ge{}0.
  \end{equation*}
\end{lemma}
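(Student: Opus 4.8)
The plan is to derive \eqref{equ:fkQR-FS} and \eqref{equ:fkQR-KS} directly from the face decompositions in Lemmas \ref{lem:QR-FSjoin} and \ref{lem:QR-KSjoin}, by counting, within each decomposition, the number of $k$-faces contributed by a fixed ``base'' face in some $\fF_S$ (resp. $\kK_S$) joined with an arbitrary chain $S\subseteq S_1\subset S_2\subset\cdots\subset S_\ell\subset R$ (resp. $S=S_1\subset\cdots\subset S_\ell\subset R$) of auxiliary vertices. First I would fix $\emptyset\subset S\subset R$ and a face $\sigma\in\fF_S$ with $\dim\sigma=j$, so $\sigma$ accounts for a $j$-face. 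In \eqref{equ:QR-FSjoin}, the faces built on $\sigma$ are exactly the joins $\{y_{S_1},\ldots,y_{S_\ell},\sigma\}$ over chains with $S\subseteq S_1\subset S_2\subset\cdots\subset S_\ell\subset R$, together with $\sigma$ itself (the $\ell=0$ case). Such a join has dimension $j+\ell$, hence contributes to $f_{k}(\bx\qQ_R)$ precisely when $\ell=k-j$, i.e. when $\sigma$ is a $(k-i)$-face with $i=\ell$. So the coefficient of $f_{k-i}(\fF_S)$ in \eqref{equ:fkQR-FS} is the number of strictly increasing chains $S_1\subset S_2\subset\cdots\subset S_i$ of subsets of $[r]$ with $S\subseteq S_1$ and $S_i\subset R$; equivalently, the number of ordered set partitions realizable inside the interval $[S,R]$ of the Boolean lattice, which has $|R|-|S|$ atoms above $S$ and one forbidden top. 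Counting chains of length $i$ (allowing repeats to be excluded) in a Boolean lattice on $m:=|R|-|S|$ elements, omitting the top element, is a standard surjection count and yields $i!\,\stirl{m+1}{i+1}$: a chain $\emptyset\ne A_1\subsetneq\cdots\subsetneq A_i\subsetneq[m]$ corresponds to an ordered partition of $[m]$ into $i+1$ nonempty blocks (the last block being $[m]\setminus A_i$), of which there are $(i+1)!\,\stirl{m}{i+1}$ — but here we must be careful, and the correct identity, after accounting for the base $S$ being allowed to coincide with $S_1$, is $\sum_{i}i!\,\stirl{m+1}{i+1}x^i$-type bookkeeping; I would verify the exact Stirling coefficient by the generating-function identity $\sum_{i\ge0} i!\stirl{m+1}{i+1}z^i=\sum_{i\ge0}\prod_{j=1}^{i}\frac{jz}{1-jz}$-style argument, or more simply by induction on $m$ matching the recursion $\stirl{m+1}{i+1}=(i+1)\stirl{m}{i+1}+\stirl{m}{i}$ against the recursion that the chain counts satisfy when one peels off whether $S_1=S$ or not. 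Adding back the $\ell=0$ term (the face $\sigma$ itself, with $i=0$ and coefficient $0!\stirl{m+1}{1}=1$) and finally the $S=R$ term, which contributes $\fF_R$ with no auxiliary vertices attached and hence exactly $f_k(\fF_R)$, gives \eqref{equ:fkQR-FS}.

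For \eqref{equ:fkQR-KS} I would run the identical argument on the decomposition \eqref{equ:QR-KSjoin}. There the base faces come from $\kK_S$, and the attached chains satisfy the stricter condition $S=S_1\subset S_2\subset\cdots\subset S_\ell\subset R$ — the first subset in the chain must equal $S$, not merely contain it. For a $j$-face $\tau\in\kK_S$, the join $\{y_{S_1},\ldots,y_{S_\ell},\tau\}$ with $S_1=S$ has dimension $j+\ell$ and is a $k$-face when $\ell=k-j$; but now, because $y_S$ is always included (as $y_{S_1}$ with $S_1=S$), the minimal nonzero $\ell$ is $1$, so writing $\ell=i+1$ the base face $\tau$ ranges over $(k-1-i)$-faces of $\kK_S$. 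The relevant chain count is then the number of chains $S=S_1\subset S_2\subset\cdots\subset S_{i+1}\subset R$, i.e. the number of strictly increasing chains of length $i$ strictly between $S$ and $R$ in the Boolean lattice on $m=|R|-|S|$ elements, which by the surjection/ordered-partition identity equals $(i+1)!\,\stirl{m}{i+1}$ (an ordered partition of the $m$ atoms above $S$ into $i+1$ nonempty blocks, the last forced to be nonempty by $S_{i+1}\subsetneq R$). The $S=R$ term contributes $f_k(\kK_R)$ as before, giving \eqref{equ:fkQR-KS}. As a sanity check, one can verify internal consistency: substituting the expansion $f_k(\kK_S)=\sum_{\emptyset\subset X\subseteq S}f_k(\fF_X)$ from \eqref{equ:fkKR-def} into \eqref{equ:fkQR-KS} and collecting terms must reproduce \eqref{equ:fkQR-FS}, which amounts to the Stirling identity $\sum_{|S|:|X|\le|S|<|R|}\binom{m'}{|S|-|X|}(i+1)!\stirl{|R|-|S|}{i+1}=\text{(coefficient in \eqref{equ:fkQR-FS})}$ after reindexing — a pleasant but routine verification.

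The main obstacle I anticipate is pinning down the exact Stirling-number coefficients and the off-by-one shifts between the two formulas, rather than any conceptual difficulty. The decompositions in Lemmas \ref{lem:QR-FSjoin} and \ref{lem:QR-KSjoin} reduce everything to a purely combinatorial count of chains in a Boolean interval, but the bookkeeping is delicate: one must correctly handle (i) whether the bottom of the chain is allowed to equal $S$ (it is in the $\fF$-version, it is forced in the $\kK$-version), which is the source of the discrepancy between $i!\stirl{m+1}{i+1}$ and $(i+1)!\stirl{m}{i+1}$; (ii) the constraint $S_\ell\subsetneq R$ at the top, which is what makes the top block of the associated ordered set partition nonempty and hence gives a clean Stirling count rather than an ordered-Bell-number count; and (iii) the degenerate $S=R$ terms, which carry no auxiliary vertices and must be separated out as the leading $f_k(\fF_R)$ (resp. $f_k(\kK_R)$) summand. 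I would handle the chain count cleanly by the bijection with surjections: a strictly increasing chain $\emptyset\ne A_1\subsetneq\cdots\subsetneq A_p\subsetneq[m]$ is the same data as a surjection $[m]\twoheadrightarrow[p+1]$ up to the natural ordering, giving $(p+1)!\stirl{m}{p+1}$ such chains; applying this with the right $p$ in each case, and adding the $p=0$ (no-chain) contribution where it is allowed, yields both identities. Finally, I would note that the edge case $k=-1$ in the statement is consistent: for $S\subset R$ one has $f_{-1}(\fF_S)=(-1)^{|S|-1}$ by the convention fixed after \eqref{equ:fkKR-def}, and the formulas are to be read with $f_{k-i}(\fF_S)=0$ whenever the index falls outside $[-1,\dim\fF_S]$, exactly as stipulated in the paper's conventions.
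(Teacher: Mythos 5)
Your proposal is correct and follows essentially the same route as the paper: count $k$-faces from the disjoint decompositions of $\bx\qQ_R$ in Lemmas~\ref{lem:QR-FSjoin} and~\ref{lem:QR-KSjoin}, then identify the coefficient of $f_{k-i}(\fF_S)$ (resp.\ $f_{k-1-i}(\kK_S)$) as the number of chains $S\subseteq S_1\subset\cdots\subset S_i\subset R$ (resp.\ $S=S_1\subset\cdots\subset S_{i+1}\subset R$) and evaluate it via the surjection/ordered-set-partition bijection, which is precisely the content of the paper's Lemma~\ref{lem:chain_counting} in the appendix. The only difference is organizational — the paper isolates the chain count as a stand-alone appendix lemma (proved by exhibiting an explicit bijection with $(\text{permutation},\text{partition})$ pairs, essentially your surjection argument), whereas you prove it inline; you should commit to the surjection argument and drop the hedging about a generating-function route, since the bijection already gives the clean identities $\ell!\,\stirl{m+1}{\ell+1}$ and $\ell!\,\stirl{m}{\ell}$ directly.
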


%%%%%%%%%%%%%%%%%%%%%%%%%%%%%%%%%%%%%%%%%%%%%%%%%%%%%%%%%%%%%%%%%%%%%%%%%%%%%%%%
\begin{proof}
  To prove \eqref{equ:fkQR-FS}, we count the $(k+1)$-element subsets of the set 
  in relation \eqref{equ:QR-FSjoin} of Lemma \ref{lem:QR-FSjoin}. This gives:  
  \begin{align} 
    f_{k}(\bx{}\qQ_R)&=\sum\limits_{\emptyset\subset{}S\subseteq{}R}
    f_k(\fF_R)+\sum\limits_{\emptyset\subset{}S\subset{}R}
    \sum_{i=1}^{|R|-|S|}\left|\{S\subseteq{}S_1{}\subset{}S_2\subset{}
    \cdots\subset{}S_i\subset{}R\}\right|f_{k-i}(\fF_S)
    \label{ch:1}\\
    &=\sum\limits_{\emptyset\subset{}S\subseteq{}R}f_k(\fF_R)+
    \sum\limits_{\emptyset\subset{}S\subset{}R}
    \sum_{i=1}^{|R|-|S|}\spsac_{|R|-|S|}(\emptyset,i)f_{k-i}(\fF_S)
    \label{ch:2}\\
    &=\sum\limits_{\emptyset\subset{}S\subseteq{}R}f_k(\fF_R)+
    \sum\limits_{\emptyset\subset{}S\subset{}R}\sum_{i=1}^{|R|-|S|}
    i!\,\stirl{|R|-|S|+1}{i+1}f_{k-i}(\fF_S)\label{ch:3}\\
    &=f_k(\fF_R)+\sum\limits_{\emptyset\subset{}S\subset{}R}
    \sum_{i=0}^{|R|-|S|}i!\,\stirl{|R|-|S|+1}{i+1}f_{k-i}(\fF_S),
    \label{ch:4}
  \end{align} 
  where, 
	\begin{itemize}
		\item 
		the value $i=k+1$ in \eqref{ch:1} combined with the fact that 
		$f_{-1}(\fF_S)=(-1)^{|S|-1}$, counts precisely the elements in 
		$\bigcup\limits_{\scalebox{0.7}{$\substack{
		\emptyset{}\subset{}S_1{}\subset{}S_2\subset{}\cdots\subset{}
		S_\ell\subset{}R}$}}\{y_{S_1},y_{S_2},\ldots,y_{S_\ell}\}$ 
		in relation \eqref{equ:QR-FSjoin} of Lemma \ref{lem:QR-FSjoin}
		via inclusion exclusion,    
		%%%%%%%%%%%
		\item to go from \eqref{ch:2} to \eqref{ch:3} we used Lemma 
		\ref{lem:chain_counting}(ii), and 
		%%%%%%%%%%%%		
		\item from \eqref{ch:3} to \eqref{ch:4} 
		we used the fact that $\stirl{m}{1}=1$ for all $m\geq{}1.$
	\end{itemize}
  %%%%%%%%%%%%%%%%%%%%%%%%%%%%%%%%%%%%%%%%%%%%%%%%%%%%%%%%%%%%%%%%%	

  To prove \eqref{equ:fkQR-KS}, we utilize Lemma \ref{lem:QR-KSjoin}:
  \begin{align} 			
    f_{k}(\bx{}\qQ_R)&=f_k(\kK_R)+\sum
    \limits_{\emptyset\subset{}S\subset{}R}\sum_{i=1}^{|R|-|S|}
    \left|\{S=S_1{}\subset{}S_2\subset{}\cdots\subset{}S_i\subset{}R\}\right|
    f_{k-i}(\kK_S)\notag\\
    &=f_k(\kK_R)+\sum\limits_{\emptyset\subset{}S\subset{}R}
    \sum_{i=1}^{|R|-|S|}\spsbc_{|R|-|S|} (\emptyset,i)\,f_{k-i}(\kK_S)
    \label{ch:5}\\
    &=f_k(\kK_R)+\sum\limits_{\emptyset\subset{}S\subset{}R}
    \sum_{i=1}^{|R|-|S|}i!\,\stirl{|R|-|S|}{i}\,f_{k-i}(\kK_S)
    \label{ch:6}\\
    &=f_k(\kK_R)+\sum\limits_{\emptyset\subset{}S\subset{}R}
    \sum_{i=0}^{|R|-|S|-1}(i+1)!\,\stirl{|R|-|S|}{i+1}\,f_{k-i-1}(\kK_S),
    \notag
  \end{align} 
  where, to go from \eqref{ch:5} to \eqref{ch:6} we used Lemma 
  \ref{lem:chain_counting}(i). 
\end{proof}
%%%%%%%%%%%%%%%%%%%%%%%%%%%%%%%%%%%%%%%%%%%%%%%%%%%%%%%%%%%%%%%%%%%%%%%%%%%%%%%%

Restating relations \eqref{equ:fkQR-FS} and \eqref{equ:fkQR-KS} in
terms of  generating functions, we arrive at Lemma \ref{lem:stirl}.
These relations will be used to transform \eqref{equ:fkQR-FS} and
\eqref{equ:fkQR-KS} in their $h$-vector equivalents. 
%%%%%%%%%%%%%%%%%%%%%%%%%%%%%%%%%%%%%%%%%%%%%%%%%%%%%%%%%%%%%%%%%%%%%%%%%%%%%%%%
\begin{lemma} 
\label{lem:stirl}
 For all $\emptyset\subset{}R\subseteq{}[r]$ we have:
 \begin{align}
   \ft{\bx{}\qQ_R}{t}&=\ft{\fF_R}{t}+
   \sum_{\emptyset\subset{}S\subset{}R}
   \sum_{i=0}^{|R|-|S|}i!\,\stirl{|R|-|S|+1}{i+1}t^{|R|-|S|-i}\ft{\fF_S}{t},
   \label{equ:stirl1}\\
   \ft{\bx{}\qQ_R}{t}&=\ft{\kK_R}{t}+\sum_{\emptyset\subset{}S\subset{}R}
   \sum_{i=0}^{|R|-|S|-2}(i+1)!\,\stirl{|R|-|S|}{i+1}t^{|R|-|S|-i}
   \ft{\kK_S}{t}.
   \label{equ:stirl2}
 \end{align}
\end{lemma}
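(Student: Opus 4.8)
The plan is to re-encode the two $f$-vector identities \eqref{equ:fkQR-FS} and \eqref{equ:fkQR-KS} of Lemma \ref{lem:stirling} at the level of $f$-polynomials, by multiplying through by the appropriate powers of $t$ and summing over the face dimension. Since $\qQ_R$ is a simplicial $(d+|R|-1)$-polytope, we have $\dim(\bx{}\qQ_R)=d+|R|-2$, so by definition $\ft{\bx{}\qQ_R}{t}=\sum_{k=0}^{d+|R|-1}f_{k-1}(\bx{}\qQ_R)\,t^{d+|R|-1-k}$. To prove \eqref{equ:stirl1} I would substitute here the value of $f_{k-1}(\bx{}\qQ_R)$ obtained from \eqref{equ:fkQR-FS} (with $k$ replaced by $k-1$), and then split the resulting double sum into the contribution of $\fF_R$ and the contributions of the proper subsets $S\subset{}R$.

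The $\fF_R$-term is $\sum_{k=0}^{d+|R|-1}f_{k-1}(\fF_R)\,t^{d+|R|-1-k}$, which equals $\ft{\fF_R}{t}$ precisely because $\dim(\fF_R)=d+|R|-2=\dim(\bx{}\qQ_R)$. For the rest, I would interchange the order of summation, pulling the sums over $S$ and over $i$ to the outside, so that for each fixed $S$ and $i$ there remains the inner sum $\sum_{k=0}^{d+|R|-1}f_{k-1-i}(\fF_S)\,t^{d+|R|-1-k}$. Substituting $\ell=k-i$, and noting that $f_{\ell-1}(\fF_S)$ vanishes outside $0\le{}\ell\le{}d+|S|-1$ (so that changing the summation bounds costs nothing), this collapses to $t^{(d+|R|-1)-(d+|S|-1)-i}\,\ft{\fF_S}{t}=t^{|R|-|S|-i}\,\ft{\fF_S}{t}$; the surviving power of $t$ simply records the dimension gap between $\bx{}\qQ_R$ and $\fF_S$. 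Collecting the pieces gives \eqref{equ:stirl1}.

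The proof of \eqref{equ:stirl2} is entirely parallel, starting instead from \eqref{equ:fkQR-KS}: the $\kK_R$-term is $\ft{\kK_R}{t}$ since $\dim(\kK_R)=d+|R|-2$, and for each $S\subset{}R$ and each $i$ the inner sum over $k$ of $f_{k-2-i}(\kK_S)\,t^{d+|R|-1-k}$ (which is what \eqref{equ:fkQR-KS} produces after the shift $k\mapsto k-1$) is reindexed to a power of $t$ times $\ft{\kK_S}{t}$, the exponent and the summation range of $i$ being determined by the dimension gap between $\kK_S$ and $\bx{}\qQ_R$ together with the extra index shift that \eqref{equ:fkQR-KS} carries relative to \eqref{equ:fkQR-FS}. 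To bring the summation limits to the form stated one uses only the elementary identities $\stirl{m}{j}=0$ for $j>m$ and $\stirl{m}{1}=1$, which allow terms to be appended or removed at the ends of the inner sums.

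I do not anticipate a genuine obstacle: the lemma is a routine restatement of Lemma \ref{lem:stirling} in generating-function language. The two points that need care are (i) keeping straight that in the $f$-polynomial convention $\bx{}\qQ_R$, $\fF_S$ and $\kK_S$ sit in ambient dimensions $d+|R|-1$, $d+|S|-1$ and $d+|S|-1$ respectively — this is exactly what produces the factors $t^{|R|-|S|-i}$ — and (ii) the bookkeeping of the one-step index shifts that appear in the $\kK$-version but not in the $\fF$-version.
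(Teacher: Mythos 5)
Your proposal is correct and follows essentially the same route as the paper: substitute the $f$-vector identities \eqref{equ:fkQR-FS} and \eqref{equ:fkQR-KS} into $\ft{\bx{}\qQ_R}{t}=\sum_{k=0}^{d+|R|-1}f_{k-1}(\bx{}\qQ_R)\,t^{d+|R|-1-k}$, peel off the $\fF_R$ (resp. $\kK_R$) term, interchange the order of summation, and reindex the inner $k$-sum, using that $f_{\ell-1}$ vanishes outside $0\le\ell\le d+|S|-1$ so that the leftover power of $t$ is just the dimension gap. This is exactly what the paper's proof does.

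One small but genuine point of care: your final paragraph claims the bookkeeping can be made to match the \emph{printed} form of \eqref{equ:stirl2} by invoking $\stirl{m}{j}=0$ for $j>m$ and $\stirl{m}{1}=1$. That cannot work here. Running your own (correct) computation starting from $f_{k-2-i}(\kK_S)$ — the shift $k\mapsto k-1$ applied to \eqref{equ:fkQR-KS} — gives the exponent $t^{|R|-|S|-1-i}$ with $i$ ranging from $0$ to $|R|-|S|-1$. The version printed in \eqref{equ:stirl2}, $t^{|R|-|S|-i}$ with $i$ up to $|R|-|S|-2$, is not a reindexing of that sum (one checks that no shift of $i$ aligns both the exponent and the coefficient $(i+1)!\,\stirl{|R|-|S|}{i+1}$), and appending or dropping boundary terms of a Stirling sum does not change the exponents. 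The discrepancy traces to an off-by-one slip in the paper's own derivation (which writes $f_{k-1-i}(\kK_S)$ where $f_{k-2-i}(\kK_S)$ should appear); this propagates into the statement of \eqref{equ:stirl2}, but is implicitly corrected when Lemma \ref{lem:hQ} invokes \eqref{WE2} with $m=|R|-|S|$, which requires exactly the range $0\le i\le m-1$ and exponent $(t-1)^{m-1-i}$ that your computation yields. In short: trust your index arithmetic, state the lemma with $t^{|R|-|S|-1-i}$ and upper limit $|R|-|S|-1$, and drop the attempt to reconcile with the printed form.
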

%%%%%%%%%%%%%%%%%%%%%%%%%%%%%%%%%%%%%%%%%%%%%%%%%%%%%%%%%%%%%%%%%%%%%%%%%%%%%%%%
\begin{proof}
Using  relation \eqref{equ:fkQR-FS} and recalling that 
$\dim(\bx{}\qQ_R)=d+|R|-2$, we have:
\begin{align*}
  &\ft{\bx{}\qQ_R}{t}=\sum_{k=0}^{d+|R|-1}f_{k-1}(\bx{}\qQ_R)t^{d+|R|-1-k}\\
  &\hspace*{7mm}
  =\sum_{k=0}^{d+|R|-1}f_{k-1}(\fF_R)t^{d+|R|-1-k}
  +\sum_{k=0}^{d+|R|-1}\sum_{\emptyset\subset{}S\subset{}R}\left(
  \sum_{i=0}^{|R|-|S|}i!\,\stirl{|R|-|S|+1}{i+1}\,f_{k-1-i}(\fF_S)
  \right)t^{d+|R|-1-k}\\
  &\hspace*{7mm}
  =\ft{\fF_R}{t}+\sum_{\emptyset\subset{}S\subset{}R}\sum_{i=0}^{|R|-|S|}i!\,
  \stirl{|R|-|S|+1}{i+1}t^{|R|-|S|-i}
  \sum_{k=0}^{d+|R|-1}f_{k-i-1}(\fF_S)\,t^{d+|S|-1-k+i}\\ 
  &\hspace*{7mm}
  =\ft{\fF_R}{t}+
  \sum_{\emptyset\subset{}S\subset{}R}\sum_{i=0}^{|R|-|S|}i!\,\stirl{|R|+|S|+1}{i+1}
  t^{|R|-|S|-i}\sum_{k=i}^{d+|S|-1+i}f_{k-i-1}(\fF_S)\,t^{d+|S|-1-k+i}\\ 
  &\hspace*{7mm}
  =\ft{\fF_R}{t}+
  \sum_{\emptyset\subset{}S\subset{}R}\sum_{i=0}^{|R|-|S|}i!\,\stirl{|R|-|S|+1}{i+1}
  t^{|R|-|S|-i}\sum_{k=0}^{d+|S|-1}f_{k-1}(\fF_S)\,t^{d+|S|-1-k}\\ 
  &\hspace*{7mm}
  =\ft{\fF_R}{t}+
  \sum_{\emptyset\subset{}S\subset{}R}\sum_{i=0}^{|R|-|S|}i!\,
  \stirl{|R|-|S|+1}{i+1}t^{|R|-|S|-i}\ft{\fF_S}{t}.
\end{align*}

%%%%%%%%%%%%%%%%%%%%%%%%%%%%%%%%%%%%%%%%%%%%%%%%%%%%%%%%%%%%%%%%%%%%%%%%%%%%%%%%
%%%%%%%%%%%%%%%%%%%%%%%%%%%%%%%%%%%%%%%%%%%%%%%%%%%%%%%%%%%%%%%%%%%%%%%%%%%%%%%%
Analogously, converting \eqref{equ:fkQR-KS} into its generating function
equivalent, we get:
\begin{align*}
  &\ft{\bx{}\qQ_R}{t}\\
  &\hspace*{4mm}
  =\sum_{k=0}^{d+|R|-1}\hspace{-0.1cm}f_{k-1}(\kK_R)t^{d+|R|-1-k}
  +\hspace{-0.1cm}\sum_{k=0}^{d+|R|-1}
  \sum_{\emptyset\subset{}S\subset{}R}\biggl(\sum_{i=0}^{|R|-|S|-1}(i+1)!\,
  \stirl{|R|-|S|}{i+1}\,f_{k-1-i}(\kK_S)\biggr)t^{d+|R|-1-k}\\
  &\hspace*{4mm}
  =\ft{\kK_R}{t}+\sum_{\emptyset\subset{}S\subset{}R}
  \sum_{i=0}^{|R|-|S|-1}(i+1)!\,\stirl{|R|-|S|}{i+1}
  t^{|R|-|S|-i}\sum_{k=0}^{d+|R|-1}f_{k-i-1}(\kK_S)\,t^{d+|S|-1-k+i}\\
  &\hspace*{4mm}
  =\ft{\kK_R}{t}+\sum_{\emptyset\subset{}S\subset{}R}\sum_{i=0}^{|R|-|S|-1}(i+1)!\,
  \stirl{|R|-|S|}{i+1}t^{|R|-|S|-i}\sum_{k=0}^{d+|S|-1}f_{k-1}(\kK_S)\,t^{d+|S|-1-k}\\
  &\hspace*{4mm}
  =\ft{\kK_R}{t}+\sum_{\emptyset\subset{}S\subset{}R}\sum_{i=0}^{|R|-|S|-1}(i+1)!\,
  \stirl{|R|-|S|}{i+1}t^{|R|-|S|-i}\ft{\kK_S}{t},
\end{align*}
where, in order to go from the third to the fourth line,
we changed variables (in the last sum) and we used the fact that 
$f_{k-1}(\kK_S)=0$ for $k>d+|S|-1.$
\end{proof}

The $h$-vector relations stemming from the $f$-vector relations above
are the subject of the following lemma.
\begin{lemma} 
\label{lem:hQ}
For all $\emptyset\subset{}R\subseteq{}[r]$ we have:
\begin{align} 
  \htt{\bx{}\qQ_R}{t}&=\htt{\fF_R}{t}+
  \sum_{\emptyset\subset{}S\subset{}R}\sum_{j=0}^{|R|-|S|-1}
  \eul{|R|-|S|}{j}\,t^{j+1}\,\htt{\fF_S}{t},\label{hQF}\\
  \htt{\bx{}\qQ_R}{t}&=\htt{\kK_R}{t}+
  \sum_{\emptyset\subset{}S\subset{}R}\sum_{j=0}^{|R|-|S|-1}
  \eul{|R|-|S|}{j}\,t^{j}\,\htt{\kK_S}{t},\label{hQK}
\end{align} 
where $\eul{m}{k}$ are the Eulerian numbers \cite{oeis:A008292,gkp-cm-89}:
\begin{equation*}
  \eul{m}{k}=\sum_{i=0}^{k}(-1)^{i}\binom{m+1}{i}(k+1-i)^m,
  \qquad
  m\ge{}k+1>0.
\end{equation*}
\end{lemma}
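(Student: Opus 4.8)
The plan is to derive each of the two $h$-vector identities directly from its $f$-vector counterpart in Lemma~\ref{lem:stirl} by the substitution $t\mapsto t-1$, using the relation $\htt{\yY}{t}=\ft{\yY}{t-1}$ from \eqref{hf}. Applying this to \eqref{equ:stirl1} gives
\begin{equation*}
  \htt{\bx{}\qQ_R}{t}=\htt{\fF_R}{t}+\sum_{\emptyset\subset{}S\subset{}R}
  \sum_{i=0}^{|R|-|S|}i!\,\stirl{|R|-|S|+1}{i+1}(t-1)^{|R|-|S|-i}\htt{\fF_S}{t},
\end{equation*}
and similarly from \eqref{equ:stirl2} we get the analogous expression with $\kK$ in place of $\fF$, with exponent $(t-1)^{|R|-|S|-i}$ and the coefficient $(i+1)!\,\stirl{|R|-|S|}{i+1}$, the inner sum running to $|R|-|S|-2$. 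So the entire content of the lemma is the polynomial identity, for each fixed $m\ge 1$,
\begin{equation*}
  \sum_{i=0}^{m}i!\,\stirl{m+1}{i+1}(t-1)^{m-i}=\sum_{j=0}^{m-1}\eul{m}{j}\,t^{j+1},
\end{equation*}
together with its $\kK$-flavoured twin
\begin{equation*}
  \sum_{i=0}^{m-2}(i+1)!\,\stirl{m}{i+1}(t-1)^{m-i}=\sum_{j=0}^{m-1}\eul{m}{j}\,t^{j}.
\end{equation*}

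To prove the first of these I would reindex the left side by setting $k=i+1$, so it reads $\sum_{k=1}^{m+1}(k-1)!\,\stirl{m+1}{k}(t-1)^{m+1-k}$, and then recognise this as a classical Worpitzky-type identity. Concretely, $(k-1)!\,\stirl{m+1}{k}$ counts, up to the usual normalisation, surjections, and the generating-function fact I would invoke is the standard expansion of the Eulerian polynomial in terms of Stirling numbers of the second kind: writing $A_m(t)=\sum_{j\ge 0}\eul{m}{j}t^j$, one has $A_m(t)=\sum_{k\ge 0}k!\,\stirl{m}{k}(t-1)^{m-k}$ for $m\ge 1$ (equivalently $\sum_{n\ge 0}n^m t^n=A_m(t)/(1-t)^{m+1}$ combined with the surjection expansion of $t^m$ in the basis $\binom{n}{k}$). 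The first identity is exactly $t\cdot A_m(t)=t\sum_{k\ge 0}k!\,\stirl{m}{k}(t-1)^{m-k}$ after a shift $k\mapsto k+1$, once one checks the $k=0$ term contributes only through the known boundary value $\stirl{m+1}{1}=1$ matching $\eul{m}{0}=1$. The second identity then follows from the first by isolating the $i\in\{m-1,m\}$ terms on the left of the $\fF$-identity: since $\stirl{m+1}{m+1}=1$ and $\stirl{m+1}{m}=\binom{m+1}{2}$ are explicit, one subtracts the top two terms of $\sum_{i=0}^{m}i!\,\stirl{m+1}{i+1}(t-1)^{m-i}$ and uses the recurrence $\stirl{m+1}{i+1}=(i+1)\stirl{m}{i+1}+\stirl{m}{i}$ to telescope the remainder into the $\kK$-shape; alternatively one notes both sides equal $A_m(t)$ directly and matches with the $\kK$-version of Lemma~\ref{lem:stirl} specialised appropriately.

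I expect the main obstacle to be purely bookkeeping: pinning down the exact summation ranges and the off-by-one shifts between the Stirling index $i$, the chain length appearing in Lemmas~\ref{lem:QR-FSjoin}--\ref{lem:QR-KSjoin}, and the Eulerian index $j$, and in particular verifying that the terms which formally lie outside the stated ranges (the $i=m$ term, or $i=m-1$ term, on the $f$-side) are precisely the ones that the substitution $t\mapsto t-1$ sends to the top-degree Eulerian contributions $\eul{m}{m-1}=1$. There is no deep step here — everything reduces to the single classical identity relating Eulerian polynomials to Stirling numbers of the second kind, which I would either cite from \cite{gkp-cm-89} or prove in one line via the operator identity $A_m(t)=(1-t)^{m+1}\sum_{n\ge 0}n^m t^n$ and Worpitzky's expansion $n^m=\sum_k k!\,\stirl{m}{k}\binom{n}{k}$. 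Once that is in hand, both \eqref{hQF} and \eqref{hQK} drop out by comparing coefficients of $\htt{\fF_S}{t}$ (resp.\ $\htt{\kK_S}{t}$) on the two sides.
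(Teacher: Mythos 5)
Your overall strategy --- substitute $t\mapsto t-1$ into Lemma~\ref{lem:stirl} via \eqref{hf} and reduce the lemma to a pair of Stirling--Eulerian polynomial identities --- is exactly the route the paper takes: the two identities you isolate are proved in the Appendix as Lemma~\ref{stirl-euler}, \eqref{WE1} and \eqref{WE2}, and your first identity is equivalent to \eqref{WE1} after the Eulerian-symmetry reindexing $j\mapsto m-1-j$.

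However, your ``$\kK$-flavoured twin'' is false as written. The identity
\[
  \sum_{i=0}^{m-2}(i+1)!\,\stirl{m}{i+1}(t-1)^{m-i}=\sum_{j=0}^{m-1}\eul{m}{j}\,t^{j}
\]
already fails at $m=1$, where the left-hand side is the empty sum $0$ while the right-hand side is $\eul{1}{0}=1$; more generally the degrees do not even match ($m$ on the left, $m-1$ on the right). The correct identity --- the one actually needed, \eqref{WE2} --- is
\[
  \sum_{i=0}^{m-1}(i+1)!\,\stirl{m}{i+1}(t-1)^{m-i-1}=\sum_{j=0}^{m-1}\eul{m}{j}\,t^{m-1-j}
  \quad\Bigl(=\sum_{j=0}^{m-1}\eul{m}{j}\,t^{j}\text{ by Eulerian symmetry}\Bigr),
\]
with the inner sum running to $m-1$ and the exponent of $(t-1)$ lowered by one. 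You inherited both off-by-one errors from the statement of \eqref{equ:stirl2} in Lemma~\ref{lem:stirl}, which as printed is inconsistent with \eqref{equ:fkQR-KS}: a direct check at $|R|=2$, $|S|=1$ gives $\ft{\bx\qQ_R}{t}=\ft{\kK_R}{t}+\sum_{S}\ft{\kK_S}{t}$ with coefficient $1$, not $t$, so the exponent in \eqref{equ:stirl2} must read $t^{|R|-|S|-i-1}$ and the upper limit must be $|R|-|S|-1$. Because you copied the faulty range and exponent uncritically, your proposed derivation of the second identity from the first (via the Stirling recurrence) would necessarily hit a contradiction; you should first re-derive \eqref{equ:stirl2} from \eqref{equ:fkQR-KS}, or at minimum sanity-check your target identity at $m=1,2$, before attempting a proof.
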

\begin{proof}
Using  \eqref{hf}, \eqref{equ:stirl1} and the symmetry of Eulerian numbers, 
we get:
\begin{align*}
  \htt{\bx{}\qQ_R}{t}&=\ft{\bx{}\qQ_R}{t-1}\\
  &=\ft{\fF_R}{t-1}+\sum_{\emptyset\subset{}S\subset{}R}
  \underbrace{\sum_{i=0}^{|R|-|S|-1}i!\,
    \stirl{|R|-|S|+1}{i+1}(t-1)^{|R|-|S|-i}}_{{\rm{}cf.}~\eqref{WE1}}\ft{\fF_S}{t-1}\\
  &=\htt{\fF_R}{t}+\sum_{\emptyset\subset{}S\subset{}R}\sum_{j=0}^{|R|-|S|-1}
  \eul{|R|-|S|}{j}t^{|R|-|S|-j}\htt{\fF_S}{t}\\
  &=\htt{\fF_R}{t}+\sum_{\emptyset\subset{}S\subset{}R}\sum_{j=0}^{|R|-|S|-1}
  \eul{|R|-|S|}{|R|-|S|-1-j}t^{|R|-|S|-j}\htt{\fF_S}{t}\\
  &=\htt{\fF_R}{t}+\sum_{\emptyset\subset{}S\subset{}R}\sum_{j=0}^{|R|-|S|-1}
  \eul{|R|-|S|}{j}t^{j+1}\htt{\fF_S}{t}.
\end{align*}
Analogously, using \eqref{hf}, \eqref{equ:stirl2} 
and the symmetry of Eulerian numbers, we deduce that: 
\begin{align*}
  \htt{\bx{}\qQ_R}{t}&=\ft{\bx{}\qQ_R}{t-1}\\
  &=\ft{\kK_R}{t-1}
  +\sum_{\emptyset\subset{}S\subset{}R}\sum_{i=0}^{|R|-|S|-2}(i+1)!\,
  \stirl{|R|-|S|}{i+1}t^{|R|-|S|-i-1}\ft{\kK_S}{t-1}\\
  &=\htt{\kK_R}{t}+\sum_{\emptyset\subset{}S\subset{}R}\underbrace{
    \sum_{i=0}^{|R|-|S|-2}(i+1)!\,\stirl{|R|-|S|}{i+1}(t-1)^{|R|-|S|-i}
  }_{{\rm{}cf.}~\eqref{WE2}}\htt{\kK_S}{t}\\
  &=\htt{\kK_R}{t}+\sum_{\emptyset\subset{}S\subset{}R}\sum_{i=0}^{|R|-|S|-2}
  \eul{|R|-|S|}{i}t^{|R|-|S|-i-1}\htt{\kK_S}{t}\\
  &=\htt{\kK_R}{t}+\sum_{\emptyset{}\subset{}S\subset{}R}
  \sum_{i=0}^{|R|-|S|-2}\eul{|R|-|S|}{|R|-|S|-1-i}t^{|R|-|S|-i-1}
  \htt{\kK_S}{t}\\
  &=\htt{\kK_R}{t}+\sum_{\emptyset\subset{}S\subset{}R}\sum_{i=0}^{|R|-|S|-2}
  \eul{|R|-|S|}{i}t^{i}\htt{\kK_S}{t}.\qedhere
\end{align*}
\end{proof} 

\section{The Dehn-Sommervile equations}
\label{sec:DS}

A very important structural property of the Cayley polytope $\cC_R$ is,
what we call, the \emph{Dehn-Sommervile equations}. For a single
polytope they reduce to the well-known Dehn-Sommerville equations,
whereas for two or more summands they relate the $h$-vectors of the
sets $\fF_R$ and $\kK_R$. The Dehn-Sommerville equations for $\cC_R$
are one of the major key ingredients for establishing our upper
bounds, as they permit us to reason for the maximality of the elements
of $\mb{h}(\fF_R)$ and $\mb{h}(\kK_R)$ by considering only the lower
halves of these vectors.

%%%%%%%%%%%%%%%%%%%%%%%%%%%%%%%%%%%%%%%%%%%%%%%%%%%%%%%%%%%%%%%%%%%%%%%%%%%%%%%%
%%%%%%%%%%%%%%%%%%%%%%%%%%%%%%%%%%%%%%%%%%%%%%%%%%%%%%%%%%%%%%%%%%%%%%%%%%%%%%%%
%%%%%%%%%%%%%%%%%%%%%%%%%%%%%%% dehn Sommervile %%%%%%%%%%%%%%%%%%%%%%%%%%%%%%% 
\begin{theorem}[Dehn-Sommerville equations]
  \label{theor:DS} 
  Let $\cC_R$ be the Cayley polytope of the $d$-polytopes 
  $P_i,i\in{}R$. Then, the following relations hold:
  \begin{equation} 
    t^{d+|R|-1}\htt{\fF_R}{\tfrac{1}{t}} = \htt{\kK_R}{t} 
    \label{equ:DF1} 
  \end{equation} 
  or, equivalently, 
  \begin{equation}
    h_{d+|R|-1-k}(\fF_{R})=h_k(\kK_{R}),
    \qquad
    0\leq{}k\leq{}d+|R|-1.
    \label{equ:DF2}
  \end{equation}
\end{theorem}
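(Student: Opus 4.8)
The plan is to derive the Dehn--Sommerville relations for $\fF_R$ and $\kK_R$ from the classical Dehn--Sommerville equations of the \emph{simplicial} polytope $\qQ_R$, using the $f$-to-$h$ dictionary of Lemmas~\ref{lem:stirl} and~\ref{lem:hQ}. Concretely, since $\qQ_R$ is a simplicial $(d+|R|-1)$-polytope, its boundary complex $\bx{}\qQ_R$ is a simplicial $(d+|R|-2)$-sphere, so the standard Dehn--Sommerville equations give $t^{d+|R|-1}\htt{\bx{}\qQ_R}{\tfrac1t}=\htt{\bx{}\qQ_R}{t}$, i.e.\ $h_k(\bx{}\qQ_R)=h_{d+|R|-1-k}(\bx{}\qQ_R)$ for all $k$. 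The task is then to ``peel off'' the contributions of the auxiliary vertices recorded in \eqref{hQF} and \eqref{hQK} and show that the palindromicity of $\mb{h}(\bx{}\qQ_R)$ forces the single symmetry relation \eqref{equ:DF1} between $\mb{h}(\fF_R)$ and $\mb{h}(\kK_R)$.

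The key steps, in order, are as follows. First I would induct on $|R|$; the base case $|R|=1$ is the ordinary Dehn--Sommerville equations, since there $\fF_{\{i\}}\equiv\bx{}P_i\equiv\kK_{\{i\}}$ and \eqref{equ:DF1} reads $t^{d}\htt{\bx{}P_i}{\tfrac1t}=\htt{\bx{}P_i}{t}$. For the inductive step, assume \eqref{equ:DF1} holds for all proper nonempty $S\subset R$. Apply the operation $t\mapsto 1/t$ followed by multiplication by $t^{d+|R|-1}$ to both \eqref{hQF} and \eqref{hQK}. On the left-hand side both produce $\htt{\bx{}\qQ_R}{t}$ by the Dehn--Sommerville equations of $\qQ_R$. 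On the right-hand side, inside each sum over $S$, the factor $t^{j+1}$ (resp.\ $t^j$) becomes $t^{d+|R|-1-(j+1)}=t^{(d+|S|-1)+(|R|-|S|-1-j)}$ (resp.\ $t^{d+|R|-1-j}$), and the factor $\htt{\fF_S}{\tfrac1t}$ (resp.\ $\htt{\kK_S}{\tfrac1t}$) gets paired with $t^{d+|S|-1}$ to yield, \emph{by the inductive hypothesis}, $\htt{\kK_S}{t}$ (resp.\ $\htt{\fF_S}{t}$). After re-indexing the inner sum via the symmetry $\eul{m}{j}=\eul{m}{m-1-j}$ of the Eulerian numbers, the transformed version of \eqref{hQF} becomes $\htt{\bx{}\qQ_R}{t}=t^{d+|R|-1}\htt{\fF_R}{\tfrac1t}+\sum_{\emptyset\subset S\subset R}\sum_{j}\eul{|R|-|S|}{j}\,t^{j}\,\htt{\kK_S}{t}$, and the transformed version of \eqref{hQK} becomes $\htt{\bx{}\qQ_R}{t}=t^{d+|R|-1}\htt{\kK_R}{\tfrac1t}+\sum_{\emptyset\subset S\subset R}\sum_{j}\eul{|R|-|S|}{j}\,t^{j+1}\,\htt{\fF_S}{t}$. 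Comparing the first of these with \eqref{hQK} itself (both equal $\htt{\bx{}\qQ_R}{t}$) shows $t^{d+|R|-1}\htt{\fF_R}{\tfrac1t}=\htt{\kK_R}{t}$, which is exactly \eqref{equ:DF1}; comparing the second with \eqref{hQF} gives the companion identity $t^{d+|R|-1}\htt{\kK_R}{\tfrac1t}=\htt{\fF_R}{t}$, which is \eqref{equ:DF1} with the roles of $t$ and $1/t$ interchanged and is therefore equivalent. Finally, extracting coefficients of $t^{d+|R|-1-k}$ on both sides of \eqref{equ:DF1} yields the scalar form \eqref{equ:DF2}.

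I expect the main obstacle to be purely bookkeeping: making sure the exponent shifts and the Eulerian-symmetry re-indexing line up so that the two ``auxiliary'' double sums appearing after the $t\mapsto 1/t$ substitution are \emph{identical} to the ones in \eqref{hQF}--\eqref{hQK} before substitution, so that they genuinely cancel and leave only the desired relation between $\fF_R$ and $\kK_R$. One subtlety worth checking explicitly is that the inductive hypothesis is applied only to sets $S$ with $\emptyset\subset S\subset R$ (strictly smaller), which is precisely the index range of the auxiliary sums, so the induction is well-founded; and that the degree of $\htt{\fF_S}{t}$ is $d+|S|-1$ and of $\htt{\kK_S}{t}$ is likewise $d+|S|-1$, which is what makes the pairing $t^{d+|S|-1}\htt{\cdot}{\tfrac1t}$ produce an honest polynomial. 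No additional geometric input beyond the already-established simpliciality and polytopality of $\qQ_R$ is needed.
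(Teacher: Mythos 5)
Your proposal is correct and follows essentially the same route as the paper: induct on $|R|$, invoke the classical Dehn--Sommerville palindromicity of the simplicial polytope $\qQ_R$, expand $\htt{\bx{}\qQ_R}{t}$ via Lemma~\ref{lem:hQ}, use the inductive hypothesis together with Eulerian-number symmetry to make the auxiliary double sums coincide, and cancel to isolate \eqref{equ:DF1}. The only cosmetic difference is that you apply the $t\mapsto 1/t$ substitution to \eqref{hQF} and compare with \eqref{hQK} as written, while the paper expands both sides of the polytope's Dehn--Sommerville identity and then matches; the bookkeeping is identical.
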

%%%%%%%%%%%%%%%%%%%%%%%%%% proof %%%%%%%%%%%%%%%%%%%%%%%%%%%%%%%%%%%%%%%%%%%%%%%

\begin{proof} 
  We prove our claim by induction on the size of  $R$, the case $|R|=1$ being 
  the Dehn-Somerville equations for a $d$-polytope. We next assume that our 
  claim holds for all $\emptyset\subset{}S\subset{}R$ and prove it for
  $R$. The ordinary Dehn-Somerville relations, written in generating
  function form, for the (simplicial) $(d+|R|-1)$-polytope $\qQ_R$
  imply that:
  \begin{equation} 
    \htt{\bx{}\qQ_R}{t}=t^{d+|R|-1}\htt{\bx{}\qQ_R}{\tfrac{1}{t}}.
    \label{prDS1} 
  \end{equation} 
  In view of  relation \eqref{hQF} of Lemma \ref{lem:hQ}, 
  the right-hand side  of \eqref{prDS1}  becomes:
  \begin{equation} 
    t^{d+|R|-1}\htt{\fF_R}{\tfrac{1}{t}}+t^{d+|R|-1}
    \sum_{\emptyset\subset{}S\subset{}R}\sum_{j=0}^{|R|-|S|-1}\eul{|R|-|S|}{j}
    t^{-j-1}\htt{\fF_S}{\tfrac{1}{t}}. \label{RR}
  \end{equation}
  Using relation \eqref{hQK}, along with the induction
  hypothesis, the left-hand side of \eqref{prDS1} becomes:
  \begin{align} 
    \htt{\kK_R}{t}&+\sum_{\emptyset\subset{}S\subset{}R} 
    \sum_{j=0}^{|R|-|S|-1}\eul{|R|-|S|}{j}t^{j}\htt{\kK_S}{t}
    \label{E1}\\
    &=\htt{\kK_R}{t}+\sum_{\emptyset\subset{}S\subset{}R} 
    \sum_{j=0}^{|R|-|S|-1}\eul{|R|-|S|}{j} t^{|R|-|S|-j-1}\htt{\kK_S}{t}
    \label{E2}\\
    &=\htt{\kK_R}{t}+\sum_{\emptyset\subset{}S\subset{}R}
    \sum_{j=0}^{|R|-|S|-1}\eul{|R|-|S|}{j} t^{|R|-|S|-j-1} t^{d+|S|-1} 
    \htt{\fF_S}{\tfrac{1}{t}}\notag\\ 
    &=\htt{\kK_R}{t}+\sum_{\emptyset\subset{}S\subset{}R}
    \sum_{j=0}^{|R|-|S|-1}\eul{|R|-|S|}{j}t^{d+|R|-j-2} 
    \htt{\fF_S}{\tfrac{1}{t}},\label{LL}
  \end{align} 
  where to go from \eqref{E1} to \eqref{E2} we changed variables and used the 
  well-known symmetry of the Eulerian numbers, namely,
  $\eul{m}{k}=\eul{m}{m-k-1}$, for all $m\ge{}k+1>0$.
	
  Now, substituting \eqref{RR} and \eqref{LL} in \eqref{prDS1}, we deduce that 
  $t^{d+|R|-1}\htt{\fF_R}{\tfrac{1}{t}}=$ $\htt{\kK_R}{t}$,
  which is, coefficient-wise, equivalent to \eqref{equ:DF2}.
\end{proof}

\section{\texorpdfstring{The recurrence relation for $\mb{h}(\fF_R)$}%
{The recurrence relation for h(FR)}}
\label{sec:rec}

The subject of this section is the generalization, for the $h$-vector of
$\fF_R$, $\emptyset\subset{}R\subseteq[r]$, of the recurrence relation 
\begin{equation}\label{equ:links}
  (k+1)h_{k+1}(\bx{}P)+(d-k)h_k(\bx{}P)\leq{}n\,h_{k}(\bx{}P),
  \quad 0\leq{}k\leq{}d-1,
\end{equation}
that holds true for any simplicial $d$-polytope $P\subset\reals^d$.
This is the content of the next theorem. Its proof is postponed until
Section \ref{sec:proof-hkFRrec}. In the next five subsections we
build upon the necessary intermediate results for proving this theorem.

%%%%%%%%%%%%%%%%%%%%%%%%%%%%%%%%%%%%%%%%%%%%%%%%%%%%%%%%%%%%%%%%%%%%%%%%%%%%%%%%
\begin{theorem}[Recurrence inequality]
\label{thm:hkFRrec}
For any $\emptyset\subset{}R\subseteq[r]$ we have: 
\begin{equation}\label{hkFRrec}
  h_{k+1}(\fF_R)\le\frac{n_{R}-d-|R|+1+k}{k+1}h_k(\fF_R)+
  \sum_{i\in{}R}\frac{n_i}{k+1}g_k(\fF_{R\sm\{i\}}),\qquad 0\le{}k\le{}d+|R|-2,
\end{equation}
where: (1) $n_R=\sum_{i\in{}R}n_i$, $n_\emptyset=\emptyset$, and,
(2) $g_k(\fF_\emptyset)=g_k(\emptyset)=0$, for all $k$.
\end{theorem}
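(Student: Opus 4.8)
\medskip

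The natural approach is induction on $|R|$, with the classical recurrence \eqref{equ:links} serving both as the base case and as the external analytic input for the inductive step. For $|R|=1$ we have $\fF_{\{i\}}=\bx P_i$, the sum over $i\in R$ of $\tfrac{n_i}{k+1}g_k(\fF_{R\sm\{i\}})$ is empty (since $\fF_\emptyset=\emptyset$ and $g_k(\emptyset)=0$), and \eqref{hkFRrec} is precisely \eqref{equ:links} for $P_i$ --- here one uses that, under the ``as simplicial as possible'' assumption, $\bx P_i$ is a simplicial $(d-1)$-sphere on $n_i$ vertices. So fix $R$ with $|R|\ge2$ and assume \eqref{hkFRrec} for every $\emptyset\subset S\subset R$.

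For the inductive step the idea is to transfer the classical recurrence from the auxiliary simplicial polytope $\qQ_R$ of Section \ref{sec:construction} down to $\fF_R$. Applying \eqref{equ:links} to $\bx\qQ_R$ --- the boundary of a simplicial $(d+|R|-1)$-polytope whose $f_0(\qQ_R)=n_R+2^{|R|}-2$ vertices are the $n_R=\sum_{i\in R}n_i$ original vertices of the $P_i$, $i\in R$, together with the auxiliary vertices $y_S$, $\emptyset\subset S\subset R$ --- gives
\[
  (k+1)\,h_{k+1}(\bx\qQ_R)+(d+|R|-1-k)\,h_k(\bx\qQ_R)\;\le\;\bigl(n_R+2^{|R|}-2\bigr)\,h_k(\bx\qQ_R).
\]
Relation \eqref{hQF} of Lemma \ref{lem:hQ} then lets one replace each $h_j(\bx\qQ_R)$ by $h_j(\fF_R)$ plus Eulerian-weighted combinations of the $h(\fF_S)$, $\emptyset\subset S\subset R$; I would simplify those using the identities $\sum_j\eul{m}{j}=m!$ and $\eul{m}{j}=\eul{m}{m-1-j}$, and I would invoke the Dehn-Sommerville equations of Theorem \ref{theor:DS} to keep everything confined to the lower halves of the $h$-vectors (which is exactly what makes \eqref{hkFRrec} a statement for $0\le k\le d+|R|-2$).

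The geometric heart of the matter is to understand, vertex by vertex, how $\bx\qQ_R$ registers the individual summands, and for this it is cleaner to use the vertex-link identity $\sum_{v}h_k(\lk{v}{\bx\qQ_R})=(k+1)h_{k+1}(\bx\qQ_R)+(d+|R|-1-k)h_k(\bx\qQ_R)$ and to analyse the two kinds of vertices of $\qQ_R$ separately. An auxiliary vertex $y_S$ was introduced by a stellar subdivision over the face $\qQ_S$ of $\qQ_R$, so its link decomposes as a join built from $\bx\qQ_S$; its $h$-polynomial therefore factors and, through \eqref{hQF} and the induction hypothesis, is expressed entirely in terms of the $h(\fF_{S'})$, $S'\subseteq S$ --- in particular these contributions are \emph{exact}, not merely bounded. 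An original vertex $v\in V_i$ contributes a term built from the vertex figure of $v$ in $P_i$ together with the part of $\lk{v}{\bx\qQ_R}$ that does not involve it; the key structural point is that summing these contributions over the $n_i$ vertices of $V_i$, and then over $i\in R$, produces (after the $h$-vector bookkeeping) exactly the term $\sum_{i\in R}n_i\,g_k(\fF_{R\sm\{i\}})$ --- the first finite difference $g_k$ rather than $h_k$ appearing because ``reinstating $P_i$'' is a single step of the construction \eqref{algorithm}. Making these link statements precise, together with the chain-counting lemmas behind Lemma \ref{lem:stirling}, is what I would isolate as preliminary lemmas.

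The step I expect to be the main obstacle is the final reconciliation. After all substitutions one is left with a single inequality both of whose sides are long alternating sums of $h_k(\fF_S)$, $h_{k+1}(\fF_S)$ and $g_k(\fF_{S\sm\{i\}})$ over $\emptyset\subset S\subseteq R$, weighted by Eulerian and Stirling numbers and by the auxiliary-vertex counts $2^{|R|}-2$. One must verify that everything cancels except $(k+1)h_{k+1}(\fF_R)$, $(d+|R|-1-k)h_k(\fF_R)$, $n_R h_k(\fF_R)$ and $\sum_{i\in R}n_i g_k(\fF_{R\sm\{i\}})$ --- i.e.\ that the contribution of the $2^{|R|}-2$ auxiliary vertices exactly absorbs the Eulerian corrections produced by \eqref{hQF}, and that the inductively supplied terms for the proper subsets of $R$ combine into the $\fF_{R\sm\{i\}}$ term with the correct coefficient $n_i$. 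Getting the signs to work throughout --- since \eqref{equ:links} supplies an inequality rather than an identity, so the inductive bounds may only be used where they point in the right direction --- and disentangling the Stirling/Eulerian bookkeeping is the delicate part.
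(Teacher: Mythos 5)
Your outline correctly identifies the auxiliary polytope $\qQ_R$ and the vertex-link identity (McMullen's equality, Lemma \ref{lem:McM}) as the starting point, and the claim that the auxiliary vertices $y_S$ contribute \emph{exact} cancellations against the Eulerian corrections from \eqref{hQF} is also in the spirit of the paper's Theorem \ref{theor:rec1}. But the proposal misses the step where the inequality actually enters, and misattributes it.

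The exact bookkeeping you describe terminates at the identity
$(k+1)h_{k+1}(\fF_R)+(d+|R|-1-k)h_k(\fF_R)=\sum_{v\in V_R}h_k(\fF_R/v)$
(this is Corollary \ref{cor:rec}, obtained from Theorem \ref{theor:rec1}); it does \emph{not} terminate at
$n_R\,h_k(\fF_R)+\sum_{i\in R}n_i\,g_k(\fF_{R\sm\{i\}})$.
Your claim that summing the original-vertex contributions over $v\in V_i$ and $i\in R$ produces ``exactly'' the latter expression is false: the exact sum is $\sum_{v\in V_R}h_k(\fF_R/v)$, and the bound
$\sum_{v\in V_R}h_k(\fF_R/v)\le n_R\,h_k(\fF_R)+\sum_{i\in R}n_i\,g_k(\fF_{R\sm\{i\}})$
is a genuinely new inequality, not a rearrangement. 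In the paper this is Theorem \ref{thm:link} (after an inclusion--exclusion rewrite into the $\kK_S/v$ language) followed by the algebraic simplification of Lemma \ref{lem:simplify}, and Theorem \ref{thm:link} requires a shelling argument: one constructs a subcomplex $\bx\qQ'_R$ of $\bx\qQ_R$ by deleting the star of $y_{R\sm\{1\}}$, exhibits a special line shelling of $\bx\qQ_R$ (Lemma \ref{lem:special_shelling}) ending with that star, and then uses Kalai's in-degree interpretation of $h$-vectors (Theorem \ref{theor:Kalai}, Lemmas \ref{lem:in-degree} and \ref{lem:ap_link_nonlink}) to obtain $h_k(\gG_1/v)\le h_k(\gG_1)$ with $\gG_1=\fF_R\cup\fF_{R\sm\{1\}}$. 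Nothing in your outline --- neither the classical UBT bound for $\qQ_R$ (which would give the much weaker factor $n_R+2^{|R|}-2$ on the wrong object $h_k(\bx\qQ_R)$), nor the exact vertex-by-vertex bookkeeping --- supplies this link-versus-nonlink comparison for the non-polytopal face set $\fF_R\cup\fF_{R\sm\{1\}}$. This is the geometric heart of the theorem, and its absence is the gap.

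Two secondary points. First, the paper does not run a single induction on $|R|$ for the whole Theorem \ref{thm:hkFRrec}; it runs the induction inside Theorem \ref{theor:rec1} (the exact identity) and assembles the final inequality without further induction, so the ``inductively supplied terms for proper subsets of $R$'' you invoke never feed into the inequality step at all. Second, your worry about signs in the final reconciliation is legitimate, and it is precisely why one cannot use inductive one-sided bounds inside an alternating sum over $\emptyset\subset S\subseteq R$; the paper sidesteps this by reorganizing the alternating sum per index $i\in R$ (equation \eqref{non:2}) before applying the shelling bound, so that the comparison is done on a single, fixed-sign quantity.
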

%%%%%%%%%%%%%%%%%%%%%%%%%%%%%%%%%%%%%%%%%%%%%%%%%%%%%%%%%%%%%%%%%%%%%%%%%%%%%%%%

\subsection{\texorpdfstring{Relating the $h$-vector of $\qQ_R/v$ with the 
$h$-vectors of $\fF_R/v$ and $\kK_R/v$}{Relating the h-vector of QR/v
    with the h-vectors of FR/v and KR/v}}
\label{sec:hFR-hFRv}
%%%%%%%%%%%%%Lemma fk(\fF/v) %%%%%%%%%%%%%%%%%%%%%%%%%%%%%%%%%%%%%%%%%%%%%%%%%%
%%%%%%%%%%%%%%%%%%%%%%%%%%%%%%%%%%%%%%%%%%%%%%%%%%%%%%%%%%%%%%%%%%%%%%%%%%%%%%%

For any $\emptyset\subset{}R\subseteq{}[r]$, let $V_R:=\cup_{i\in{}R}V_i$.
We define the link of a vertex $v\in{}V_R$ in $\fF_R$ as the intersection of
the link $\kK_R/v$ with $\fF_R$. The following lemma relates the
$h$-vector of $\qQ_R/v$ with the $h$-vectors of $\fF_R/v$ and $\kK_R/v$.

\begin{lemma} \label{lem:link Q_R}
  For any $v\in{}V_R$ we have:
  \begin{equation}
    \htt{\bx{}\qQ_R/v}{t}=\htt{\fF_R/v}{t}+
    \sum_{\{i\}\subseteq{}S\subset{}R}\sum_{j=0}^{|R|-|S|-1}
    \eul{|R|-|S|}{j}t^{j+1}\htt{\fF_S/v}{t},
    \label{hQF/v}
  \end{equation}
  and
  \begin{equation}
    \htt{\bx{}\qQ_R/v}{t}=\htt{\kK_R/v}{t}+
    \sum_{\{i\}\subseteq{}S\subset{}R}
    \sum_{j=0}^{|R|-|S|-1}\eul{|R|-|S|}{j}t^{j}\htt{\kK_S/v}{t}.
    \label{hQK/v}
  \end{equation} 
\end{lemma}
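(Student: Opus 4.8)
The plan is to reproduce, essentially line for line, the sequence of arguments that leads from Lemmas~\ref{lem:QR-FSjoin}--\ref{lem:QR-KSjoin} to Lemma~\ref{lem:hQ}, but carried out for the link $\bx{}\qQ_R/v$ in place of $\bx{}\qQ_R$. The structural fact that makes this work is that $v$ is an \emph{original} vertex: it lies in a unique $V_i$, and since $v\in{}V_R$ we have $i\in{}R$. Consequently a face of $\bx{}\qQ_R$ contains $v$ if and only if its ``mixed part'' (the $\fF_S$-piece in the description of Lemma~\ref{lem:QR-FSjoin}, resp. the $\kK_S$-piece in that of Lemma~\ref{lem:QR-KSjoin}) contains $v$, and a face of $\fF_S$ or of $\kK_S$ can contain $v$ only when $i\in{}S$ --- immediate from the defining property of mixed faces.

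First I would establish the link analogues of Lemmas~\ref{lem:QR-FSjoin} and~\ref{lem:QR-KSjoin}. Intersecting the right-hand side of \eqref{equ:QR-FSjoin} with $\str{v}{\bx{}\qQ_R}$ and then deleting $v$: the purely auxiliary faces $\{y_{S_1},\ldots,y_{S_\ell}\}$ never contain $v$ and disappear; a joined face $\{y_{S_1},\ldots,y_{S_\ell},\sigma\}$ with $\sigma\in\fF_S$ contains $v$ exactly when $v\in\sigma$, which forces $i\in{}S\subseteq{}S_1$, and its link face is $\{y_{S_1},\ldots,y_{S_\ell},\sigma\sm v\}$ with $\sigma\sm v$ ranging over $\fF_S/v$; every auxiliary vertex $y_{S_t}$ survives, since $S_t\supseteq{}S\ni i$. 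This yields
\[
  \bx{}\qQ_R/v=\fF_R/v\bigcup_{\substack{\{i\}\subseteq{}S\subset{}R\\ S\subseteq{}S_1\subset\cdots\subset{}S_\ell\subset{}R}}\{y_{S_1},\ldots,y_{S_\ell},\fF_S/v\},
\]
with the $\ell=0$ summand read as $\fF_S/v$ itself, and the same computation applied to \eqref{equ:QR-KSjoin} gives $\bx{}\qQ_R/v=\kK_R/v\cup\bigcup\{y_{S_1},\ldots,y_{S_\ell},\kK_S/v\}$, the union taken over $\{i\}\subseteq{}S\subset{}R$ with $S=S_1\subset\cdots\subset{}S_\ell\subset{}R$. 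In short, one recovers the formulas of Lemmas~\ref{lem:QR-FSjoin}--\ref{lem:QR-KSjoin} by replacing each $\fF_S$ (resp. $\kK_S$) with $\fF_S/v$ (resp. $\kK_S/v$), restricting all index sets to contain $i$, and dropping the purely auxiliary faces.

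Next I would count $(k+1)$-element faces in these descriptions exactly as in the proof of Lemma~\ref{lem:stirling}. The numbers of chains $S\subseteq{}S_1\subset\cdots\subset{}S_\ell\subset{}R$ (resp. $S=S_1\subset\cdots$) depend only on $|R|-|S|$, so they coincide with those there and are again supplied by Lemma~\ref{lem:chain_counting}; moreover, since no purely auxiliary faces occur, the delicate inclusion--exclusion over the values $f_{-1}(\fF_S)$ used in that proof is no longer needed, and the only convention required is $f_{-1}(\fF_{\{i\}}/v)=1$ and $f_{-1}(\fF_S/v)=0$ for $|S|\ge{}2$, which is precisely what the definition $\fF_R/v=(\kK_R/v)\cap\fF_R$ forces (the empty face lies in $\fF_{\{i\}}=\bx{}P_i$ but not in $\fF_S$ once $|S|\ge{}2$). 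The outcome is the link version of \eqref{equ:fkQR-FS}, namely $f_k(\bx{}\qQ_R/v)=f_k(\fF_R/v)+\sum_{\{i\}\subseteq{}S\subset{}R}\sum_{j=0}^{|R|-|S|}j!\,\stirl{|R|-|S|+1}{j+1}f_{k-j}(\fF_S/v)$, together with the corresponding link version of \eqref{equ:fkQR-KS}.

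Finally I would pass to generating functions and then to $h$-polynomials just as in Lemmas~\ref{lem:stirl} and~\ref{lem:hQ}. The key point is that taking the link of $v$ lowers all the relevant dimensions uniformly by one: $\dim(\bx{}\qQ_R/v)=d+|R|-3$ and $\dim(\fF_S/v)=\dim(\kK_S/v)=d+|S|-3$, so the degree shift between $\htt{\bx{}\qQ_R/v}{t}$ and $\htt{\fF_S/v}{t}$ (resp. $\htt{\kK_S/v}{t}$) is still $|R|-|S|$, exactly as in the non-link case. Hence the Stirling-to-Eulerian identities and the symmetry $\eul{m}{k}=\eul{m}{m-k-1}$ that convert \eqref{equ:stirl1}--\eqref{equ:stirl2} into \eqref{hQF}--\eqref{hQK} apply verbatim and produce \eqref{hQF/v} and \eqref{hQK/v}. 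I expect the only genuine obstacle to be pinning down the link decompositions of the first step --- correctly identifying which joined faces meet $\str{v}{\bx{}\qQ_R}$ and checking that no auxiliary vertex is lost --- together with the routine but necessary verification of the values $f_{-1}(\fF_S/v)$; once these are settled, the remainder is a transcription of the earlier computations.
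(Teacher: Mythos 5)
Your proposal reproduces the paper's argument: restrict the disjoint-union decompositions of Lemmas~\ref{lem:QR-FSjoin} and~\ref{lem:QR-KSjoin} to the star of $v$, observe that the purely auxiliary faces disappear and that only pieces $\fF_S/v$ (\resp $\kK_S/v$) with $i\in{}S$ survive, count faces by the same chain-counting lemma, pass to $f$-polynomials (the dimensions all drop by one uniformly, so the degree shift $|R|-|S|$ is unchanged), and finish with the Stirling-to-Eulerian identities of Lemma~\ref{stirl-euler} plus the Eulerian symmetry. This is precisely the paper's proof. Your extra care about the conventions $f_{-1}(\fF_{\{i\}}/v)=1$ and $f_{-1}(\fF_S/v)=0$ for $|S|\ge 2$ is a correct and welcome addition: the paper glosses over it, and your observation that no inclusion--exclusion over the $f_{-1}(\fF_S)=(-1)^{|S|-1}$ values is needed (because no purely auxiliary faces survive in the link) is exactly what makes the counting simpler than in Lemma~\ref{lem:stirling}.
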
 
%%%%%%%%%%%%%%%%%%%%%%% proof %%%%%%%%%%%%%%%%%%%%%%%%%%%%%%%%%%%%%%%%%%%%%%%%%%
%%%%%%%%%%%%%%%%%%%%%%%%%%%%%%%%%%%%%%%%%%%%%%%%%%%%%%%%%%%%%%%%%%%%%%%%%%%%%%%%
\begin{proof}
Let us fix some $v\in{}V_j$, $j\in{}R$. In view of relation 
\eqref{equ:QR-FSjoin} in Lemma \ref{lem:QR-FSjoin} we can write:
\begin{equation} 
  \bx{}\qQ_R/v=\bigcup\limits_{\emptyset\subset{}S\subseteq{}R}
  \fF_S/v\mbox{\hspace{-0.5cm}}
  \bigcup_{\scalebox{0.7}{$\substack{\emptyset\subset{}S\subset{}R\\
	S\subseteq{}S_1{}\subset{}S_2\subset{}\cdots\subset{}S_\ell\subset{}R}$}}
  \mbox{\hspace{-0.4cm}}\{y_{S_1},y_{S_2},\ldots,y_{S_\ell},\fF_S/v\}, 
  \label{recal3.2}
\end{equation}
where it is understood that both $\fF_S/v$ and 
$\{y_{S_1},y_{S_2},\ldots,y_{S_\ell},\fF_S/v\}$ are empty if 
$v\not\in{}V_S$. Taking this into account, we simplify \eqref{recal3.2} as 
follows:
\begin{equation}
  \bx{}\qQ_R/v=\bigcup_{\scalebox{0.7}{$\{j\}\subseteq{}S\subseteq{}R$}}
  \fF_S/v\bigcup_{\scalebox{0.7}{$\substack{\{j\}\subseteq{}S\subset{}R\\
	S\subseteq{}S_1	\subset{}S_2\subset\cdots\subset{}S_\ell\subset{}R}$}}
  \{y_{S_1},y_{S_2},\ldots,y_{S_\ell},\fF_S/v\}.
	\label{tag:Q/v}
\end{equation}
Since each auxiliary point of a face  in
$\{y_{S_1},y_{S_2},\ldots,y_{S_\ell},\fF_S/v\}$ increases the dimension by one,
from \eqref{tag:Q/v} we can  write : 
\begin{align*} 
	f_k(\bx{}\qQ_R/v)
	&=
	\sum_{\{j\}\subseteq{}S\subseteq{}R}f_{k}(\fF_S/v)
	+\sum_{\{j\}\subseteq{}S\subset{}R}\sum_{i=1}^{|R|-|S|}\,
	\sum_{S\subseteq{}S_1\subset{}S_2\subset\cdots\subset{}S_{i}\subset{}R}    
	f_{k-i}(\fF_S/v).
\end{align*}
In view of Lemma \ref{lem:chain_counting}(i) the above can be written as:
\begin{align*}
	f_k(\bx{}\qQ_R/v)&=\sum_{\{j\}\subseteq{}S\subseteq{}R}f_{k}(\fF_S/v)
	+\sum_{\{j\}\subseteq{}S\subset{}R}\sum_{i=1}^{|R|-|S|}i!\,
	\stirl{|R|-|S|+1}{i+1}\,f_{k-i}(\fF_S/v)\\
	&=f_k(\fF_R/v)+\sum_{\{j\}\subseteq{}S\subset{}R}\sum_{i=0}^{|R|-|S|}i!\,
		\stirl{|R|-|S|+1}{i+1}\,f_{k-i}(\fF_S/v),
\end{align*}
where in the last step we used the fact that $\stirl{m}{1}=1$
for all $m\geq{}1$. 

Recalling that $\dim(\fF_S/v)=d+|S|-3$ and converting the above relation into 
generating function we get:
\begin{equation}
	\ft{\bx{}\qQ_R/v}{t}=\ft{\fF_R/v}{t}+\sum_{\{j\}\subseteq{}S\subset{}R}
	\sum_{i=0}^{|R|-|S|}i!\,\stirl{|R|-|S|+1}{i+1} t^{|R|-|S|-i}\ft{\fF_S/v}{t}.
	\label{equ:stirl1/v}
\end{equation}
We thus have: 
\begin{align}
  \htt{\bx{}\qQ_R/v}{t}&=\ft{\bx{}\qQ_R/v}{t-1}\notag\\
  &=\ft{\fF_R/v}{t-1}+\sum_{\{j\}\subseteq{}S\subset{}R}	
  \sum_{i=0}^{|R|-|S|}i!\,\stirl{|R|-|S|+1}{i+1}
  t^{|R|-|S|-i}\ft{\fF_S/v}{t-1}
  \notag\\
  &=\htt{\fF_R/v}{t}+\sum_{\{j\}\subseteq{}S\subset{}R}\sum_{i=0}^{|R|-|S|}i!\,
  \stirl{|R|-|S|+1}{i+1}(t-1)^{|R|-|S|-i}\ft{\fF_S/v}{t-1}\notag\\
  &=\htt{\fF_R/v}{t}+\sum_{\{j\}\subseteq{}S\subset{}R}
  \sum_{i=0}^{|R|-|S|}i!\,\stirl{|R|-|S|+1}{i+1}(t-1)^{|R|-|S|-i}\htt{\fF_S/v}{t}
  \label{S-E:1}\\
  &=\htt{\fF_R/v}{t}+\sum_{\{j\}\subseteq{}S\subset{}R}\sum_{j=0}^{|R|-|S|-1}
  \eul{|R|-|S|}{j}t^{|R|-|S|-j}\htt{\fF_S/v}{t}\label{S-E:2}\\
  &=\htt{\fF_R/v}{t}+\sum_{\{j\}\subseteq{}S\subset{}R}\sum_{j=0}^{|R|-|S|-1}
  \eul{|R|-|S|}{|R|-|S|-1-j}t^{|R|-|S|-j}\htt{\fF_S/v}{t}\notag\\
  &=\htt{\fF_R/v}{t}+\sum_{\{j\}\subseteq{}S\subset{}R}\sum_{j=0}^{|R|-|S|-1}
  \eul{|R|-|S|}{j}t^{j+1}\htt{\fF_S/v}{t},\notag
\end{align}
where to go from \eqref{S-E:1} to \eqref{S-E:2} we used relation
\eqref{WE1} from Lemma \ref{stirl-euler}.
%%%%%%%%%%%%%%%%%%%%%%%%%%%%%%%%%%%%%%%%%%%%%%%%%%%%%%%%%%%%%%%%%%%%%%%%%%%%%%%%

Let us now turn our attention to relation \eqref{hQK/v} .
In view of \eqref{equ:QR-KSjoin} of Lemma \ref{lem:QR-KSjoin} we have:
\begin{align*}
	\bx{}\qQ_R/v=\kK_{R}/v
	\bigcup_{\substack{\{j\}\subseteq{}S\subset{}R\\
	S=S_1\subset{}S_2\subset\cdots\subset{}S_\ell\subset{}R}}
	\{y_{S_1},y_{S_2},\ldots,y_{S_\ell},\kK_S/v\}, 
\end{align*}
which in turn gives
\begin{align*} 
  f_k(\bx{}\qQ_R/v)
  &=f_k(\kK_R/v)+\sum_{\{j\}\subseteq{}S\subset{}R}\sum_{i=1}^{|R|-|S|}\,
  \sum_{S=S_1\subset{}S_2\subset\cdots\subset{}S_{i}\subset{}R}
  f_{k-i}(\kK_S/v)\\
  &=f_k(\kK_R/v)+\sum_{\{j\}\subseteq{}S\subset{}R}\sum_{i=1}^{|R|-|S|}i!\,
  \stirl{|R|-|S|}{i}f_{k-i}(\kK_S/v)\\
  &=f_k(\kK_R/v)+\sum_{\{j\}\subseteq{}S\subset{}R}
  \sum_{i=0}^{|R|-|S|-1}(i+1)!\,\stirl{|R|-|S|}{i+1}f_{k-i-1}(\kK_S/v).   
\end{align*}
Recalling that $\dim(\kK_S/v)=d+|S|-3$ and converting the above relation into 
generating function, we get:
\begin{align}
	\ft{\bx{}\qQ_R/v}{t}
	&=\ft{\kK_R/v}{t}+\sum_{\{j\}\subseteq{}S\subset{}R}
	\sum_{i=0}^{|R|-|S|-1}(i+1)!\,\stirl{|R|-|S|}{i+1}t^{|R|-|S|-i}\,
	\ft{\kK_S/v}{t},
\end{align}
which further implies that
\begin{align}
  \htt{\bx{}\qQ_R/v}{t}&=\ft{\bx{}\qQ_R/v}{t-1}\notag\\
  &=\ft{\kK_R/v}{t-1}+\sum_{\{j\}\subseteq{}S\subset{}R}
  \sum_{i=0}^{|R|-|S|-1}(i+1)!\,\stirl{|R|-|S|}{i+1}(t-1)^{|R|-|S|-1-i}
  \ft{\kK_S/v}{t-1}\label{S-E:3}\\
  &=\htt{\kK_R/v}{t}+\sum_{\{j\}\subseteq{}S\subset{}R}\sum_{j=0}^{|R|-|S|-1}
  \eul{|R|-|S|}{i}t^{|R|-|S|-1-i}\htt{\kK_S/v}{t}\label{S-E:4}\\
  &=\htt{\kK_R/v}{t}+\sum_{\{j\}\subseteq{}S\subset{}R}\sum_{j=0}^{|R|-|S|-1}
  \eul{|R|-|S|}{|R|-|S|-1-j}t^{|R|-|S|-1-j}\,\htt{\kK_S/v}{t}\notag\\
  &=\htt{\kK_R/v}{t}+\sum_{\{j\}\subseteq{}S\subset{}R}\sum_{j=0}^{|R|-|S|-1}
  \eul{|R|-|S|}{j}t^{j}\,\htt{\kK_S/v}{t},\notag
\end{align}
where to go from \eqref{S-E:3} to \eqref{S-E:4} we used \eqref{WE2}
from Lemma \ref{stirl-euler}.
\end{proof}
%%%%%%%%%%%%%%%%%%%%%%%%%%%%%%%%%%%%%%%%%%%%%%%%%%%%%%%%%%%%%%%%%%%%%%%%%%%%%%%%

\subsection{\texorpdfstring{The link of $y_S$ in $\bx\qQ_R$}%
{The link of yS in QR}}
\label{sec:linkYS}

Our next goal is to find an expression analogous to those of 
Lemma \ref{lem:link Q_R}, but now involving links of type $\bx{}\qQ_R/y_S$, 
where $\emptyset\subset{}S\subseteq{}R.$ To do this, we first need to express 
$f_k(\bx{}\qQ_R/y_S) $ in terms of sums of $f_i(\fF_X)$ with $i\leq{}k$ and 
$X\subseteq{}S$. This is the content of the next Lemma.
In order to state it we need to introduce a new set. Let
$X\subseteq{}T\subset{}R$ and $\ell$ be a positive integer. We define
the set
\begin{equation} 
  \mathcal{D}(R,T,X,\ell):=\{(S_1,\ldots,S_\ell):
  X\subseteq{}S_1\subset{}S_2\subset\cdots\subset{}S_{\ell}\subset{}R
  \mbox{ and }S_i=T\mbox{ for some }1\leq{}i\leq\ell\},
\end{equation} 
and denote by $\D{R}{T}{X}{\ell}$ its cardinality. 

%%%%%%%%%%%%%%%%%%%%lemma:fk of link of yS%%%%%%%%%%%%%%%%%%%%%%%%%%%%%%%%%%%%%%
%%%%%%%%%%%%%%%%%%%%%%%%%%%%%%%%%%%%%%%%%%%%%%%%%%%%%%%%%%%%%%%%%%%%%%%%%%%%%%%%
\begin{lemma}
\label{lem:link{}of{}yS}
For every $\emptyset\subset{}S\subset{}R$ we have:
\begin{align}\label{equ:link{}of{}yS}
\ft{\bx{}\qQ_R/y_S}{t}=\sum_{\emptyset\subset{}X\subseteq{}S}\sum_{\ell=1}^{|R|-|X|}
\D{R}{S}{X}{\ell}\,t^{|R|-|X|-\ell}\ft{\fF_X}{t}.
\end{align} 
\end{lemma}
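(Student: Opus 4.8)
The plan is to mimic the derivations of Lemmas \ref{lem:QR-FSjoin} and \ref{lem:stirling}, but now restricted to the link of the auxiliary vertex $y_S$. First I would use Lemma \ref{lem:QR-FSjoin} to describe $\bx{}\qQ_R/y_S$ explicitly: a face of $\bx{}\qQ_R$ contains $y_S$ precisely when it is of the form $\{y_{S_1},y_{S_2},\ldots,y_{S_\ell},\fF_X\}$ with $\emptyset\subset{}X\subseteq{}S$ and $X\subseteq{}S_1\subset{}S_2\subset\cdots\subset{}S_\ell\subset{}R$ and $S=S_i$ for some $1\le{}i\le\ell$ (the pure-auxiliary faces $\{y_{S_1},\ldots,y_{S_\ell}\}$ containing $y_S$ contribute the $X=\emptyset$ terms via the convention $f_{-1}(\fF_\emptyset)$-type bookkeeping, exactly as in the proof of Lemma \ref{lem:stirling}). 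Removing the vertex $y_S$ itself from such a face drops the dimension by one, so a $k$-face of $\bx{}\qQ_R/y_S$ comes from a face of $\bx{}\qQ_R$ through $y_S$ of dimension $k+1$; writing $\ell$ for the number of auxiliary vertices (including $y_S$) and $i$ for the number of faces in $\fF_X$ used, the count of such faces with prescribed $X$ and prescribed ``spine length'' is governed precisely by the cardinality $\D{R}{S}{X}{\ell}$ of the set $\mathcal{D}(R,S,X,\ell)$ just introduced.

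Concretely, I would write
\begin{equation*}
  f_k(\bx{}\qQ_R/y_S)=\sum_{\emptyset\subset{}X\subseteq{}S}\sum_{\ell=1}^{|R|-|X|}\D{R}{S}{X}{\ell}\,f_{k-\ell}(\fF_X),
\end{equation*}
where the term $\ell=k+1$ together with the convention $f_{-1}(\fF_X)=(-1)^{|X|-1}$ absorbs the purely-auxiliary faces through $y_S$, just as in the passage from \eqref{ch:1} to \eqref{ch:4}. (One must check that the chains counted are exactly those in $\mathcal{D}(R,S,X,\ell)$: a face through $y_S$ corresponds to a chain $X\subseteq S_1\subset\cdots\subset S_\ell\subset R$ with $S$ appearing as one of the $S_i$, and $X$ ranges over the non-empty subsets of $S$ that can serve as the base of an $\fF_X$-face, which is automatic since $\{y_S\}$ is joined to faces of $\qQ_S$ whose non-trivial faces all have a base in some $\fF_X$, $\emptyset\subset X\subseteq S$.) Then I convert to generating functions exactly as in the proof of Lemma \ref{lem:stirl}: multiplying $f_{k-\ell}(\fF_X)$ by $t^{d+|R|-1-k}$ and re-indexing $k\mapsto k-\ell$, each inner sum becomes $t^{(d+|R|-1)-(d+|X|-1)-\ell}\ft{\fF_X}{t}=t^{|R|-|X|-\ell}\ft{\fF_X}{t}$, using $\dim(\bx{}\qQ_R)=d+|R|-2$ and $\dim(\fF_X)=d+|X|-2$ and the vanishing convention $f_{k-1}(\fF_X)=0$ outside the admissible range to extend the summation limits. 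Collecting terms yields exactly \eqref{equ:link{}of{}yS}.

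The step I expect to be the main obstacle is the careful verification of the combinatorial identity $\big|\{\text{faces of }\bx{}\qQ_R\text{ through }y_S\text{ of the given type}\}\big| = \D{R}{S}{X}{\ell}$, i.e. that the faces of $\bx{}\qQ_R/y_S$ are in bijection with pairs consisting of a face in $\fF_X$ and a chain in $\mathcal{D}(R,S,X,\ell)$. This requires tracking, through the iterated stellar subdivisions of \eqref{algorithm}, exactly which faces $\{y_{S_1},\ldots,y_{S_\ell},\fF_X\}$ arise and confirming that the only constraint linking the $S_j$'s to $S$ is that $S$ occurs among them (the position being free) — in particular that $X$ need not be comparable to $S$ beyond $X\subseteq S$ when $S$ is the smallest index, and that when $S=S_i$ for $i>1$ one automatically has $X\subseteq S_1\subseteq S$. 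Once this bijection is pinned down, the remainder is the same bookkeeping with $\dim$'s, the $f_{-1}$ convention for the auxiliary-only faces, and re-indexing that already appears verbatim in the proofs of Lemmas \ref{lem:stirling} and \ref{lem:stirl}, so no genuinely new difficulty arises there.
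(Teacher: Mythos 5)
Your proposal follows exactly the paper's argument: isolate the faces of $\bx{}\qQ_R$ through $y_S$ using the explicit decomposition from Lemma \ref{lem:QR-FSjoin}, observe that they are precisely the sets $\{y_{S_1},\ldots,y_{S_\ell},F\}$ with $F\in\fF_X$, $\emptyset\subset{}X\subseteq{}S$, and $S$ occurring in the chain $X\subseteq{}S_1\subset\cdots\subset{}S_\ell\subset{}R$, count them via $\D{R}{S}{X}{\ell}$, absorb the purely-auxiliary faces through the $f_{-1}(\fF_X)=(-1)^{|X|-1}$ inclusion-exclusion convention, and then translate to generating functions. This is precisely how the paper proves the lemma, and you correctly pin down the bijection with $\mathcal D(R,S,X,\ell)$ as the genuine content.

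One dimension-indexing slip worth fixing: joining $\ell$ auxiliary vertices to an $m$-face of $\fF_X$ produces a face of dimension $\ell+m$, so a $(k+1)$-face of the star of $y_S$ (equivalently a $k$-face of the link) corresponds to $m=k+1-\ell$, and the correct intermediate count is
\[
  f_k(\bx{}\qQ_R/y_S)=\sum_{\emptyset\subset{}X\subseteq{}S}\sum_{\ell=1}^{|R|-|X|}\D{R}{S}{X}{\ell}\,f_{k-\ell+1}(\fF_X),
\]
not $f_{k-\ell}(\fF_X)$ as you wrote (accordingly, the purely-auxiliary contribution is the term $\ell=k+2$, not $\ell=k+1$). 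Your generating-function step also uses the exponent $t^{d+|R|-1-k}$, appropriate for $\bx{}\qQ_R$ rather than for its link $\bx{}\qQ_R/y_S$, which has dimension $d+|R|-3$ (so $\delta=d+|R|-2$). These two shifts happen to compensate in your final bookkeeping, which is why you still land on the statement of the lemma, but in a clean write-up you should use $f_{k-\ell+1}(\fF_X)$ together with $\ft{\bx{}\qQ_R/y_S}{t}=\sum_k f_{k-1}(\bx{}\qQ_R/y_S)\,t^{d+|R|-2-k}$ so that each step is individually correct.
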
 
%%%%%%%%%%%%%%%%%%%%%%%%%%%%%%%%%%%%%%%%%%%%%%%%%%%%%%%%%%%%%%%%%%%%%%%%%%%%%%%%
%%%%%%%%%%%%%%%%%%%%proof of lemma:link of fk(Q/yS)%%%%%%%%%%%%%%%%%%%%%%%%%%%%%
\begin{proof}
First of all, notice that, in view of relation \eqref{equ:QR-FSjoin}, 
if we denote by $y_S\ast\bx\qQ_R$ the set of all faces in 
$\bx{}\qQ_R$ containing $y_S$, we have:
\begin{equation*}
	y_S\ast\bx\qQ_R=	
	\bigcup_{\scalebox{0.7}{$\substack{\emptyset\subset{}X\subseteq{}S\\
	X\subseteq{}S_1{}\subset{}S_2\subset{}\cdots\subset{}S_\ell\subset{}R\\
	S_i=S\,\mbox{\tiny  for some }1\leq{}i\leq\ell}$}}
	\mbox{\hspace{-0.4cm}}\{y_{S_1},y_{S_2},\ldots,y_{S_\ell},\fF_X\}.
\end{equation*}
Then clearly, 
\begin{align*}
	f_k(\bx{}\qQ_R/y_S)=f_{k+1}(y_S\ast\bx\qQ_R)&=
	\sum_{\emptyset\subset{}X\subseteq{}S}
	\sum_{\ell=1}^{|R|-|X|}\sum_{\substack{X\subseteq{}S_1\subset{}S_2\subset
	\cdots\subset{}S_\ell\subset{}R\\S_i=S\,\mbox{\tiny  for some }
	1\leq{}i\leq\ell}}f_{k-\ell+1}(\fF_X)\\
	&=\sum_{\emptyset\subset{}X\subseteq{}S}
	\sum_{\ell=1}^{|R|-|X|}\D{R}{S}{X}{\ell}\,f_{k-\ell+1}(\fF_X).
\end{align*}   
Using the fact that $\dim(\bx{}\qQ/y_S)=d+|R|-3$ and rewriting in terms of 
generating functions, the above becomes: 
\begin{align*}
  \ft{\bx{}\qQ_R/y_S}{t}&=\sum_{k=0}^{d+|R|-|S|-1}
  \sum_{\emptyset\subset{}X\subseteq{}S}\sum_{\ell=1}^{|R|-|X|}
  \D{R}{S}{X}{\ell}\,f_{k-\ell+1}(\fF_X)t^{d+|R|-2-k}\\
  &=\sum_{k=0}^{d+|R|-|S|-1}\sum_{\emptyset\subset{}X\subseteq{}S}
  \sum_{\ell=1}^{|R|-|X|}t^{|R|-|X|-\ell}\D{R}{S}{X}{\ell}\,
  f_{k-\ell+1}(\fF_X)t^{d+|X|-1-(k-\ell+1)}\\
  &=\sum_{\emptyset\subset{}X\subseteq{}S}\sum_{\ell=1}^{|R|-|X|}\!\!\!\!
  \D{R}{S}{X}{\ell}t^{|R|-|X|-\ell}\sum_{k-\ell+1=|X|-|R|+1}^{d+|R|-|S|-1}
  \!\!\!f_{k-\ell+1}(\fF_X)t^{d+|X|-1-(k-\ell+1)}\\
  &=\sum_{\emptyset\subset{}X\subseteq{}S}\sum_{\ell=1}^{|R|-|X|}
  \D{R}{S}{X}{\ell}t^{|R|-|X|-\ell}\sum_{j=0}^{d+|R|-|S|-1}f_{j}(\fF_X)
  t^{d+|X|-1-j}\\
  &=\sum_{\emptyset\subset{}X\subseteq{}S}\sum_{\ell=1}^{|R|-|X|}
  \D{R}{S}{X}{\ell}t^{|R|-|X|-\ell}\sum_{j=0}^{d+|R|-|X|-1}f_{j}(\fF_X)
  t^{d+|X|-1-j}\\
  &=\sum_{\emptyset\subset{}X\subseteq{}S}\sum_{\ell=1}^{|R|-|X|}
  \D{R}{S}{X}{\ell}t^{|R|-|X|-\ell}\ft{\fF_X}{t}.\qedhere
\end{align*}
\end{proof}

%%%%%%%%%%%%%%%%%%%%%%%%%%%%%%%%%%%%%%%%%%%%%%%%%%%%%%%%%%%%%%%%%%%%%%%%%%%%%%%%
%%%%%%%%%%%%%%%%%%%%lemma:link of hk(Q/yS)%%%%%%%%%%%%%%%%%%%%%%%%%%%%%%%%%%%%%%

\noindent
Converting relation \eqref{equ:link{}of{}yS} of the above lemma
to its $h$-vector equivalent we get:
\begin{equation}
  \label{equ:link{}of{}yS-h}
  \begin{aligned}
    \htt{\bx{}\qQ_R/y_S}{t}
    &=\ft{\bx{}\qQ_R/y_S}{t-1}\\
    %%%%
    &=\sum_{\emptyset\subset{}X\subseteq{}S}
    \sum_{\ell=1}^{|R|-|X|}\D{R}{S}{X}{\ell}(t-1)^{|R|-|X|-j}\htt{\fF_X}{t}.
  \end{aligned}
\end{equation}

The following lemma expresses the sum of the $h$-vectors of the links
$\qQ_R/y_S$ to the $h$-vectors of the sets $\fF_X$.
\begin{lemma} 
  \label{lem:hQR/S}
  For every $\emptyset\subset{}R\subseteq{}[r]$ we have:
  \begin{equation}
    \sum_{\emptyset\subset{}S\subset{}R}\htt{\bx{}\qQ_R/y_S}{t}=
    \sum_{\emptyset\subset{}X\subset{}R}\sum_{j=0}^{|R|}
    \bigl(\eul{|R|-|X|+1}{j}-\eul{|R|-|X|}{j} 
    \bigr)t^{|R|-|X|-j}\htt{\fF_X}{t}.
    \label{hQR/S}
  \end{equation} 
\end{lemma}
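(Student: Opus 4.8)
The plan is to sum the $h$-vector identity \eqref{equ:link{}of{}yS-h} of Lemma \ref{lem:link{}of{}yS} over all $\emptyset\subset{}S\subset{}R$ and then interchange the order of summation so that the sum over $X$ becomes the outermost one. Since $X\subseteq{}S\subset{}R$ forces $\emptyset\subset{}X\subset{}R$, and for a fixed such $X$ the index $\ell$ still ranges over $1\le\ell\le|R|-|X|$, this gives
\[
\sum_{\emptyset\subset{}S\subset{}R}\htt{\bx{}\qQ_R/y_S}{t}
=\sum_{\emptyset\subset{}X\subset{}R}\htt{\fF_X}{t}
\sum_{\ell=1}^{|R|-|X|}(t-1)^{|R|-|X|-\ell}
\Bigl(\sum_{X\subseteq{}S\subset{}R}\D{R}{S}{X}{\ell}\Bigr).
\]
Everything then hinges on evaluating the innermost sum $\sum_{X\subseteq{}S\subset{}R}\D{R}{S}{X}{\ell}$.

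The key combinatorial observation I would use is the following. By definition, $\D{R}{S}{X}{\ell}$ counts the chains $X\subseteq{}S_1\subset{}S_2\subset\cdots\subset{}S_\ell\subset{}R$ passing through $S$, i.e.\ those with $S_i=S$ for some $i$. Each such chain has exactly $\ell$ distinct members $S_1,\ldots,S_\ell$, and every one of them satisfies $X\subseteq{}S_k\subset{}R$; hence, as $S$ ranges over all sets with $X\subseteq{}S\subset{}R$, a given chain is counted once for each of its $\ell$ members, i.e.\ exactly $\ell$ times. Therefore $\sum_{X\subseteq{}S\subset{}R}\D{R}{S}{X}{\ell}=\ell\,N_\ell$, where $N_\ell$ is the number of chains $X\subseteq{}S_1\subset\cdots\subset{}S_\ell\subset{}R$; and by the chain-counting Lemma \ref{lem:chain_counting}(ii) (this is precisely the quantity $\spsac_{|R|-|X|}(\emptyset,\ell)$ already used in the proof of Lemma \ref{lem:stirling}) we have $N_\ell=\ell!\,\stirl{|R|-|X|+1}{\ell+1}$.

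Writing $m:=|R|-|X|$, the above reduces the lemma to the polynomial identity
\[
\sum_{\ell=1}^{m}\ell\cdot\ell!\,\stirl{m+1}{\ell+1}(t-1)^{m-\ell}
=\sum_{j=0}^{m}\bigl(\eul{m+1}{j}-\eul{m}{j}\bigr)t^{m-j},
\]
whose right-hand side coincides with the one in the statement since the terms with $j>m$ vanish. I would establish it by writing $\ell\cdot\ell!=(\ell+1)!-\ell!$ to split the left-hand side into $\sum_{\ell=1}^{m}(\ell+1)!\,\stirl{m+1}{\ell+1}(t-1)^{m-\ell}$ minus $\sum_{\ell=1}^{m}\ell!\,\stirl{m+1}{\ell+1}(t-1)^{m-\ell}$; after reindexing the first sum by $k=\ell+1$ and completing both sums by their missing extreme terms — each of which equals $\stirl{m+1}{1}(t-1)^{m}=(t-1)^{m}$, so that these extra terms cancel in the difference — relations \eqref{WE2} and \eqref{WE1} of Lemma \ref{stirl-euler}, applied with parameters $m+1$ and $m$ respectively, turn the two sums into $\sum_{j=0}^{m}\eul{m+1}{j}t^{m-j}$ and $\sum_{j=0}^{m-1}\eul{m}{j}t^{m-j}$; since $\eul{m}{m}=0$, their difference is precisely the claimed right-hand side.

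The one genuinely delicate point is the chain count $\sum_{X\subseteq{}S\subset{}R}\D{R}{S}{X}{\ell}=\ell\cdot\ell!\,\stirl{m+1}{\ell+1}$ — essentially the ``each chain is hit by exactly $\ell$ of the admissible $S$'s'' bookkeeping, together with the identification of $N_\ell$ with $\spsac_{m}(\emptyset,\ell)$. Once that is in place, the remainder is the same generating-function manipulation and use of the Stirling--Eulerian identities of Lemma \ref{stirl-euler} that already appear in the proofs of Lemmas \ref{lem:hQ} and \ref{lem:link Q_R}.
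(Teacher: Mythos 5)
Your proof is correct and follows essentially the same route as the paper: reorder the double sum, evaluate $\sum_{X\subseteq{}S\subset{}R}\D{R}{S}{X}{\ell}=\ell\,\ell!\,\stirl{|R|-|X|+1}{\ell+1}$ (the paper delegates this to Lemma~\ref{lem:D-S}, which you reprove inline with the same ``each chain is hit $\ell$ times'' argument), split $\ell\,\ell!=(\ell+1)!-\ell!$, and apply \eqref{WE1} and \eqref{WE2}. Your explicit check that the completing extreme terms cancel and that $\eul{m}{m}=0$ closes the bookkeeping cleanly; in fact your derivation also makes clear that the final display in the paper should read $t^{|R|-|X|-j}$ rather than $(t-1)^{|R|-|X|-j}$, which appears to be a typo there.
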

%%%%%%%%%%%%%%%%%%%%%%%%%%%%%%%%%%%%%%%%%%%%%%%%%%%%%%%%%%%%%%%%%%%%%%%%%%%%%%%%
%%%%%%%%%%%%%%%%%%%%proof of lemma:link of hk(Q/yS)%%%%%%%%%%%%%%%%%%%%%%%%%%%%%
\begin{proof}
By means of relation \eqref{equ:link{}of{}yS-h}, the sum
$\sum_{\emptyset\subset{}S\subset{}R}\htt{\bx{}\qQ_R/y_S}{t}$ is equal to:
\begin{align}
  \sum_{\emptyset\subset{}X\subseteq{}S\subset{}R}&
  \sum_{\ell=1}^{|R|-|X|}\D{R}{S}{X}{\ell}(t-1)^{|R|-|X|-l}\htt{\fF_X}{t}
  \notag\\
  =&\sum_{\ell=1}^{|R|}\sum_{\emptyset\subset{}X\subset{}R}
  \sum_{X\subseteq{}S\subset{}R}\D{R}{S}{X}{\ell}(t-1)^{|R|-|X|-l}
  \,\htt{\fF_X}{t}\label{Q/y_S:1}\\
  =&\sum_{\ell=1}^{|R|}
  \sum_{\emptyset\subset{}X\subset{}R}\ell\,\ell!\,\stirl{|R|-|X|+1}{\ell+1}
  (t-1)^{|R|-|X|-l}\htt{\fF_X}{t}\label{Q/y_S:2}\\
  =&\sum_{\emptyset\subset{}X\subset{}R}\biggl(\sum_{\ell=0}^{|R|}\ell\,\ell!\,
  \stirl{|R|-|X|+1}{\ell+1}(t-1)^{|R|-|X|-\ell}\biggr)\htt{\fF_X}{t}
  \notag\\
  =&\sum_{\emptyset\subset{}X\subset{}R}
  \biggl(\sum_{\ell=0}^{|R|}(\ell+1)\,\ell!\,
  \stirl{|R|-|X|+1}{\ell+1}(t-1)^{|R|-|X|-\ell}
  -\sum_{\ell=1}^{|R|}\stirl{|R|-|X|+1}{\ell+1}(t-1)^{|R|-|X|-l}\biggr)\,      
  \htt{\fF_X}{t}\label{Q/y_S:3}\\
  =&\sum_{\emptyset\subset{}X\subset{}R}\biggl(\sum_{j=0}^{|R|}
  \eul{|R|-|X|+1}{j}(t-1)^{|R|-|X|-j}-\sum_{j=0}^{|R|}\eul{|R|-|X|}{j}(t-1)^{|R|-|X|-j}
  \biggr)\,\htt{\fF_X}{t}\label{Q/y_S:4}\\
  =&\sum_{\emptyset\subset{}X\subset{}R}\;
  \sum_{j=0}^{|R|}\bigl(\eul{|R|-|X|+1}{j}-\eul{|R|-|X|}{j}\bigr)
  (t-1)^{|R|-|X|-j}\htt{\fF_X}{t},\notag
\end{align}
where, to go from \eqref{Q/y_S:1} to \eqref{Q/y_S:2} and from
\eqref{Q/y_S:3} to \eqref{Q/y_S:4}  we used Lemma \ref{lem:D-S} and
Lemma\;\ref{stirl-euler}, respectively.
\end{proof}
%%%%%%%%%%%%%%%%%%%%%%%%%%%%%%%%%%%%%%%%%%%%%%%%%%%%%%%%%%%%%%%%%%%%%%%%%%%%%%%

\subsection{Links and non-links}
\label{sec:links-nonlinks}

%%%%%%%%%%%%%%%%%%%proposition: recursion %%%%%%%%%%%%%%%%%%%%%%%%%%%%%%%%%%%%%
The following theorem generalizes Lemma \ref{lem:McM} in the context
of Cayley polytopes.
\begin{theorem}
\label{theor:rec1} 
For any $\emptyset\subset{}R\subseteq[r]$, 
\begin{equation} 
  (d+|R|-1)\htt{\fF_R}{t}+(1-t)\hh{\fF_R}=\sum_{v\in{}V_R}\htt{\fF_R/v}{t},
  \label{equ:rec1}
\end{equation} 
where $V_R=\cup_{i\in{}R}V_i$. 
\end{theorem}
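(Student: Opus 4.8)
The plan is to prove \eqref{equ:rec1} exactly as one proves McMullen's relation (Lemma~\ref{lem:McM}): for the purposes of this identity $\fF_R$ behaves like a simplicial complex of dimension $d+|R|-2$ whose faces are simplices with vertex set contained in $V_R=\cup_{i\in R}V_i$ (the vertex set of $\cC_R$), so \eqref{equ:rec1} is the verbatim analogue of Lemma~\ref{lem:McM} with $d$ replaced by $d+|R|-1$. Concretely, I would carry out an incidence count at the level of $f$-vectors and then apply the substitution $t\mapsto t-1$ that converts $f$-polynomials into $h$-polynomials; no induction on $|R|$ is needed here.

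The combinatorial heart of the matter is the identity
\[
  \sum_{v\in V_R}f_{k-1}(\fF_R/v)=(k+1)\,f_k(\fF_R),\qquad 0\le k\le d+|R|-2.
\]
By the definition of $\fF_R/v$, its $(k-1)$-faces are in bijection, via $G\mapsto G\cup\{v\}$, with the $k$-faces $F$ of $\fF_R$ (the mixed $k$-faces of $\cC_R$) that contain $v$. Summing over $v\in V_R$, the left-hand side thus counts the incidences $(v,F)$ with $F$ a mixed $k$-face and $v$ a vertex of $F$; since all $k+1$ vertices of such an $F$ lie in $V_R$, each $F$ is counted exactly $k+1$ times, which is the right-hand side. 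The case $k=0$ reads $\sum_{v}f_{-1}(\fF_R/v)=f_0(\fF_R)$ and fixes the convention for $f_{-1}(\fF_R/v)$ (both sides vanish once $|R|\ge 2$).

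Next I would rewrite this in terms of generating polynomials. Since $\dim\fF_R=d+|R|-2$ and $\dim(\fF_R/v)=d+|R|-3$, multiplying the identity by $t^{\,d+|R|-2-k}$ and summing over $k$ turns the left-hand side into $\sum_{v\in V_R}\ft{\fF_R/v}{t}$, while the right-hand side is exactly what the operator $(d+|R|-1)-t\frac{d}{dt}$ produces from $\ft{\fF_R}{t}$, because that operator sends the monomial $f_{j-1}(\fF_R)\,t^{\,d+|R|-1-j}$ to $j\,f_{j-1}(\fF_R)\,t^{\,d+|R|-1-j}$. Hence
\[
  \sum_{v\in V_R}\ft{\fF_R/v}{t}=(d+|R|-1)\,\ft{\fF_R}{t}-t\,\tfrac{d}{dt}\ft{\fF_R}{t}.
\]
Finally I would set $t\mapsto t-1$ and invoke \eqref{hf}: $\ft{\fF_R}{t-1}=\htt{\fF_R}{t}$, $\ft{\fF_R/v}{t-1}=\htt{\fF_R/v}{t}$, and, by the chain rule, $\big(\tfrac{d}{dt}\ft{\fF_R}{t}\big)\big|_{t\mapsto t-1}=\tfrac{\partial}{\partial t}\htt{\fF_R}{t}=\hh{\fF_R}$, so the previous display becomes
\[
  \sum_{v\in V_R}\htt{\fF_R/v}{t}=(d+|R|-1)\htt{\fF_R}{t}-(t-1)\hh{\fF_R},
\]
which is \eqref{equ:rec1}.

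I expect the only genuinely delicate point to be the incidence count: one must be careful about the precise meaning of $\fF_R/v$ — its faces $G$ should be exactly those for which $G\cup\{v\}$ is a mixed face of $\cC_R$ — and about the $f_{-1}$-bookkeeping of these links; after that the rest is a routine manipulation. An alternative, machinery-heavy route is to apply Lemma~\ref{lem:McM} directly to the simplicial polytope $\qQ_R$, split its vertex set as $V_R\cup\{y_S:\emptyset\subset S\subset R\}$, substitute Lemmas~\ref{lem:hQ}, \ref{lem:link Q_R} and \ref{lem:hQR/S}, and induct on $|R|$ to absorb the proper-subset contributions; there the obstacle instead becomes the verification of the relevant Eulerian-number identity.
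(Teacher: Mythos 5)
Your proof is correct, and it takes a genuinely different and more elementary route than the paper's. The paper derives \eqref{equ:rec1} by applying McMullen's relation (Lemma~\ref{lem:McM}) to the auxiliary simplicial polytope $\qQ_R$, then expressing $\h{\bx\qQ_R}$ and $\h{\bx\qQ_R/v}$ via $\fF_S$ and $\fF_S/v$ for $\emptyset\subset S\subseteq R$ (Lemmas~\ref{lem:hQ}, \ref{lem:link Q_R}, \ref{lem:hQR/S}), and closes by induction on $|R|$ together with the Eulerian recurrence; your ``alternative, machinery-heavy route'' is exactly that argument. Your primary route instead establishes the $f$-vector identity $\sum_{v\in V_R}f_{k-1}(\fF_R/v)=(k+1)f_k(\fF_R)$ by a direct incidence count (each mixed $k$-face is a simplex with $k+1$ vertices, all in $V_R$, under the ``as simplicial as possible'' assumption), and then passes to $h$-polynomials by the operator $(d+|R|-1)-t\tfrac{d}{dt}$ and the substitution $t\mapsto t-1$. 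This buys a short, induction-free proof that bypasses the entire $\qQ_R$ construction and the Eulerian/Stirling identities.

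You are right to single out the meaning of $\fF_R/v$ as the one delicate point. Your incidence count needs $\fF_R/v=\{G:\ v\notin G,\ G\cup\{v\}\in\fF_R\}$, so that $G\mapsto G\cup\{v\}$ is a bijection onto the mixed $k$-faces containing $v$. The paper's literal definition at the start of Section~\ref{sec:hFR-hFRv} reads as $\fF_R/v:=(\kK_R/v)\cap\fF_R$, which is in general a strictly smaller set: if $F$ is a mixed face containing $v$ and $v$ is the \emph{only} vertex of $F$ in its block $V_j$, then $F\setminus\{v\}$ is no longer mixed, so it is excluded from $(\kK_R/v)\cap\fF_R$, and the count $\sum_v f_{k-1}(\fF_R/v)$ would undercount $(k+1)f_k(\fF_R)$. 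With that literal reading your combinatorial identity (and hence \eqref{equ:rec1} itself) would fail. However, the paper's own usage resolves the ambiguity in your favour: the decomposition \eqref{tag:Q/v} in the proof of Lemma~\ref{lem:link Q_R} is obtained by taking the link of $v$ in the disjoint decomposition \eqref{equ:QR-FSjoin}, which forces $\fF_S/v$ to mean $\{G:G\cup\{v\}\in\fF_S\}$, and the Inclusion--Exclusion step in Corollary~\ref{cor:rec} likewise requires $\kK_R/v=\bigsqcup_S \fF_S/v$ to hold, which it does only under your reading. So the stated definition appears to be a slip, your interpretation is the one the paper actually uses, and your proof is a valid and cleaner derivation of Theorem~\ref{theor:rec1}.
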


%%%%%%%%%%%%%%%%%%%%%%%%%%%%%%%%%%%%%%%%%%%%%%%%%%%%%%%%%%%%%%%%%%%%%%%%%%%%%%%
%%%%%%%%%%%%%%%%%%%proposition: recursion proof%%%%%%%%%%%%%%%%%%%%%%%%%%%%%%%%
\begin{proof}
We proceed by induction on the size of $R$. 
The case $|R|=1$ is considered in Lemma \ref{lem:McM}. 
Assume now that \eqref{equ:rec1} holds for all 
$\emptyset\subset{}S\subset{}R$. By applying Lemma \ref{lem:McM} to
the simplicial polytope $\qQ_R$ we have:
\begin{equation}\label{equ:McM-QR}
  (d+|R|-1)\,\h{\bx{\qQ_R}}+(1-t)\hh{\bx{\qQ_R}}
  =\sum_{v\in\text{vert}(\bx{\qQ_R})}\h{\bx{\qQ_R}/v}.
\end{equation}
Recall from Lemma \ref{lem:hQ} that: 
\begin{align}
  \htt{\bx{}\qQ_R}{t}=\htt{\fF_R}{t}+\sum_{\emptyset\subset{}S\subset{}R}
  \sum_{j=0}^{|R|-|S|-1}\eul{|R|-|S|}{j}t^{|R|-|S|-j}\htt{\fF_S}{t}. 
  \label{main1}
\end{align}
Multiplying both sides of \eqref{main1} by $d+|R|-1$  we get: 
\begin{align*} 
  (d+|R|-1)\,&\h{\bx{}\qQ}=\\
  &(d+|R|-1)\htt{\fF_R}{t}+(d+|R|-1)
  \sum_{\emptyset{}\subset{}S\subset{}R}\sum_{j=0}^{|R|-|S|-1}\eul{|R|-|S|}{j}
  t^{|R|-|S|-j}\htt{\fF_S}{t}
\end{align*} 
Differentiating both sides of \eqref{main1} and multiplying by $(1-t)$ 
we get: 
\begin{align*}
  (1-t)\hh{\bx{}\qQ_R}=(1-t)&\hh{\fF_R}\\
  &+(1-t)\sum_{\emptyset\subset{}S\subset{}R}
  \sum_{j=0}^{|R|-|S|-1}(|R|-|S|-j)\eul{|R|-|S|}{j}t^{|R|-|S|-j-1}\htt{\fF_S}{t}\\
  &+(1-t)\sum_{\emptyset\subset{}S\subset{}R}
  \sum_{j=0}^{|R|-|S|-1}\eul{|R|-|S|}{j}t^{|R|-|S|-j}\hh{\fF_S}.
\end{align*} 
Summing up the above two relations and using Lemma \ref{McM} for the 
$(d+|R|-1)$-polytope $\qQ_R$, we conclude that the right-hand side of
\eqref{equ:McM-QR} is equal to:
\begin{align*}
  (d&+|R|-1)\,\h{\bx{}\qQ_R}+(1-t)\hh{\bx{}\qQ_R}\\
  &=(d+|R|-1)\htt{\fF_R}{t}+(d+|R|-1)
  \sum_{\emptyset\subset{}S\subset{}R}\sum_{j=0}^{|R|-|S|-1}
  \eul{|R|-|S|}{j}t^{|R|-|S|-j}\htt{\fF_S}{t}\\
  &\qquad+(1-t)\hh{\fF_R}+(1-t)\sum_{\emptyset\subset{}S\subset{}R}
  \sum_{j=0}^{|R|-|S|-1}(|R|-|S|-j)\eul{|R|-|S|}{j}t^{|R|-|S|-j-1}\htt{\fF_S}{t}\\
  &\qquad+(1-t)\sum_{\emptyset\subset{}S\subset{}R}\sum_{j=0}^{|R|-|S|-1}
  \eul{|R|-|S|}{j}t^{|R|-|S|-j}\\
  &=A+\sum_{\emptyset\subset{}S\subset{}R}
  \sum_{j=0}^{|R|-|S|-1}(d+|S|-1)\eul{|R|-|S|}{j}t^{|R|-|S|-j}\htt{\fF_S}{t}\\
  &\qquad+\sum_{\emptyset\subset{}S\subset{}R}\sum_{j=0}^{|R|-|S|-1}j\,
  \eul{|R|-|S|}{j}t^{|R|-|S|-j}\htt{\fF_S}{t}\\
  &\qquad+\sum_{\emptyset\subset{}S\subset{}R}
  \sum_{j=0}^{|R|-|S|-1}(|R|-|S|-j)\eul{|R|-|S|}{j}t^{|R|-|S|-j-1}\htt{\fF_S}{t}\\
  &\qquad+(1-t)\sum_{\emptyset\subset{}S\subset{}R}
  \sum_{j=0}^{|R|-|S|-1}\eul{|R|-|S|}{j}t^{|R|-|S|-j}\hh{\fF_S},
\end{align*}
where $A=(d+|R|-1)\htt{\fF_R}{t}+(1-t)\hh{\fF_R}$. 
In order to use our induction hypothesis, we regroup the terms of the above 
expression as follows:
\begin{align*}
  A&+\sum_{\emptyset\subset{}S\subset{}R}
  \sum_{j=0}^{|R|-|S|-1}\eul{|R|-|S|}{j}\biggl((d+|S|-1)
  \htt{\fF_S}{t}-(1-t)\hh{\fF_S}\biggr)t^{|R|-|S|-j}\\
  &\quad+\sum_{\emptyset\subset{}S\subset{}R}\sum_{j=0}^{|R|-|S|-1}
  \biggl((j+1)\eul{|R|-|S|}{j+1}+(|R|-|S|-j)\eul{|R|-|S|}{j}\biggr)
  t^{|R|-|S|-j-1}\htt{\fF_S}{t}.
\end{align*}
Using the well known recurrence relation for the Eulerian numbers
(cf. \cite{gkp-cm-89}):
\[\eul{m}{i}=(m-i)\,\eul{m-1}{i}+(i+1)\,\eul{m-1}{i},\]
and the induction hypothesis, the above expression simplifies to:
\begin{equation}\label{main5}
  \begin{aligned}
    A&+\sum_{\emptyset\subset{}S\subset{}R}\sum_{j=0}^{|R|-|S|-1}
    \sum_{v\in{}V_S}\eul{|R|-|S|}{j}t^{|R|-|S|-j}\htt{\fF_S/v}{t}\\
    &\quad+\sum_{\emptyset\subset{}S\subset{}R}\sum_{j=0}^{|R|-|S|-1}
    \biggl(\eul{|R|-|S|+1}{j+1}-\eul{|R|-|S|}{j+1}\biggr)
    t^{|R|-|S|-j-1}\htt{\fF_S}{t}.
  \end{aligned}
\end{equation} 
Since the vertices of $\qQ_R$ are either vertices of some polytope 
$P_i,i\in{}R$, or auxiliary points $y_S$, $\emptyset\subset{}S\subseteq{}R$,
we split the sum in the right-hand side of \eqref{equ:McM-QR} as follows: 
\begin{equation*}
  \sum_{v\in\text{vert}(\bx{}\qQ_R)}\h{\bx{}\qQ_R/v}=
  \sum_{v\in{}V_R}\h{\bx{}\qQ_R/v}+\sum_{S\subset{}R}\h{\bx{}\qQ_R/y_S}
\end{equation*}
Using relations \eqref{hQR/S} and \eqref{hQF/v}, the
right-hand side of the above equation is equal to:
\begin{align}
  \underbrace{\sum_{v\in{}V_R}\htt{\fF_R/v}{t}}_{B}&+
  \sum_{\underset{v\in{}V_i}{i\in{}R}}
  \sum_{\{i\}\subseteq{}S\subset{}R}\sum_{j=0}^{|R|-|S|-1}\eul{|R|-|S|}{j}t^{j+1}
  \htt{\fF_S/v}{t}\notag\\
  &\qquad\qquad+\sum_{\emptyset\subset{}X\subset{}R}\;\sum_{\ell=0}^{|R|}
  \bigl(\eul{|R|-|X|+1}{\ell}-\eul{|R|-|X|}{\ell}\bigr)
  t^{|R|-|X|-\ell}\htt{\fF_X}{t}\notag\\
  =B+\sum_{\emptyset\subset{}S\subset{}R}&\sum_{v\in{}V_S}\;
  \sum_{j=0}^{|R|-|S|-1}\eul{|R|-|S|}{j}t^{j+1}\htt{\fF_S/v}{t}\notag\\
  &+\sum_{\emptyset\subset{}X\subset{}R}\;\sum_{\ell=0}^{|R|}\;
  \bigl(\eul{|R|-|X|+1}{\ell}-\eul{|R|-|X|}{\ell}\bigr)t^{|R|-|X|-l}
  \htt{\fF_X}{t}\notag\\
  =B+ \sum_{\emptyset\subset{}S\subset{}R} &
  \sum_{v\in{}V_S}\;\sum_{j=0}^{|R|-|S|-1}
  \underbrace{\eul{|R|-|S|}{|R|-|S|-j-1}}_{\eul{|R|-|S|}{j}}t^{|R|-|S|-j}
  \htt{\fF_S/v}{t}\notag\\
  &+\sum_{\emptyset\subset{}X\subset{}R}\;\sum_{\ell=0}^{|R|-|S|}
  \bigl(\eul{|R|-|X|+1}{\ell+1}-\eul{|R|-|X|}{\ell+1}\bigr)
  t^{|R|-|X|-\ell-1}\htt{\fF_X}{t}.\label{main6}
\end{align}
Equating \eqref{main5} and \eqref{main6} we conclude that  $A=B,$ 
which is precisely relation \eqref{equ:rec1}.
\end{proof} 
%%%%%%%%%%%%%%%%%%%%%%%%%%%%%%%%%%%%%%%%%%%%%%%%%%%%%%%%%%%%%%%%%%%%%%%%%%%%%%%
%%%%%%%%%%%%%%%%%%%%%%%%%%%%%%%%% end of  proof %%%%%%%%%%%%%%%%%%%%%%%%%%%%%%%

Comparing coefficients in \eqref{equ:rec1} we conclude the following:
\begin{corollary} 
\label{cor:rec}
  For any  $\emptyset\subset{}R\subseteq{}[r]$
  and all $0\leq{}k\leq{}d+|R|-2$ we have:
  \begin{equation}\label{recur-relation-FR-wrt-F/v}
    (k+1)h_{k+1}(\fF_R)+(d+|R|-1-k)h_k(\fF_R)=\sum_{v\in{}V_R} h_k(\fF_R/v),
  \end{equation}
  or equivalently 
  \begin{align}\label{recur-relation-FR-wrt-K}
    (k+1)h_{k+1}(\fF_R)+(d+|R|-1-k)h_k(\fF_R)
    &=\sum_{\emptyset\subset{}S\subseteq{}R}(-1)^{|R|-|S|}
    \sum_{v\in{}V_S}g_k^{(|R|-|S|)}(\kK_S/v), 
  \end{align}
  where $V_S=\cup_{i\in{}S}V_i$.
\end{corollary}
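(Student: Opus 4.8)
The plan is to read both identities off the generating-function relation \eqref{equ:rec1} of Theorem~\ref{theor:rec1}. For \eqref{recur-relation-FR-wrt-F/v} I would simply compare coefficients. Writing $\htt{\fF_R}{t}=\sum_{i\ge 0}h_i(\fF_R)\,t^{d+|R|-1-i}$ (a polynomial of degree $d+|R|-1$ because $\dim\fF_R=d+|R|-2$) and $\hh{\fF_R}=\frac{\partial}{\partial t}\htt{\fF_R}{t}$, the coefficient of $t^{d+|R|-2-k}$ on the left-hand side of \eqref{equ:rec1} equals $(d+|R|-1)h_{k+1}(\fF_R)+(d+|R|-1-k)h_k(\fF_R)-(d+|R|-2-k)h_{k+1}(\fF_R)$, which collapses to $(k+1)h_{k+1}(\fF_R)+(d+|R|-1-k)h_k(\fF_R)$; on the right-hand side $\sum_{v\in V_R}\htt{\fF_R/v}{t}$, whose summands are supported in degrees $\le d+|R|-2$ (as $\dim\fF_R/v=d+|R|-3$), the coefficient of $t^{d+|R|-2-k}$ is $\sum_{v\in V_R}h_k(\fF_R/v)$. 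Equating these for $0\le k\le d+|R|-2$ gives \eqref{recur-relation-FR-wrt-F/v}; the coefficient of the top power $t^{d+|R|-1}$ vanishes identically on the left, consistently with the right-hand side having smaller degree.

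For the equivalent form \eqref{recur-relation-FR-wrt-K}, the key intermediate fact is the link-version of \eqref{hkFK}: for any $j\in R$ and any $v\in V_j$,
\[
  h_k(\fF_R/v)=\sum_{\{j\}\subseteq S\subseteq R}(-1)^{|R|-|S|}\,g_k^{(|R|-|S|)}(\kK_S/v),\qquad 0\le k\le d+|R|-2,
\]
where $\fF_S/v$ and $\kK_S/v$, and the corresponding terms, are taken to be empty whenever $v\notin V_S$, i.e. whenever $j\notin S$. I would prove this by replaying the derivation of \eqref{hkFK} from \eqref{equ:fkKR-def} verbatim, now inside the vertex figure at $v$. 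The seed is the link-analog of the support decomposition \eqref{equ:fkKR-def}, namely $f_k(\kK_R/v)=\sum_{\{j\}\subseteq S\subseteq R}f_k(\fF_S/v)$ for $v\in V_j$: each face $G$ of the simplicial complex $\kK_R/v$ has $G\cup\{v\}\in\kK_R$, hence lies in a unique $\fF_S$ with $S=\mathrm{supp}(G\cup\{v\})\ni j$, and $G\mapsto G\cup\{v\}$ is a dimension-raising bijection onto the faces of $\fF_S$ through $v$, that is, onto $\fF_S/v$. (That $\cC_S$ is a face of $\cC_R$ is what ensures a face of $\cC_R$ with support $S$ is genuinely a face of $\cC_S$.) Passing to generating functions, using $\dim(\fF_S/v)=\dim(\kK_R/v)-(|R|-|S|)$, yields the link-analog of \eqref{fKF}, $\ft{\kK_R/v}{t}=\sum_{\{j\}\subseteq S\subseteq R}t^{|R|-|S|}\ft{\fF_S/v}{t}$; M\"obius inversion over the Boolean interval $[\{j\},R]$ gives the link-analog of \eqref{fFK}; and the substitution $t\mapsto t-1$ of \eqref{hf} then turns this into the link-analog of \eqref{gen:hFK}, namely $\htt{\fF_R/v}{t}=\sum_{\{j\}\subseteq S\subseteq R}(-1)^{|R|-|S|}\gmt{|R|-|S|}{\kK_S/v}{t}$, whose coefficient form is the displayed identity. (The same identity can alternatively be extracted from \eqref{hQF/v} and \eqref{hQK/v} by induction on $|R|$.)

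To finish, I would substitute this link-version of \eqref{hkFK} into the right-hand side of \eqref{recur-relation-FR-wrt-F/v}. For a fixed $v\in V_R$, say $v\in V_j$, the terms indexed by $S$ with $j\notin S$ may be adjoined without changing anything, since then $\kK_S/v=\emptyset$ and $g_k^{(|R|-|S|)}(\kK_S/v)=0$ by the convention $g_k(\emptyset)=0$; hence $\sum_{v\in V_R}h_k(\fF_R/v)=\sum_{v\in V_R}\sum_{\emptyset\subset S\subseteq R}(-1)^{|R|-|S|}g_k^{(|R|-|S|)}(\kK_S/v)$. Interchanging the two summations, and noting that for a fixed $S$ the summand is nonzero only when $v\in V_S=\bigcup_{i\in S}V_i$, the right-hand side becomes $\sum_{\emptyset\subset S\subseteq R}(-1)^{|R|-|S|}\sum_{v\in V_S}g_k^{(|R|-|S|)}(\kK_S/v)$, which is exactly \eqref{recur-relation-FR-wrt-K}.

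The only step requiring genuine care (everything else being the coefficient bookkeeping of the first paragraph and a verbatim replay of the passage from \eqref{equ:fkKR-def} to \eqref{hkFK}) is the link-analog of the support decomposition: one must check that reassembling the faces of $\kK_R/v$ according to the support of $G\cup\{v\}$, rather than of $G$, does produce precisely the links $\fF_S/v$, and verify the degenerate index $k=-1$, where the honest count gives $f_{-1}(\kK_R/v)=1=\sum_{\{j\}\subseteq S\subseteq R}f_{-1}(\fF_S/v)$, the right-hand sum collapsing to its $S=\{j\}$ term. Once that is in place, the rest is routine.
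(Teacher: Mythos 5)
Your proposal is correct and follows essentially the same route as the paper: relation \eqref{recur-relation-FR-wrt-F/v} by comparing coefficients in \eqref{equ:rec1}, and \eqref{recur-relation-FR-wrt-K} by expanding $h_k(\fF_R/v)$ via a link-version of \eqref{hkFK} and swapping the double sum (using that $\kK_S/v=\emptyset$ when $v\notin V_S$). The paper compresses the middle step to ``by the Inclusion-Exclusion principle,'' whereas you actually supply the derivation — the support decomposition $f_k(\kK_R/v)=\sum_{\{j\}\subseteq S\subseteq R}f_k(\fF_S/v)$ for $v\in V_j$, its M\"obius inversion, and the passage to $h$- and $g$-vectors via $t\mapsto t-1$ — which is exactly the implicit content of the paper's one-line justification; this is a reasonable level of detail and nothing in it is wrong.
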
 

\begin{proof}
  Relation \eqref{recur-relation-FR-wrt-F/v} is immediate from
  \eqref{equ:rec1}; it suffices to compare the coefficients of the
  generating functions of left- and right-hand sides of \eqref{equ:rec1}.

  To go from \eqref{recur-relation-FR-wrt-F/v} to
  \eqref{recur-relation-FR-wrt-K} we use the Inclusion-Exclusion
  principle, and notice that $\kK_S/v$ is the empty set for
  $v\nin\kK_S$:
  \begin{align*}
    \sum_{v\in{}V_R} h_k(\fF_R/v)
    &=\sum_{v\in{}V_R}
    \sum_{\emptyset\subset{}S\subseteq{}R}(-1)^{|R|-|S|}g_k^{(|R|-|S|)}(\kK_S/v)\\
    &=\sum_{\emptyset\subset{}S\subseteq{}R}(-1)^{|R|-|S|}
    \sum_{v\in{}V_R}g_k^{(|R|-|S|)}(\kK_S/v)\\
    &=\sum_{\emptyset\subset{}S\subseteq{}R}(-1)^{|R|-|S|}
    \sum_{v\in{}V_S}g_k^{(|R|-|S|)}(\kK_S/v).\qedhere
  \end{align*}
\end{proof}

\subsection{Using shellings to bound the
  \texorpdfstring{$g$}{g}-vectors of links}
\label{sec:shellings}

The main result of this subsection is Theorem \ref{thm:link}, which is 
essential for proving the recursive relation in Theorem \ref{thm:hkFRrec}.  
Before proving it, some more lemmas are in order. 
The first two (Lemmas \ref{lem:in-degree} and 
\ref{lem:ap_link_nonlink}) concern inequalities of $h$-vectors, 
which are proved using their interpretation as in-degrees of the dual graph
of shellable simplicial complexes (cf. \cite{k-swtsp-88}).
The third (Lemma \ref{lem:special_shelling}) shows that there exists a 
particular shelling of the polytope $\bx{}\qQ_R$, for which the previous two 
lemmas are applicable.

%%%%%%%%%%%%%%%%%%%%%%%%%%%%%%%%%%%%%%%%%%%%%%%%%%%%%%%%%%%%%%%%%%%%%%%%%%%%%%%
We start with some definitions. 
%%%%%%%%%%%%%%%%%%%%%%%%%%%%%%%%%%%%%%%%%%%%%%%%%%%%%%%%%%%%%%%%%%%%%%%%%%%%%%%%
\begin{definition}
  \label{def:shell1}
  Let  $\cC$ be a pure $d$-dimensional complex. A shelling of $\cC$ 
  is a linear ordering $F_1,\ldots,F_s$ of its facets such that  either $\cC$ 
  is $0$-dimensional, or it satisfies the following conditions:
  \begin{enumerate}[(a)]
  \item the boundary complex $\bx{}F_1$ of the first facet has a shelling,
    %%%%%%%%%%	
  \item for $1<j\le{}s$ the intersection of the facet $F_j$ with the 		
    previous facets is nonempty and is a beginning segment of a shelling of 
    the	$(d-1)$-dimensional boundary complex $\bx{}F_j$, that is 
    $F_j\bigcap_{i=1}^{j-1}F_i=G_1\cup{}G_2\cup\cdots\cup{}G_r$  
    for some shelling $G_1,\ldots,G_r,\ldots,G_t$ of $\bx{}F_j.$
  \end{enumerate}
  A complex is \emph{shellable} if it is pure and has a shelling.
\end{definition}

%%%%%%%%%%%%%%%%%%%%%%%%%%%%%%%%%%%%%%%%%%%%%%%%%%%%%%%%%%%%%%%%%%%%%%%%%%%%%%%%
%%%%%%%%%%%%%%%%%%%%%%%%%%%%%%%%%%%%%%%%%%%%%%%%%%%%%%%%%%%%%%%%%%%%%%%%%%%%%%%%

\begin{definition}
	The dual graph $\dgr{\cC}$ of a shellable simplicial $d$-complex $\cC$ is 
	the graph   whose vertices are the maximal simplices (i.e.,	facets) and 
	whose edges correspond to adjacent facets. 
	%%%%%%%%%%%%%%%%    
    If, in addition, we consider a linear ordering  $F_1,\ldots,F_{\ell}$ of the
    facets of $\cC$, we can impose an orientation on the graph $\dgr{\cC}$ as 
    follows: an edge connecting two facets $F_i,F_j$ is oriented from $F_i$ to 
    $F_j$ if $F_i$ precedes $F_j$ in the above order. 
\end{definition}

In the case where $\cC$ is shellable, the $h$-vector of $\cC$ encodes 
information about the in-degrees of the dual graph $\dgr{\cC}$. This is the 
content of the next theorem.  
%%%%%%%%%%%%%%%%%%%%%%%%%%%%%%%%%%%%%%%%%%%%%%%%%%%%%%%%%%%%%%%%%%%%%%%%%%%%%%%%
\begin{theorem}{\rm\cite{k-swtsp-88}} 
     Let $\cC$ be a shellable simplicial $d$-complex and consider the dual graph
	 $\dgr{\cC}$ of $\cC$ oriented according to a shelling order of the facets
	 of $\cC.$ Then, $h_k(\cC),\,0\leq{}k\leq{}d,$ counts the number of 
	 vertices of the dual graph of $\cC$ with in-degree $k$ {\rm(}and is 
	 independent of  the shelling chosen{\rm)}.
\label{theor:Kalai}
\end{theorem}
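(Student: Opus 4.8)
The plan is to follow the classical argument of \cite{k-swtsp-88}, which reads the $h$-vector of $\cC$ off the \emph{restriction sets} of a shelling. Fix a shelling $F_1,\ldots,F_s$ of $\cC$ inducing the given orientation of $\dgr{\cC}$, write $\cC_j$ for the subcomplex generated by $F_1,\ldots,F_j$ (so $\cC_s=\cC$), and for each $j$ set
\[
  \restr{F_j}:=\{\,v\in F_j\;:\;F_j\sm\{v\}\ \text{is a face of }\cC_{j-1}\,\},
\]
with the convention $\restr{F_1}:=\emptyset$. I would then argue in three steps: (i) the faces of $\cC_j$ that are not already in $\cC_{j-1}$ are exactly the $G$ with $\restr{F_j}\subseteq{}G\subseteq{}F_j$; (ii) consequently $h_k(\cC)=\#\{\,j:|\restr{F_j}|=k\,\}$, via a short generating-function computation based on \eqref{hf}; (iii) for each $j$, the number $|\restr{F_j}|$ equals the in-degree of $F_j$ in $\dgr{\cC}$ under the shelling orientation. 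The theorem is immediate from (ii) and (iii), and the independence of the chosen shelling needs no separate proof, since $h_k(\cC)$ is determined by $\mb{f}(\cC)$ through \eqref{def:f-h}.

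For step (i): a face $G\subseteq{}F_j$ belongs to $\cC_{j-1}$ precisely when it lies in $F_j\cap{}\cC_{j-1}$. For $j\ge 2$, Definition \ref{def:shell1}(b) says this intersection is a nonempty beginning segment of a shelling of $\bx{}F_j$, hence a subcomplex equal to the union of certain ridges of the simplex $F_j$; and by the definition of $\restr{F_j}$ the ridges occurring are precisely the $F_j\sm\{v\}$ with $v\in\restr{F_j}$ (for $j=1$ the intersection is just the empty face). Since a face of the simplex $F_j$ lies in such a union if and only if it omits at least one vertex of $\restr{F_j}$, we obtain that $G$ belongs to $\cC_{j-1}$ if and only if $\restr{F_j}\not\subseteq{}G$, that is, the faces new at stage $j$ form the interval $[\restr{F_j},F_j]$ of the face lattice of $F_j$. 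For step (ii) I would count: among these new faces there are $\binom{\delta-|\restr{F_j}|}{m-|\restr{F_j}|}$ with $m$ vertices, where $\delta$ is the common number of vertices of a facet of $\cC$; hence $F_j$ contributes $(1+t)^{\,\delta-|\restr{F_j}|}$ to $\ft{\cC}{t}$ and, by \eqref{hf}, $t^{\,\delta-|\restr{F_j}|}$ to $\htt{\cC}{t}$. Summing over $j$ and comparing with $\htt{\cC}{t}=\sum_k h_k(\cC)\,t^{\delta-k}$ gives $h_k(\cC)=\#\{\,j:|\restr{F_j}|=k\,\}$.

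For step (iii): by construction of $\dgr{\cC}$, the in-degree of $F_j$ is the number of $i<j$ with $F_i\cap{}F_j$ a ridge of $\cC$, and every such ridge is a ridge of $F_j$ that lies in $\cC_{j-1}$, hence one of the $|\restr{F_j}|$ faces $F_j\sm\{v\}$, $v\in\restr{F_j}$, identified in step (i). No earlier facet can contain two distinct ridges of $F_j$, since two ridges of a simplex span it; so distinct old ridges of $F_j$ are incident to distinct earlier facets. Moreover, each old ridge of $F_j$ is contained in \emph{exactly} one earlier facet: in at least one because it lies in $\cC_{j-1}$, and in at most one because it also lies in $F_j$ while no ridge of $\cC$ is contained in more than two facets. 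Hence the in-degree of $F_j$ equals $|\restr{F_j}|$, which together with step (ii) proves the statement. The last property used --- no ridge of $\cC$ in more than two facets --- holds in every situation in which we invoke this theorem, where $\cC$ is the boundary complex of a simplicial polytope (such as $\bx{}\qQ_R$) or a shellable ball arising as a link in, or an initial segment of a shelling of, such a complex.

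The generating-function step and the binomial count are routine; the real work, I expect, lies in the two combinatorial identifications (reading off from the shelling axiom exactly which subfaces of $F_j$ are new and which ridges of $F_j$ are old) and, in step (iii), in the observation that the equality ``in-degree $=|\restr{F_j}|$'' genuinely relies on no ridge of $\cC$ lying in more than two facets: without it, a ridge shared by several earlier facets would make the in-degree of $F_j$ strictly exceed $|\restr{F_j}|$, so this property --- automatic for the polytope boundaries at hand --- must be carried along as a hypothesis.
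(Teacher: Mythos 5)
The paper cites this theorem to \cite{k-swtsp-88} and does not supply a proof of its own, so there is no internal argument to compare against. Your proof is the standard one: decompose $\cC$ by the intervals $[\restr{F_j},F_j]$ of faces new at each shelling step, telescope the generating functions to get $h_k(\cC)=\#\{j:|\restr{F_j}|=k\}$, and then identify $|\restr{F_j}|$ with the in-degree of $F_j$ in $\dgr{\cC}$. Steps (i)--(iii) are all correct; in (iii) the bijection between the old ridges $F_j\sm\{v\}$, $v\in\restr{F_j}$, and the earlier facets adjacent to $F_j$ is established exactly as you say, using that two ridges of a simplex span it and that a ridge lies in at most two facets.

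Your closing observation is worth stressing: the in-degree interpretation in step (iii) genuinely needs the hypothesis that no ridge of $\cC$ lies in more than two facets, which the theorem as quoted by the paper does not state. It can fail otherwise. For instance, take $\cC$ with facets $\{1,2,3\},\{1,2,4\},\{1,2,5\}$ (three triangles glued along the edge $\{1,2\}$); this is shellable in the listed order, with restriction sets $\emptyset,\{4\},\{5\}$ and $\mb h(\cC)=(1,2,0,0)$, yet the dual graph is a triangle and the in-degree sequence under this shelling is $(0,1,2)$, giving the count $(1,1,1,0)\ne\mb h(\cC)$. You correctly point out that in every place the paper invokes Theorem \ref{theor:Kalai} (the polytope boundary $\bx\qQ_R$, the shellable subcomplex $\bx\qQ'_R$ of it, and their vertex links) the pseudomanifold condition holds automatically, so the paper's use is sound; but your proof usefully makes that hidden dependence explicit.
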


Let $\Sh$ be a shellable simplicial complex and assume that 
$F_1,\ldots,F_\ell,F_{\ell+1},\ldots,F_s$ is a shelling order of its facets. 
Let $\SA$ be the subcomplex of $\Sh$ whose facets are  $F_1,\ldots,F_\ell$. 
Clearly, $\SA$ is shellable  as an initial  segment of  a shelling of  $\Sh.$
Consider now the set $\SB$ containing all faces in $\Sh\sm\SA.$ Notice 
that $\SB$ has no complex structure since it contains the facets 
$F_{\ell+1},\ldots,F_{s}$ but not all their subfaces. We can however naturally 
define its $f$-vector and, since all its maximal faces are facets of $\Sh$, 
make the convention that $\dim(\SB)=\dim(\Sh)$. Moreover, as the following 
lemma suggests, the $h$-vector of $\SB$ admits a combinatorial interpretation. 
 
%%%%%%%%%%%%%%%%%%%%%%%%%%%%%%%%%%%%%%%%%%%%%%%%%%%%%%%%%%%%%%%%%%%%%%%%%%%%%%% 
\begin{lemma} 
	\label{lem:in-degree}
	$h_k(\SB)$ counts the number of vertices in $\dgr{\Sh}\sm\dgr{\SA}$
	of in-degree $k.$
\end{lemma}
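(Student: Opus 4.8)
The plan is to exploit the relation $\mb{f}(\Sh) = \mb{f}(\SA) + \mb{f}(\SB)$, which holds by construction since every face of $\Sh$ is counted either in $\SA$ (if it is a face of one of $F_1,\ldots,F_\ell$) or in $\SB$ (otherwise), and these two possibilities are mutually exclusive for proper faces while the empty face is handled by the dimension convention $\dim(\SB)=\dim(\Sh)=d$ together with $f_{-1}(\SA)=1$, $f_{-1}(\SB)=0$. Since $\dim(\SA)=\dim(\SB)=\dim(\Sh)=d$, the $h$-vector is linear in the $f$-vector for a fixed ambient dimension (cf.~\eqref{def:f-h}), so we immediately get $h_k(\SB) = h_k(\Sh) - h_k(\SA)$ for all $0\le k\le d$. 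This is the key algebraic identity; everything else is combinatorial interpretation.

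Next I would invoke Theorem \ref{theor:Kalai} twice. Fix the shelling $F_1,\ldots,F_s$ of $\Sh$ and orient $\dgr{\Sh}$ accordingly. By Theorem \ref{theor:Kalai}, $h_k(\Sh)$ is the number of vertices of $\dgr{\Sh}$ with in-degree $k$. Now $F_1,\ldots,F_\ell$ is an initial segment of this shelling, hence a shelling of $\SA$, and the induced orientation on $\dgr{\SA}$ (which is the subgraph of $\dgr{\Sh}$ induced on $\{F_1,\ldots,F_\ell\}$) is exactly the one coming from this shelling order; so $h_k(\SA)$ counts the vertices of $\dgr{\SA}$ of in-degree $k$ \emph{within $\dgr{\SA}$}. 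Therefore $h_k(\SB) = h_k(\Sh) - h_k(\SA)$ counts vertices of $\dgr{\Sh}$ of in-degree $k$ minus vertices of $\dgr{\SA}$ of in-degree $k$.

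To finish, I must show this difference equals the number of vertices in $\dgr{\Sh}\sm\dgr{\SA}$ (that is, the facets $F_{\ell+1},\ldots,F_s$) whose in-degree \emph{in $\dgr{\Sh}$} is $k$. The crucial observation is that for a vertex $F_i$ with $i\le\ell$, its in-degree in $\dgr{\Sh}$ equals its in-degree in $\dgr{\SA}$: any edge into $F_i$ comes from some $F_j$ with $j<i\le\ell$, so $F_j\in\SA$ as well, and the edge already lies in $\dgr{\SA}$. Hence the contribution of the vertices of $\SA$ to the in-degree-$k$ count of $\dgr{\Sh}$ is exactly $h_k(\SA)$, and subtracting leaves precisely the in-degree-$k$ count among $F_{\ell+1},\ldots,F_s$, which is $h_k(\SB)$ as claimed.

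The main obstacle is the last step: one has to be careful that "in-degree" for the vertices of $\SB$ is measured in the full graph $\dgr{\Sh}$ (edges to earlier facets, including those in $\SA$), not in any induced subgraph on $\SB$ alone — indeed $\SB$ is not even a complex and carries no intrinsic shelling. The argument that in-degrees of $\SA$-vertices are unaffected by passing from $\dgr{\Sh}$ to $\dgr{\SA}$ is what makes the clean subtraction work, and it relies essentially on $F_1,\ldots,F_\ell$ being an \emph{initial} segment of the shelling.
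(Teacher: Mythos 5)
Your proof is correct and follows essentially the same route as the paper's: invoke Kalai's theorem (Theorem \ref{theor:Kalai}) for both $\Sh$ and $\SA$, observe that because $\SA$ is an \emph{initial} segment of the shelling the in-degree of each vertex of $\dgr{\SA}$ is the same whether computed in $\dgr{\SA}$ or in $\dgr{\Sh}$, and subtract. You are somewhat more explicit than the paper about why $h_k(\SB)=h_k(\Sh)-h_k(\SA)$ (additivity of $f$-vectors under the common-dimension convention, including the empty-face bookkeeping) and about why incoming edges to an $\SA$-vertex already lie in $\dgr{\SA}$, but the underlying argument is the same.
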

%%%%%%%%%%%%%%%%%%%%%%%%%%%%%%%%%%%%%%%%%%%%%%%%%%%%%%%%%%%%%%%%%%%%%%%%%%%%%%%%
\begin{proof}
	In view of Theorem \ref{theor:Kalai} we have
	that: {\sf(i)} $h_k(\Sh)$ counts the number of vertices of the dual graph 
	$\dgr{\Sh}$ of $\Sh$ with in-degree $k$ and {\sf(ii)} $h_k(\SA)$ counts the 
	number of vertices of the dual graph $\dgr{\SA}$ of $\SA$ with in-degree  
	$k$. However, since the facets in $\SA$ are an initial segment of a 
	shelling of $\Sh$, their in-degree in $\dgr{\SA}$ as well as in $\dgr{\Sh}$ 
	is the same (the out-degrees of vertices in $\dgr{\SA}$ might 
	be greater when seen as vertices in $\dgr{\Sh}$). Thus, the difference 
	$h_k(\SB)=h_k(\Sh)-h_k(\SA)$ counts the vertices in 
	$\dgr{\Sh}\sm\dgr{\SA}$ 	with in-degree $k$.
\end{proof} 
%%%%%%%%%%%%%%%%%%%%%%%%%%%%%%%%%%%%%%%%%%%%%%%%%%%%%%%%%%%%%%%%%%%%%%%%%%%%%%%%
In the case where $\Sh$ is a  simplicial polytope, $\SA$ a beginning segment of 
its 
shelling and $\SB$ the set theoretical difference of their faces,  the above  
interpretation helps us compare the $h$-vector of $\SB$ with that of its 
link $\SB/v$ on $v$, for any vertex $v$ in $\SB$.
\begin{lemma}
\label{lem:ap_link_nonlink}
	$h_k(\Sh/v)-h_k(\SA/v)\leq{}h_k(\Sh)-h_k(\SA)$ 
	or equivalently, $h_k(\SB/v)\leq{}h_k(\SB)$.
\end{lemma}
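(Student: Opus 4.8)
The plan is to work with the shelling order of $\Sh$ and restrict attention to the star of a fixed vertex $v$, using Kalai's in-degree interpretation (Theorem \ref{theor:Kalai}) together with Lemma \ref{lem:in-degree}. The key observation is that a shelling $F_1,\ldots,F_s$ of $\Sh$ induces, by deleting the facets that do not contain $v$ and keeping the surviving facets in the same relative order, a shelling of the star $\str{v}{\Sh}$, and hence (by passing to links, i.e.\ removing $v$ from each face) a shelling of the link $\Sh/v$; this is a standard fact about stars and links of vertices in shellable complexes (see \cite{k-swtsp-88}). Under this induced shelling, the facets of $\SA$ that contain $v$ form a beginning segment of the shelling of $\Sh/v$, exactly because the facets of $\SA$ are a beginning segment of the shelling of $\Sh$. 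Thus $\SA/v$ sits inside $\Sh/v$ the same way $\SA$ sits inside $\Sh$, and the set-theoretic difference $\SB/v := (\Sh/v)\sm(\SA/v)$ is precisely the analogue of $\SB$ for the link.

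Next I would apply Lemma \ref{lem:in-degree} twice: once to the pair $(\Sh,\SA)$ to get that $h_k(\SB)=h_k(\Sh)-h_k(\SA)$ counts the vertices of $\dgr{\Sh}\sm\dgr{\SA}$ of in-degree $k$; and once to the pair $(\Sh/v,\SA/v)$ — legitimate by the previous paragraph — to get that $h_k(\SB/v)=h_k(\Sh/v)-h_k(\SA/v)$ counts the vertices of $\dgr{\Sh/v}\sm\dgr{\SA/v}$ of in-degree $k$. The facets of $\Sh/v$ are in bijection with the facets of $\Sh$ containing $v$, and this bijection is order-preserving for the induced shelling; so $\dgr{\Sh/v}\sm\dgr{\SA/v}$ is (identified with) the induced subgraph of $\dgr{\Sh}\sm\dgr{\SA}$ on those facets of $\SB$ that contain $v$. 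Call this vertex subset $\dgr{\SB_v}$.

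The final step is a monotonicity argument on in-degrees. For a facet $F\in\SB$ containing $v$, its in-degree in $\dgr{\Sh}\sm\dgr{\SA}$ counts the ridges $F\cap F'$ with $F'\in\SB$ preceding $F$ in the shelling order; its in-degree in the induced subgraph $\dgr{\SB_v}$ counts only those neighbours $F'$ that additionally contain $v$. Since the latter condition is more restrictive, the in-degree of $F$ drops (weakly) when we pass to $\dgr{\SB_v}$. Consequently, for each $k$, the number of in-degree-$k$ vertices can only be redistributed downward: summing, $\sum_{j\ge k} (\text{in-degree-}j\text{ count in }\dgr{\SB_v}) \le \sum_{j\ge k}(\text{in-degree-}j\text{ count in }\dgr{\Sh}\sm\dgr{\SA})$, and in particular the total vertex count of $\dgr{\SB_v}$ is at most that of $\dgr{\Sh}\sm\dgr{\SA}$ — but that is not quite $h_k(\SB/v)\le h_k(\SB)$ directly, so the cleaner route is: each vertex counted by $h_k(\SB/v)$ (in-degree exactly $k$ in $\dgr{\SB_v}$) is a facet of $\SB$ whose in-degree in $\dgr{\Sh}\sm\dgr{\SA}$ is \emph{at least} $k$; and conversely one shows the map $F\mapsto F$ exhibits the vertices of in-degree exactly $k$ in $\dgr{\SB_v}$ as a subset of those of in-degree $\ge k$ in $\dgr{\SB}$, while a parallel injection at the top of the $h$-vector (using that $h$-vectors of the relevant links are supported in the right range) forces the inequality $h_k(\SB/v)\le h_k(\SB)$ coordinatewise. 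I expect the main obstacle to be exactly this last combinatorial bookkeeping: carefully arguing that the downward shift of in-degrees yields the \emph{pointwise} inequality $h_k(\SB/v)\le h_k(\SB)$ for every $k$, rather than merely an inequality of partial sums or of totals; making this precise likely requires identifying the appropriate injection from in-degree-$k$ vertices of the link's dual graph into in-degree-$\ge k$ vertices of the ambient one and invoking the boundary behaviour of $h$-vectors of simplicial balls/spheres to control the top end.
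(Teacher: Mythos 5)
Your approach matches the paper's at the structural level: both reduce the claim to the in-degree interpretation of Lemma \ref{lem:in-degree} applied to the pair $(\Sh,\SA)$ and to $(\Sh/v,\SA/v)$, and then compare in-degrees facet by facet. You make one technically cleaner choice: instead of invoking Ziegler's Corollary~8.13 to obtain a \emph{new} shelling in which $\str{v}{\Sh}$ comes first (the paper does this, but that shelling need not keep $\SA$ as a beginning segment, which is what Lemma \ref{lem:in-degree} actually requires), you keep the \emph{given} shelling with $\SA$ initial and observe that its restriction to the facets containing $v$ is a shelling of $\str{v}{\Sh}$ in which $\SA/v$ is initial. That observation is correct and handles both applications of Lemma \ref{lem:in-degree} inside a single shelling, which is precisely what the per-facet comparison needs.

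The part you flag as the ``main obstacle'' is, however, a genuine gap and not mere bookkeeping, and you do not close it. The per-facet statement that each $F\in\SB$ with $v\in F$ has in-degree in $\dgr{\Sh}\sm\dgr{\SA}$ at least the in-degree of $F\sm v$ in $\dgr{\Sh/v}\sm\dgr{\SA/v}$ does \emph{not} imply the coordinatewise bound $h_k(\SB/v)\le h_k(\SB)$: a pointwise-dominated in-degree function on a sub-vertex-set can still have a larger level set at some fixed $k$. For a concrete instance, let $\Sh$ be the boundary of the $3$-simplex on $\{1,2,3,4\}$ with shelling $123,124,134,234$, let $\SA$ be the subcomplex generated by $\{123,124\}$, and take $v=3$. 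Then $\SB$ consists of the faces $34,134,234$, so $\mathtt{h}(\SB)=(0,0,1,1)$, while $\SB/v=\{4,14,24\}$ gives $\mathtt{h}(\SB/v)=(0,1,1)$; thus $h_1(\SB/v)=1>0=h_1(\SB)$, even though every facet's in-degree drops by one on passing to the link. So the inequality you (and the paper, whose proof also stops at ``this immediately implies the statement'') deduce at this point does not follow from the in-degree comparison alone; closing the gap requires either an additional structural hypothesis on how $v$ and $\SA$ interact (one satisfied in the paper's actual application in Theorem \ref{thm:link} but absent from the statement of the lemma) or a different mechanism, such as the relative Cohen--Macaulay arguments of Adiprasito and Sanyal.
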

%%%%%%%%%%%%%%%%%%%%%%%%%%%%%%%%%%%%%%%%%%%%%%%
\begin{proof}
To prove our claim, we use the fact that for any vertex $v$ of a polytope  
$\Sh$ there exists a shelling such that the facets that contain $v$, i.e.,\,
the facets in $\str{\Sh}{v}$, appear first in this shelling 
\cite[Corollary 8.13]{z-lp-95}. Applying Lemma \ref{lem:in-degree} 
for $\Sh$ as well as for $\Sh/v$ we have that:
\begin{itemize}
\item $h_k(\SB)$ counts the number of vertices in $\dgr{\Sh}\sm\dgr{\SA}$
	of in-degree $k$, 
\item $h_k(\SB/v)$ counts the number of vertices in $\dgr{\Sh/v}\sm\dgr{\SA/v}$
	of in-degree $k$. 
\end{itemize}
Moreover, since in the above mentioned shellings the link is shelled first, 
the in-degree of a vertex in $\dgr{\Sh}\sm\dgr{\SA}$ can only but 
be greater with respect to its in-degree in $\dgr{\Sh/v}\sm\dgr{\SA/v}$. 
This immediately implies the statement of the lemma. 
\end{proof}

Using the machinery developed above, we may now show that 
$\bx{}\qQ_R$ admits a particular shelling, as stated in the following
lemma.

%%%%%%%%%%%%%%%%%%%%%%%%%%%%%%%%%%%%%%%%%%%%%%%%%%%%%%%%%%%%%%%%%%%%%%%%%%%%%%%%
%%%%%%%%%%%%%%%%%%%%%%%%%%%%%%%%lemma on shelling order %%%%%%%%%%%%%%%%%%%%%%%%
\newcommand{\qQp}{\bx\qQ'}
\begin{lemma}
  \label{lem:special_shelling}
  There exists a shelling of $\bx{}\qQ_R$
  starting from facets in  
  $\bigcup_{j\in{}R\sm\{1\}}\str{y_{R\sm\{j\}}}{\bx\qQ_R}$,
  and finishing with facets in  $\str{y_{R\sm\{1\}}}{\bx\qQ_R}$.
\end{lemma}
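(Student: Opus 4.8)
The plan is to build the desired shelling of $\bx{}\qQ_R$ in two stages, exploiting the fact that $\qQ_R$ was obtained from the Cayley polytope $\cC_R$ by iterated stellar subdivisions, and that the vertices $y_{R\sm\{j\}}$, $j\in{}R$, are precisely the auxiliary points introduced at the last iteration ($s=|R|-1$) of the algorithm in \eqref{algorithm}. First I would recall the standard fact (cf.~\cite[Corollary 8.13]{z-lp-95}) that for any vertex $w$ of a simplicial polytope there is a line shelling whose initial segment consists exactly of the facets of $\str{w}{\bx\qQ_R}$; more generally, a line shelling induced by a generic point beyond a chosen facet, or a generic directed line, can be used to prescribe which facets come first. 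The key observation is that the stars $\str{y_{R\sm\{j\}}}{\bx\qQ_R}$, for $j\in{}R$, have a very controlled overlap pattern: two such stars $\str{y_{R\sm\{j\}}}{\bx\qQ_R}$ and $\str{y_{R\sm\{j'\}}}{\bx\qQ_R}$, with $j\ne{}j'$, meet only in faces that contain \emph{both} $y_{R\sm\{j\}}$ and $y_{R\sm\{j'\}}$, hence in faces of the form $\{y_{R\sm\{j\}},y_{R\sm\{j'\}},F\}$ with $F\in\fF_X$ for some $X\subseteq(R\sm\{j\})\cap(R\sm\{j'\})=R\sm\{j,j'\}$; in particular no facet lies in two such stars, so the facet sets of the various $\str{y_{R\sm\{j\}}}{\bx\qQ_R}$ are pairwise disjoint and their boundaries interact cleanly.

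Next I would construct the shelling explicitly. Pick a generic point $p$ in the relative interior of the facet of $\cC_R$ that was subdivided last and that contained the vertex set giving rise to $y_{R\sm\{1\}}$ — more precisely, choose a generic directed line $\ell$ (or sweep hyperplane) so that, among the vertices $y_{R\sm\{j\}}$, the point $y_{R\sm\{1\}}$ is the \emph{last} one encountered, while all the others $y_{R\sm\{j\}}$, $j\in{}R\sm\{1\}$, are encountered early. Because each stellar subdivision creates a "cone-like" cluster of facets around its apex $y_S$, a line shelling with this sweep order will list, in its initial segment, all facets incident to some $y_{R\sm\{j\}}$ with $j\ne1$ (these are cone facets over $\bx{}\cC_{R\sm\{j\}}$-type pieces, which are themselves shellable by the induction built into the construction), and will list the facets of $\str{y_{R\sm\{1\}}}{\bx\qQ_R}$ at the very end. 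The two conditions of Definition \ref{def:shell1} — that $\bx{}F_1$ be shellable and that each $F_j\cap\bigcup_{i<j}F_i$ be a beginning segment of a shelling of $\bx{}F_j$ — are automatic for any line shelling; the only thing requiring care is that the \emph{specific} prefix/suffix structure (stars of $y_{R\sm\{j\}}$ first, star of $y_{R\sm\{1\}}$ last) is achievable, which is exactly what the genericity of the sweep direction guarantees together with the disjointness of the relative interiors $\mesa(\qQ_S)$, $|S|=|R|-1$, noted after \eqref{algorithm}.

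The induction on $|R|$ is the natural scaffolding: for $|R|=1$, $\qQ_{\{i\}}=P_i$ has no auxiliary vertices and the statement is vacuous (or one takes any line shelling), while for $|R|=\rho+1$ one uses that each facet in $\str{y_{R\sm\{j\}}}{\bx\qQ_R}$ is of the form $\{y_{R\sm\{j\}}, G\}$ with $G$ a facet of $\bx{}\qQ_{R\sm\{j\}}$ (up to the lower-dimensional stellar refinements recorded in Lemma \ref{lem:QR-FSjoin}), so a shelling of $\bx{}\qQ_{R\sm\{j\}}$ lifts to a shelling of that star. I expect the main obstacle to be the verification that the sweep-hyperplane (line-shelling) order can be chosen to place \emph{all} of $\bigcup_{j\in{}R\sm\{1\}}\str{y_{R\sm\{j\}}}{\bx\qQ_R}$ strictly before \emph{any} facet outside these stars and strictly before $\str{y_{R\sm\{1\}}}{\bx\qQ_R}$, simultaneously for all $j\ne1$: this requires arguing that one can separate the $\rho$ apices $y_{R\sm\{j\}}$ from each other and from the "old" part of the polytope by a moving hyperplane, which is where the geometric placement of the $y_S$'s in the relative interiors of the (pairwise disjoint) faces $\qQ_S$ is essential. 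Once that separation is established, the shelling conditions follow from the general theory of line shellings of polytopes (cf.~\cite{z-lp-95}), and the proof is complete.
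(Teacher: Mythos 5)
There is a genuine gap, and it lies precisely where you flag the ``main obstacle'' but then wave it away. Your plan is to find a single line shelling of the fully subdivided polytope $\bx\qQ_R$ that places \emph{every} facet of $\bigcup_{j\ne1}\str{y_{R\sm\{j\}}}{\bx\qQ_R}$ before every other facet, and every facet of $\str{y_{R\sm\{1\}}}{\bx\qQ_R}$ at the very end. But a line shelling of a simplicial polytope orders facets by the order in which their supporting hyperplanes are crossed, and there is no reason a generic line can simultaneously sweep through the $\rho$ separate star-neighborhoods of the points $y_{R\sm\{j\}}$, $j\ne1$ (which sit in the relative interiors of $\rho$ pairwise disjoint facets of $\cC_R$ scattered around the boundary) before reaching any facet of $\fF_R$. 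The classical result \cite[Cor.~8.13]{z-lp-95} you invoke lets you put \emph{one} vertex star first and \emph{one} last, but it does not give you $\rho-1$ prescribed stars at the start simultaneously; your appeal to ``genericity of the sweep direction'' does not supply this, and this is exactly the nontrivial content of the lemma.

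The paper's proof avoids this difficulty by introducing an intermediate complex $\zZ$: the polytopal complex obtained after all stellar subdivisions \emph{except} the last round (so the top-dimensional faces $\qQ_{R\sm\{i\}}$, $i\in R$, are still unsubdivided facets of $\zZ$ alongside the facets of $\fF_R$). A line beyond the auxiliary vertex $y_{\{1\}}$ and through the interior of $\qQ_{R\sm\{1\}}$ yields a line shelling $\Sl{\zZ}$ of this \emph{coarser} complex whose first facets are exactly the $\qQ_{R\sm\{i\}}$, $i\ne1$, and whose last facet is $\qQ_{R\sm\{1\}}$ — here the ordering is easy to arrange because the objects to be ordered are single facets, not families of facets. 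The rest of the proof is a refinement step: each $\qQ_{R\sm\{i\}}$ is replaced in $\Sl{\zZ}$ by the facets of $\str{y_{R\sm\{i\}}}{\bx\qQ_{R\sm\{i\}}}$, ordered by an inherited shelling of $\bx\qQ_{R\sm\{i\}}$, and the paper verifies by induction that the intersection condition in Definition~\ref{def:shell1} is preserved under this replacement (using that $\qQ_{R\sm\{i\}}\cap\bigcup_{j\text{ earlier}}\qQ_{R\sm\{j\}}=\bigcup\qQ_{R\sm\{i,j\}}$). That ``coarsen, shell, then refine'' structure is the idea missing from your proposal; your observations about the disjointness of the facet sets of the stars and the inductive lifting of shellings are correct but do not by themselves produce the global ordering.
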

%%%%%%%%%%%%%%%%%%%%%%%%%%%%%%%%%%%%%%%%%%%%%%%%%%%%%%%%%%%%%%%%%%%%%%%%%%%%%%%%
%%%%%%%%%%%%%%%%%%%%%%%%%%%%%%%%%%%%%%%%%%proof %%%%%%%%%%%%%%%%%%%%%%%%%%%%%%%%
\begin{proof}
  Let us start with some definitions: 
  we denote by $\zZ$ the $(d+|R|-1)$-complex we get by performing the 
  recursion in \eqref{algorithm} until the last but one step, i.e., after 
  having added all the auxiliary vertices $y_S$ with $|S|\leq{}|R|-2.$ 
  Clearly, the facets of $\zZ$ are the $(d+|R|-2)$-polytopes 
  $\qQ_{R\sm\{i\}}$, $i\in{}R$, as well as all facets in $\fF_R.$ Since $\zZ$ 
  is polytopal, each line in general position induces a shelling order of 
  its facets (cf. \cite[Section 8.2]{z-lp-95}). We will chose a line in such 
  a way, so that the induced line shelling of $\zZ$ leads us (after adding 
  all vertices $y_{R\sm\{i\}}$) to the sought-for	shelling of 
  $\bx{}\qQ_R$.

  Notice that, by the definition of the Cayley 
  embedding, there exists a hyperplane in $\mathbb R^{d+|R|-1}$ containing 
  $P_1$ and being parallel to $\cC_{R\sm\{1\}}$ (and thus to 
  $\qQ_{R\sm\{1\}}$). We can therefore choose a line $\ell$ beyond 
  $y_1$ in $\zZ$ and intersecting $\qQ_{R\sm\{1\}}$ in its interior. This 
  line $\ell$ yields a shelling $\Sl{\zZ}$ of $\zZ$ starting from facets in 
  $\str{y_1}{\zZ}$ and finishing with $\qQ_{R\sm\{1\}}$. Since the facets in 
  $\str{y_1}{\zZ}$ are nothing but the polytopes $\qQ_{R\sm\{i\}}$, 
  \;$i\in{}R\sm\{1\}$, the shelling $\Sl{\zZ}$ starts with all 	
  $\qQ_{R\sm\{i\}}$, \;$i\in{}R\sm\{1\}$, (continues with the facets in 
  $\fF_R$) and ends with $\qQ_{R\sm\{1\}}$. Our next goal is to replace each 
  facet $\qQ_{R\sm\{i\}}$ in $\Sl{\zZ}$ by all facets in  
  $\str{y_{R\sm\{i\}}}{\bx{}\qQ_{R\sm\{i\}}}$, ordered so that the 
  conditions in Definition \ref{def:shell1} are satisfied.
  
  We do this by induction. If $\qQ_{R\sm\{2\}}$ is the first facet in the 
  shelling order $\Sh(\zZ)$ then we can replace it by the facets  in 
  $\str{y_{R\sm\{2\}}}{\bx\qQ_{R\sm\{2\}}}$, in any order ``inherited'' from 
  a shelling of $\bx\qQ_{R\sm\{2\}}$. Without loss of generality,  we 
  assume that  the facets $\qQ_{R\sm\{j\}}$ with 
  $2\leq{}j<i$ are those preceding $\qQ_{R\sm\{i\}}$ in the shelling order 
  $\Sh(\zZ)$. By our induction hypothesis, we have replaced all 
  $\qQ_{R\sm\{j\}}$ by $\str{y_{R\sm\{j\}}}{\bx\qQ_{R\sm\{j\}}}$ in a way 
  that the conditions of our claim are satisfied; we
  want to prove the same for $j=i$.

  Indeed,  notice that the intersection of  $\qQ_{R\sm\{i\}}$ with the 
  union  of the previous facets, is the union of all $\qQ_{R\sm\{i,j\}}$
  with $2\leq{}j<i$, whether we consider ``previous'' in the shelling 
  $\Sh(\zZ)$ or in the shelling until the current inductive step 
  (i.e., when each $\qQ_{R\sm\{j\}}$ with $2\leq{}j<i$ is stellarly 
  subdivided).
  As a result, the second condition of Definition \ref{def:shell1},
  namely that that there exists a shelling order of the facets of
  $\bx\qQ_{R\sm\{i\}}$ starting with all facets of
  $\bigcup_{2\leq{}j<i}\bx\qQ_{R\sm\{i,j\}}$, holds.
  It suffices to choose a shelling order  of  $\bx{}\qQ_{R\sm\{j\}}$ 
  that respects the common shelling order with 
  $\bigcup_{2\leq{}j<i}\bx\qQ_{R\sm\{i,j\}}$.
  Using this shelling order, we may replace the facet
  $\qQ_{R\sm\{i\}}$ by those in
  $\str{y_{R\sm\{i\}}}{\bx\qQ_{R\sm\{i\}}}$ (the shelling orders of
  each $\str{y_{S}}{\bx\qQ_{S}}$ are inherited from those for 		
  $\bx\qQ_{S}$) and arrive at a shelling order of $\qQ_R$ with the 
  desired properties. The last facet $\qQ_{R\sm\{1\}}$ can be replaced 
  by $\str{y_{R\sm\{1\}}}{\bx\qQ_{R\sm\{1\}}}$ without any further 
  concern, since the shelling conditions are already fulfilled  from the 
  shelling $\Sh(\zZ)$.
\end{proof}

%%%%%%%%%%%%%%%%%%%%%%%%%%%%%%%%%%%%%%%%%%%%%%%%%%%%%%%%%%%%%%%%%%%%%%%%%%%%%%%%

%
Exploiting Lemmas \ref{lem:in-degree}, \ref{lem:ap_link_nonlink} and
\ref{lem:special_shelling} we arrive at the following theorem, where
we bound the right-hand side of \eqref{recur-relation-FR-wrt-K} by an
expression that does not involve the links $\kK_S/v$.

%%%%%%%%%%%%%%%%%%%%%%%%%%%%%%%%%%%%%%%%%%%%%%%%%%%%%%%%%%%%%%%%%%%%%%%%%%%%%%%
%%%%%%%%%%%%%%%%%%%%%%link non-link lemma%%%%%%%%%%%%%%%%%%%%%%%%%%%%%%%%%%%%%%
\begin{theorem}
\label{thm:link}
For all $v\in{}V_R$ and all $k\ge{}0$ we have:
  \begin{equation}
    \sum_{\emptyset\subset{}S\subseteq{}R}(-1)^{|R|-|S|}
    \sum_{v\in{}V_S}g_k^{(|R|-|S|)}(\kK_S/v)\le
    \sum_{\emptyset\subset{}S\subseteq{}R}(-1)^{|R|-|S|}
    \sum_{v\in{}V_S}g_k^{(|R|-|S|)}(\kK_S),
    \label{equ:link-non link for F}
  \end{equation}
  where $V_S=\cup_{i\in{}S}V_i$.
\end{theorem}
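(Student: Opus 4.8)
The plan is to reduce the alternating sum inequality over all $\emptyset\subset S\subseteq R$ to term-wise inequalities $g_k^{(|R|-|S|)}(\kK_S/v)\le g_k^{(|R|-|S|)}(\kK_S)$, for each fixed $S$ and each $v\in V_S$, and then to establish these via the shelling of $\bx\qQ_R$ produced in Lemma~\ref{lem:special_shelling}. The subtlety is that the $g$-vectors of the $\kK_S$ do not individually admit a clean combinatorial (in-degree) interpretation, so I would first use the formulas of Section~\ref{sec:construction} to re-express $\sum_{\emptyset\subset S\subseteq R}(-1)^{|R|-|S|}\sum_{v}g_k^{(|R|-|S|)}(\kK_S/v)$ in terms of $h$-vectors of the actual simplicial objects $\bx\qQ_R$ and the stars/links $\str{y_S}{\bx\qQ_R}$; the point of Lemma~\ref{lem:hQ}, Lemma~\ref{lem:link Q_R} and Lemma~\ref{lem:hQR/S} is precisely that these combinations telescope into quantities of the form $h_k(\bx\qQ_R/v)$ and $h_k(\str{y_S}{\bx\qQ_R}/v)$, which \emph{do} have in-degree interpretations in shellable complexes.

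First I would fix the shelling of $\bx\qQ_R$ supplied by Lemma~\ref{lem:special_shelling}: it begins with the facets in $\bigcup_{j\in R\sm\{1\}}\str{y_{R\sm\{j\}}}{\bx\qQ_R}$ and ends with the facets in $\str{y_{R\sm\{1\}}}{\bx\qQ_R}$. Let $\SA$ be the beginning segment consisting of the facets containing some $y_{R\sm\{j\}}$ with $j\ne 1$, and let $\SB=\bx\qQ_R\sm\SA$, so $\SB$ is exactly the "non-link" part sitting over $\str{y_{R\sm\{1\}}}{\bx\qQ_R}$ together with what is left. By Lemma~\ref{lem:in-degree}, $h_k(\SB)$ counts vertices of $\dgr{\bx\qQ_R}\sm\dgr{\SA}$ of in-degree $k$; by Lemma~\ref{lem:ap_link_nonlink} applied to this shelling, $h_k(\SB/v)\le h_k(\SB)$ for every vertex $v$. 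The bulk of the argument is bookkeeping: one must identify $h_k(\SB)$ and $h_k(\SB/v)$, via the $f$- to $h$-vector translations and the decomposition of $\bx\qQ_R$ from Lemmas~\ref{lem:QR-FSjoin} and~\ref{lem:QR-KSjoin}, with the alternating $g$-vector combinations on the two sides of \eqref{equ:link-non link for F}. Concretely, combining \eqref{hkFK} (writing $h_k(\fF_S)$ as an alternating sum of $g_k^{(\cdot)}(\kK_\cdot)$), the recurrence of Corollary~\ref{cor:rec}, and the link identities \eqref{hQF/v}, \eqref{hQK/v}, \eqref{hQR/S}, one sees that the left-hand side of \eqref{equ:link-non link for F} equals (up to the same expression appearing on the right) a sum of terms $h_k(\bx\qQ_R/v)-h_k(\SA/v)$, i.e.\ $h_k(\SB/v)$, while the right-hand side equals the corresponding sum with $v$ removed, i.e.\ $h_k(\SB)$. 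Summing $h_k(\SB/v)\le h_k(\SB)$ over $v\in V_R$ then yields the claim.

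I expect the main obstacle to be the identification step: matching the purely combinatorial quantity $h_k(\SB/v)$ (respectively $h_k(\SB)$) produced by the shelling with the algebraically defined alternating sum $\sum_{\emptyset\subset S\subseteq R}(-1)^{|R|-|S|}\sum_{v\in V_S}g_k^{(|R|-|S|)}(\kK_S/v)$. This requires carefully peeling off the contributions of the auxiliary vertices $y_S$: the faces of $\bx\qQ_R$ containing $y_{R\sm\{1\}}$ contribute exactly $\str{y_{R\sm\{1\}}}{\bx\qQ_R}$, whose $h$-vector relates (via a stellar-subdivision/Eulerian-number identity, as in Lemma~\ref{lem:hQR/S}) to the $h$-vectors of the $\fF_S$, and hence to the $g^{(\cdot)}(\kK_S)$ by Dehn–Sommerville (Theorem~\ref{theor:DS}) and \eqref{hkFK}. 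One must check that the Eulerian-number coefficients that appear in $h_k(\bx\qQ_R)$, in $h_k(\str{y_S}{\bx\qQ_R})$, and in the link versions are consistent so that everything telescopes to precisely the two sides of \eqref{equ:link-non link for F}; this is where the symmetry $\eul{m}{k}=\eul{m}{m-k-1}$ and the recurrence for Eulerian numbers will be used repeatedly. Once that dictionary is in place, the inequality is immediate from Lemma~\ref{lem:ap_link_nonlink} summed over $v$, with the common "non-$v$" term cancelling from both sides.
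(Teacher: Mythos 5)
Your overall strategy is the right one: fix $i\in R$ and $v\in V_i$, re-express both sides of \eqref{equ:link-non link for F} as $h$-vectors of concrete shellable objects, invoke Lemma~\ref{lem:special_shelling} to set up a suitable shelling, and then conclude via Lemmas~\ref{lem:in-degree} and~\ref{lem:ap_link_nonlink}. But the crucial choice of ambient complex, and consequently of $\SB$, is off, and the gap is not a bookkeeping nuisance but a genuine one.

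You take $\Sh:=\bx\qQ_R$, $\SA:=\bigcup_{j\in R\sm\{1\}}\str{y_{R\sm\{j\}}}{\bx\qQ_R}$, and $\SB:=\bx\qQ_R\sm\SA$. This $\SB$ contains all the faces incident to $y_{R\sm\{1\}}$ (and many subfaces thereof), not just the mixed faces. The paper's proof instead \emph{deletes} the entire last block of the shelling before invoking the in-degree lemmas: it introduces the subcomplex $\bx\qQ'_R$ whose facets are exactly the facets of $\bx\qQ_R$ not incident to $y_{R\sm\{1\}}$, so that $\bx\qQ'_R\sm\SA=\fF_R\cup\fF_{R\sm\{1\}}=:\gG_1$. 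This $\gG_1$ is what matches the algebraic right-hand side: from \eqref{equ:fkFR-ie} and the disjointness of $\fF_R$ and $\fF_{R\sm\{1\}}$ one gets directly
\begin{equation*}
h_k(\gG_1)=\sum_{\{1\}\subseteq S\subseteq R}(-1)^{|R|-|S|}g^{(|R|-|S|)}_k(\kK_S),
\qquad
h_k(\gG_1/v)=\sum_{\{1\}\subseteq S\subseteq R}(-1)^{|R|-|S|}g^{(|R|-|S|)}_k(\kK_S/v),
\end{equation*}
with no Eulerian-number telescoping whatsoever. Your $\SB$ equals $\gG_1$ plus the extra faces coming from $\str{y_{R\sm\{1\}}}{\bx\qQ_R}$ that are not already in $\SA$. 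Since for $v\in V_1$ no face of $\str{y_{R\sm\{1\}}}{\bx\qQ_R}$ contains $v$, one has $\SB/v=\gG_1/v$, but $h_k(\SB)=h_k(\gG_1)+h_k(\text{extra})$ with $h_k(\text{extra})\ge 0$ (it is the in-degree count of the discarded tail of the shelling). So $h_k(\SB/v)\le h_k(\SB)$ yields only $h_k(\gG_1/v)\le h_k(\gG_1)+h_k(\text{extra})$, which is strictly weaker than what is needed. The fix is exactly the paper's: truncate the shelling of $\bx\qQ_R$ (drop the facets in $\str{y_{R\sm\{1\}}}{\bx\qQ_R}$), apply Lemma~\ref{lem:ap_link_nonlink} to $\Sh:=\bx\qQ'_R$ with the same $\SA$, and then $\SB=\gG_1$ \emph{is} the correct object. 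Also, your opening sentence proposes reducing to term-wise inequalities $g_k^{(|R|-|S|)}(\kK_S/v)\le g_k^{(|R|-|S|)}(\kK_S)$ for each $S$; because the signs alternate, such term-wise bounds would not imply the claim (and indeed you abandon that plan in the next sentence), so that framing should be dropped.
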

%%%%%%%%%%%%%%%%%%%%%%%%%%%%%%%%%%%%%%%%%%%%%%%%%%%%%%%%%%%%%%%%%%%%%%%%%%%%%%%
%%%%%%%%%%%%%%%%%%%%%%proof of link non-link%%%%%%%%%%%%%%%%%%%%%%%%%%%%%%%%%%%
\begin{proof} 
Let us first observe that, by rearranging terms, we can rewrite relation
\eqref{equ:link-non link for F}  as:
\begin{align}
  \sum_{i\in{}R}\sum_{v\in{}V_i}\sum_{\{i\}\subseteq{}S\subseteq{}R}
  (-1)^{|R|-|S|}\,g_k^{(|R|-|S|)}(\kK_S/v)
  \le\sum_{i\in{}R}\sum_{v\in{}V_i}
  \sum_{\{i\}\subseteq{}S\subseteq{}R}(-1)^{|R|-|S|}\,g_k^{(|R|-|S|)}(\kK_S).
  \label{non:1}
\end{align}
Clearly, to show that relation \eqref{non:1} holds, it
suffices to prove that:
\begin{align}
  \sum_{\{i\}\subseteq{}S\subseteq{}R}(-1)^{|R|-|S|}\,g_k^{(|R|-|S|)}(\kK_S/v)
  \le\sum_{\{i\}\subseteq{}S\subseteq{}R}(-1)^{|R|-|S|}\,
  g_k^{(|R|-|S|)}(\kK_S),
  \label{non:2}
\end{align}
for any arbitrary fixed $i\in{}R.$

Without loss of generality we may assume that $i=1.$ 
Define $\gG_1=\fF_R\cup\fF_{R\sm\{1\}}$. Since 
$\fF_R$ and $\fF_{R\sm\{1\}}$ are disjoint, we can write:
\begin{align} 
  f_k(\gG_1)&=f_k(\fF_R)+f_k(\fF_{R\sm\{1\}})\notag\\
  &=\sum_{S\subseteq{}R}(-1)^{|R|-|S|}f_k(\kK_S)+
  \sum_{S\subseteq{}R\sm\{1\}}(-1)^{|R|-1-|S|}f_k(\kK_S)\notag\\
  &=\sum_{S\subseteq{}R}(-1)^{|R|-|S|}f_k(\kK_S)
  -\sum_{S\subseteq{}R\sm\{1\}}(-1)^{|R|-|S|}f_k(\kK_S)\notag\\
  &=\sum_{\{1\}\subseteq{}S\subseteq{}R}(-1)^{|R|-|S|}f_k(\kK_S).
  \label{non:3}
\end{align} 
Similarly, for all $v\in{}V_1$: 
\begin{equation}
  f_k(\gG_1/v)=\sum_{\{1\}\subseteq{}S\subseteq{}R}(-1)^{|R|-|S|}f_k(\kK_S/v).
  \label{non:4}
\end{equation}

%%%%%%%%%%%%%%%%%%%%%%%%%%%% convert to h-vector relations %%%%%%%%%%%%%%%%%%%%%
Converting the above relations into $h$-vector relations (using generating 
functions and comparing coefficients) we deduce that:
\begin{equation} 
  h_k(\gG_1)=
  \sum_{\{1\}\subseteq{}S\subseteq{}R}(-1)^{|R|-|S|}g^{(|R|-|S|)}_k(\kK_S),
  \label{non:5}
\end{equation}
and	
\begin{equation}	
  h_k(\gG_1/v)=
  \sum_{\{1\}\subseteq{}S\subseteq{}R }(-1)^{|R|-|S|}g^{(|R|-|S|)}_k(\kK_S/v).
  \label{non:6}
\end{equation}

Thus, in view of \eqref{non:5} and \eqref{non:6}, proving \eqref{non:2} 
reduces to showing that $h_k(\gG_1/v)\leq{}h_k(\gG_1)$.
Define $\bx{}\qQ'_R$ to be the polytopal $(d+|R|-1)$-complex whose
facets are the facets of $\bx{}\qQ_R$ not incident to $y_{R\sm\{1\}}.$
To understand the face structure of $\bx{}\qQ'_R$, we use Lemma 
\ref{lem:QR-FSjoin} to rewrite $\bx{}\qQ_R$ as the union:
\[\fF_R\bigcup\limits_{i\in{}R} 
\str{y_{R\sm\{i\}}}{\bx{}\qQ_R}\]
of, not necessarily disjoint, faces. After removing all faces of 
$\bx{}\qQ_R$ incident to $y_{R\sm\{1\}}$ we are left with the 
following set of faces:
\begin{align*} 
  \bx{}\qQ'_R=
  \overbrace{
    \bigcup_{i\in{}R\sm\{1\}}
    \str{y_{R\sm\{i\}}}{\bx{}\qQ_R}
  }^{\SA}
  \cup\overbrace{\fF_R\cup{}\fF_{R\sm\{1\}}}^{\SB}.
\end{align*}
Although the face sets in the above union are not disjoint, the face sets 
$\SA$ and $\SB$ are. This further implies that the facets of $\bx{}\qQ'_R$ 
are the facets in $\SA$ and those in  $\SB$. We next claim that 
$\bx{}\qQ'_R$ is shellable and that there exists a shelling of 
$\bx{}\qQ'_R$ in which all facets in $\SA$ come first.

Indeed, according to Lemma \ref{lem:special_shelling}, there
exists a shelling 
of $\bx{}\qQ_R$ starting from facets in $\bigcup_{i\in{}R\sm\{1\}} 
\str{y_{R\sm\{i\}}}{\bx{}\qQ_R}$, continuing with 
those in $\fF_R$ and ending with facets in $\str{y_{R\sm\{1\}}}{\bx\qQ_R}$. 
Discarding the facets in $\str{y_{R\sm\{1\}}}{\bx\qQ_R}$ we obtain a 	
shelling of $\bx{}\qQ'_R$  starting from facets in 
$\bigcup_{i\in{}R\sm\{1\}}
\str{y_{R\sm\!\{i\}}}{\bx{}\qQ_R}$ and ending with facets 
in $\fF_R$. We then apply Lemma \ref{lem:ap_link_nonlink} with 
$\Sh:=\bx{}\qQ'_R$
and $\SA:=\bigcup_{i\in{}R\sm\{1\}} 
\str{y_{R\sm\!\{i\}}}{\bx{}\qQ_R}$
and we deduce that $h_k(\SB/v)\leq{}h_k(\SB),$ or equivalently that  
$h_k(\gG_1/v)\leq{}h_k(\gG_1)$. This completes our proof. 
\end{proof} 
%%%%%%%%%%%%%%%%%%%%%%%%%%%%%%%%%%%%%%%%%%%%%%%%%%%%%%%%%%%%%%%%%%%%%%%%%%%%%%%%%%%%%%%%%%

\subsection{The last step towards the recurrence relation}
\label{sec:rec-last-step}

%%%%%%%%%%%%%%%%%%%%%%%%%%%%%%%lemma simplify 
%%%%%%%%%%%%%%%%%%%%%%%%%%%%%%%%%%%%%%%%%%%%%%%%%%%%%%%%%%%%%%%caclulations%%%%%%%%%%%%%%%%%%%%%%%%%%%%%%%%
The last step for proving  Theorem \ref{thm:hkFRrec}, is the following 
lemma that  involves calculations which simplify the right-hand side of 
\eqref{equ:link-non link for F}.
\begin{lemma}
  \label{lem:simplify}
  Let $\emptyset\subset{}R\subseteq[r]$, and $V_S=\cup_{i\in{}S}V_i$,
  for all $\emptyset{}\subset{}S\subseteq{}R$. Then, for all $k\ge{}0$
  we have:
  \begin{equation}
    \sum_{\emptyset\subset{}S\subseteq{}R}(-1)^{|R|-|S|}\sum_{v\in{}V_S}g_k^{(|R|-|S|)}(\kK_S)
    =n_{R}\,h_k(\fF_R)+\sum_{i\in{}R}n_i\,g_k(\fF_{R\sm\{i\}}),
    \label{equ:simplify}
  \end{equation}
  where $n_R=\sum_{i\in{}R}n_i$ and $n_\emptyset=0$.
\end{lemma}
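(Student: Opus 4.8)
The left-hand side of \eqref{equ:simplify} is a double sum over subsets $S$ with $\{i\}\subseteq S\subseteq R$ and over vertices $v\in V_S$; the natural first move is to interchange the order of summation so that the outer sum runs over $i\in R$ and $v\in V_i$, exactly as in the proof of Theorem \ref{thm:link}. For each fixed $i$ and $v\in V_i$, I would recognize the inner sum $\sum_{\{i\}\subseteq S\subseteq R}(-1)^{|R|-|S|}g_k^{(|R|-|S|)}(\kK_S)$ as $h_k(\gG_i)$, where $\gG_i=\fF_R\cup\fF_{R\sm\{i\}}$, via the computation already carried out in \eqref{non:3}--\eqref{non:5} (which was done there for $i=1$ but holds verbatim for any $i\in R$). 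Thus the left-hand side collapses to $\sum_{i\in R}\sum_{v\in V_i}h_k(\gG_i)=\sum_{i\in R}n_i\,h_k(\gG_i)$, since $h_k(\gG_i)$ does not depend on $v$ and $|V_i|=f_0(P_i)=n_i$.

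It remains to show $h_k(\gG_i)=h_k(\fF_R)+g_k(\fF_{R\sm\{i\}})$, for then $\sum_{i\in R}n_i\,h_k(\gG_i)=\bigl(\sum_{i\in R}n_i\bigr)h_k(\fF_R)+\sum_{i\in R}n_i\,g_k(\fF_{R\sm\{i\}})=n_R\,h_k(\fF_R)+\sum_{i\in R}n_i\,g_k(\fF_{R\sm\{i\}})$, which is \eqref{equ:simplify}. Since $\gG_i=\fF_R\cup\fF_{R\sm\{i\}}$ is a disjoint union of face sets with $\dim(\fF_R)=d+|R|-2$ and $\dim(\fF_{R\sm\{i\}})=d+|R|-3$, I have $f_j(\gG_i)=f_j(\fF_R)+f_j(\fF_{R\sm\{i\}})$. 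Passing to generating functions and taking $\dim(\gG_i)=d+|R|-2$ as the common degree, I get
\begin{equation*}
  \ft{\gG_i}{t}=\ft{\fF_R}{t}+t\cdot\ft{\fF_{R\sm\{i\}}}{t},
\end{equation*}
because the $f$-polynomial of $\fF_{R\sm\{i\}}$ is formed with top degree $d+|R|-3$, so re-homogenizing it to degree $d+|R|-2$ multiplies it by $t$. Substituting $t\to t-1$ and using \eqref{hf} yields $\htt{\gG_i}{t}=\htt{\fF_R}{t}+(t-1)\htt{\fF_{R\sm\{i\}}}{t}=\htt{\fF_R}{t}+\gmt{1}{\fF_{R\sm\{i\}}}{t}$, and comparing coefficients of $t^{d+|R|-1-k}$ gives precisely $h_k(\gG_i)=h_k(\fF_R)+g_k^{(1)}(\fF_{R\sm\{i\}})=h_k(\fF_R)+g_k(\fF_{R\sm\{i\}})$.

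The one point that needs a little care — and the closest thing to an obstacle — is the bookkeeping of which index conventions make the empty-face terms come out right: in \eqref{non:3} the identity $f_k(\gG_i)=\sum_{\{i\}\subseteq S\subseteq R}(-1)^{|R|-|S|}f_k(\kK_S)$ relies on \eqref{equ:fkFR-ie} together with the conventions $f_{-1}(\fF_S)=(-1)^{|S|-1}$ and the telescoping of the $\kK$-sums over $R$ versus $R\sm\{i\}$, and I should check that the boundary case $k=-1$ (i.e., the constant terms of the $h$-polynomials) is consistent so that the passage through generating functions in \eqref{non:3}--\eqref{non:5} is legitimate for all relevant $k$. Once that is in place, the rest is the routine generating-function manipulation sketched above, and the edge case $R=\{i\}$ (where $\fF_{R\sm\{i\}}=\emptyset$, $g_k(\emptyset)=0$, and the claim reduces to $n_i h_k(\fF_{\{i\}})=n_i h_k(\bx P_i)$) checks out immediately against the stated conventions $n_\emptyset=0$ and $g_k(\fF_\emptyset)=0$.
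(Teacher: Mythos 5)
Your proof is correct, and it takes a genuinely different route from the paper's. The paper first applies the identity $g^{(m)}_k(\kK_S)=\sum_{\emptyset\subset{}X\subseteq{}S}g_{k}^{(|S|-|X|+m)}(\fF_X)$ (a direct consequence of \eqref{hkKF}) and the observation $\sum_{v\in{}V_S}1=n_S$, then interchanges the resulting double sum over $S$ and $X$ to isolate the coefficient $\sum_{X\subseteq{}S\subseteq{}R}(-1)^{|R|-|S|}n_S$ of each $g_k^{(|R|-|X|)}(\fF_X)$, and finally evaluates that coefficient by a three-case binomial computation ($X=R$, $|X|=|R|-1$, $|X|<|R|-1$). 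You instead regroup by $i\in{}R$ and $v\in{}V_i$ exactly as at the start of the proof of Theorem \ref{thm:link}, identify the inner alternating sum as $h_k(\gG_i)$ via \eqref{non:3}--\eqref{non:5}, and then prove the clean structural fact $h_k(\gG_i)=h_k(\fF_R)+g_k(\fF_{R\sm\{i\}})$ directly from the disjoint-union decomposition $\gG_i=\fF_R\cup\fF_{R\sm\{i\}}$ and a one-line generating-function shift. Your route is shorter and more geometric: it replaces the binomial-coefficient bookkeeping by a transparent dimension-shift argument, at the cost of leaning on the generating-function conventions for $\gG_i$ being consistent with those used to derive \eqref{non:5} (they are — both treat $\gG_i$ as having top degree $d+|R|-1$). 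The one point you flagged, the consistency of the $f_{-1}$ conventions, is indeed the only place where care is needed, and it goes through: $f_{j-1}(\fF_{R\sm\{i\}})$ vanishes for $j\ge{}d+|R|-1$, which is exactly what makes the identity $\ft{\gG_i}{t}=\ft{\fF_R}{t}+t\,\ft{\fF_{R\sm\{i\}}}{t}$ hold, and passing to $h$-polynomials then gives $h_k(\gG_i)=h_k(\fF_R)+g_k(\fF_{R\sm\{i\}})$ as you claim.
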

%%%%%%%%%%%%%%%%%%%%%%%%%%%%%%%%%%%%%%%%%%%%%%%%%%%%%%%%%%%%%%%%%%%%%%%%%%%%%%%
%%%%%%%%%%%%%%%%%%%proof: lemma simplify caclulations%%%%%%%%%%%%%%%%%%%%%%%%%%
\begin{proof}
From relation \eqref{hkKF} and the definition of the $m$-order
$g$-vector (cf.~\eqref{equ:gm-def}), we can easily show that, for any
$\emptyset{}\subset{}R\subseteq[r]$,
\begin{equation*}
  g^{(m)}_k(\kK_R)=\sum_{\emptyset\subset{}S\subseteq{}R}g_{k}^{(|R|-|S|+m)}(\fF_S).
  %  \label{m:hkKF}
\end{equation*}
Hence, for all $0\le{}k\le{}d+|R|-1$, we get:
\begin{equation*}
  g_k^{(|R|-|S|)}(\kK_S)
  =\sum_{\emptyset\subset{}X\subseteq{}S}g_k^{(|S|-|X|+(|R|-|S|))}(\fF_X)
  =\sum_{\emptyset\subset{}X\subseteq{}S}g_k^{(|R|-|X|)}(\fF_X).
\end{equation*}
Thus, the left-hand side of \eqref{equ:simplify} becomes:
\begin{align} 
  &\sum_{\emptyset\subset{}S\subseteq{}R}(-1)^{|R|-|S|}\sum_{v\in{}V_S}
  g_k^{(|R|-|S|)}(\kK_S)
  =\sum_{\emptyset\subset{}S\subseteq{}R}(-1)^{|R|-|S|}n_S\,
  g_k^{(|R|-|S|)}(\kK_S)
  \notag\\
  &=\sum_{\emptyset\subset{}S\subseteq{}R}(-1)^{|R|-|S|}n_S
  \sum_{\emptyset\subset{}X\subseteq{}S}
  g_k^{(|R|-|X|)}(\fF_X)
  =\sum_{\emptyset\subset{}X\subseteq{}R}
  \bigl(\sum_{X\subseteq{}S\subseteq{}R}
  (-1)^{|R|-|S|}n_S\bigr)g_k^{(|R|-|X|)}(\fF_X).
  \label{simpl:1}
%\label{eq1}
\end{align} 
We next evaluate the coefficient of $g_k^{(|R|-|X|)}(\fF_X)$ in 
\eqref{simpl:1}, i.e., the quantity
\begin{align}
  \sum_{X\subseteq{}S\subseteq{}R}(-1)^{|R|-|S|}n_S. 
  \label{simpl:2}	
\end{align} 
We separate cases:
\begin{enumerate}[\fs (a)] 
\item If $X=R$ the sum in \eqref{simpl:2}
  simplifies to $n_{R}$.
\item If $|X|=|R|-1$, then $X=R\sm\{i\}$ for some $i\in{}R$
and the sum in \eqref{simpl:2} simplifies to
\begin{equation*}
  \sum_{X\subseteq{}S\subseteq{}R}(-1)^{|R|-|S|}n_S
  =(-1)^{|R|-(|R|-1)}n_{R\sm\{i\}}+(-1)^{|R|-|R|}n_R
  =n_R-n_{R\sm\{i\}}=n_i.
\end{equation*}
\item If $|X|<|R|-1$  then for every $i\in{}R\sm{}X$ and every
  $0\leq{}j\leq|R|-|X|-1$ there exist $\binom{|R|-|X|-1}{j}$ 
  sets of size $|X|+j+1$ containing $i.$ We therefore have: 
  \begin{align*} 
    \sum_{X\subseteq{}S\subseteq{}R}(-1)^{|R|-|S|}n_S=
    \sum_{i\in{}R\sm{}X}\sum_{j=0}^{|R|-|X|-1}
    (-1)^{|R|-|X|-j-1}\tbinom{|R|-|X|-1}{j}n_i
    =\sum_{i\in{}R\sm{}X}0^{|R|-|X|-1}n_i=0.
  \end{align*} 
\end{enumerate}
From {\fs(a)}-{\fs(c)} we deduce  that the only non-zero coefficients of 
$g_k^{(|R|-|X|)}(\fF_X)$ in \eqref{simpl:1} are those for which
$|X|=|R|$ or $|R|-1.$ Thus, the sum in \eqref{simpl:1} simplifies to
\begin{equation*} 
	n_{R}h_k(\fF_R)+\sum_{i\in{}R}n_i\,g_k(\fF_{R\sm\{i\}}),
\end{equation*}
which is precisely the right-hand side of \eqref{equ:simplify}.
\end{proof}

%%%%%%%%%%%%%%%%%%%%%%%%%%%%%%%%%%%%%%%%%%%%%%%%%%%%%%%%%%%%%%%%%%%%%%%%%%%%%%%%
%%%%%%%%%%%%%%%%%%%%%%%%%%%%%%%%%%%%%%%%%%%%%%%%%%%%%%%%%%%%%%%%%%%%%%%%%%%%%%%%

\subsection{The proof of Theorem \ref{thm:hkFRrec}}
\label{sec:proof-hkFRrec}
%%%%%%%%%%%%%%%%%%%%%%%%%%%%%%%%%%%%%%%%%%%%%%%%%%%%%%%%%%%%%%%%%%%%%%%%%%%%%%%%
%%%%%%%%%%%%%%%%%%%%%%%%%%%%%%%%%%%%%%%%%%%%%%%%%%%%%%%%%%%%%%%%%%%%%%%%%%%%%%%%
%%%%%%%%%%%%%%%%%%%%%%%%%%%%%%%%%%%%%%%%%%%%%%%%%%%%%%%%%%%%%%%%%%%%%%%%%%%%%%%%
\begin{proof}[Proof of Theorem \ref{thm:hkFRrec}]
To prove the inequality in the statement of the theorem, 
we generalize McMullen's steps in the proof of his Upper Bound 
theorem \cite{m-mnfcp-70}.

Our starting point is relation \eqref{equ:links} applied to the
simplicial $(d+|R|-1)$-polytope $\qQ_R$, expressed in terms of generating
functions:
\begin{equation}\label{equ:main:McM-QR}
  (d+|R|-1)\,\h{\bx{\qQ_R}}+(1-t)\hh{\bx{\qQ_R}}
  =\sum_{v\in\text{vert}(\bx{\qQ_R})}\h{\bx{\qQ_R}/v}.
\end{equation}
Exploiting the combinatorial structure of $\qQ_R$ in order to express:
(1) $\mb{h}(\bx\qQ_R)$ 
in terms of $\mb{h}(\fF_S)$, $\emptyset\subset{}S\subset{}R$, and
(2) $\mb{h}(\bx\qQ_R/v)$ in terms of $\mb{h}(\fF_R/v)$ and $\mb{h}(\fF_S)$,
$\emptyset\subset{}S\subset{}R$,
relation \eqref{equ:main:McM-QR} yields 
(see Sections \ref{sec:hFR-hFRv}--\ref{sec:links-nonlinks}):
\begin{equation*}
  (d+|R|-1)\htt{\fF_R}{t}+(1-t)\hh{\fF_R}=\sum_{v\in{}V_R}\htt{\fF_R/v}{t},
\end{equation*}
the element-wise form of which is:
\begin{equation*}
  (k+1)h_{k+1}(\fF_R)+(d+|R|-1-k)h_k(\fF_R)=\sum_{v\in{}V_R} h_k(\fF_R/v),
  \quad 0\le{}k\le{}d+|R|-2.
\end{equation*}
Noticing that $h_k(\fF_R/v)$ is equal to
$\sum_{\emptyset\subset{}S\subseteq{}R}(-1)^{|R|-|S|}\sum_{v\in{}V_S}g_k^{(|R|-|S|)}(\kK_S/v)$
(by the Inclusion-Exclusion Principle), we have that
(see Section \ref{sec:shellings}):
\begin{equation*}
  \sum_{\emptyset\subset{}S\subseteq{}R}(-1)^{|R|-|S|}
  \sum_{v\in{}V_S}g_k^{(|R|-|S|)}(\kK_S/v)
  \le
  \sum_{\emptyset\subset{}S\subseteq{}R}(-1)^{|R|-|S|}
  \sum_{v\in{}V_S}g_k^{(|R|-|S|)}(\kK_S).
\end{equation*}
The right-hand side of the above relation simplifies to
$n_{R}\,h_k(\fF_R)+\sum_{i\in{}R}n_i\,g_k(\fF_{R\sm\{i\}})$
(cf.~Section \ref{sec:rec-last-step}), which in
turn suggests the following inequality:
\begin{equation}
  (k+1)h_{k+1}(\fF_R)+(d+|R|-1-k)h_k(\fF_R)
  \le
  n_{R}\,h_k(\fF_R)+\sum_{i\in{}R}n_i\,g_k(\fF_{R\sm\{i\}})
\end{equation}
that holds true for all $0\le{}k\le{}d+|R|-2$. Solving in terms of
$h_{k+1}(\fF_R)$ results in \eqref{hkFRrec}.
\end{proof}

\section{Upper bounds}
\label{sec:ub}

Let $S_1,\ldots,S_r$ be a partition of a set $S$ into $r$
sets. We say that $A\subseteq{}\bigcup\limits_{1\leq{}i\leq{}r}S_i$
is a \emph{spanning subset of $S$} if 
$A\cap{}S_i\ne\emptyset$ for all $1\leq{}i\leq{}r$.

%%%%%%%%%%%%%%%%%%%%%%%%%%%%%%%%%%%%%%%%%%%%%%%%%%%%%%%%%%%%%%%%%%%%%%%%%%%%%%
\begin{definition}
  Let $P_i,i\in{}R$, be $d$-polytopes with vertex sets $V_i,i\in{}R.$
  We say that their Cayley polytope $\cC_R$ is \emph{$R$-neighborly} 
  if every spanning subset of $\bigcup_{i\in{}R}V_i$ of size 
  $|R|\leq{}\ell\leq\lexp{d+|R|-1}$ is a face of $\cC_R$ (or,
  equivalently, a face of $\fF_R$). We say that the Cayley polytope 
  $\cC_R$ is \emph{Minkowski-neighborly} if, for every  
  $\emptyset\subset{}S\subseteq{}R$, the Cayley polytope $\cC_S$ is 
  $S$-neighborly.
\label{def:multi_nbh}
\end{definition}
%%%%%%%%%%%%%%%%%%%%%%%%%%%%%%%%%%%%%%%%%%%%%%%%%%%%%%%%%%%%%%%%%%%%%%%%%%%%%%%%
The following characterizes $R$-neighborly Cayley polytopes in terms
of the $f$- and $h$-vector of $\fF_R$. 
\begin{lemma} 
\label{lem:eq_claim} 
The following are equivalent: 
\begin{enumerate}[(i)]
\item $\cC_R$ is $R$-neighborly, 
\item $f_{\ell-1}(\fF_R)=\sum_{\emptyset\subset{}S\subseteq{}R}(-1)^{|R|-|S|}
  \tbinom{n_S}{\ell}$,\; for all $0\leq{}\ell\leq{}\lexp{d+|R|-1}$,
  %%%%%%		
\item $h_{\ell}(\fF_R)=\sum_{\emptyset\subset{}S\subseteq{}R}
  (-1)^{|R|-|S|}\tbinom{n_S-d-|R|+\ell}{\ell}$,\; for all 
  $0\leq{}\ell\leq{}\lexp{d+|R|-1}$,
\end{enumerate}
where $n_i$ is the number of vertices of $P_i$ and $n_S=\sum_{i\in{}S}n_i$. 
\end{lemma}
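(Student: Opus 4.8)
The plan is to prove the equivalences by establishing $(i)\Leftrightarrow(ii)$ first, and then $(ii)\Leftrightarrow(iii)$ by a purely formal manipulation via generating functions. For $(i)\Leftrightarrow(ii)$ I would argue as follows. By relation \eqref{equ:fkFR-ie}, we have $f_{\ell-1}(\fF_R)=\sum_{\emptyset\subset{}S\subseteq{}R}(-1)^{|R|-|S|}f_{\ell-1}(\kK_S)$, so it suffices to show that, in the range $0\le\ell\le\lexp{d+|R|-1}$, the condition that $\cC_R$ be $R$-neighborly is equivalent to $f_{\ell-1}(\kK_S)=\binom{n_S}{\ell}$ for every $\emptyset\subset{}S\subseteq{}R$ and every such $\ell$. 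The key observation is that an $(\ell-1)$-face of $\kK_S$ is, by construction of $\kK_S$ as the subface-closure of $\fF_S$, simply an $\ell$-element subset of $V_S$ that is contained in some mixed face of some $\cC_X$, $\emptyset\subset{}X\subseteq{}S$. There are at most $\binom{n_S}{\ell}$ such subsets, with equality precisely when \emph{every} $\ell$-subset of $V_S$ is a face; and one checks (using $\ell\le\lexp{d+|S|-1}\le\lexp{d+|R|-1}$ and the fact that a small-enough subset lies in a mixed face of $\cC_X$ for $X$ the set of indices actually hit) that ``every $\ell$-subset of $V_S$ is a face of $\cC_S$'' is equivalent to ``every spanning subset of $V_X$ of size between $|X|$ and $\ell$ is a mixed face of $\cC_X$, for all $\emptyset\subset X\subseteq S$''. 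Quantifying over all $S\subseteq R$ and all admissible $\ell$ then gives exactly the $R$-neighborliness of $\cC_R$ (equivalently, Minkowski-neighborliness of $\cC_R$ as in Definition \ref{def:multi_nbh}, since the latter unfolds to $S$-neighborliness for every $\emptyset\subset S\subseteq R$).

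For $(ii)\Leftrightarrow(iii)$ I would work with the \fp and \hp directly. Fixing $\emptyset\subset S\subseteq R$ and abbreviating $\delta_S=d+|S|-1=\dim(\fF_S)+1$, a short computation with binomial generating functions shows
\begin{equation*}
  \sum_{\ell\ge 0}\binom{n_S}{\ell}t^{\delta_S-\ell}
  \quad\text{and}\quad
  \sum_{\ell\ge 0}\binom{n_S-\delta_S+\ell}{\ell}t^{\delta_S-\ell}
\end{equation*}
are related by the substitution $t\mapsto t-1$ in the sense of \eqref{hf}, i.e.\ the ``$h$-side'' polynomial evaluated at $t$ equals the ``$f$-side'' polynomial evaluated at $t-1$; this is the classical fact that the cyclic/neighborly $f$- and $h$-vectors correspond under \eqref{hf}. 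Taking the appropriate signed combination $\sum_{\emptyset\subset S\subseteq R}(-1)^{|R|-|S|}t^{|R|-|S|}(\cdot)$ and comparing with \eqref{fFK} and the $h$-analogue obtained from it via \eqref{hf}, one sees that statement $(ii)$, which asserts equality of the \emph{low-degree} coefficients of $\ft{\fF_R}{t}$ with the corresponding signed binomial sum, is equivalent — coefficient by coefficient, because the substitution $t\mapsto t-1$ is triangular with respect to degree — to the equality of the low-degree coefficients of $\htt{\fF_R}{t}$ with the signed sum of $\binom{n_S-d-|R|+\ell}{\ell}$. Care is needed that the truncation at $\ell\le\lexp{d+|R|-1}$ is preserved: since passing between $f$- and $h$-vectors is lower-triangular in the index $\ell$, knowing the $f$-side for all $\ell\le L$ is equivalent to knowing the $h$-side for all $\ell\le L$, so the range in $(ii)$ and $(iii)$ match up exactly.

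The main obstacle is the combinatorial heart of $(i)\Leftrightarrow(ii)$: one must argue carefully that counting $(\ell-1)$-faces of $\kK_S$ reduces to counting $\ell$-subsets of $V_S$, and that the ``all subsets are faces'' condition for the complex $\kK_S$ is equivalent to the family of spanning-subset conditions defining $R$-neighborliness — in particular tracking which summands a given subset meets and invoking the dimension bound $\ell\le\lexp{d+|R|-1}$ to guarantee that low-cardinality spanning subsets of each $\cC_X$ are genuinely faces (this is where the ``$r<d$'' hypothesis and the structure of the Cayley polytope enter, analogously to the classical fact that a neighborly polytope has all small subsets of vertices as faces). The $(ii)\Leftrightarrow(iii)$ part, by contrast, is routine generating-function bookkeeping once \eqref{hf}, \eqref{fFK}, and the triangularity of the $f\leftrightarrow h$ transform are in hand.
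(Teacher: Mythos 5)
Your $(ii)\Leftrightarrow(iii)$ argument matches the paper's in spirit: both exploit the lower-triangularity (in $\ell$) of the $f\leftrightarrow h$ transform together with the classical cyclic-polytope relation between $\binom{n}{\ell}$ and $\binom{n-\delta-1+\ell}{\ell}$. One slip: you set $\delta_S=d+|S|-1$, but the transform and the binomial in $(iii)$ use $\delta_R=d+|R|-1$ throughout (note the $|R|$, not $|S|$, in $\binom{n_S-d-|R|+\ell}{\ell}$); the dimension parameter is always that of $\fF_R$, not of $\fF_S$.

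The $(i)\Leftrightarrow(ii)$ half, however, contains a genuine gap. You reduce, via \eqref{equ:fkFR-ie}, to showing that $R$-neighborliness of $\cC_R$ is equivalent to $f_{\ell-1}(\kK_S)=\binom{n_S}{\ell}$ for \emph{every} $\emptyset\subset S\subseteq R$ and every $0\le\ell\le\lfloor(d+|R|-1)/2\rfloor$. This is not a valid reduction: the intermediate condition implies $(ii)$ by summing, but $(ii)$ is a single alternating sum and does not determine the individual terms $f_{\ell-1}(\kK_S)$, so you would only obtain one implication. More seriously, the intermediate condition is false outright in the range $\lfloor(d+|S|-1)/2\rfloor<\ell\le\lfloor(d+|R|-1)/2\rfloor$, which is nonempty once $S\subsetneq R$: $\kK_S$ is a subcomplex of the boundary of the $(d+|S|-1)$-polytope $\cC_S$, so $f_{\ell-1}(\kK_S)=\binom{n_S}{\ell}$ would require $\cC_S$ to be $\ell$-neighborly strictly above its half-dimension, which forces $\cC_S$ to be a simplex; this fails as soon as $|S|\ge 2$, and also for $|S|=1$ unless $n_i=d+1$. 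So \emph{neither} direction of your proposed equivalence can hold. Your parenthetical identification of $R$-neighborliness with Minkowski-neighborliness is likewise incorrect: the latter requires $S$-neighborliness of $\cC_S$ for every $S\subseteq R$, which does not follow from $R$-neighborliness of $\cC_R$ alone (the distinct equality conditions in Theorem \ref{theor:ub_from_rec} versus Theorem \ref{thm:ub-hkF-hkK} are precisely this distinction).

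The fix is to stay on $\fF_R$ rather than passing to the $\kK_S$. Under the ``as simplicial as possible'' convention, an $(\ell-1)$-face of $\fF_R$ is exactly an $\ell$-element spanning subset of $V_R$ that is a face of $\cC_R$; hence $f_{\ell-1}(\fF_R)$ is at most the total number of spanning $\ell$-subsets of $V_R$, and $R$-neighborliness is precisely the assertion of equality for $|R|\le\ell\le\lfloor(d+|R|-1)/2\rfloor$. An inclusion--exclusion count shows the number of spanning $\ell$-subsets of $V_R$ is $\sum_{\emptyset\subset S\subseteq R}(-1)^{|R|-|S|}\binom{n_S}{\ell}$ (and that this vanishes for $\ell<|R|$), which is exactly the right-hand side of $(ii)$. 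No detour through $\kK_S$ is needed.
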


\begin{proof}
To show the equivalence between {\rm (i)} and {\rm (ii)}, notice, from  
the definition of spanning subsets, that every spanning subset of  
$V_R=\bigcup_{i\in{}R}V_i$ of size $\ell\geq{}|R|$ has:
\begin{equation*}
	%\Phi_{\ell}(n_i,i\in{}R):=
	\sum_{\scalebox{0.6}{$\begin{array}{c}
	\sum_{i\in{}R}k_i=\ell\\
	1\leq{}k_i\leq{}n_i
	\end{array} $}}
	\prod_{i\in{}R}\tbinom{n_i}{k_i}
\end{equation*}
elements.  Using induction on the size of $R$, one can check that the above sum 
of products is equal to the expression on the right-hand side of {\rm(ii)}.
Moreover, in the case where $\ell<|R|$, the expression on the right-hand side 
of {\rm(ii)} is $0$. This, agrees with the fact that  there do not exist any 
spanning subsets of $\bigcup\limits_{i\in{}R}V_i$ of size $\ell<|R|$.
%%%%%%%%%%%%%%%%%%%%%%%%%%%%%%%%%%%%%%%%%%%%%%%%%%%%%%%%%%%%%%%%%%%%%%%%%%%%%%%%

We next show the equivalence between {\rm(ii)} and {\rm(iii)}.
Taking the $(d-k)$-th derivative of relation \eqref{hf} for 
$\fF_R$, it suffices to show that the values for  $f_{\ell-1}(\fF_R)$ and 
$h_{\ell}(\fF_R),$ $0\leq\ell\leq{}k$, in the statement of the theorem satisfy 
	\begin{equation} 
		\sum_{i=0}^k{}f_{i-1}(\fF_R)\frac{(d-i)!}{(k-i)!}t^{k-i}=
		\sum_{i=0}^k\frac{(d-i)!}{(k-i)!}h_i(\fF_R)(t+1)^{k-i}.
		\label{equ:trunc}
	\end{equation}
 
Indeed, we have: 
\begin{align} 
  \sum_{i=0}^k h_{i}(\fF_R)&\tfrac{(d+|R|-1-i)!}{(k-i)!}(t+1)^{k-i}\notag\\
  %%%%%%%%%%%%%%%%%	
  &=\sum_{i=0}^k\sum_{\emptyset\subset{}S\subseteq{}R}(-1)^{|R|-|S|} 
  \tbinom{n_S-d-|R|+i}{i}\tfrac{(d+|R|-1-i)!}{(k-i)!}(t+1)^{k-i}\notag\\
  %%%%%%%%%%%%%%%%%	
  &=\sum_{\emptyset\subset{}S\subseteq{}R}(-1)^{|R|-|S|}\sum_{i=0}^k
  \tbinom{n_S-d-|R|+i}{i}\tfrac{(d+|R|-1-i)!}{(k-i)!}
  \sum_{j=0}^{k-i}\tbinom{k-i}{j}t^{j}\notag\\
  %%%%%%%%%%%%%%%%%	
  &=\sum_{\emptyset\subset{}S\subseteq{}R}(-1)^{|R|-|S|}
  \sum_{i=0}^k\sum_{j=0}^{k}\tfrac{(d+|R|-1-k+j)!}{j!}
  \tbinom{n_S-d-|R|+i}{i}t\tbinom{d+|R'|-1-i}{d+|R|-1-k+j}t^{j}\notag\\
  %%%%%%%%%%%%%%%%%	
  &=\sum_{\emptyset\subset{}S\subseteq{}R}(-1)^{|R|-|S|}\sum_{j=0}^{k} 
  \tfrac{(d+|R|-1-k+j)!}{j!}\sum_{i=0}^k\tbinom{n_S-d-|R|+i}{n_S-d-|R|}
  \tbinom{d+|R|-1-i}{k-i-j}t^{j}\label{sth}\\
  %%%%%%%%%%%%%%%%%	
  &=\sum_{\emptyset\subset{}S\subseteq{}R}(-1)^{|R|-|S|}
  \sum_{j=0}^{k}\tfrac{(d+|R|-1-k+j)!}{j!}\tbinom{n_S}{n_S-k+j}t^{j}
  \label{sth2}\\
  %%%%%%%%%%%%%%%%%	
  &=\sum_{\emptyset\subset{}S\subseteq{}R}(-1)^{|R|-|S|}\sum_{j=0}^{k} 
  \tfrac{(d+|R|-1-j)!}{(k-j)!}\tbinom{n_S}{j}t^{k-j}\notag\\
  %%%%%%%%%%%%%%%%%	
  &=\sum_{j=0}^{k}\sum_{\emptyset\subset{}S\subseteq{}R}(-1)^{|R|-|S|}
  \tbinom{n_S}{j}\tfrac{(d+|R|-1-j)!}{(k-j)!} t^{k-j},\notag
\end{align} 
where to go from \eqref{sth} to \eqref{sth2}   we used 
Relation 5.26 from \cite{gkp-cm-89}: 
\[\sum\limits_{0\leq{}k\leq{}l}\tbinom{l-k}{m}\tbinom{q+k}{n}=
\tbinom{l+q+1}{m+n+1},\]
holding for all non negative integers $l,m,n\geq{}q$.
\end{proof}

%%%%%%%%%%%%%%%%%%%%%%%%%%%%%%%%%%%%%%%%%%%%%%%%%%%%%%%%%%%%%%%%%%%%%%%%%%%%%%%%
\subsection{\texorpdfstring{Upper bounds for the lower half of $\mb{h}(\fF_R)$}%
{Upper bounds for the lower half of h(FR)}}
%\label{app:sec:ub-hKR}

From the recurrence relation in Theorem \ref{thm:hkFRrec}
we arrive at the following theorem.
\begin{theorem}
  \label{theor:ub_from_rec}
  For any $\emptyset\subset{}R\subseteq{}[r]$ and $0\le{}k\le{}d+|R|-1$, 
  we have:
  \begin{align}
    g_k(\fF_R)&\leq\sum_{\emptyset\subset{}S\subseteq{}R}(-1)^{|R|-|S|}
    \tbinom{n_S-d-|R|-1+k}{k}, \quad\text{and}
    \label{equ:ub_from_rec1}
    \\
    %%%%%%%%%%%%%%%%%%%%%%%%%%%%		
    h_k(\fF_R)&\le\sum_{\emptyset\subset{}S\subseteq{}R}(-1)^{|R|-|S|}
    \tbinom{n_S-d-|R|+k}{k},
    \label{equ:ub_from_rec2}
  \end{align}
  where $n_S=\sum_{i\in{}S}n_i$. 
  Equalities hold for all $0\le{}k\le{}\lexp{d+|R|-1}$ if and only if
  the Cayley polytope $\cC_R$ is $R$-neighborly.
\label{theor:UB_hF}
\end{theorem}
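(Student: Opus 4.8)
The plan is to establish \eqref{equ:ub_from_rec1} and \eqref{equ:ub_from_rec2} together by a double induction: an outer induction on $|R|$ and, for each fixed $R$, an inner induction on $k$. The driving force is the recurrence of Theorem~\ref{thm:hkFRrec}: its right-hand side features $h_k(\fF_R)$, controlled by the inner hypothesis, together with the quantities $g_k(\fF_{R\sm\{i\}})$ attached to the $(|R|-1)$-element sets $R\sm\{i\}$, controlled by the outer hypothesis through \eqref{equ:ub_from_rec1}. With the conventions $n_\emptyset=0$ and $g_k(\fF_\emptyset)=0$ the case $|R|=1$ reduces \eqref{hkFRrec} to McMullen's recurrence $h_{k+1}(\bx{}P)\le\tfrac{n-d+k}{k+1}h_k(\bx{}P)$ for a simplicial $d$-polytope, so it is absorbed into the same computation.

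For the base $k=0$ of the inner induction, $g_0(\fF_R)=h_0(\fF_R)=f_{-1}(\fF_R)=(-1)^{|R|-1}$, while both right-hand sides of \eqref{equ:ub_from_rec1} and \eqref{equ:ub_from_rec2} equal $\sum_{\emptyset\subset{}S\subseteq{}R}(-1)^{|R|-|S|}=(-1)^{|R|-1}$; hence equality holds at $k=0$. For the step $k\to k+1$, I first note that the coefficient $\tfrac{n_R-d-|R|+1+k}{k+1}$ of $h_k(\fF_R)$ in \eqref{hkFRrec} is nonnegative, because every $P_i$ is a $d$-polytope, so $n_i\ge{}d+1$ and $n_R=\sum_{i\in{}R}n_i\ge|R|(d+1)\ge{}d+|R|-1$. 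Therefore substituting the inner bound for $h_k(\fF_R)$ and the outer bounds for the $g_k(\fF_{R\sm\{i\}})$ into the right-hand side of \eqref{hkFRrec} can only increase it, and it remains to verify that after this substitution the right-hand side equals $\sum_{\emptyset\subset{}S\subseteq{}R}(-1)^{|R|-|S|}\tbinom{n_S-d-|R|+k+1}{k+1}$, which gives \eqref{equ:ub_from_rec2} for $k+1$. To obtain \eqref{equ:ub_from_rec1} for $k+1$, I rewrite \eqref{hkFRrec} using $g_{k+1}(\fF_R)=h_{k+1}(\fF_R)-h_k(\fF_R)$, getting
\[(k+1)\,g_{k+1}(\fF_R)\ \le\ (n_R-d-|R|)\,h_k(\fF_R)+\sum_{i\in{}R}n_i\,g_k(\fF_{R\sm\{i\}}),\]
whose coefficients are again nonnegative (here $n_R-d-|R|\ge(|R|-1)d\ge0$); substituting the same bounds and checking the analogous identity yields the claimed bound on $g_{k+1}(\fF_R)$. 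This separate derivation is genuinely needed: one cannot simply difference the $h$-bounds, since below the middle those are not yet known to be equalities.

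The crux is the pair of binomial-coefficient identities just invoked. The mechanism, writing $a_S:=n_S-d-|R|$, is to split off the $S=R$ summand; to reindex the double sum $\sum_{i\in{}R}n_i\sum_{\emptyset\subset{}S\subseteq{}R\sm\{i\}}(\cdot)$ as $\sum_{\emptyset\subset{}S\subsetneq{}R}\bigl(\sum_{i\in{}R\sm{}S}n_i\bigr)(\cdot)=\sum_{\emptyset\subset{}S\subsetneq{}R}(n_R-n_S)(\cdot)$; and to use the elementary relation $(n_R-d-|R|+1+k)-(n_R-n_S)=a_S+1+k$, so that the outer coefficient, applied to the $S$-term, becomes the ``right'' local coefficient. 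Combined with $(k+1)\tbinom{a+k+1}{k+1}=(a+k+1)\tbinom{a+k}{k}$ in the $h$-case (respectively $(k+1)\tbinom{a+k}{k+1}=a\tbinom{a+k}{k}$ in the $g$-case), the substituted right-hand sides collapse to $\sum_{\emptyset\subset{}S\subseteq{}R}(-1)^{|R|-|S|}(a_S+1+k)\tbinom{a_S+k}{k}$, respectively $\sum_{\emptyset\subset{}S\subseteq{}R}(-1)^{|R|-|S|}a_S\tbinom{a_S+k}{k}$, which are exactly $(k+1)$ times the two targets. I expect this bookkeeping to be the main obstacle; everything surrounding it is an application of Theorem~\ref{thm:hkFRrec} or is routine.

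Finally, the equality assertion is essentially Lemma~\ref{lem:eq_claim}: equality in \eqref{equ:ub_from_rec2} for all $0\le{}k\le\lexp{d+|R|-1}$ is verbatim condition~(iii) of that lemma, hence equivalent to $\cC_R$ being $R$-neighborly. The two families of equalities occur together: by Pascal's rule $\tbinom{n_S-d-|R|+k}{k}-\tbinom{n_S-d-|R|+k-1}{k-1}=\tbinom{n_S-d-|R|-1+k}{k}$, so equality in \eqref{equ:ub_from_rec2} at $k-1$ and $k$ forces equality in \eqref{equ:ub_from_rec1} at $k$, and conversely the partial sums $\sum_{j\le{}k}g_j(\fF_R)$ telescope back to $h_k(\fF_R)$; thus equalities in \eqref{equ:ub_from_rec1} throughout the lower half, equalities in \eqref{equ:ub_from_rec2} throughout the lower half, and $R$-neighborliness of $\cC_R$ are all equivalent.
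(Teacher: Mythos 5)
Your proposal is correct and follows essentially the same double-induction argument as the paper: both substitute the inductive bounds into the recurrence of Theorem~\ref{thm:hkFRrec}, reindex the double sum $\sum_{i\in{}R}n_i\sum_{\emptyset\subset{}S\subseteq{}R\sm\{i\}}(\cdot)$ as $\sum_{\emptyset\subset{}S\subset{}R}(n_R-n_S)(\cdot)$, and collapse via the same binomial absorption identities, treating the $g$-bound by the same separate substitution rather than by differencing the $h$-bounds. Your Pascal-rule observation tying together equality in \eqref{equ:ub_from_rec1} and \eqref{equ:ub_from_rec2} throughout the lower half is a slightly more explicit rendering of the paper's terse appeal to Lemma~\ref{lem:eq_claim}.
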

%%%%%%%%%%%%%%%%%%%%%%%%%%%%%%%%%%%%%%%%%%%%%%%%%%%%%%%%%%%%%%%%%%%%%%%%%%%%%%%%
\begin{proof}
We are going to show the wanted bounds by induction on $|R|$ and $k$.
Clearly the bounds hold for $|R|=1$ and for any $0\le{}k\le{}d$ (this
is the case of one $d$-polytope and the bounds of the lemma refer to the
well-known bounds on the elements of the $h$- and $g$-vector of a polytope).

Suppose now that the bounds for $g_k(\fF_R)$ and $h_k(\fF_R)$ hold for
all $|R|<m$ and for all $0\le{}k\le{}d+|R|-1$. Consider an $R$ with
$|R|=m$. Then, for $k=0$ we have:
\begin{align*}
  h_0(\fF_R)&=f_{-1}(\fF_R)=(-1)^{|R|-1}=-(-1)^{|R|}
  =\sum_{i=1}^{|R|}(-1)^{|R|-i}\tbinom{|R|}{i}
  =\sum_{i=1}^{|R|}(-1)^{|R|-i}
  \sum_{\substack{\emptyset\subset{}S\subseteq{}R\\|S|=i}}1\\
  &=\sum_{i=1}^{|R|}
  \sum_{\substack{\emptyset\subset{}S\subseteq{}R\\|S|=i}}(-1)^{|R|-i}
  =\sum_{\emptyset\subset{}S\subseteq{}R}(-1)^{|R|-|S|}
  =\sum_{\emptyset\subset{}S\subseteq{}R}(-1)^{|R|-|S|}\tbinom{n_S-d-|R|}{0},
\end{align*}
and
\begin{equation*}
  g_0(\fF_R)=h_0(\fF_R)-h_{-1}(\fF_R)
  =\sum_{\emptyset\subset{}S\subseteq{}R}(-1)^{|R|-|S|}\tbinom{n_S-d-|R|}{0}
  =\sum_{\emptyset\subset{}S\subseteq{}R}(-1)^{|R|-|S|}\tbinom{n_S-d-|R|-1}{0}.
\end{equation*}
For $k\ge{}1$ we have:
\begin{align}
   g_k(\fF_R)&=h_k(\fF_R)-h_{k-1}(\fF_R)\notag\\
   &\le\tfrac{n_{R}-d-|R|+k}{k}h_{k-1}(\fF_R)+\sum_{i\in{}R}\tfrac{n_i}{k}
   g_{k-1}(\fF_{R\sm\{i\}})-h_{k-1}(\fF_R)\notag\\
   &=\tfrac{n_{R}-d-|R|}{k}h_{k-1}(\fF_R)
   +\sum_{i\in{}R}\tfrac{n_i}{k}g_{k-1}(\fF_{R\sm\{i\}}).
   \label{ind:1}
\end{align}
By our inductive hypotheses, we have:
\begin{equation}
  h_{k-1}(\fF_R)\le\sum_{\emptyset\subset{}S\subseteq{}R}(-1)^{|R|-|S|}
  \tbinom{n_S-d-|R|+k-1}{k-1}
  \label{ind:2}
\end{equation}
and also, for all $i\in{}R$:
\begin{equation}
  \begin{aligned}
    g_{k-1}(\fF_{R\sm\{i\}})
    &\le\sum_{\emptyset\subset{}S\subseteq{}R\sm\{i\}}(-1)^{|R\sm\{i\}|-|S|}
    \tbinom{n_S-d-|R\sm\{i\}|-1+k-1}{k-1}\\
    &=-\sum_{\emptyset\subset{}S\subseteq{}R\sm\{i\}}(-1)^{|R|-|S|}
    \tbinom{n_S-d-|R|-1+k}{k-1}.
  \end{aligned}\label{ind:3}
\end{equation}
Substituting \eqref{ind:2} and \eqref{ind:3} in \eqref{ind:1} we get:
\begin{equation}
  g_k(\fF_R)\le\tfrac{n_{R}-d-|R|}{k}
  \sum_{\emptyset\subset{}S\subseteq{}R}(-1)^{|R|-|S|}
  \tbinom{n_S-d-|R|+k-1}{k-1}
  -\sum_{i\in{}R}\tfrac{n_i}{k}
  \sum_{\emptyset\subset{}S\subseteq{}R\sm\{i\}}(-1)^{|R|-|S|}
  \tbinom{n_S-d-|R|-1+k}{k-1}.\label{ind:4}
\end{equation}
Consider the sum 
$\sum_{i\in{}R}\tfrac{n_i}{k}\sum_{\emptyset\subset{}S\subseteq{}R\sm\{i\}}
(-1)^{|R|-|S|}\tbinom{n_S-d-|R|-1+k}{k-1}$; observe that for any given
$\emptyset\subset{}S\subset{}R$ we get a contribution of $\frac{n_i}{k}$ for 
$\tbinom{n_S-d-|R|-1+k}{k-1}$, for any
$i\not\in{}S$. In other words, we have the equality:
\begin{equation}
  \sum_{i\in{}R}\tfrac{n_i}{k}
  \sum_{\emptyset\subset{}S\subseteq{}R\sm\{i\}}(-1)^{|R|-|S|}
  \tbinom{n_S-d-|R|-1+k}{k-1}
  =\sum_{\emptyset\subset{}S\subset{}R}(-1)^{|R|-|S|}\tfrac{n_{R\sm{}S}}{k}
  \tbinom{n_S-d-|R|-1+k}{k-1}. \label{ind:5}
\end{equation}
In view of \eqref{ind:5} the inequality in \eqref{ind:4} becomes:
{\allowdisplaybreaks
\begin{align*}
  g_k(\fF_R)&\le\tfrac{n_{R}-d-|R|}{k}
  \sum_{\emptyset\subset{}S\subseteq{}R}(-1)^{|R|-|S|}
  \tbinom{n_S-d-|R|+k-1}{k-1}
  -\sum_{\emptyset\subset{}S\subset{}R}(-1)^{|R|-|S|}
  \tfrac{n_{R\sm{}S}}{k}\tbinom{n_S-d-|R|-1+k}{k-1}\\
  &=\tfrac{n_R-d-|R|}{k}\tbinom{n_R-d-|R|-1+k}{k-1}
  +\tfrac{n_R-d-|R|}{k}\sum_{\emptyset\subset{}S\subset{}R}
  (-1)^{|R|-|S|}\tbinom{n_S-d-|R|-1+k}{k-1}\\
  &\qquad
  -\sum_{\emptyset\subset{}S\subset{}R}(-1)^{|R|-|S|}
  \tfrac{n_{R\sm{}S}}{k}\tbinom{n_S-d-|R|-1+k}{k-1}\\
  &=\tfrac{n_R-d-|R|}{k}\tbinom{n_R-d-|R|-1+k}{k-1}
  +\sum_{\emptyset\subset{}S\subset{}R}(-1)^{|R|-|S|}
  \left(\tfrac{n_R-d-|R|}{k}-\tfrac{n_{R\sm{}S}}{k}\right)
  \tbinom{n_S-d-|R|-1+k}{k-1}\\
  &=\tfrac{n_R-d-|R|}{k}\tbinom{n_R-d-|R|-1+k}{k-1}
  +\sum_{\emptyset\subset{}S\subset{}R}(-1)^{|R|-|S|}
  \tfrac{n_S-d-|R|}{k}\tbinom{n_S-d-|R|-1+k}{k-1}\\
  &=\tfrac{n_R-d-|R|+k}{k}\tbinom{n_R-d-|R|-1+k}{k-1}
  -\tbinom{n_R-d-|R|-1+k}{k-1}\\
  &\qquad
  +\sum_{\emptyset\subset{}S\subset{}R}(-1)^{|R|-|S|}
  \biggl(\tfrac{n_S-d-|R|+k}{k}\tbinom{n_S-d-|R|-1+k}{k-1}
    -\tbinom{n_S-d-|R|-1+k}{k-1}\biggr)\\
  &=\tbinom{n_R-d-|R|+k}{k}-\tbinom{n_R-d-|R|-1+k}{k-1}
  +\sum_{\emptyset\subset{}S\subset{}R}(-1)^{|R|-|S|}
  \biggl(\tbinom{n_S-d-|R|+k}{k}-\tbinom{n_S-d-|R|-1+k}{k-1}\biggr)\\
  &=\tbinom{n_R-d-|R|+k-1}{k}+\sum_{\emptyset\subset{}S\subset{}R}
  (-1)^{|R|-|S|}\tbinom{n_S-d-|R|+k-1}{k}\\
  &=\sum_{\emptyset\subset{}S\subseteq{}R}(-1)^{|R|-|S|}
  \tbinom{n_S-d-|R|-1+k}{k}.
\end{align*}}

We can now turn our attention to proving the bound for $h_k(\fF_R)$.
Using the recursive relation \eqref{hkFRrec} and the upper bound for 
$g_k(\fF_R)$ that we just proved, we get:
{\allowdisplaybreaks
\begin{align*}
	h_k(\fF_R)&\le\tfrac{n_{R}-d-|R|+k}{k}h_{k-1}(\fF_R)
	+\sum_{i\in{}R}\tfrac{n_i}{k}g_{k-1}(\fF_{R\sm\{i\}})\\
	&\le\tfrac{n_{R}-d-|R|+k}{k}
	\sum_{\emptyset\subset{}S\subseteq{}R}(-1)^{|R|-|S|}
	\tbinom{n_S-d-|R|+k-1}{k-1}\\
	&\qquad
	+\sum_{i\in{}R}\tfrac{n_i}{k}
	\sum_{\emptyset\subset{}S\subseteq{}R\sm\{i\}}(-1)^{|R\sm\{i\}|-|S|}
	\tbinom{n_S-d-|R\sm\{i\}|-1+k-1}{k-1}\\
	&=\tfrac{n_{R}-d-|R|+k}{k}\tbinom{n_R-d-|R|+k-1}{k-1}
	+\tfrac{n_{R}-d-|R|+k}{k}
	\sum_{\emptyset\subset{}S\subset{}R}(-1)^{|R|-|S|}
	\tbinom{n_S-d-|R|+k-1}{k-1}\\
	&\qquad+\sum_{i\in{}R}\tfrac{n_i}{k}
	\sum_{\emptyset\subset{}S\subseteq{}R\sm\{i\}}(-1)^{|R|-1-|S|}
	\tbinom{n_S-d-|R|+k-1}{k-1}\\
	&=\tbinom{n_R-d-|R|+k}{k}+\sum_{\emptyset\subset{}S\subset{}R}
	(-1)^{|R|-|S|}\tfrac{n_{R}-d-|R|+k}{k}\tbinom{n_S-d-|R|+k-1}{k-1}\\
	&\qquad+\sum_{S\subset{}R}(-1)^{|R|-|S|-1}
	\tfrac{n_{R\sm{}S}}{k}\tbinom{n_S-d-|R|+k-1}{k-1}\\
	&=\tbinom{n_R-d-|R|+k}{k}+\sum_{\emptyset\subset{}S\subset{}R}
	(-1)^{|R|-|S|}\left(\tfrac{n_{R}-d-|R|+k}{k}
	-\tfrac{n_{R\sm{}S}}{k}\right)\tbinom{n_S-d-|R|+k-1}{k-1}\\
	&=\tbinom{n_R-d-|R|+k}{k}
	+\sum_{\emptyset\subset{}S\subset{}R}(-1)^{|R|-|S|}
	\tfrac{n_S-d-|R|+k}{k}\tbinom{n_S-d-|R|+k-1}{k-1}\\
	&=\tbinom{n_R-d-|R|+k}{k}+\sum_{\emptyset\subset{}S\subset{}R}
	(-1)^{|R|-|S|}\tbinom{n_S-d-|R|+k}{k}\\
	&=\sum_{\emptyset\subset{}S\subseteq{}R}(-1)^{|R|-|S|}
	\tbinom{n_S-d-|R|+k}{k}.\qedhere
\end{align*}}

Finally, the equality claim is immediate from Lemma \ref{lem:eq_claim} .
\end{proof}

%%%%%%%%%%%%%%%%%%%%%%%%%%%%%%%%%%%%%%%%%%%%%%%%%%%%%%%%%%%%%%%%%%%%%%%%%%%%%%%%
\subsection{\texorpdfstring{Upper bounds for $h_k(\fF_R)$ and
    $h_k(\kK_R)$ for all $k$}%
{Upper bounds for hk(FR) and hk(KR) for all k}}
\label{app:sec:ub-hKR}

Before proceeding with proving upper bounds for the $h$-vectors of
$\fF_R$ and $\kK_R$ we need to define the following functions.
\begin{definition}
  Let $d\ge{}2$, $\emptyset\subset{}R\subseteq[r]$, $m\ge{}0$,
  $0\le{}k\le{}d+|R|-1$, and $n_i\in\naturals$, $i\in{}R$, with
  $n_i\ge{}d+1$.
  We define the functions $\tubh[m]{k}{R}$ and $\tubk{k}{R}$ by the
  following conditions:
  \begin{enumerate}[1.]
  \item%\label{phi_prop_1}
    $\tubh{k}{R}
    =\sum_{\emptyset\subset{}S\subseteq{}R}(-1)^{|R|-|S|}\tbinom{n_S-d-|R|+k}{k}$,
    $0\le{}k\le\lexp{d+|R|-1}$,
    \vspace*{4pt}
    %%%%%%%
  \item%\label{phi_prop_2}
    $\tubh[m]{k}{R}=\tubh[m-1]{k}{R}-\tubh[m-1]{k-1}{R}$, $m>0$,
    \vspace*{4pt}
    %%%%%%%
  \item%\label{psi-def}
    $\tubk{k}{R}=\sum_{\emptyset\subset{}S\subseteq{}R}\tubh[|R|-|S|]{k}{R}$,
    \vspace*{4pt}
    %%%%%%%
  \item%\label{phi_prop_3}
    $\tubh{k}{R}=\tubk{d+|R|-1-k}{R}$,
    %%%%%%%
  \end{enumerate}
  where $\mb{n}_R$ stands for the $|R|$-dimensional vector whose
  elements are the values $n_i$, $i\in{}R$.
\end{definition}

Notice that $\tubh{k}{R}$ and $\tubk{k}{R}$ are well defined,
though in a recursive manner (in the size of $R$), since for
any $k>\lexp{d+|R|-1}$, we have:
\begin{align}
  \tubh{k}{R}&=\tubk{d+|R|-1-k}{R}
  =\sum\limits_{\emptyset\subset{}S\subseteq{}R}\tubh[|R|-|S|]{d+|R|-1-k}{S}\notag\\
  %%%
  &=\tubh{d+|R|-1-k}{R}
  +\sum\limits_{\emptyset\subset{}S\subset{}R}\tubh[|R|-|S|]{d+|R|-1-k}{S}\notag\\
  %%%
  &=\sum\limits_{\emptyset\subset{}S\subseteq{}R}(-1)^{|R|-|S|}
  \tbinom{n_S-k-1}{d+|R|-1-k}
  +\sum\limits_{\emptyset\subset{}S\subset{}R}\tubh[|R|-|S|]{d+|R|-1-k}{S},
  \label{equ:Phi-upper-half-last}
\end{align}
where the second sum in \eqref{equ:Phi-upper-half-last} is to be
understood as 0 when $|R|=1$. In other words, $\tubh{k}{R}$, and, thus,
also $\tubh[m]{k}{R}$ for any $m>0$, is fully defined for some $R$ and
any $k$, once we know the values $\tubh[\ell]{k}{S}$ for all
$\emptyset\subset{}S\subset{}R$, for all $0\le{}k\le{}d+|S|-1$, and for
all $1\le{}\ell\le{}|R|-1$. Moreover, it is easy to verify that
$\tubh{k}{R}$ satisfies the following recurrence relation:
\begin{equation}
  \tubh{k+1}{R}=\frac{n_R-d-|R|+k+1}{k+1}\tubh{k}{R}
  +\sum_{i\in{}R}\frac{n_i}{k+1}\tubh[1]{k}{R\sm\{i\}},
  \quad 0\le{}k\le{}\ltexp{d+|R|-2}.
\end{equation}

\begin{lemma}
  \label{lem:hk-alpha}
  For any $\emptyset\subset{}R\subseteq[r]$, any $k$ with
  $0\le{}k\le{}\lexp{d+|R|-1}$, and any $\alpha$ with
  $0\le\alpha\leq{}\frac{d+1}{d-1}$, we have:
  \begin{equation}
    \label{equ:hk-alpha}
    h_k(\fF_R)-\alpha \sum_{i\in{}R}h_{k-1}(\fF_{R\sm\{i\}})
    \leq \tubh{k}{R}-\alpha \sum_{i\in{}R}\tubh{k-1}{R\sm\{i\}}.
  \end{equation}
\end{lemma}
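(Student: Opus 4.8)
The plan is to prove the inequality by induction on $k$, simultaneously for every $\emptyset\subset{}R\subseteq[r]$ and every admissible $\alpha$. It is convenient to abbreviate $\Delta_j(T):=\tubh{j}{T}-h_j(\fF_T)$ for $\emptyset\subset{}T\subseteq[r]$ and $j\ge 0$, with the conventions $\Delta_j(\emptyset):=0$ and $\Delta_{-1}(T):=0$; by Theorem~\ref{theor:UB_hF} (relation \eqref{equ:ub_from_rec2}) one has $\Delta_j(T)\ge 0$ whenever $0\le j\le\lexp{d+|T|-1}$. After rearranging, \eqref{equ:hk-alpha} is equivalent to
\begin{equation*}
  \Delta_k(R)\ \ge\ \alpha\sum_{i\in{}R}\Delta_{k-1}(R\sm\{i\}),
  \qquad 0\le{}k\le\lexp{d+|R|-1}.
\end{equation*}
The base case $k=0$ is immediate: $h_0(\fF_R)=\tubh{0}{R}$ (as observed in the proof of Theorem~\ref{theor:UB_hF}), so $\Delta_0(R)=0$, and the right-hand side vanishes as well.

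For the inductive step fix $k\ge 1$ and $R$. I write the recurrence inequality of Theorem~\ref{thm:hkFRrec} at index $k-1$ and the exact recurrence $\tubh{k}{R}=\frac{n_R-d-|R|+k}{k}\tubh{k-1}{R}+\sum_{i\in R}\frac{n_i}{k}\tubh[1]{k-1}{R\sm\{i\}}$ (established just before the lemma, valid here since $k\le\lexp{d+|R|-1}\le\ltexp{d+|R|-2}+1$), subtract, and use $\tubh[1]{k-1}{S}=\tubh{k-1}{S}-\tubh{k-2}{S}$ together with $g_{k-1}(\fF_S)=h_{k-1}(\fF_S)-h_{k-2}(\fF_S)$. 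This gives
\begin{equation*}
  \Delta_k(R)\ \ge\ \frac{n_R-d-|R|+k}{k}\,\Delta_{k-1}(R)
  +\sum_{i\in{}R}\frac{n_i}{k}\bigl(\Delta_{k-1}(R\sm\{i\})-\Delta_{k-2}(R\sm\{i\})\bigr).
\end{equation*}
I then regroup the right-hand side as $\bigl[\tfrac{n_R-d-|R|+k}{k}\Delta_{k-1}(R)-\sum_{i\in R}\tfrac{n_i}{k}\Delta_{k-2}(R\sm\{i\})\bigr]+\sum_{i\in R}\tfrac{n_i}{k}\Delta_{k-1}(R\sm\{i\})$ and bound the two pieces separately.

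For the second sum I use the hypothesis $r<d$: it forces $|R|\le d-1$, hence $k\le\lexp{d+|R|-1}\le d-1$, so with $n_i\ge d+1$ we get $\tfrac{n_i}{k}\ge\tfrac{d+1}{d-1}\ge\alpha$ for every $i\in R$; since $\Delta_{k-1}(R\sm\{i\})\ge 0$ (note $k-1\le\lexp{d+|R|-2}$), this yields $\sum_{i\in R}\tfrac{n_i}{k}\Delta_{k-1}(R\sm\{i\})\ge\alpha\sum_{i\in R}\Delta_{k-1}(R\sm\{i\})$. For the bracketed piece I apply the inductive hypothesis to $(R,k-1)$ with parameter $\tfrac{d+1}{d-1}$, i.e.\ $\Delta_{k-1}(R)\ge\tfrac{d+1}{d-1}\sum_{i\in R}\Delta_{k-2}(R\sm\{i\})$; together with $n_R-d-|R|\ge(|R|-1)d\ge 0$, the elementary estimate $(n_R-d-|R|+k)(d+1)\ge n_i(d-1)$ (which follows from $n_R\ge n_i+(|R|-1)(d+1)$, $k\ge 1$, $|R|\ge 1$), and $\Delta_{k-2}(R\sm\{i\})\ge 0$, this shows the bracket is $\ge 0$. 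Adding the two estimates gives $\Delta_k(R)\ge\alpha\sum_{i\in R}\Delta_{k-1}(R\sm\{i\})$, closing the induction. (When $|R|=1$ all the sums over $R\sm\{i\}$ are empty and the statement collapses to the classical bound $h_k(\bx{}P)\le\binom{n-d-1+k}{k}$, already contained in Theorem~\ref{theor:UB_hF}.)

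The routine part is the bookkeeping on index ranges: one must keep $k-1$ and $k-2$ inside the lower halves of $\fF_R$ and of the $\fF_{R\sm\{i\}}$ so that Theorem~\ref{theor:UB_hF} and the $\tubh{}{}$-recurrence apply there, which all reduces to $\lexp{m}-1\le\lexp{m-1}$ and to $|R|\le d-1$. The genuine content, and the step I expect to be the main obstacle, is the pair of numerical estimates $\tfrac{n_i}{k}\ge\alpha$ and $(n_R-d-|R|+k)(d+1)\ge n_i(d-1)$: it is exactly here that the assumptions $r<d$ and $n_i\ge d+1$ enter, and the constant $\tfrac{d+1}{d-1}$ in the hypothesis on $\alpha$ is precisely what makes both inequalities — and hence the $\alpha$-combination — close.
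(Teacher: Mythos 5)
Your proof is correct, but it takes a genuinely different route from the paper's. The paper factors the work through a $g$-vector analogue (Lemma~\ref{lem:gk-alpha}), namely $g_k(\fF_R)-\alpha\sum_{i\in R}g_{k-1}(\fF_{R\sm\{i\}})\le\tubh[1]{k}{R}-\alpha\sum_{i\in R}\tubh[1]{k-1}{R\sm\{i\}}$, and proves \emph{that} with no induction on $k$: after subtracting $h_{k-1}(\fF_R)+\alpha\sum_{i\in R}g_{k-1}(\fF_{R\sm\{i\}})$ from the recurrence \eqref{hkFRrec}, the coefficients $\tfrac{n_R-d-|R|}{k}$ and $\tfrac{n_i}{k}-\alpha$ are non-negative (the latter using $r<d$ exactly as you do), so one may substitute the one-sided bounds $h_{k-1}(\fF_R)\le\tubh{k-1}{R}$ and $g_{k-1}(\fF_{R\sm\{i\}})\le\tubh[1]{k-1}{R\sm\{i\}}$ from Theorem~\ref{theor:ub_from_rec}, and the $\tubh{}{}$-recurrence then closes the identity in one line. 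Lemma~\ref{lem:hk-alpha} follows by the telescoping $h_k=\sum_{\ell=0}^k g_\ell$, with no further estimates.

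Your version instead keeps everything at the $h$-level and runs an induction on $k$. The price is the extra numerical estimate $(n_R-d-|R|+k)(d+1)\ge n_i(d-1)$ and the need to invoke the inductive hypothesis at the extreme value $\alpha=\tfrac{d+1}{d-1}$ to absorb the cross term $\sum_i\tfrac{n_i}{k}\bigl(\tubh{k-2}{R\sm\{i\}}-h_{k-2}(\fF_{R\sm\{i\}})\bigr)$. The paper's early subtraction of $h_{k-1}(\fF_R)$, before comparing with the $\tubh{}{}$-recurrence, is exactly what makes that cross term vanish for free: in $g$-form the leading coefficient drops from $\tfrac{n_R-d-|R|+k}{k}$ to $\tfrac{n_R-d-|R|}{k}$ and the recurrence closes by direct substitution rather than by a second cancellation. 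One small imprecision: your parenthetical claim that $(n_R-d-|R|+k)(d+1)\ge n_i(d-1)$ follows already from $|R|\ge 1$ is not right (for $|R|=1$, $n_i=d+1$, $k=1$, $d\ge 3$ it fails); you do need $|R|\ge 2$ there, but this is harmless because, as your closing remark notes, for $|R|=1$ the relevant sum is empty so the estimate is never used.
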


To prove Lemma \ref{lem:hk-alpha} we need the following intermediate result.

\begin{lemma}
  \label{lem:gk-alpha}
  For any $\emptyset\subset{}R\subseteq[r]$, any $k$ with
  $0\le{}k\le{}\lexp{d+|R|-1}$, and any $\alpha$ with
  $0\le\alpha\leq{}\frac{d+1}{d-1}$, we have:
  \begin{equation*}
    g_k(\fF_R)-\alpha \sum_{i\in{}R}g_{k-1}(\fF_{R\sm\{i\}})
    \leq \tubh[1]{k}{R}-\alpha \sum_{i\in{}R}\tubh[1]{k-1}{R\sm\{i\}}.
  \end{equation*}
\end{lemma}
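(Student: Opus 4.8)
\textbf{Proof plan for Lemma \ref{lem:gk-alpha}.}
The plan is to run an induction on $|R|$ and on $k$ simultaneously, mirroring the structure of the proof of Theorem \ref{theor:ub_from_rec}, but carrying the extra ``$\alpha$-correction'' terms through the recurrence of Theorem \ref{thm:hkFRrec} at each step. The base case $|R|=1$ reduces to the classical fact that, for a simplicial $d$-polytope, $g_k - \alpha\cdot 0 \le g_k^{\max} - \alpha \cdot 0$ for $0\le\alpha\le\frac{d+1}{d-1}$; more honestly, when $|R|=1$ the correction sums are empty ($g_{k-1}(\fF_\emptyset)=0$), so the statement is just $g_k(\fF_R)\le\tubh[1]{k}{R}$, which is \eqref{equ:ub_from_rec1}. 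For $k=0$ both sides are easily checked to be equal (as in the $k=0$ computation inside the proof of Theorem \ref{theor:ub_from_rec}). So the work is the inductive step for $|R|=m\ge 2$ and $1\le k\le\lexp{d+|R|-1}$.

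For the inductive step, first I would write $g_k(\fF_R)=h_k(\fF_R)-h_{k-1}(\fF_R)$ and apply the recurrence \eqref{hkFRrec} to $h_k(\fF_R)$, exactly as in \eqref{ind:1}, obtaining
\[
g_k(\fF_R)\le\frac{n_R-d-|R|}{k}\,h_{k-1}(\fF_R)+\sum_{i\in R}\frac{n_i}{k}\,g_{k-1}(\fF_{R\sm\{i\}}).
\]
Applying the same recurrence one dimension down to each $g_{k-1}(\fF_{R\sm\{i\}})$ (again via $g=h-h$ and \eqref{hkFRrec} for the sets $R\sm\{i\}$), one produces a similar expression for $\sum_{i\in R}g_{k-1}(\fF_{R\sm\{i\}})$ involving $\sum_{i\in R}h_{k-2}(\fF_{R\sm\{i\}})$ and $\sum_{i\in R}\sum_{j\in R\sm\{i\}}g_{k-2}(\fF_{R\sm\{i,j\}})$. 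The key point is that after forming the combination $g_k(\fF_R)-\alpha\sum_{i\in R}g_{k-1}(\fF_{R\sm\{i\}})$, the coefficient structure lets one invoke the inductive hypothesis for $(R\sm\{i\},k-1)$ in the form $h_{k-1}(\fF_R)-\alpha\sum h_{k-2}(\cdot)\le\tubh{k-1}{R}-\alpha\sum\tubh{k-2}{\cdot}$ together with the Lemma \ref{lem:hk-alpha}-type bound for $h$ (which, since we are proving Lemma \ref{lem:gk-alpha} as a stepping stone to Lemma \ref{lem:hk-alpha}, must be arranged as a joint induction: prove the $g$-version at level $k$, then the $h$-version at level $k$). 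The sign pattern $(-1)^{|R|-|S|}$ and the identity used in \eqref{ind:5}, namely $\sum_{i\in R}\frac{n_i}{k}\sum_{S\subseteq R\sm\{i\}}(\cdots)=\sum_{S\subset R}\frac{n_{R\sm S}}{k}(\cdots)$, are exactly what make the $n_i$-weighted sums collapse; the same bookkeeping should reproduce the right-hand side $\tubh[1]{k}{R}-\alpha\sum_{i\in R}\tubh[1]{k-1}{R\sm\{i\}}$ because the functions $\tubh{\cdot}{\cdot}$ were \emph{defined} to satisfy precisely the recurrence \eqref{hkFRrec} (see the displayed recurrence just before Lemma \ref{lem:hk-alpha}), so the $\Phi$-side of the inequality satisfies the analogue of \eqref{ind:1} with equality.

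The place where the hypothesis $0\le\alpha\le\frac{d+1}{d-1}$ must actually be used — and the step I expect to be the main obstacle — is in controlling the \emph{sign} of the coefficients that multiply the inductively-bounded quantities after the dust settles. Concretely, the combination $\frac{n_R-d-|R|}{k}h_{k-1}(\fF_R)+\dots$ is monotone in $h_{k-1}(\fF_R)$ only if $n_R-d-|R|\ge 0$, which is fine for $|R|=m\ge2$ summands each with $n_i\ge d+1$; but the delicate part is that after subtracting $\alpha\sum_i g_{k-1}(\fF_{R\sm\{i\}})$ and re-expanding, certain terms acquire a factor like $(d+1)-\alpha(d-1)$ or $1-\alpha\cdot\frac{d-1}{d+1}$ (coming from comparing $h_k$ growth against $h_{k-1}$ growth across one Minkowski level), and these must be $\ge 0$ for the inductive inequalities to be applied in the correct direction. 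This is exactly the interval $0\le\alpha\le\frac{d+1}{d-1}$. I would isolate this as a short sign-checking sublemma: after the algebraic reduction, the coefficient of each $\bigl(h_\ast(\fF_\ast)-\tubh{\ast}{\ast}\bigr)$-type discrepancy is nonnegative precisely on that $\alpha$-range, and then the inductive hypotheses finish the proof. The routine but lengthy binomial manipulations (of the kind carried out in the proof of Theorem \ref{theor:ub_from_rec}, using $\sum_{0\le k\le l}\binom{l-k}{m}\binom{q+k}{n}=\binom{l+q+1}{m+n+1}$) I would relegate to verifying that the $\Phi$-side telescopes correctly; they contain no new idea.
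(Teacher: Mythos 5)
Your opening move is exactly right and matches the paper: apply the recurrence of Theorem \ref{thm:hkFRrec} once, form the $\alpha$-combination, and reduce to
\[
g_k(\fF_R)-\alpha\sum_{i\in R}g_{k-1}(\fF_{R\sm\{i\}})\le\frac{n_R-d-|R|}{k}\,h_{k-1}(\fF_R)+\sum_{i\in R}\Bigl(\frac{n_i}{k}-\alpha\Bigr)g_{k-1}(\fF_{R\sm\{i\}}),
\]
with $0\le\alpha\le\tfrac{d+1}{d-1}$ used to make every coefficient on the right nonnegative (for the second coefficient one uses $k\le\lexp{d+|R|-1}$, $n_i\ge d+1$ and $|R|\le r\le d-1$ to get $\tfrac{n_i}{k}\ge\tfrac{d+1}{d-1}\ge\alpha$). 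But at this point you diverge and overcomplicate: the paper does \emph{not} apply the recurrence ``one dimension down'' to the $g_{k-1}(\fF_{R\sm\{i\}})$, and does \emph{not} run a joint induction with Lemma \ref{lem:hk-alpha}. Once all coefficients on the right are nonnegative, you may simply substitute the upper bounds $h_{k-1}(\fF_R)\le\tubh{k-1}{R}$ and $g_{k-1}(\fF_{R\sm\{i\}})\le\tubh[1]{k-1}{R\sm\{i\}}$, both of which are already available from Theorem \ref{theor:ub_from_rec}; the resulting right-hand side collapses to $\tubh[1]{k}{R}-\alpha\sum_{i\in R}\tubh[1]{k-1}{R\sm\{i\}}$ precisely because $\tubh{\cdot}{\cdot}$ satisfies the same recurrence with equality. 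No further induction is needed, and Lemma \ref{lem:hk-alpha} is then derived afterwards by telescoping $g$'s into $h$'s (not the other way around).

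The danger in your plan is concrete: if you re-expand $g_{k-1}(\fF_{R\sm\{i\}})$ through the recurrence inequality, you would be inserting an upper bound for a quantity that also enters the left-hand side with a negative ($-\alpha$) sign, so the sign bookkeeping does not obviously close, and you would need inductive hypotheses with $\alpha$-corrections at different levels $R$ and $R\sm\{i\}$ simultaneously. The paper sidesteps all of this by observing that after forming the $\alpha$-combination, every remaining term already appears with a nonnegative coefficient, so the bare (non-$\alpha$-corrected) bounds of Theorem \ref{theor:ub_from_rec} suffice. I would recommend you drop the joint induction and the second application of the recurrence, and instead state and verify the coefficient nonnegativity explicitly, then substitute.
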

\begin{proof}
  Let us recall the recurrence relation from Theorem \ref{thm:hkFRrec}:
  \begin{equation*}
    h_k(\fF_R)\le\frac{n_{R}-d-|R|+k}{k}h_{k-1}(\fF_R)+
    \sum_{i\in{}R}\frac{n_i}{k}g_{k-1}(\fF_{R\sm\{i\}}).
  \end{equation*}
  Subtracting $h_{k-1}(\fF_R)+\alpha\sum_{i\in{}R}g_{k-1}(\fF_{R\sm\{i\}})$
  from both sides of the inequality we get:
  \begin{equation}\label{equ:gk-alpha}
    g_k(\fF_R)-\alpha \sum_{i\in{}R}g_{k-1}(\fF_{R\sm\{i\}})
    \le\frac{n_{R}-d-|R|}{k}h_{k-1}(\fF_R)+
    \sum_{i\in{}R}\left(\frac{n_i}{k}-\alpha\right)g_{k-1}(\fF_{R\sm\{i\}}).
  \end{equation}
  Observe that the coefficient of $h_{k-1}(\fF_R)$ in \eqref{equ:gk-alpha}
  is non-negative:
  \begin{equation*}
    n_R-d-|R|\ge{}|R|(d+1)-d-|R|
    =d|R|+|R|-d-|R|
    =d(|R|-1)\ge{}0.
  \end{equation*}
  The same holds for the coefficient of $g_{k-1}(\fF_{R\sm\{i\}})$ in
  \eqref{equ:gk-alpha}, since:
  \begin{equation}
    \frac{n_i}{k}
    \ge\frac{d+1}{\lexp{d+|R|-1}}
    \ge\frac{d+1}{\frac{d+|R|-1}{2}}
    =\frac{2d+2}{d+|R|-1}
    \ge\frac{2d+2}{d+(d-1)-1}
    =\frac{2d+2}{2d-2}
    \ge\alpha,
  \end{equation}
  where we used the fact that $|R|\le{}r\le{}d-1$.
  Hence, we can bound \eqref{equ:gk-alpha} from above by substituting
  $g_{k-1}(\fF_R)$ and $g_{k-1}(\fF_{R\sm\{i\}})$, $i\in{}R$, by
  $\tubh[1]{k-1}{R}$ and $\tubh[1]{k-1}{R\sm\{i\}}$, $i\in{}R$,
  respectively. This gives:
  \begin{align*}
    g_k(\fF_R)-\alpha \sum_{i\in{}R}g_{k-1}(\fF_{R\sm\{i\}})
    &\le
    \frac{n_R-d-|R|}{k}\tubh[1]{k-1}{R}
    +\sum_{i\in{}R}\left(\frac{n_i}{k}-\alpha\right)\tubh[1]{k-1}{R\sm\{i\}}\\
    %%%
    &=\frac{n_R-d-|R|}{k}\tubh[1]{k-1}{R}
    +\sum_{i\in{}R}\frac{n_i}{k}\tubh[1]{k-1}{R\sm\{i\}}\\
    &\qquad
    -\alpha\sum_{i\in{}R}\tubh[1]{k-1}{R\sm\{i\}}\\
    %%%
    &=\tubh[1]{k}{R}-\alpha\sum_{i\in{}R}\tubh[1]{k-1}{R\sm\{i\}}.\qedhere
  \end{align*}
\end{proof}

Having established Lemma \ref{lem:gk-alpha}, it is now straightforward
to prove Lemma \ref{lem:hk-alpha}.

\begin{proof}[Proof of Lemma \ref{lem:hk-alpha}]
  First observe that $h_i(\fF_X)$ may be written as a telescopic sum
  as follows:
  \begin{equation}\label{equ:telescopic}
    h_i(\fF_X)=h_0(\fF_X)+\sum_{\ell=0}^{i-1}g_{i-\ell}(\fF_X).
  \end{equation}
  Since $h_0(\fF_X)=g_0(\fF_X)$, the above expansion may be written in
  the more concise form:
  \begin{equation}\label{equ:telescopic2}
    h_i(\fF_X)=\sum_{\ell=0}^{i}g_{i-\ell}(\fF_X).
  \end{equation}
  Using relations \eqref{equ:telescopic} and \eqref{equ:telescopic2},
  and applying Lemma \ref{lem:gk-alpha}, we get:
  \begin{align*}
    h_k(\fF_R)-\alpha \sum_{i\in{}R}h_{k-1}(\fF_{R\sm\{i\}})
    &=h_0(\fF_R)+\sum_{\ell=0}^{k-1}g_{k-\ell}(\fF_R)
    -\alpha\sum_{i\in{}R}\sum_{\ell=0}^{k-1}g_{k-1-\ell}(\fF_{R\sm\{i\}})\\
    %%%%%
    &=h_0(\fF_R)+\sum_{\ell=0}^{k-1}\left(g_{k-\ell}(\fF_R)
    -\alpha\sum_{i\in{}R}g_{k-1-\ell}(\fF_{R\sm\{i\}})\right)\\
    %%%%%
    &\le{}\tubh{0}{R}+\sum_{\ell=0}^{k-1}\left(\tubh[1]{k-\ell}{R}
    -\alpha\sum_{i\in{}R}\tubh[1]{k-1-\ell}{R\sm\{i\}}\right)\\
    %%%%%
    &=\tubh{0}{R}+\sum_{\ell=0}^{k-1}\tubh[1]{k-\ell}{R}
    -\alpha\sum_{i\in{}R}\sum_{\ell=0}^{k-1}\tubh[1]{k-1-\ell}{R\sm\{i\}}\\
    %%%%%
    &=\tubh{k}{R}-\alpha\sum_{i\in{}R}\tubh{k-1}{R\sm\{i\}},
  \end{align*}
  where we also used the fact that $h_0(\fF_R)=(-1)^{|R|-1}=\tubh{0}{R}$.
\end{proof}

The next theorem provides upper bounds for $h$-vectors of $\fF_R$ and
$\kK_R$, as well as necessary and sufficient conditions for these upper 
bounds to be attained. 
\begin{theorem}\label{thm:ub-hkF-hkK}
  For all $0\le{}k\le{}d+|R|-1$, we have:
  \begin{enumerate}[(i)]
  \item $h_k(\fF_R)\le{}\tubh{k}{R}$,
  \item $h_k(\kK_R)\le{}\tubk{k}{R}$.
  \end{enumerate}
  Equalities hold for all $k$ if and only if the Cayley polytope
  $\cC_R$ is Minkowski-neighborly.
\end{theorem}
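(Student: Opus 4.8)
I would prove (i), (ii), and the equality characterization simultaneously, by induction on $|R|$. The base case $|R|=1$ is the Upper Bound Theorem for (the ``as simplicial as possible'' version of) $P_1$: here $\fF_{\{1\}}=\kK_{\{1\}}=\bx{}P_1$, $\tubh{k}{\{1\}}=\tubk{k}{\{1\}}$ is the $h$-number of the cyclic polytope, and equality holds for all $k$ iff $P_1$ is neighborly. The organizing observation is that Theorem \ref{theor:DS} (relation \eqref{equ:DF2}), together with the defining relation $\tubh{k}{R}=\tubk{d+|R|-1-k}{R}$, makes statement (i) at index $k$ equivalent to statement (ii) at index $d+|R|-1-k$; since both $h$-vectors are supported on $\{0,\dots,d+|R|-1\}$, statements (i) and (ii) are equivalent. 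Moreover Theorem \ref{theor:ub_from_rec} already gives $h_k(\fF_R)\le\tubh{k}{R}$ for $0\le{}k\le\lexp{d+|R|-1}$ (on this range the right-hand side of \eqref{equ:ub_from_rec2} is, by definition, $\tubh{k}{R}$), with equality throughout this range iff $\cC_R$ is $R$-neighborly. Hence the only remaining task is to prove (i) on the \emph{upper} half $\lexp{d+|R|-1}<k\le{}d+|R|-1$ --- equivalently, by Dehn--Sommerville, (ii) on the lower half of $\kK_R$.

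\textbf{Bounding the lower half of $\mb{h}(\kK_R)$.} For $j\le\lexp{d+|R|-1}$, relation \eqref{hkKF} reads
\[
  h_j(\kK_R)=h_j(\fF_R)+\sum_{\emptyset\subset{}S\subset{}R}g_j^{(|R|-|S|)}(\fF_S),
  \qquad
  \tubk{j}{R}=\tubh{j}{R}+\sum_{\emptyset\subset{}S\subset{}R}\tubh[|R|-|S|]{j}{S}.
\]
The term $S=R$ is controlled by Theorem \ref{theor:ub_from_rec}, since $j$ is in the lower half: $h_j(\fF_R)\le\tubh{j}{R}$. For each proper $S\subset{}R$ the inductive hypothesis supplies $h_i(\fF_S)\le\tubh{i}{S}$ for \emph{all} $i$, and Theorem \ref{theor:DS} for $\cC_S$ relays this over the whole of $\mb{h}(\fF_S)$. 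However $g_j^{(|R|-|S|)}(\fF_S)-\tubh[|R|-|S|]{j}{S}$ is the alternating combination $\sum_{i\ge0}(-1)^i\binom{|R|-|S|}{i}\bigl(h_{j-i}(\fF_S)-\tubh{j-i}{S}\bigr)$ of these nonpositive errors, so an entry-by-entry estimate is insufficient. The plan is to regroup $\sum_{\emptyset\subset{}S\subset{}R}g_j^{(|R|-|S|)}(\fF_S)$ into a nonnegative combination of the quantities $h_i(\fF_X)-\alpha\sum_{i'\in{}X}h_{i-1}(\fF_{X\sm\{i'\}})$ of Lemma \ref{lem:hk-alpha}, with weights $\alpha\in[0,\tfrac{d+1}{d-1}]$ dictated by the binomial/Eulerian coefficients produced by the iterated stellar subdivisions; applying Lemma \ref{lem:hk-alpha} (using Theorem \ref{theor:DS} for each $\cC_X$ whenever $j$ falls outside the range $[0,\lexp{d+|X|-1}]$ in which the lemma is stated) then bounds the sum above by the corresponding combination of the $\tubh{}{}$'s, which collapses to $\sum_{\emptyset\subset{}S\subset{}R}\tubh[|R|-|S|]{j}{S}$. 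Together with the $S=R$ term this yields $h_j(\kK_R)\le\tubk{j}{R}$, and by the reductions in the first paragraph (i) and (ii) follow for all $k$.

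\textbf{The equality case.} If $\cC_R$ is Minkowski-neighborly then every $\cC_S$, $\emptyset\subset{}S\subseteq{}R$, is $S$-neighborly, so by Lemma \ref{lem:eq_claim} $h_k(\fF_S)=\tubh{k}{S}$ on the lower half and hence, by Theorem \ref{theor:DS}, for all $k$; substituting into \eqref{hkKF} gives $h_k(\kK_R)=\tubk{k}{R}$, and trivially $h_k(\fF_R)=\tubh{k}{R}$, for all $k$. Conversely, suppose equality holds everywhere. Equality on the lower half of (i) and Lemma \ref{lem:eq_claim} force $\cC_R$ to be $R$-neighborly. For a proper $S$: since $h_i(\fF_S)\le\tubh{i}{S}$ for all $i$ (the theorem for $S$) and each $f$-entry is a nonnegative combination of $h$-entries, $f_k(\fF_S)$ is at most its value for an $S$-neighborly $\cC_S$, with equality for all $k$ iff $\cC_S$ is $S$-neighborly; the identity $f_k(\kK_R)=\sum_{\emptyset\subset{}S\subseteq{}R}f_k(\fF_S)$ and the maximality of $\mb{f}(\kK_R)$ (equivalent, via \eqref{hf}, to equality in (ii)) then pin each $f_k(\fF_S)$ at its maximum, so every $\cC_S$ is $S$-neighborly, i.e.\ $\cC_R$ is Minkowski-neighborly.

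\textbf{Where the difficulty lies.} The hard part is the second paragraph. Since the higher-order $g$-vectors $g_j^{(m)}(\fF_S)$, $m\ge1$, recombine the entries of $\mb{h}(\fF_S)$ with alternating signs, the target bound cannot be obtained entry-by-entry from $h_i(\fF_S)\le\tubh{i}{S}$; it genuinely needs the $\alpha$-parametrized strengthening of Lemma \ref{lem:hk-alpha} (available precisely because $n_i\ge{}d+1$ and $|R|\le{}r\le{}d-1$) together with the Dehn--Sommerville symmetry of each sub-Cayley polytope to absorb the positive contributions, and the bookkeeping of which weight $\alpha$ to attach to which $\cC_S$ --- so that the resulting combination of $\tubh{}{}$-values telescopes exactly to $\tubk{j}{R}$, while all the ranges of validity of Lemma \ref{lem:hk-alpha} are respected --- is the technical heart of the argument.
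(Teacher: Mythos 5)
Your high-level plan matches the paper's exactly: induction on $|R|$; the lower half of (i) from Theorem \ref{theor:ub_from_rec}; the upper half by Dehn--Sommerville via $\tubh{k}{R}=\tubk{d+|R|-1-k}{R}$; and Lemma \ref{lem:hk-alpha} (the $\alpha$-parametrized bound) as the engine that produces the lower half of (ii). You also correctly diagnose why an entry-by-entry estimate fails: the $m$-order $g$-vectors mix the $h$-entries of $\fF_S$ with alternating signs.

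But there is a genuine gap precisely at the point you yourself flag as ``the technical heart.'' You say ``the plan is to regroup $\sum_{\emptyset\subset{}S\subset{}R}g_j^{(|R|-|S|)}(\fF_S)$ into a nonnegative combination of the quantities $h_i(\fF_X)-\alpha\sum_{i'\in{}X}h_{i-1}(\fF_{X\sm\{i'\}})$, with weights $\alpha$ dictated by the binomial/Eulerian coefficients'' --- but you never exhibit such a regrouping, check that the required weights $\alpha$ lie in $[0,\tfrac{d+1}{d-1}]$, or verify that it telescopes to $\sum_{\emptyset\subset{}S\subset{}R}\tubh[|R|-|S|]{j}{S}$. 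This is not a routine algebraic manipulation: the needed identity is the expansion \eqref{equ:hKR-expansion} (from \cite[Lemma 5.14]{as-ubt-14}), which rewrites $h_k(\kK_R)$ as a double sum over $j$ and over subsets $S\subseteq R$ of nonnegative binomial coefficients $\binom{|R|-|S|}{2j}$ times $h_{k-2j}(\fF_S)-\tfrac{1}{2j+1}\sum_{i\in{}S}h_{k-2j-1}(\fF_{S\sm\{i\}})$; here $\alpha=\tfrac{1}{2j+1}\le{}1\le\tfrac{d+1}{d-1}$ for all $j\ge{}0$, so Lemma \ref{lem:hk-alpha} does apply, but that only becomes visible once the identity is in hand. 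Without it, your proof reduces to ``if such a decomposition exists, then we are done,'' which is a statement of strategy, not a proof. On the other hand, your treatment of the equality case is actually more explicit than the paper's one-line appeal to Theorem \ref{theor:ub_from_rec}, and it is sound, though the phrase ``with equality for all $k$ iff $\cC_S$ is $S$-neighborly'' should say Minkowski-neighborly (equality in the full $h$-vector of $\fF_S$ is governed by the inductive statement for $S$, not just by Lemma \ref{lem:eq_claim}).
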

\begin{proof}
  To prove the upper bounds use recursion on the size of $|R|$.
  For $|R|=1$, the result for both $h_k(\fF_R)$ and $h_k(\kK_R)$ comes from
  the UBT for $d$-polytopes.
  For $|R|>1$, we assume that the bounds hold for all $S$ with
  $\emptyset\subset{}S\subset{}R$, and for all $k$ with $0\le{}k\le{}d+|S|-1$.
  Furthermore, the upper bound for $h_k(\fF_R)$ for $k\le{}\lexp{d+|R|-1}$ is
  immediate from Theorem \ref{theor:ub_from_rec}. To prove the upper
  bound for $h_k(\kK_R)$, $0\le{}k\le{}\lexp{d+|R|-1}$, we use the
  following expansion for $h_k(\kK_R)$ (cf.~\cite[Lemma 5.14]{as-ubt-14}):
  \begin{equation}\label{equ:hKR-expansion}
    \begin{aligned}
      h_k(\kK_R)&=\sum_{j=0}^{\lexp{|R|}}
      \sum_{s=c-2j-1}^{|R|-2j}\sum_{\substack{S\subseteq{}R\\|S|=s}}
      \binom{|R|-s}{2j}\left(h_{k-2j}(\fF_S)
      -\frac{1}{2j+1}\sum_{i\in{}S}h_{k-2j-1}(\fF_{S\sm\{i\}})\right)\\
      &\quad
      +\sum_{j=0}^{\lexp{|R|}}\sum_{\substack{S\subset{}R\\|S|=c-2j+1}}\binom{|R|-|S|}{2j}
      \left(h_{k-2j}(\fF_S)
      -\frac{1}{2j+1}\sum_{i\in{}S}h_{k-2j-1}(\fF_{S\sm\{i\}})\right),
    \end{aligned}
  \end{equation}
  where $c$ depends on $k$, $d$ and $|R|$.
  Under the assumption that $r<d$, it is easy to show that (see
  Lemma \ref{lem:hk-alpha} in Section \ref{app:sec:ub-hKR} below):
  \begin{equation}\label{equ:expansion-ub}
    h_{k-2j}(\fF_S)
    -\frac{1}{2j+1}\sum_{i\in{}S}h_{k-2j-1}(\fF_{S\sm\{i\}})
    \le
    \tubh{k-2j}{S}-\frac{1}{2j+1}\sum_{i\in{}S}\tubh{k-2j-1}{S\sm\{i\}}.
  \end{equation}
  Substituting the upper bound from \eqref{equ:expansion-ub} in
  \eqref{equ:hKR-expansion}, and reversing the derivation logic for
  \eqref{equ:hKR-expansion}, we deduce that $h_k(\kK_R)\le{}\tubk{k}{R}$.

  For $k>\lexp{d+|R|-1}$ we have:
  \begin{align*}
    h_k(\fF_R)&=h_{d+|R|-1-k}(\kK_R)\le\tubk{d+|R|-1-k}{R}=\tubh{k}{R},
    \quad\text{and},\\
    h_k(\kK_R)&=h_{d+|R|-1-k}(\fF_R)\le\tubh{d+|R|-1-k}{R}=\tubk{k}{R}.
  \end{align*}

  The necessary and sufficient conditions are easy consequences of the
  equality claim in Theorem \ref{theor:ub_from_rec}.
\end{proof}

%%%%%%%%%%%%%%%%%%%%%%%%%%%%%%%%%%%%%%%%%%%%%%%%%%%%%%%%%%%%%%%%%%%%%%%%%%%%%%%%
For any $d\ge{}2$, $\emptyset\subset{}R\subseteq{}[r]$,
$0\le{}k\le{}d+|R|-1$, and $n_i\in\naturals$, $i\in{}R$, with
$n_i\ge{}d+1$, let
\begin{equation*}
  \spans{k}{R}=
  \sum_{\emptyset\subset{}R\subseteq{}[r]}
  (-1)^{r-|R|}f_k\bigl(C_{d+r-1}(n_R)\bigr)+\sum_{i=0}^{\lexp{d+r-2}}
  \tbinom{i}{k-d-r+1+i}\sum_{\emptyset\subset{}R\subset{}[r]}
  \tubh[r-|R|]{i}{R},
\end{equation*}
where $C_\delta(n)$ stands for the cyclic $\delta$-polytope with 
$n$ vertices. It is straightforward to verify that for
$0\le{}k\le{}\lexp{d+|R|-1}$, $\spans{k}{R}$ simplifies to 
$\sum_{\emptyset\subset{}S\subseteq{}R}(-1)^{|R|-|S|}\binom{n_R}{k}$.
We are finally ready to state and prove the main result of the paper.

\begin{theorem}
\label{theor:ubt}
Let $P_1,\ldots,P_r$ be $r$ $d$-polytopes, $r<d$, with
$n_1,\ldots,n_r$ vertices respectively.
Then, for all $1\le{}k\le{}d$, we have:
\begin{equation*}
  f_{k-1}(P_{[r]})\leq{}\spans{k+r}{[r]}.
  \label{equ:ubt}
\end{equation*}
Equality holds for all $0\leq{}k\leq{}d$ if and only if the Cayley polytope 
$\cC_{[r]}$ of $P_1,\ldots,P_r$ is Minkowski-neighborly.
\end{theorem}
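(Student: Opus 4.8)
The plan is to transport the count to the mixed complex $\fF_{[r]}$ via the Cayley trick, to bound a single alternating sum of $h$-numbers of $\fF_{[r]}$ using Theorem~\ref{thm:ub-hkF-hkK}, and then to check that the closed form obtained is exactly $\spans{k+r}{[r]}$. First I would use relation \eqref{equ:fkW}, read with $k$ replaced by $k+r-1$, to get $f_{k-1}(\MS)=f_{k+r-2}(\fF_{[r]})$ for all $1\le{}k\le{}d$. Since $\dim\fF_{[r]}=d+r-2$, the $f$--$h$ conversion \eqref{def:f-h}--\eqref{hf}, together with the elementary identity $\binom{d+r-1-i}{(k+r-1)-i}=\binom{d+r-1-i}{d-k}$, yields
\[
  f_{k-1}(\MS)=\sum_{i=0}^{d+r-1}\binom{d+r-1-i}{d-k}\,h_i(\fF_{[r]}),\qquad 1\le{}k\le{}d.
\]
All the binomial coefficients here are nonnegative integers, so the right-hand side may be bounded term by term.

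Next, applying the estimate $h_i(\fF_{[r]})\le\tubh{i}{[r]}$ of Theorem~\ref{thm:ub-hkF-hkK}(i), which holds for every $i$, I obtain $f_{k-1}(\MS)\le\Theta_k$, where $\Theta_k:=\sum_{i=0}^{d+r-1}\binom{d+r-1-i}{d-k}\tubh{i}{[r]}$. The heart of the argument is then the purely combinatorial identity $\Theta_k=\spans{k+r}{[r]}$. To prove it I would split the sum at $\mu=\lexp{d+r-1}$: on the lower range $i\le\mu$ one has $\tubh{i}{[r]}=\sum_{\emptyset\subset{}S\subseteq[r]}(-1)^{r-|S|}\binom{n_S-d-r+i}{i}$ by definition, whereas on the upper range $i>\mu$ the ``Dehn--Sommerville'' relation $\tubh{i}{[r]}=\tubk{d+r-1-i}{[r]}$ and the expansion $\tubk{j}{[r]}=\sum_{\emptyset\subset{}S\subseteq[r]}\tubh[r-|S|]{j}{S}$ rewrite the summand as a sum over subsets $S$ of $(r-|S|)$-fold finite differences of the analogous lower-order quantities attached to $\cC_S$. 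Interchanging the order of summation, grouping by $S$, and unfolding these recursive definitions by a finite induction on $|S|$ (using, as in the proof of Lemma~\ref{lem:eq_claim}, Relation~5.26 of \cite{gkp-cm-89}), the $S=[r]$ part reassembles---via the Dehn--Sommerville equations for the neighborly cyclic $(d+r-1)$-polytope on $n_{[r]}$ vertices---into the cyclic face number indexed by $[r]$ that appears in $\spans{k+r}{[r]}$, while the contributions of the proper subsets $S\subset[r]$ split, according to whether the pertinent difference order vanishes, into the remaining alternating cyclic face numbers and into the correction double sum of the definition of $\spans{k+r}{[r]}$. This bookkeeping is the step I expect to be the main obstacle: it is where the relative (non-closed) nature of $\fF_{[r]}$ forces the correction term, and the combinatorial identities have to be tracked carefully.

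Finally, for the equality characterization: if $\cC_{[r]}$ is Minkowski-neighborly, then by the equality clause of Theorem~\ref{thm:ub-hkF-hkK} we have $h_i(\fF_{[r]})=\tubh{i}{[r]}$ for all $i$, hence every inequality above is an equality and $f_{k-1}(\MS)=\spans{k+r}{[r]}$ for all $1\le{}k\le{}d$ (the case $k=0$ being trivial). Conversely, suppose $f_{k-1}(\MS)=\spans{k+r}{[r]}$ for all $0\le{}k\le{}d$; taking $k=d$ in the displayed formula, where $\binom{d+r-1-i}{0}=1$ for every $i$, gives $\sum_{i}h_i(\fF_{[r]})=\sum_{i}\tubh{i}{[r]}$, and since $h_i(\fF_{[r]})\le\tubh{i}{[r]}$ for each $i$, all these must be equalities. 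The Dehn--Sommerville equations \eqref{equ:DF2} for $\cC_{[r]}$, combined with the relation $\tubk{i}{[r]}=\tubh{d+r-1-i}{[r]}$, then upgrade this to $h_i(\kK_{[r]})=\tubk{i}{[r]}$ for all $i$ as well, so the equality clause of Theorem~\ref{thm:ub-hkF-hkK} shows that $\cC_{[r]}$ is Minkowski-neighborly. (Throughout, the hypothesis $r<d$ enters through Theorem~\ref{thm:ub-hkF-hkK}, which requires it.)
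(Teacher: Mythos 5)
Your proposal follows essentially the same route as the paper's own proof: transport to $\fF_{[r]}$ by the Cayley trick, convert $f$ to $h$, bound each $h_i(\fF_{[r]})$ by $\tubh{i}{[r]}$ via Theorem~\ref{thm:ub-hkF-hkK}, split the resulting sum at $\lexp{d+r-1}$, apply $\tubh{i}{[r]}=\tubk{d+r-1-i}{[r]}$ and the cyclic-polytope face formula to reassemble $\spans{\cdot}{[r]}$, and read off the equality clause from Theorem~\ref{thm:ub-hkF-hkK}. Two small remarks: no induction on $|S|$ is needed in the reassembly step, since relation \eqref{equ:Phi-upper-half-last} already splits $\tubh{d+r-1-i}{[r]}$ in one pass into the cyclic-polytope contribution and the correction term $\sum_{\emptyset\subset S\subset[r]}\tubh[r-|S|]{i}{S}$ appearing verbatim in the definition of $\Xi$; and you should double-check the argument of $\Xi$---your displayed identity $\binom{d+r-1-i}{d-k}=\binom{d+r-1-i}{(k+r-1)-i}$ means $\Theta_k$ is the paper's bound for $f_{k+r-2}(\fF_{[r]})$, which the paper's own computation yields as $\spans{k+r-1}{[r]}$, so reconciling this with the theorem's stated $\spans{k+r}{[r]}$ requires resolving what appears to be an $f_k$-versus-$f_{k-1}$ shift in the definition of $\Xi$.
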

%%%%%%%%%%%%%%%%%%%%%%%%%%%proof%%%%%%%%%%%%%%%%%%%%%%%%%%%%%%%%%%%%%%%%%%%%%%%%
%%%%%%%%%%%%%%%%%%%%%%%%%%%%%%%%%%%%%%%%%%%%%%%%%%%%%%%%%%%%%%%%%%%%%%%%%%%%%%%%
\begin{proof}
 We start by recalling that:
  \begin{align*}
    f_{k-1}(\fF_{[r]})&=\sum_{i=0}^{d+r-1}\tbinom{d+r-1-i}{k-i} h_i(\fF_{[r]}).
   \end{align*}
  In view of Theorem \ref{thm:ub-hkF-hkK}, the above expression is
  bounded from above by: 
   \begin{align} 
    %%%%%%%%%%%%			
    &\sum_{i=0}^{\lexp{d+r-1}}\tbinom{d+r-1-i}{k-i}
    \tubh{i}{[r]}
    +\sum_{i=\lexp{d+r-1}+1}^{d+r-1}\tbinom{d+r-1-i}{k-i}\tubh{i}{[r]}
    \label{ubcalc2}\\
    %%%%%%%%%%%%%% 
    &\qquad=\sum_{i=0}^{\lexp{d+r-1}}\tbinom{d+r-1-i}{k-i}\tubh{i}{[r]}
    +\sum_{i=0}^{\lexp{d+r-2}}\tbinom{i}{k-d-r+1+i}
    \tubh{d+r-1-i}{[r]}
    \label{ubcalc3}\\
    %%%%%%%%%%%		
    &\qquad=\sideset{}{^{\,*}}{\sum}_{i=0}^{\frac{d+r-1}{2}}
    \biggl(\tbinom{d+r-1-i}{k-i}+\tbinom{i}{k-d-r+1+i}\biggr) 
    \sum_{\emptyset\subset{}R\subseteq{}[r]}(-1)^{r-|R|}\tbinom{n_R-d-r+i}{i}\notag\\
    &\qquad\qquad
    +\sum_{i=0}^{\lexp{d+r-2}}
    \tbinom{i}{k-d-r+1+i}\sum_{\emptyset\subset{}R\subset{}[r]}
    \tubh[r-|R|]{i}{R}
    \label{ubcalc6}\\
    %%%%%%%%%%%%%			
    &\qquad=\sum_{\emptyset\subset{}R\subseteq{}[r]}(-1)^{r-|R|}
    f_{k}\bigl(C_{d+r-1}(n_R)\bigr)    
    +\sum_{i=0}^{\lexp{d+r-2}}\tbinom{j}{k-d-r+1+i}
    \sum_{\emptyset\subset{}R\subset{}[r]}\tubh[r-|R|]{i}{R}
    \label{ubcalc7}
   \end{align}
   where to go:
   \begin{itemize}
   \item from \eqref{ubcalc2} to \eqref{ubcalc3} we changed the
     variable of the second sum from $i$ to $d+r-1-i$,
     %%%%%%%		
   \item from \eqref{ubcalc3} to \eqref{ubcalc6} we wrote the explicit 
     expression of $\tubh{k}{R}$ from relation
     \eqref{equ:Phi-upper-half-last},
     %%%%%%				
   \item from \eqref{ubcalc6} to \eqref{ubcalc7} we used that the number of 
     $(k-1)$-faces of a cyclic $\delta$-polytope with $n$ 
     vertices is $\sideset{}{^{\,*}}{\sum}_{i=0}^{\frac{\delta}{2}}
     \left(\binom{\delta-i}{k-i}+\binom{i}{k-\delta+i}\right)
     \binom{n-\delta-1+i}{i}$, where 
     $\displaystyle\sideset{}{^{\,*}}{\textstyle\sum}_{^{i=0}}^{_{\frac{\delta}{2}}}T_i$
     denotes the sum of the elements $T_0, T_1,\ldots,T_{\lexp{\delta}}$ 
     where the last term is halved if $\delta$ is even. 
   \end{itemize}
   
   Finally,  observing that the expression in \eqref{ubcalc7} is nothing but
   $\spans{k}{R}$, and recalling that $f_{k-1}(\fF_{[r]})=f_{k-r}(P_{[r]})$,
   we arrive at the upper bound in the statement of the theorem. 
   The equality claim is immediate from Theorem \ref{thm:ub-hkF-hkK}.
\end{proof}

\section{Tight bound construction}
\label{sec:tbconstruction}

%%%%%%%%%%%%%%%%%%%%%%%%%%%%%%%%%%%%%%%%%%%%%%%%%
\newcommand{\DDD}[2]{ D_{\scalebox{0.5}{#1}}(#2)}
%%%%%%%%%%%%%%%%%%%%%%%%%%%%%%%%%%%%%%%%%%%%%%%%%%%%%%%%%%%%
\newcommand{\yee}[3]{
{y}_{\scalebox{1}{\hspace{-0.2 cm}  
$_{_\epsilon}$}\scalebox{0.8}{\hspace{-0.1 cm} 
$\scalebox{0.7}{$#1$},\scalebox{0.7}{$#2$}$}}^{\scalebox{0.6}{$#3$}}
}
%%%%%%%%%%%%%%%%%%%%%%%%%%%%%%%%%%%%%%%%%%%%%%%%%%%%%%%%%%%%%
\newcommand{\yy}[3]{y_{\scalebox{0.5}{$#1,#2$}}^{\scalebox{0.6}{$#3$}}}
%%%%%%%%%%%%%%%%%%%%%%%%%%%%%%%%%%%%%%%%%%%%%%%%%%%%%%%%%%%%%%
\newcommand{\M}[2]{M_{\scalebox{0.7}{$#1$}}^{\scalebox{0.6}{$#2$}}}
%%%%%%%%%%%%%%%%%%%%%%%%%%%%%%%%%%%%%%%%%%%%%%%%%%%%%%%%%%%%%%%%%%%%%%%%%%%%%%%%
% macro for the unperturbed polytopes, previously denoted by Q_i: 
\newcommand{\QQ}[1]{\hat P_{#1}}
%%%%%%%%%%%%%%%%%%%%%%%%%%%%%%%%%%%%%%%
\renewcommand{\wW}{\hat \fF}
\newcommand{\hC}[1]{\hat \cC_{#1}}
%%%%%%%%%%%%%%%%%%%%%%%%%%%%%%%%%%%%%%%%%%%%%%%%%%%%%%%%%%%%%%%%%%%%%%%%%%%%%%%%

In this section we show that the bounds in Theorem \ref{theor:ubt} are
tight.
Before getting into the technical details, we  outline our approach. We start 
by considering the $(d-r+1)$-dimensional moment curve, which we embed in $r$ 
distinct subspaces of $\reals^d$. %(cf. Section \ref{ss:construction}).
We consider the $r$ copies of the $\pddo$-dimensional moment curve as different 
curves, and we perturb them appropriately, so that they become $d$-dimensional 
moment-like curves. The perturbation is controlled via a non-negative parameter 
$\zeta$, which will be chosen appropriately. We then choose points on these $r$ 
moment-like curves, all parameterized by a positive parameter $\tau$, which 
will again be chosen appropriately.
These points are the vertices of $r$ $d$-polytopes
$P_1,P_2,\ldots,P_r$, and
we show that, for all $\emptyset\subset{}R\subseteq{}[r]$,
the number of $(k-1)$-faces of $\fF_R$, where $|R|\le{}k\le{}\lexp{d+|R|-1}$, 
becomes equal to $\spans{k}{R}$ for small enough positive values of 
$\zeta$ and $\tau$.
Our construction produces \emph{projected prod-simplicial neighborly} 
polytopes (cf.~\cite{mpp-pnp-11}). For $\zeta=0$ our polytopes
are essentially the same as those in
\cite[Theorem 2.6]{mpp-pnp-11}, while for $\zeta>0$ we get \emph{deformed}
versions of those polytopes. % in \cite[Theorem 2.6]{mpp-pnp-11}.
The positivity of $\zeta$ allows us to ensure the tightness of the upper
bound on $f_k(\MS)$, not only for small, but also for large values of $k$.

At a more technical level (cf. Section \ref{ss:technical}), the proof that 
$f_{k-1}(\fF_R)=\spans{k}{R}$, for all $|R|\le{}k\le\lexp{d+|R|-1}$, 
is performed in two steps. We first consider the cyclic $(d-r+1)$-polytopes 
$\QQ{1},\ldots,\QQ{r}$, embedded in appropriate subspaces of $\reals^d$. The 
$\QQ{i}$'s are the \emph{unperturbed}, with respect to $\zeta$,
versions of the $d$-polytopes $\range{P}$ (i.e., the polytope $\QQ{i}$
is the polytope we get from $P_i$, when we set $\zeta$ equal to
zero). For each $\emptyset\subset{}R\subseteq[r]$  
we denote by $\hC{R}$ the Cayley polytope of $\QQ{i},i\in{}R$, seen as a 
polytope in $\reals^d$, and we focus on the set $\wW_R$ of its mixed faces. 
Recall that the polytopes $\QQ{i},i\in{}R$, are parameterized by the 
parameter $\tau$; we show that there exists a sufficiently small
positive value $\tau^\star$ for $\tau$, for which the number of
$(k-1)$-faces of $\wW_R$ is equal to $\spans{k}{R}$ for all
$|R|\le{}k\le{}\lexp{d+|R|-1}$. For $\tau$ equal to $\tau^\star$,
we consider the polytopes $\range{P}$ (with $\tau$ set to
$\tau^\star$), and show that for sufficiently small $\zeta$ (denoted
by $\zeta^{\lozenge}$), $f_{k-1}(\fF_R)$ is equal to $\spans{k}{R}$.

%%%%%%%%%%%%%%%%%%%%%%%%%%%%%%%%%%%%%%%%%%%%%%%%%%%%%%%%%%%%%%%%%%%%%%%%%%%%%%%%
%%%%%%%%%%%%%%%%%%%%%%%%%%%%%%%%%%%%%%%%%%%%%%%%%%%%%%%%%%%%%%%%%%%%%%%%%%%%%%%%
%%%%%% we might want to renew the command %%%%%%%%%%%%%
%%%%%%%%%%%%%%%%%%%%%%%%%%%%%%%%%%%%%%%%%%%%%%%%%%%%%%%%%%%%
\renewcommand{\yee}[3]{
{y}_{\scalebox{1}{\hspace{-0.15 cm}  
$_{_\epsilon}$}\scalebox{0.8}{\hspace{-0.1 cm} 
$\scalebox{0.7}{$#1$},\scalebox{0.7}{$#2$}$}}^{\scalebox{0.6}{$#3$}}
}
%%%%%%%%%%%%%%%%%%%%%%%%%%%%%%%%%%%%%%%%%%%%%%%%%%%%%%%%%%%%%%
\renewcommand{\yy}[3]{y_{\scalebox{0.7}{$#1,#2$}}^{\scalebox{0.6}{$#3$}}}

\renewcommand{\ye}[2]{{\tilde y_{\scalebox{0.7}{$#1,#2$}}}}
% tilde y
\newcommand{\ty}[3]{{\widetilde 
y^{\scalebox{0.6}{$#3$}}_{\scalebox{0.7}{$#1,#2$}}}}
%%%%%%%%%%%%%%%%%%%%%%%%%%%%%%%%%%%%%%%%%%%%%%%%%%%%%%%%%%%%%%%

In the remainder of this section we describe our construction in
detail.
For each $1\leq{}i\leq{}r$, we define the $d$-dimensional
moment-like curve\footnote{The curve $\gamma_i(t;\zeta)$,
  $\zeta>0$, is the image under an invertible linear transformation,
  of the curve $\hat\gamma_i(t)=(t,t^2,\ldots,t^{d-r+i},t^{d-r+i+2},\ldots,t^{d+1})$.
  Polytopes whose vertices are $n$ distinct points on this curve are
  combinatorially equivalent to the cyclic $d$-polytope with $n$
  vertices.}:
\begin{align*}
  \mc_i(t;\z)=&\overset{\hspace{0.3cm}\arrow i-th\;
    \mbox{\tiny coordinate}}{
    (\z t^{d-r+2},\ldots,\z t^{d-r+i},t,\z t^{d-r+i+2},\ldots,
    \z t^{d+1},t^2,\ldots,t^{d-r+1})},
\end{align*}
and the $d$-polytope 
\begin{align}
  \label{pol:Pi}
  P_i:= CH\{\mc_i(y_{i,1};\zeta),\ldots, \mc_i(y_{i,n_i};\zeta)\},
\end{align}
where the parameters $y_{i,j}$ belong to the sets
$Y_i=\{y_{i,1},\ldots,y_{i,n_i}\}$, $1\le{}i\le{}r$, whose elements
are determined as follows. Choose
\begin{itemize}
\item 
  $n_{[r]}+d+r$ arbitrary real numbers $x_{i,j}$ and $M_s$, such that:
  \begin{itemize}
  \item 
    $0<x_{i,1}<x_{i,1}+\epsilon<x_{i,2}<x_{i,2}+\epsilon<\cdots<
    x_{i,n_i}+\epsilon$, for $1\leq{}i\leq{}r-1$,
  \item $0<x_{r,1}<x_{r,1}+\epsilon<x_{r,2}<x_{r,2}+\epsilon<\cdots<
    x_{r,n_r}+\epsilon<M'_1<\cdots<M'_{d+r}$,
  \end{itemize}
  where $\epsilon>0$ is sufficiently small and $x_{i,n_i}<x_{i+1,1}$ for all 
  $i$, and 
\item $r$ non-negative integers $\beta_1,\beta_2,\ldots,\beta_r$, such that
  $\beta_1>\beta_2>\cdots>\beta_{r-1}>\beta_r\ge{}0$.
\end{itemize}
We then set $y_{i,j}:=x_{i,j}\tau^{\beta_i}$, 
$\ye{i}{j}:=(x_{i,j}+\epsilon)\tau^{\beta_i}$
and $M_i:=M'_i\tau^{\beta_r}$, where $\tau$ is a positive parameter.
The $y_{i,j}$'s and $\ye{i}{j}$'s are used to define determinants
whose value is positive for a small enough value of $\tau$ (see also
Lemma \ref{lem:asymptotic-tau} in the Appendix).
The positivity of these determinants is crucial in defining
supporting hyperplanes for the Cayley polytopes $\hC{R}$ and $\cC_R$
in Lemmas \ref{lem:Qstar} and \ref{lem:Pstar} below. 

Next, for each $1\leq{}i\leq{}r$, we  define 
$\QQ{i}:=\lim_{\zeta\rightarrow 0^+}P_i$. Clearly, each $\QQ{i}$ is a cyclic 
$(d-r+1)$-polytope embedded in the $\pddo$-flat $F_i$ of $\reals^d$, where 
$F_i=\{x_j=0\mid{}1\le{}j\le{}r\text{ and }j\ne{}i\}.$
The following lemma establishes the first step towards our construction.
%%%%%%%%%%%%%%%%%%%%%%%%%%%%%%%%%%%%%%%%%%%%%%%%%%%%%%%%%%%%%%%%%%%%%%%%%%%%%%%%
\begin{lemma}
\label{lem:Qstar}
There exists a sufficiently small positive value $\tau^\star$ for
$\tau$, such that, for any $\emptyset\subset{}R\subseteq{}[r]$, the set
of mixed faces $\wW_R$ of the Cayley polytope of the polytopes
$\QQ{1},\ldots,\QQ{r}$ constructed above, has
\begin{equation*}
  f_{k-1}(\wW_R)=\spans{k}{R},
  \quad
  |R|\le{}k\le{}\ltexp{d+|R|-1}.
\end{equation*}
\end{lemma}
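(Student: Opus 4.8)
The plan is to reduce Lemma \ref{lem:Qstar} to a statement about faces of the Cayley polytopes $\hC{R}$ and then to establish that statement through an asymptotic sign analysis in the parameter $\tau$. By inclusion--exclusion over the sets $\wW_S$ and their closures (exactly as in the characterization of maximal mixed-face numbers in Lemma \ref{lem:eq_claim}), the identities $f_{k-1}(\wW_R)=\spans{k}{R}$, $|R|\le k\le\lexp{d+|R|-1}$, hold for every $\emptyset\subset R\subseteq[r]$ precisely when the Cayley polytopes $\hC{R}$ carry the Minkowski-neighborly combinatorics, i.e.\ when a prescribed, explicit, $n_i$-dependent family of spanning subsets of $\bigcup_{i\in R}\mathrm{vert}(\QQ{i})$ are faces of $\hC{R}$. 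So it suffices to produce a single $\tau^\star>0$ for which, simultaneously over all $R$, each spanning subset in that family is a face of the corresponding $\hC{R}$.

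To decide face-membership, fix $R$ and a spanning subset $W$. In the Cayley embedding the vertex of $\hC{R}$ coming from the $j$-th vertex of $\QQ{i}$ is $(\me_i,\mc_i(y_{i,j};0))$, and $W$ is a face of $\hC{R}$ if and only if there is an affine hyperplane through all points of $W$ keeping every other vertex of $\hC{R}$ strictly on one side. Since the curves $t\mapsto\mc_i(t;0)$ are polynomial, the existence and the orientation of such a hyperplane are governed by a Vandermonde-type alternant $D_W(\tau)$ whose entries are monomials in the parameters $y_{i,j}$ together with the $\epsilon$-shifted parameters $\ye{i}{j}=(x_{i,j}+\epsilon)\tau^{\beta_i}$ (the shifts guaranteeing general position and strict separation): $W$ is a face of $\hC{R}$ exactly when $D_W(\tau)>0$. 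As $W$ and $R$ range over finite sets, it is enough to show that every such $D_W(\tau)$ is positive for all sufficiently small $\tau>0$.

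The key step is the asymptotics of $D_W(\tau)$. Substituting $y_{i,j}=x_{i,j}\tau^{\beta_i}$ with $\beta_1>\beta_2>\cdots>\beta_r\ge 0$ turns $D_W(\tau)$ into a Laurent polynomial in $\tau$ whose term of lowest order is a nonzero multiple of a product of strictly positive differences $x_{i,j'}-x_{i,j}$ (here the orderings $x_{i,1}<x_{i,2}<\cdots$ enter), multiplied by a sign that depends only on $R$, on how the vertices of $W$ split among the $\QQ{i}$, and on the relation of the $\beta_i$ to the power $d-r+i+1$ missing from the $i$-th moment-like curve. Verifying that this leading coefficient is positive is precisely Lemma \ref{lem:asymptotic-tau}; it gives $D_W(\tau)>0$ for all $0<\tau<\tau_W$, and one then sets $\tau^\star:=\min_{R,W}\tau_W>0$. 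At $\tau=\tau^\star$ the hyperplanes so obtained exhibit all the required faces of every $\hC{R}$, which by the reduction proves the lemma; the very same alternants, now evaluated at $\zeta>0$, will resurface in the proof of Lemma \ref{lem:Pstar}.

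The principal obstacle is the sign of the dominant $\tau$-monomial of $D_W(\tau)$: one has to isolate it among many competing terms, which demands careful bookkeeping of the multiset of curve parameters appearing in $W$, organized by summand, and of their weighting by the exponents $\beta_i$. As a sanity check and an alternative route, for $\zeta=0$ the polytopes $\QQ{i}$ coincide, up to the choice of parameters, with the projected prod-simplicial neighborly polytopes of \cite[Theorem 2.6]{mpp-pnp-11}, so this step can instead be imported from there; carrying it out explicitly through $D_W(\tau)$ keeps the construction self-contained and is exactly what allows the subsequent perturbation in $\zeta$, needed for Lemma \ref{lem:Pstar}, to go through.
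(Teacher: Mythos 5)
Your overall plan — reduce to $R$-neighborliness via Lemma \ref{lem:eq_claim}, exhibit supporting hyperplanes via Vandermonde-type alternants, invoke Lemma \ref{lem:asymptotic-tau} for positivity at small $\tau$, then take a minimum over finitely many $(R,W)$-pairs — matches the paper's strategy in outline. However, there is a genuine gap in how you define the alternant $D_W(\tau)$ when $\emptyset\subset R\subsetneq[r]$.

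You build $D_W(\tau)$ directly from the unperturbed curves, ``since the curves $t\mapsto\mc_i(t;0)$ are polynomial,'' with columns of the form $(1,\me_{i-1},\mc_i(y_{i,j};0))^\intercal$. But for a proper $R\subsetneq[r]$ the Cayley polytope $\hC{R}$ of the $\QQ{i}$, $i\in{}R$, is degenerate: every curve $\mc_i(\cdot;0)$ with $i\in R$ lies in the flat $\{x_j=0 : j\in[r]\setminus{}R\}$ of $\reals^d$, so $\hC{R}$ spans only a $(d-r+2|R|-1)$-dimensional flat inside $\reals^{d+|R|-1}$. A $(d+|R|)\times(d+|R|)$ alternant formed from these homogenized $\zeta=0$ points therefore vanishes identically — face-membership cannot be read off its sign. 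The paper explicitly flags this obstruction (``the information of the relative position of the polytopes $\QQ{i}$, $i\in R$, is lost if we set $\zeta=0$ from the very first step'') and circumvents it by defining the hyperplane coefficients as the \emph{rescaled limit}
\[
  H_U(\mb{x})=\lim_{\zeta\rightarrow{}0^+}
  (-1)^{\tfrac{|R|(|R|-1)}{2}+\sigma(R)}\,\zeta^{|R|-r}\,{\sf{}D}_U(\mb{x};\zeta).
\]
The factor $\zeta^{|R|-r}$ exactly cancels the vanishing order contributed by the $r-|R|$ rows that carry $\zeta$ in every column; after extracting $\zeta$ from those rows and performing the $\sigma(R)$ row swaps that move them to the positions where the exponent of $y_{i,j}$ increases down each column, the limit becomes a nonzero determinant of the precise form $\DDD{K}{\ptn[Y];\mu_1,\ldots,\mu_m}$ treated in Lemma \ref{lem:asymptotic-tau}. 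Your proposal omits this rescaling-and-reorganization step, which is not a cosmetic detail: without it there is simply no well-defined, sign-carrying determinant to feed into the asymptotic analysis for $R\subsetneq[r]$. Everything else in your argument (the appeal to Lemma \ref{lem:eq_claim}, the asymptotic sign analysis from Lemma \ref{lem:asymptotic-tau}, the finiteness argument defining $\tau^\star$) is correct and aligned with the paper.
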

%%%%%%%%%%%%%%%%%%%%%%%%%%%%%%%%%%%%%%%%%%%%%%%%%%%%%%%%%%%%%%%%%%%%%%%%%%%%%%%%
\begin{proof} 
Let  $\UU_i$  be the set of vertices of $\QQ{i}$ for $1\leq{}i\leq{}r$
and set $\UU:=\cup_{i=1}^r\UU_i$. 
The objective in the proof is, for each $\emptyset\subset{}R\subseteq[r]$
and each spanning subset  $U$ of the partition $\ptn[U]=\cup_{i\in{}R}\UU_i$, 
to exhibit a supporting hyperplane of the $(d+|R|-1)$-dimensional 
Cayley polytope $\hC{R}$, containing exactly the vertices in $U$. In that 
respect, our approach is similar in spirit, albeit much more  technically 
involved, to the proof showing, by defining supporting hyperplanes
constructed from Vandermonde determinants, that the cyclic $n$-vertex
$d$-polytope $C_d(n)$ is neighborly (see, e.g. \cite[Corollary
  0.8]{z-lp-95}).

In our proof we need to involve the parameter $\zeta$ before taking
the limit $\zeta\rightarrow{}0^+$. This is due to the fact that, when 
$\emptyset\subset{}R\subset{}[r]$, the information of the  relative
position of the polytopes $\QQ{i}$, $i\in{}R$,  is lost if we set
$\zeta=0$ from the very first step.
To describe our construction, we write each  spanning subset  $U$ of
$\ptn[U]=\cup_{i\in{}R}\UU_i$ as the disjoint union of non-empty sets
$U_i$, $i\in{}R$, where $|U_i|=\kappa_i\le{}n_i$,
$U_i=\lim_{\zeta\rightarrow{}0}\{\mc_i(y;\zeta):y\in{}Y'_i\}
=\{\mc_i(y;0):y\in{}Y'_i\}$
and $Y'_i=\{y\in{}Y_i\mid{}\mc_i(y;0)\in{}U_i\}$.
For this particular $U$, we define the linear equation:
\begin{equation}
  \label{equ:hyperplane}
  H_U(\mb{x})=\lim_{\zeta\rightarrow{}0^+}
  (-1)^{\tfrac{|R|(|R|-1)}{2}+\sigma(R)}\zeta^{|R|-r}{\sf{}D}_U(\mb{x};\zeta), 
\end{equation}
where $\mb{x}=(x_1,x_2,\ldots,x_{d+|R|-1})$, and
${\sf{}D}_U(\mb{x};\zeta)$ is the $(d+|R|)\times(d+|R|)$ 
determinant\footnote{We refer the reader to Figs. \ref{fig:bigdet} and
  \ref{fig:bigdet0} in the Appendix for an example of
  ${\sf{}D}_U(\mb{x};\zeta)$, $\zeta>0$, and ${\sf{}D}_U(\mb{x};0)$.}:
\begin{itemize}
\item whose first column is $(1,\mb{x})^\intercal$, 
\item the next $\kappa_i$, $i\in{}R$, pairs of columns are 
  $(1,\me_{i-1},\gamma_i(\yy{i}{j}{};\zeta))^\intercal$
  and $(1,\me_{i-1},\gamma_i(\ty{i}{j}{};\zeta))^\intercal$
  where $\me_0,\ldots,\me_{|R|-1}$ is the standard affine 
  basis of $\reals^{|R|-1}$ and $j\in{}Y'_i$,  and
\item 
  the last $s:=d+|R|-1-\sum_{i\in{}R}\kappa_i$
    columns are $(1,\me_{|R|-1},\gamma_{|R|-1}(M_i;\zeta))^\intercal$, 
    $1\leq{}i\leq{}s$; these columns exist only if $s>0$.
\end{itemize}
The quantity $\sigma(R)$ above is a non-negative integer counting the
total number of row swaps required to shift, for all $j\in{}[r]\sm{}R$, 
the $(|R|+j)$-th row of ${\sf{}D}_U(\mb{x};\zeta)$ to the bottom of the
determinant, so that the powers of $y_{i,j}$ in each column are in
increasing order (notice that if $R\equiv{}[r]$ no such row swaps are
required).
Moreover, $\sigma(R)$ depends only on $R$ and not on the
choice of the spanning subset $U$ of $\ptn[U]$.

The equation $H_U(\mb{x})=0$ is the equation of a hyperplane in 
$\reals^{d+|R|-1}$ that passes through the points in $U$.
We claim that, for any choice of $U$, and for all 
vertices $\mbu$ in $\UU\sm{}U$, we have $H_U(\mbu)>0$. 
%%%
To prove our claim, notice first that, for each $j\in{}[r]\sm{}R$, 
the $(|R|+j)$-th row of the determinant ${\sf{}D}_U(\mb{u};\zeta)$ will 
contain the parameters  $\yy{i}{j}{d-r+1+j}, \ty{i}{j}{d-r+1+j}$ 
multiplied by $\zeta$.  
After extracting $\zeta$ from each of these rows and shifting it to its 
\emph{proper} position (i.e., the position where the powers along each
column increase), we will have a term $\zeta^{r-|R|}$ and a sign 
$(-1)^{\sigma(R)}$ (induced from the $\sigma(R)$ row swaps required 
altogether). 
These terms cancel out with the term $(-1)^{\sigma(R)}\zeta^{|R|-r}$
in \eqref{equ:hyperplane}.
We can, therefore, transform $H_U(\mbu)$ in  the form of  the determinant
$\DDD{K}{\ptn[Y];\mu_1,\ldots,\mu_m}$ shown below:
\begin{equation*}
\label{detD}
\DDD{K}{\ptn[Y];\mu_1,\ldots,\mu_m}:= (-1)^{\tfrac{n(n-1)}{2}}
\scalebox{0.9}{
$  
\left|
\begin{array}{ccccccccccccccc}
\y{1}{1}^{\mu_1}&\cdots&\y{1}{\kappa_1}^{\mu_1}&
\0  &\cd    &\0  &\cd   &\0  &\cd   &\0\\
%%%%%%%%%%%%%%%%%%%%%%%%%%%%%%%%%
\0 &\cd    &\0  & 
\y{2}{1}^{\mu_1}&\cdots&\y{2}{\kappa_2}^{\mu_1}&
\cd   &\0  &\cd   &\0\\
%%%%%%%%%%%%%%%%%%%%%%%%%%%%%%%%
\vd&\dd    &\vd &\vd  &\dd   &\vd &\cd   &\vd &\dd   &\vd\\
%%%%%%%%%%%%%%%%%%%%%%%%%%%%%%%%%%
\0 &\cd    &\0  &\0  &\cd    &\0  &\cd  &
\y{n}{1}^{\mu_1}&\cdots&\y{n}{\kappa_1}^{\mu_1} \\
%%%%%%%%%%%%%%%%%%%%%%%%%%%%%%%%%%%%%%%%%%%%%%%%%%%%%%%%%%%%%%%%%%%%%%%%%%%%%%%
%%%%%%%%%%%%%%%%%%%%%%%%%%%%%%%%%%%%%%%%%%%%%%%%%%%%%%%%%%%%%%%%%%%%%%%%%%%%%%%%
\y{1}{1}^{\mu_2}&\cdots&\y{1}{\kappa_1}^{\mu_2}&
\0&\cd&\0&  \cd&
\0&\cd&\0\\
%%%%%%
\0&\cd&\0&
\y{2}{1}^{\mu_2}&\cdots&\y{2}{\kappa_2}^{\mu_2}&  \cd&
\0&\cd&\0\\
%%%%%
\vd&\dd&\vd& 
\vd&\dd&\vd& \cd&
\vd&\dd&\vd\\
%%%%
\0&\cd&\0&
\0&\cd&\0& \cd&
\y{n}{1}^{\mu_2}&\cdots&\y{n}{\kappa_n}^{\mu_2}\\
%%%%%%%%%%%%%%%%%%%%%%%%%%%%%%%%%%%%%%%%%%%%%%%%%%%%%%%%%%%%%%%%%%%%%%%%%%%%%%%%%
%%%%%%%%%%%%%%%%%%%%%%%%%%%%%%%%%%%%%%%%%%%%%%%%%%%%%%%%%%%%%%%%%%%%%%%%%%%%%%%%%
\y{1}{1}^{\mu_3}&\cdots&\y{1}{\kappa_1}^{\mu_3}&  
\y{2}{1}^{\mu_3}&\cdots&\y{2}{\kappa_2}^{\mu_3}&   \cd& 
\y{n}{1}^{\mu_3}&\cdots&\y{n}{\kappa_n}^{\mu_3}\\
%%%%%%%%%%%%%%%%%%%%%%%%%%%%%%%%%%%%%%%
\vd&\dd&\vd& \vd&\dd&\vd&  \cd& \vd&\dd&\vd\\
%%%%%%%%%%%%%%%%%%%%%%%%%%%%%%%%%%%%%%%
\y{1}{1}^{\mu_m}&\cdots&\y{1}{\kappa_1}^{\mu_m}&  
\y{2}{1}^{\mu_m}&\cdots&\y{2}{\kappa_2}^{\mu_m}&   \cd& 
\y{n}{1}^{\mu_m}&\cdots&\y{n}{\kappa_n}^{\mu_m}
%%%%%%%%%%%%%%%%%%%
\end{array}\right|,
%%%%%%%%%%%%%%%%%%%%%%%%%%%%%%%%%%%%%%%%%%%%%%%%%%%%%%%
$}
\end{equation*}
by means of the following determinant transformations:
\begin{enumerate}
\item
  By subtracting rows $2$ to $|R|$ of $H_U(\mbu)$ from its first row.
\item 
  By shifting the first column of $H_U(\mb{u})$ to the right,
  so that all columns of  $H_U(\mb{u})$  are arranged in 
  increasing order according to their parameter. 
  Clearly, this can be done with an \emph{even} number of
  column swaps.
\end{enumerate}
The determinant $\DDD{K}{\ptn[Y];\mu_1,\ldots,\mu_m}$ is strictly
  positive for all $\tau$ between $0$ and some value
  $\hat\tau(R,U,\mb{u})$, that, depends (only) on the choice of $R$, $U$ and
  $\mb{u}$.
  Since there is a finite number of possible such determinants, the
  value $\hat\tau^\star:=\min_{R,U,\mb{u}}\hat\tau(R, U, \mb{u})$ is
  necessarily positive. Choosing some
  $\tau^\star\in{}(0,\hat\tau^\star)$ makes all these determinants
  simultaneously positive; this completes our proof.
\end{proof} 

%%%%%%%%%%%%%%%%%%%%%%%%%%%%%%%%%%%%%%%%%%%%%%%%%%%%%%%%%%%%%%%%%%%%%%%%%%%%%%%%
%%%%%%%%%%%add the parameter zeta in the construction %%%%%%%%%%%%%%%%%%%%%%%%%%
%%%%%%%%%%%%%%%%%%%%%%%%%%%%%%%%%%%%%%%%%%%%%%%%%%%%%%%%%%%%%%%%%%%%%%%%%%%%%%%%
The following lemma establishes the second (and last) step of  our construction.

\begin{lemma}\label{lem:Pstar}
  There exists a sufficiently small positive value 
  $\zeta^{\lozenge}$ for
  $\zeta$, such that, for any $\emptyset\subset{}R\subseteq{}[r]$,
  the set $\fF_R$ of mixed faces of the Cayley polytope $\cC_R$ 
  of the polytopes $P_1,\ldots,P_r$  in \eqref{pol:Pi} has
  \begin{equation*}
    f_{k-1}(\fF_R)=\spans{k}{R},\quad\mbox{ for all }\quad 
    |R|\le{}k\le{}\ltexp{d+|R|-1}.
  \end{equation*}
\end{lemma}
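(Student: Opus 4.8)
The plan is to transfer the supporting hyperplanes constructed in the proof of Lemma~\ref{lem:Qstar} from the unperturbed Cayley polytopes $\hC{R}$ to the perturbed ones $\cC_R$. Fix $\tau=\tau^\star$, so that henceforth everything depends only on $\zeta$. By Lemma~\ref{lem:eq_claim} (the equivalence of $R$-neighborliness of $\cC_R$ with its statement~(ii)), and since $\spans{k}{R}=\sum_{\emptyset\subset{}S\subseteq{}R}(-1)^{|R|-|S|}\tbinom{n_S}{k}$ for $|R|\le{}k\le{}\ltexp{d+|R|-1}$, it suffices to prove that for every $\emptyset\subset{}R\subseteq{}[r]$ the Cayley polytope $\cC_R$ is $R$-neighborly; that is, for each spanning subset $U$ of $\bigcup_{i\in{}R}V_i$ with $|R|\le{}|U|\le{}\ltexp{d+|R|-1}$, the lifts $(\me_{i-1},\gamma_i(y;\zeta))$, $y\in{}Y'_i$, $i\in{}R$, span a face of $\cC_R$, where $U=\bigcup_{i\in{}R}U_i$, $U_i=\{\gamma_i(y;\zeta):y\in{}Y'_i\}$, and the index sets $Y'_i\subseteq{}Y_i$ are as in the proof of Lemma~\ref{lem:Qstar}. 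For $|R|=1$ this is merely the neighborliness of the cyclic-type $d$-polytope $P_i$.

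First I would fix such $R$ and $U$ and consider the $(d+|R|)\times(d+|R|)$ determinant ${\sf{}D}_U(\mb{x};\zeta)$ of that proof, but now with $\zeta>0$ and without passing to the limit. Expanding it along its first column $(1,\mb{x})^{\intercal}$ shows that $\mb{x}\mapsto{\sf{}D}_U(\mb{x};\zeta)$ is an affine function on $\reals^{d+|R|-1}$ which vanishes at each point $(\me_{i-1},\gamma_i(y;\zeta))$, $y\in{}Y'_i$, $i\in{}R$, since then two columns coincide. As the vertices of $\cC_R$ are exactly the lifts $(\me_{i-1},\gamma_i(y_{i,j};\zeta))$, $i\in{}R$, $1\le{}j\le{}n_i$, it remains to produce a single $\zeta^{\lozenge}>0$ such that, for all $\zeta\in(0,\zeta^{\lozenge})$, the form ${\sf{}D}_U(\cdot;\zeta)$ is not identically zero and takes one strict sign at every vertex of $\cC_R$ outside $U$; then $\{\mb{x}:{\sf{}D}_U(\mb{x};\zeta)=0\}$ is a hyperplane supporting $\cC_R$ and meeting it in exactly the lifts of $U$, so the convex hull of $U$ is a mixed face of $\cC_R$ with vertex set $U$, of dimension $|U|-1$ because the points involved lie on moment-like curves.

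For the sign statement I would let $\mbu$ be a vertex of $\cC_R$ that is the lift of a point of $\bigcup_{i\in{}R}V_i$ outside $U$ (so $\mbu$, like everything here, depends on $\zeta$) and set $g_{R,U,\mbu}(\zeta):=(-1)^{\tfrac{|R|(|R|-1)}{2}+\sigma(R)}\,\zeta^{|R|-r}\,{\sf{}D}_U(\mbu;\zeta)$. Extracting, just as in the proof of Lemma~\ref{lem:Qstar}, the factor $\zeta$ from each of the $r-|R|$ rows indexed by $[r]\setminus{}R$ shows that $g_{R,U,\mbu}$ is actually a polynomial in $\zeta$ (the negative power $\zeta^{|R|-r}$ being cancelled), whose value at $\zeta=0$ is the determinant $\DDD{K}{\ptn[Y];\mu_1,\ldots,\mu_m}$ of that proof, which is strictly positive at $\tau=\tau^\star$. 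Hence $g_{R,U,\mbu}>0$ on some interval $[0,\zeta(R,U,\mbu))$, and since $\zeta^{|R|-r}>0$ for $\zeta>0$ the form $(-1)^{\tfrac{|R|(|R|-1)}{2}+\sigma(R)}\,{\sf{}D}_U(\mbu;\zeta)$ is positive there; in particular ${\sf{}D}_U(\cdot;\zeta)\not\equiv0$. Taking $\zeta^{\lozenge}$ to be the minimum of the finitely many numbers $\zeta(R,U,\mbu)$ (over all $R$, all admissible $U$, and all $\mbu$) finishes the argument via the reduction of the first paragraph.

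The hard part will be the determinant bookkeeping concealed in the third paragraph: one has to verify that after substituting the \emph{perturbed} vertex $\mbu$ — rather than its $\zeta\to0^+$ limit — into ${\sf{}D}_U$, the first column still supplies the missing factor $\zeta$ in each row indexed by $[r]\setminus{}R$, so that exactly $\zeta^{r-|R|}$ is extracted and $g_{R,U,\mbu}$ carries no surviving negative power of $\zeta$; that $g_{R,U,\mbu}(0)$ is precisely the positive determinant of Lemma~\ref{lem:Qstar}, uniformly in the choice of $\mbu$; and that the exponent $\sigma(R)$ depends only on $R$ and not on $U$. Everything else is the two reductions already in place (Lemmas~\ref{lem:Qstar} and \ref{lem:eq_claim}) together with a routine compactness/finiteness argument.
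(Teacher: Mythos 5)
Your proof follows essentially the same route as the paper's: keep the determinant $\mathsf{D}_U(\mb{x};\zeta)$ from Lemma~\ref{lem:Qstar} with $\tau$ fixed at $\tau^\star$, observe that the scaled quantity $(-1)^{\frac{|R|(|R|-1)}{2}+\sigma(R)}\zeta^{|R|-r}\mathsf{D}_U(\mbu;\zeta)$ converges to the strictly positive value furnished by Lemma~\ref{lem:Qstar} as $\zeta\to0^+$, conclude by continuity that it stays positive on an interval $(0,\hat\zeta(U,\mbu))$, and take the minimum over the finitely many choices of $R$, $U$ and $\mbu$. You are more explicit about two points the paper treats lightly: the reduction to $R$-neighborliness via Lemma~\ref{lem:eq_claim}, and the observation that the first column of $\mathsf{D}_U$ (containing the perturbed vertex $\mbu$, whose $\gamma$-coordinates indexed by $[r]\setminus R$ are themselves multiples of $\zeta$) still contributes the factor $\zeta$ needed in each of those rows, so that $\zeta^{|R|-r}\mathsf{D}_U(\mbu;\zeta)$ has a finite limit rather than blowing up — a worthwhile sanity check, which indeed holds because the moment-like curve $\gamma_{i'}$ has $\zeta$-multiples in exactly those coordinates.
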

%%%%%%%%%%%%%%%%%%%%%%%%%%%%%%%%%%%%%%%%%%%%%%%%%%%%%%%%%%%%%%%%%%%%%%%%%%%%%%%%
\begin{proof} 
Briefly speaking, the value $\zeta^{\lozenge}$  is determined by
replacing the limit $\zeta\rightarrow{}0^+$ in the previous proof, by a
specific value of $\zeta$ for which the determinants we consider are
positive.

More precisely, let $\UU_i$  be the set of vertices of $P_i$, 
$1\leq{}i\leq{}r$,  and set $\UU:=\cup_{i=1}^r\UU_i$. Our goal is, for each 
$\emptyset\subset{}R\subseteq[r]$ and each spanning subset  $U$ of the 
partition $\ptn[U]=\cup_{i\in{}R}\UU_i$, to exhibit a supporting hyperplane of 
the Cayley polytope $\cC_R$, containing exactly the vertices in $U$. 
To this end, we define the linear equation
${\widetilde H}_U(\mb{x};\zeta)=0$,
$\mb{x}=(x_1,x_2,\ldots,x_{d+|R|-1})$, with
\begin{equation}
  \label{equ:hyperplane2}
        {\widetilde H}_U(\mb{x};\zeta)=
        (-1)^{\frac{|R|(|R|-1)}{2}+\sigma(R)}\zeta^{|R|-r} 
        {\sf{}D}_U(\mb{x};\zeta),\quad \zeta>0,
\end{equation}
where ${\sf{}D}_U(\mb{x};\zeta)$ is the determinant in the
proof of Lemma \ref{lem:Qstar}, where we have set $\tau$ to $\tau^\star$. 
Clearly, for each $\mb{u}\in \UU\sm U$, we have
$\lim_{\zeta\rightarrow{}0^+}{\widetilde H}_U(\mb{u};\zeta)= H_U(\mb{u})>0$.
This immediately implies that for each combination of $U$ and $\mb{u}$
there exists a value $\hat\zeta(U,\mb{u})$ such that, for all
$\zeta\in(0,\hat\zeta(U,\mb{u}))$, ${\widetilde H}_U(\mb{u};\zeta)>0$,
which, due to the positivity of $\zeta$, yields that
$\zeta^{r-|R|}{\widetilde H}_U(\mb{u};\zeta)>0$.
Since the number of possible combinations for $U$ and $\mb{u}$ is
finite, the minimum $\hat\zeta^\lozenge:=\min_{U,\mb{u}}\{\hat\zeta(U,\mb{u})\}$
is well defined and positive. Taking $\zeta^{\lozenge}$ to be any
value in $(0,\hat\zeta^\lozenge)$, satisfies our demands.
\end{proof}
%%%%%%%%%%%%%%%%%%%%%%%%%%%%%%%%%%%%%%%%%%%%%%%%%%%%%%%%%%%%%%%%%%%%%%%%%%%%%%%%
\renewcommand{\arraystretch}{1.1}
\renewcommand{\yy}[3]{y_{\scalebox{0.7}{$#1,#2$}}^{\scalebox{0.6}{$#3$}}}
%%%%%%%%%%%%%%%%%%%%%%%%%%%%%%%%%%%%%%%%%%%%%%%%%%%%%%%%%%%%%%%%
\renewcommand{\ye}[2]{
y_{ \scalebox{1}{\hspace{-0.2 cm}  
$_{_\epsilon}$}\scalebox{0.8}{\hspace{-0.1 cm} 
$\scalebox{0.7}{$#1$},\scalebox{0.7}{$#2$}$}}
}
\renewcommand{\yee}[3]{
{y}_{\scalebox{1}{\hspace{-0.2 cm}  
$_{_\epsilon}$}\scalebox{0.8}{\hspace{-0.1 cm} 
$\scalebox{0.7}{$#1$},\scalebox{0.7}{$#2$}$}}^{\scalebox{0.6}{$#3$}}
}
%%%%%%%%%%%%%%%%%%%%%%%%%%%%%%%%%%%%%%%%%%%%%%%%%%%%%%%%%%%%%%%%%%%%%%%%%%%%%%%%

%%%%%%%%%%%%%%%%%%%%%%%%%%%%%%%%%%%%%%%%%%%%%%%%%%%%%%%%%%%%%%%%%%%%%%%%%%%%%%%%
%%%%%%%%%%%%%%%%%%%%%%%%%%%%%%%%%%%%%%%%%%%%%%%%%%%%%%%%%%%%%%%%%%%%%%%%%%%%%%%%
\subsection{Examples of determinants appearing in the tightness construction}

 \renewcommand\undermat[3][0pt]{%
  \makebox[0pt][l]{$\smash{\underbrace{\phantom{%
    \begin{matrix}\phantom{\rule{-100pt}{#1}}#3\end{matrix}}}_{\text{#2}}}$}#3}
\renewcommand{\arraystretch}{1.4}

%%%%%%%%%%%%%%%%%%%%%%%%%%%%%%%%%%%%%%%%%%%%%%%%%%%%%%%%%%%%%%%%%%%%%%%%%%%%%%%%

  The determinant in Fig. \ref{fig:bigdet} is the determinant
  ${\sf{}D}_U(\mb{x};\zeta)$ that corresponds to the linear 
  equation $H_U(\mb{x})$ defined in the proof of Lemma \ref{lem:Qstar}, 
  in the case where $R=[r]$ and  $Y_i=\{y_{i,1},\ldots,y_{i,\kappa_i}\}$, for 
  all
  $1\leq{}i\leq{}r$.
  The determinant in Fig. \ref{fig:bigdet0} is the same as in Fig.
  \ref{fig:bigdet} after having taken the limit $\zeta\rightarrow{}0^+$.
  %%%%%%%%%%%%%%%%%%%%%%%%%%%%%%%%%%%%%%%%%%%%%%%%%%%%%%%%%%%%%%%%%%%%%%%%%%%%%%%%
\begin{landscape}
\vspace*{1cm}
\begin{figure}[H]
\scalebox{0.90}
{ 
$  
\hspace{0cm} 
\left|
\begin{array}{c:r@{}lr@{}lcr@{}lr@{}l:r@{}lr@{}lcr@{}lr@{}lc:r@{}lr@{}lcr@{}lr@{}lcr@{}l}
%%%%%%%
1&&
1&& 1&\cdots&& 1&& 1&&
1&& 1&\cdots&& 1&& 1&   \cdots& &
1&& 1&\cdots&& 1&& 1    &\cdots && 1\\ 
%%%%%%%%%									 
x_1&&
\0&& \0&\cd&&\0&&\0&&
1 && 1& \cd&& 1&& 1& \cdots&&
\0&& \0&\cd&&\0&&\0   
&\cdots && 1 \\ 
%%%%%%%%%
\vdots&&
\vd&&\vd&\dd&&\vd&&\vd&&
\0&& \0&\cd&& \0&&\0&  \cdots&&
\0&& \0&\cd&& \0&&\0&  
 \ddots&&\vdots \\
%%%%%%                                     
x_{r-1}&&
\0&& \0&\cd&& \0&&\0&&
\0&& \0&\cd&& \0&&\0&  \cdots&&      
  1&& 1&\cdots&& 1&& 1&      
 \cdots&& 1 \\
\hdashline
%%%%%%%%%%%%%%%%%%%%%%%%%%%%%%%%%%%%%%%%%%%%%%%%%%%%%%%%%%%%%%%%%%%%%%%%%%%%%%%%
%%%%%%%%%%%%%%%%%%%%%%%%%%%%%%%%%%%%%%%%%%%%%%%%%%%%%%%%%%%%%%%%%%%%%%%%%%%%%%%%
x_{r}&&
\yy{1}{1}{}&&\ty{1}{1}{}&\cdots&&\yy{1}{\kappa_1}{}&&\ty{1}{\kappa_1}{}&
%%%
\zz{}&\yy{2}{1}{d-r+2}&\zz{}&\ty{2}{1}{d-r+2}{}
&\cdots&\zz{}&\yy{2}{\kappa_2}{d-r+2}&\zz{}&\ty{2}{\kappa_2}{d-r+2}
&\cdots&
\zz{}&\yy{r}{1}{d-r+2}&\zz{}&\ty{r}{1}{d-r+2}&
\cdots&\zz{}&\yy{r}{\kappa_r}{d-r+2}&\zz{}&\ty{r}{\kappa_r}{d-r+2}
&\cdots &\zz{}& \M{s}{d-r+2} \\ 
%%%%%%%%%%%%%%%%%%%%%%%%%%%%%%%%%%%%%%%%%%%%%%%%%%%%%%%%%%%%%%%%%%%%%%%%%%%%%%%%
x_{r+1}&
\zz{}&\yy{1}{1}{d-r+3}&\zz{}&\ty{1}{1}{d-r+3}&\cdots&
\zz{}&\yy{1}{\kappa_1}{d-r+3}&\zz{}&\yee{1}{\kappa_1}{d-r+3}&
&\yy{2}{1}{}&&\ty{2}{1}{}&\cdots&&
\yy{2}{\kappa_2}{}&&\ty{2}{\kappa_2}{} 
&\cdots&
\zz{}&\yy{r}{1}{d-r+3}&\zz{}&\ty{r}{1}{d-r+3}&
\cdots&\zz{}&\yy{r}{\kappa_r}{d-r+3}&\zz{}&\ty{r}{\kappa_r}{d-r+3} 
&\cdots &\zz{}& \M{s}{d-r+3}\\
%%%%%%%%%%%%%%%%%%%%%%%%%%%%%%%%%%%%%%%%%%%%%%%%%%%%%%%%%%%%%%%%%%%%%%%%%%%%%%%%
x_{r+2}&
\zz{}&\yy{1}{1}{d-r+4}&\zz{}&\ty{1}{1}{d-r+4}&\cdots&
\zz{}&\yy{1}{\kappa_1}{d-r+4}&\zz{}&\ty{1}{\kappa_1}{d-r+4}&
\zz{}&\yy{2}{1}{d-r+4}&\zz{}&\ty{2}{1}{d-r+4}   
&\cdots& 
\zz{}&\yy{2}{\kappa_2}{d-r+4}&\zz{}&\ty{2}{\kappa_2}{d-r+4} 
&\cdots&
\zz{}&\yy{r}{1}{d-r+4}&\zz{}&\ty{r}{1}{d-r+4}&
\cdots&\zz{}&\yy{r}{\kappa_r}{d-r+4}&\zz{}&\ty{r}{\kappa_r}{d-r+4} 
%%%
&\cdots &\zz{}& \M{s}{d-r+4} \\
%%%%%%%%%%%%%%%%%%%%%%%%%%%%%%%%%%%%%%%%%%%%%%%%%%%%%%%%%%%%%%%%%%%%%%%%%%%%%%%%
%%%%%%%%%%%%%%%%%%%%%%%%%%%%%%%%%%%%%%%%%%%%%%%%%%%%%%%%%%%%%%%%%%%%%%%%%%%%%%%%
\vdots &&
\vd&&\vd&\dd&&\vd&&\vd&&
\vd&&\vd&\dd&&\vd&&\vd
&\ddots&
&\vdots&&
\vdots&&\vdots&&\vdots 
&&\ddots && \vdots\\
%%%%%%%%%%%%%%%%%%%%%%%%%%%%%%%%%%%%%%%%%%%%%%%%%%%%%%%%%%%%%%%%%%%%%%%%%%%%%%%%
x_{2r-1}&
\zz{}&\yy{1}{1}{d+1}&\zz{}&\ty{1}{1}{d+1}&\cdots&
\zz{}&\yy{1}{\kappa_1}{d+1}&\zz{}&\ty{1}{\kappa_1}{d+1}&
\zz{}&\yy{2}{1}{d+1}&\zz{}&\ty{2}{1}{d+1}&\cdots&
\zz{}&\yy{2}{\kappa_2}{d+1}&\zz{} &\ty{2}{\kappa_2}{d+1}
%%%%
&\cdots&
&\yy{r}{1}{d+1}&&\ty{r}{1}{d+1}&
\cdots&&\yy{r}{\kappa_r}{d+1}&&\ty{r}{\kappa_r}{d+1} 
&\cdots && \M{s}{}\\
\hdashline
%%%%%%%%%%%%%%%%%%%%%%%%%%%%%%%%%%%%%%%%%%%%%%%%%%%%%%%%%%%%%%%%%%%%%%%%%%%%%%%%
%%%%%%%%%%%%%%%%%%%%%%%%%%%%%%%%%%%%%%%%%%%%%%%%%%%%%%%%%%%%%%%%%%%%%%%%%%%%%%%%
%%%%%%%%%%%%%%%%%%%%%%%%%%%%%%%%%%%%%%%%%%%%%%%%%%%%%%%%%%%%%%%%%%%%%%%%%%%%%%%%
x_{2r}&&
\yy{1}{1}{2}&&\ty{1}{1}{2}&\cdots&&\yy{1}{\kappa_1}{2}
&&\ty{1}{\kappa_1}{2}&
&\yy{2}{1}{2}&& \ty{2}{1}{2}&\cdots
&&\yy{2}{\kappa_2}{2}&&\ty{2}{\kappa_2}{2}
&\cdots&
&\yy{r}{1}{2}&&\ty{r}{1}{2}& \cdots 
&&\yy{r}{\kappa_r}{2}&&\ty{r}{\kappa_r}{2}
&\cdots && \M{s}{2}\\
%%%%%%%%%%%%%%%%%%%%%%%%%%%%%%%%%%%%%%%%%%%%%%%%%%%%%%%%%%%%%%%%%%%%%%%%%%%%%%%%
x_{2r+1}&&
\yy{1}{1}{3}&&\ty{1}{1}{3}&\cdots&&\yy{1}{\kappa_1}{3}
&&\ty{1}{\kappa_1}{3}&
&\yy{2}{1}{3}&&\ty{2}{1}{3}&\cdots
&&\yy{2}{\kappa_2}{3}&&\ty{2}{\kappa_2}{3}
&\cdots&
&\yy{r}{1}{3}&&\ty{r}{1}{3}& \cdots 
&&\yy{r}{\kappa_r}{3}&&\ty{r}{\kappa_r}{3}
&\cdots && \M{s}{3}\\
\vdots&&
\vd&&\vd&\dd&&\vd&&\vd
&&\vd&&\vd&\dd&&\vd&&\vd
  &\ddots&
  &\vdots&& \vdots & \ddots && \vdots && \vdots    
&\ddots && \vdots\\
%%%%%%%%%%%%%%%%%%%%%%%%%%%%%%%%%%%%%%%%%%%%%%%%%%%%%%%%%%%%%%%%%%%%%%%%%%%%%%% 
x_{d+r-1}&&
\yy{1}{1}{d-r+1}&&\ty{1}{1}{d-r+1}&\cdots
&&\yy{1}{\kappa_1}{d-r+1}&&\ty{1}{\kappa_1}{d-r+1}&
&\yy{2}{1}{d-r+1}&&\ty{2}{1}{d-r+1}
&\cdots&&\yy{2}{\kappa_2}{d-r+1} &&\ty{2}{\kappa_2}{d-r+1}

&\cdots&
&\yy{r}{1}{d-r+1}&&\ty{r}{1}{d-r+1}& \cdots 
&&\yy{r}{\kappa_r}{d-r+1}&&\ty{r}{\kappa_r}{d-r+1}
&\cdots&& \M{s}{d-r+1}
\end{array}\right|$}
\vspace*{5mm}
\caption{The determinant ${\sf{}D}_U(\mb{x};\zeta)$, for $R=[r]$.}
\label{fig:bigdet}
\end{figure}
\end{landscape}

\begin{figure}[H]
\scalebox{0.78}{ 
$  
\left| 
\begin{array}
{c:c@{}c@{}c@{}c@{}c@{}:c@{}c@{}c@{}c@{}ccc@{}
c@{}c@{}c@{}c@{}c@{}c@{}c@{}c@{}c@{}c@{}c@{}c@{}c@{}c@{}c@{}ccccc}
1&
1& 1&\cdots& 1& 1&
1& 1&\cdots& 1& 1& \cdots&
1& 1&\cdots& 1& 1&
1& \cdots& 1\\
%%%%%%%%%%%%%%%%%%%%%%%%
x_1&
\0&\0&\cd&\0&\0&
1& 1&\cdots& 1& 1& \cd&
\0&\0&\cd&\0&\0&
\0& \0& \0\\
%%%%%%%%%%%%%%%%%%%%%%%%%
\vdots&
\vd&\vd&\cd&\vd&\vd&
\vd&\vd&\cd&\vd&\vd&   \ddots&
\vd&\vd&\cd&\vd&\vd&
\0&\0&\0\\
%%%%%%%%%%%%%%%%%%%%%%%%%%%
x_{r-1}&
\0&\0&\cd&\0&\0&
\0&\0&\cd&\0&\0& \cd& 
1& 1&\cdots& 1& 1&
1& 1& 1\\
%%%%%%%%%%%%%%%%%%%%%%%%%%
x_{r}& 
\yy{1}{1}{}&\ty{1}{1}{}&\cdots&\yy{1}{\kappa_1}{}&\ty{1}{\kappa_1}{}&
\0&\0&\cd&\0&\0&  \cd&
\0&\0&\cd&\0&\0&
\0&\0&\0\\
%%%%%%%%%%%%%%%%%%%%%%%%%%%%%
 x_{r+1}&
\0&\0&\cd&\0&\0& 
\yy{2}{1}{}&\ty{2}{1}{}&\cdots&\y{2}{\kappa_2}&\ty{2}{\kappa_2}{}&
\cd&\0&\0&\cd&\0&\0&
\0&\0 &\0\\
%%%%%%%%%%%%%%%%%%%%%%%%%%%%%%
\vdots&
\vd&\vd&\cd&\vd&\vd&
\vd&\vd&\cd&\vd&\vd&  \ddots&
\vd&\vd&\cd&\vd&\vd& 
\vd&\cd&\vd\\
%%%%%%%%%%%%%%%%%%%%%%%%%%%%%%%
x_{2r-1}&
\0&\0&\cd&\0&\0&
\0&\0&\cd&\0&\0& \cd&
\yy{r}{1}{}&\ty{r}{1}{}&\cdots&\yy{r}{\kappa_r}{}&\ty{r}{\kappa_r}{}&
\M{1}{}&\cdots& \M{s}{}\\
%%%%%%%%%%%%%%%%%%%%%%%%%%%%%%%%%%%
x_{2r}&
\yy{1}{1}{2}&\ty{1}{1}{2}&\cdots&\yy{1}{\kappa_1}{2}&\ty{1}{\kappa_1}{2}&
\yy{2}{1}{2}&\ty{2}{1}{2}&\cdots&\yy{2}{\kappa_2}{2}&\ty{2}{\kappa_2}{2}&\cdots&
\yy{r}{1}{2}&\ty{r}{1}{2}&\cdots&\yy{r}{\kappa_r}{2}&\ty{r}{\kappa_r}{2}& 
\M{1}{2} & \cdots & M_s^2\\
%%%%%%%
x_{2r+1}&
\yy{1}{1}{3}&\ty{1}{1}{3}&\cdots&\yy{1}{\kappa_1}{3}&\ty{1}{\kappa_1}{3}&
\yy{2}{1}{3}&\ty{2}{1}{3}&\cdots&\yy{2}{\kappa_2}{3}&\ty{2}{\kappa_2}{3}&\cdots&
\yy{r}{1}{3}&\ty{r}{1}{3}&\cdots&\yy{r}{\kappa_r}{3}&\ty{r}{\kappa_r}{3}& 
M_1^3 & \cdots & M_s^3\\
%%%%%%%%%%%%%%%%%
\vdots&\vdots&\vdots&\cdots&\vdots&\vdots&\vdots&\vdots&\cdots&\vdots&
\vdots&\ddots&\vdots&\vdots&\cdots&\vdots&\vdots&\vdots&\cdots&\vdots\\
%%%%%%%%%%%%%
x_{d+r-1}&\undermat[0pt]{$\kappa_1$ pairs of points from  
\scalebox{1}{$(\me_0,\mc_1(\cdot))$}}{\yy{1}{1}{d-r+1}&\ty{1}{1}{d-r+1}&
         \cdots&\yy{1}{\kappa_1}{d-r+1}&\ty{1}{\kappa_1}{d-r+1}}&
 \undermat[10pt]{$\kappa_2$ 
}
%%%%%%%%%
{\yy{2}{1}{d-r+1}&\ty{2}{1}{d-r+1}&\cdots&\yy{2}{\kappa_2}{d-r+1}&\ty{2}{\kappa_2}{d-r+1}}
 &\cdots&
\undermat[10pt]{$\kappa_r$
 }
%%%%%%%%%%%%%% 
 {\yy{r}{1}{d-r+1}&\ty{r}{1}{d-r+1}&\cdots&\yy{r}{\kappa_r}{d-r+1}&\ty{r}{\kappa_r}{d-r+1}}&
 \undermat[10pt]{$\substack{\mbox{auxiliary  columns}\\\mbox{ if necessary} 
 }$}{M_1^{d-r+1} & \cdots & M_s^{d-r+1}}\\
\end{array}\right|$}
\vspace{0.6cm}
\caption{The determinant
  ${\sf{}D}_U(\mb{x};0)=\lim_{\zeta\rightarrow{}0^+}{\sf{}D}(\mb{u};\zeta)$,
  for $R=[r]$.}
\label{fig:bigdet0}
\end{figure}

%%%%%%%%%%%%%%%%%%%%%%%%%%%%%%%%%%%%%%%%%%%%%%%%%%%%%%%%%%%%%%%%%%%%%%%%%%%%%%%%
%%%%%%%%%%%%%%%%%%%%%%%%%%%%%%%%%%%%%%%%%%%%%%%%%%%%%%%%%%%%%%%%%%%%%%%%%%%%%%%%

\section*{Acknowledgments}
The authors would like to thank Christos Konaxis for useful
discussions and comments on earlier versions of this paper,
as well as Vincent Pilaud for discussions related to the tightness
construction presented in the paper.

The work in this paper has been co-financed
by the European Union (European Social Fund -- ESF) and Greek
national funds through the Operational Program ``Education and
Lifelong Learning'' of the National Strategic Reference Framework
(NSRF) -- Research Funding Program: THALIS -- UOA (MIS 375891).

\phantomsection

\addcontentsline{toc}{section}{References}

\bibliographystyle{plain}
\bibliography{msmax}

\begin{thebibliography}{10}

\bibitem{oeis:A008292}
The integer sequence {A008292} ({E}ulerian numbers).
\newblock The On-Line Encyclopedia of Integer Sequences.
\newblock \\{\scriptsize\url{http://oeis.org/A008292}}.

\bibitem{as-ubt-14}
Karim~A. Adiprasito and Raman Sanyal.
\newblock Relative {S}tanley-{R}eisner theory and {U}pper {B}ound {T}heorems
  for {M}inkowski sums, December 2014.
\newblock \href{http://arxiv.org/abs/1405.7368v3}{{\tt arXiv:1405.7368v3
  [math.CO]}}.

\bibitem{cauchy-binet}
The {C}auchy-{B}inet formula.
\newblock
  \href{http://en.wikipedia.org/wiki/Cauchy-Binet\_formula}{\tt{}http://en.wikipedia.org/wiki/Cauchy-Binet\_formula}.

\bibitem{EwSh74}
G.~Ewald and G.~C. Shephard.
\newblock Stellar {S}ubdivisions of {B}oundary {C}omplexes of {C}onvex
  {P}olytopes.
\newblock {\em Mathematische Annalen}, 210:7--16, 1974.
\newblock \\{\scriptsize\url{http://dx.doi.org/10.1007/BF01344542}}.

\bibitem{e-ccag-96}
G\"unter Ewald.
\newblock {\em Combinatorial {C}onvexity and {A}lgebraic {Geometry}}.
\newblock Graduate Texts in Mathematics. Springer, 1996.

\bibitem{fhw-emcms-09}
Efi Fogel, Dan Halperin, and Christophe Weibel.
\newblock On the {E}xact {M}aximum {C}omplexity of {M}inkowski {S}ums of
  {P}olytopes.
\newblock {\em Discrete Comput. Geom.}, 42:654--669, 2009.
\newblock \\{\scriptsize\url{http://dx.doi.org/10.1007/s00454-009-9159-1}}.

\bibitem{fw-fmacp-07}
Komei Fukuda and Christophe Weibel.
\newblock $f$-vectors of {M}inkowski {A}dditions of {C}onvex {P}olytopes.
\newblock {\em Discrete Comput. Geom.}, 37(4):503--516, 2007.
\newblock \\{\scriptsize\url{http://dx.doi.org/10.1007/s00454-007-1310-2}}.

\bibitem{g-atm-05}
F.~R. Gantmacher.
\newblock {\em Applications of the Theory of Matrices}.
\newblock Dover, Mineola, New York, 2005.

\bibitem{ggl-hc-95}
R.~L. Graham, M.~Grotschel, and L.~Lovasz.
\newblock {\em Handbook of Combinatorics}, volume~2.
\newblock MIT Press, North Holland, 1995.

\bibitem{gkp-cm-89}
R.~L. Graham, D.~E. Knuth, and O.~Patashnik.
\newblock {\em Concrete Mathematics}.
\newblock Addison-Wesley, Reading, MA, 1989.

\bibitem{gs-mapcc-93}
Peter Gritzmann and Bernd Sturmfels.
\newblock Minkowski {A}ddition of {P}olytopes: Computational {C}omplexity and
  {A}pplications to {G}r\"obner bases.
\newblock {\em SIAM J. Disc. Math.}, 6(2):246--269, May 1993.
\newblock \\{\scriptsize\url{http://dx.doi.org/10.1137/0406019}}.

\bibitem{hrs-ctlsb-00}
Birkett Huber, J\"org Rambau, and Francisco Santos.
\newblock The {C}ayley {T}rick, lifting subdivisions and the {B}ohne-{D}ress
  theorem on zonotopal tilings.
\newblock {\em J. Eur. Math. Soc.}, 2(2):179--198, June 2000.
\newblock \\{\scriptsize\url{http://dx.doi.org/10.1007/s100970050003}}.

\bibitem{k-swtsp-88}
Gil Kalai.
\newblock A {S}imple {W}ay to {T}ell a {S}imple {P}olytope from {I}ts {G}raph.
\newblock {\em J. Comb. Theory, Ser. A}, 49:381--383, 1988.
\newblock \\{\scriptsize\url{http://dx.doi.org/10.1016/0097-3165(88)90064-7}}.

\bibitem{kkt-mnfms-13}
Menelaos~I. Karavelas, Christos Konaxis, and Eleni Tzanaki.
\newblock The maximum number of faces of the {M}inkowski sum of three convex
  polytopes.
\newblock In {\em Proceedings of the 29th Annual ACM Symposium on Computational
  Geometry (SCG'13)}, pages 187--196, Rio de Janeiro, Brazil, June 17--20,
  2013.
\newblock \\{\scriptsize\url{http://doi.acm.org/10.1145/2462356.2462368}}.

\bibitem{kt-mnfms-12}
Menelaos~I. Karavelas and Eleni Tzanaki.
\newblock The maximum number of faces of the {M}inkowski sum of two convex
  polytopes.
\newblock In {\em Proceedings of the 23rd ACM-SIAM Symposium on Discrete
  Algorithms (SODA'12)}, pages 11--28, Kyoto, Japan, January 17--19, 2012.
\newblock \\{\scriptsize\url{http://doi.acm.org/10.1145/2095116.2095118}}.

\bibitem{Mat02}
Ji\v{r}\'{i} Matou\v{s}ek.
\newblock {\em Lectures on Discrete Geometry}.
\newblock Graduate Texts in Mathematics. Springer-Verlag New York, Inc., New
  York, 2002.

\bibitem{mpp-pnp-11}
B.~Matschke, J.~Pfeifle, and V.~Pilaud.
\newblock Prodsimplicial-neighborly polytopes.
\newblock {\em Discrete Comput. Geom.}, 46(1):100--131, 2011.
\newblock \\{\scriptsize\url{http://dx.doi.org/10.1007/s00454-010-9311-y}}.

\bibitem{m-mnfcp-70}
P.~McMullen.
\newblock The maximum numbers of faces of a convex polytope.
\newblock {\em Mathematika}, 17:179--184, 1970.
\newblock \\{\scriptsize\url{http://dx.doi.org/10.1112/S0025579300002850}}.

\bibitem{s-tovnm-09}
Raman Sanyal.
\newblock Topological obstructions for vertex numbers of {M}inkowski sums.
\newblock {\em J. Comb. Theory, Ser. A}, 116(1):168--179, 2009.
\newblock \\{\scriptsize\url{http://dx.doi.org/10.1016/j.jcta.2008.05.009}}.

\bibitem{stirling-2ndkind}
Striling numbers of the second kind.
\newblock
  \\\url{http://en.wikipedia.org/wiki/Stirling_numbers_of_the_second_kind}.

\bibitem{w-mfmsl-12}
Christophe Weibel.
\newblock Maximal f-vectors of {M}inkowski {S}ums of {L}arge {N}umbers of
  {P}olytopes.
\newblock {\em Discrete Comput. Geom.}, 47(3):519--537, April 2012.
\newblock \\{\scriptsize\url{http://dx.doi.org/10.1007/s00454-011-9385-1}}.

\bibitem{z-lp-95}
G\"unter~M. Ziegler.
\newblock {\em Lectures on Polytopes}, volume 152 of {\em Graduate Texts in
  Mathematics}.
\newblock Springer-Verlag, New York, 1995.

\end{thebibliography}

\appendix
\renewcommand{\thesection}{\Alph{section}}
\renewcommand{\thetheorem}{\Alph{section}.\arabic{theorem}}
\section{Special sets related to the derivation of the
  Dehn-Sommerville equations}
\label{app:sec:DS}

%%%%%%%%%%%%%%%%%%%%%%%%%%%%%%%%%%%%%%%%%%%%%%%%%%%%%%%%%%%%%%%%%%%%%%%%%%%
%%%%%%%%%%%%%%%% new subsection %%%%%%%%%%%%%%%%%%%%%%%%%%%%%%%%%%%%%%%%%%%%%%%

%%%%%%%%%%%%%%%%%%%%%%%%%%%%%%%%%%%%%%%%%%%%%%%%%%%%%%%%%%%%%%%%%%%%%%%%%%%%%%%%
To prove Lemma \ref{lem:stirling}, we introduce a 
couple of sets that appear in the face counting of $\bx\qQ_R$.
For any $m\in\naturals$, and $S\subseteq{}[m]$ we define:
%%%%%%%%%%%%%%%%%%%%%%%%%%%%%%%%%%%%%%%%%%%%%%%%%%%%%%%%%%%%%%%%%%%%%%%%%%%%%%%%
\begin{equation}
  \label{chain1}
  \spsa_m(S,k):=\{(S_1,S_2,\ldots,S_k)
  \mid{}S\subseteq{}S_1\subset{}S_2\subset{}\cdots\subset{}S_k\subset{}[m]\},
\end{equation}
%%%%%%%%%%%%%%%%%%%%%%%%%%%%%%%%%%%%%%%%%%%%%%%%%%%%%%%%%%%%%%%%%%%%%%%%%%%%%%%%
\begin{equation}
  \label{chain2}
  \spsb_m(S,k):= \{(S_0,S_1,\ldots,S_{k-1})\mid{}
  S={}S_0\subset{}S_1\subset{}\cdots\subset{}S_{k-1}\subset{}[m]\}.
\end{equation}
%%%%%%%%%%%%%%%%%%%%%%%%%%%%%%%%%%%%%%%%%%%%%%%%%%%%%%%%%%%%%%%%%%%%%%%%%%%%%%%%
Furthermore, we denote by $\spsac_m(S,k)$ and $\spsbc_m(S,k)$ the
cardinalities of $\spsa_m(S,k)$ and $\spsb_m(S,k)$ respectively.
It is immediate to see that:
\begin{align*}
  \spsac_m(S,k)&=\spsac_{\fs m-|S|}(\emptyset,k),\quad\text{and}\\
  \spsbc_m(S,k)&=\spsbc_{\fs m-|S|}(\emptyset,k).
\end{align*}

%%%%%%%%%%%%%%%%%%%%%%%%%%%%%%%%%%%%%%%%%%%%%%%%%%%%%%%%%%%%%%%%%%%%%%%%%%%%%%%%
%%%%%%%%%%%%%%%%%%%%%%%%%%%%%%%%%%%%%%%%%%%%%%%%%%%%%%%%%%%%%%%%%%%%%%%%%%%%%%%%

\begin{lemma} For any $k,m\in\naturals$, with $k\leq{}m$, we have:
\label{lem:chain_counting}
\begin{enumerate}[(i)]
\item $\spsbc_m(\emptyset,k)=k!\,\stirl{m}{k}$,
\item $\spsac_m(\emptyset,k)=k!\,\stirl{m+1}{k+1}$.
\end{enumerate}
\end{lemma}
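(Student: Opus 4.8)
\textbf{Proof plan for Lemma~\ref{lem:chain_counting}.}
The plan is to prove both identities by counting the relevant chains of subsets in two different ways, matching them against the combinatorial definitions of the Stirling numbers of the second kind. Recall that $\stirl{m}{k}$ counts the number of partitions of an $m$-element set into exactly $k$ non-empty blocks, and that $k!\,\stirl{m}{k}$ counts the number of \emph{surjections} from an $m$-element set onto a $k$-element set. So the core of the proof is to exhibit, for each part, an explicit bijection between the chains being counted and a set of surjections (equivalently, ordered set partitions).

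First I would handle part~(i). A chain $\emptyset = S_0 \subset S_1 \subset \cdots \subset S_{k-1} \subset [m]$ is determined by, and determines, the ordered tuple of non-empty ``gap'' sets $(S_1\sm S_0, S_2\sm S_1,\ldots,S_{k-1}\sm S_{k-2},[m]\sm S_{k-1})$; this is an ordered partition of $[m]$ into exactly $k$ non-empty blocks (there are $k$ gaps since we have $k+1$ sets $S_0,\ldots,S_{k-1},[m]$, but the first ``block'' $S_0=\emptyset$ contributes nothing, leaving $k$ non-empty ordered blocks). Conversely any ordered partition of $[m]$ into $k$ non-empty blocks reconstructs such a chain by taking partial unions. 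The number of ordered partitions of an $m$-set into $k$ non-empty blocks is exactly $k!\,\stirl{m}{k}$, which gives $\spsbc_m(\emptyset,k)=k!\,\stirl{m}{k}$.

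Next I would deduce part~(ii) from part~(i), or argue it directly in the same style. The direct argument: a chain $\emptyset\subseteq S_1\subset S_2\subset\cdots\subset S_k\subset [m]$ is governed by the $k+1$ gap sets $(S_1, S_2\sm S_1,\ldots,S_k\sm S_{k-1}, [m]\sm S_k)$, of which the last one $[m]\sm S_k$ is forced to be non-empty (because $S_k\subset[m]$ strictly) and the first one $S_1$ is allowed to be empty (because $S_1$ ranges over subsets including $\emptyset$), while the middle $k-1$ gaps are non-empty. Lumping $S_1$ together with the one of the last block is the wrong move; instead, introduce a fresh element: map such a chain to an ordered partition of $[m]\cup\{\star\}$ into $k+1$ non-empty blocks by putting $\star$ into the first block alongside $S_1$. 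This is a bijection onto ordered partitions of an $(m+1)$-set into $k+1$ non-empty blocks, whose count is $(k+1)!\,\stirl{m+1}{k+1}$. But the last block $[m]\sm S_k$ must not contain $\star$; since $\star$ is in the first block by construction, this is automatic, so no correction is needed—however I must double-check the claimed formula $k!\,\stirl{m+1}{k+1}$ rather than $(k+1)!\,\stirl{m+1}{k+1}$, which means the ordering of the first block relative to the others is \emph{not} free. The cleaner route, which I would adopt, is: condition on $|S_1|=j$; then $S_1$ can be any of $\binom{m}{j}$ subsets, and the remaining chain $S_1\subset S_2\subset\cdots\subset S_k\subset[m]$ corresponds to a surjection-type count on $[m]\sm S_1$ giving $\spsbc_{m-j}(\emptyset,k)=k!\,\stirl{m-j}{k}$ from part~(i) (after relabeling), so $\spsac_m(\emptyset,k)=\sum_{j\ge 0}\binom{m}{j}k!\,\stirl{m-j}{k}=k!\sum_j\binom{m}{j}\stirl{m-j}{k}$, and the identity $\sum_j\binom{m}{j}\stirl{m-j}{k}=\stirl{m+1}{k+1}$ (a standard recurrence/identity for Stirling numbers, obtained by conditioning a partition of $[m+1]$ on the block containing the element $m+1$) finishes it.

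\textbf{Main obstacle.} The substantive difficulty is purely bookkeeping: being careful about whether a ``$\subseteq$'' versus ``$\subset$'' at the two ends of the chain allows an empty gap block, and correspondingly whether one lands on ordered partitions of $[m]$, of $[m+1]$, into $k$ or $k+1$ blocks. The asymmetry between the $\spsa$ chains (which start with $\subseteq$, permitting $S_1=\emptyset$, and end with $\subset$, forcing the top gap non-empty) and the $\spsb$ chains (which fix $S_0=\emptyset$ and end with $\subset$) is exactly what produces the shift from $\stirl{m}{k}$ to $\stirl{m+1}{k+1}$, and getting that shift right is the only place an error could creep in; the Stirling identity $\sum_j\binom{m}{j}\stirl{m-j}{k}=\stirl{m+1}{k+1}$ is classical and can be cited or proved in one line by the ``block of $m+1$'' argument.
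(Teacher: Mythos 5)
Your proof is correct, and part (i) is essentially the paper's argument (the paper phrases the bijection as a map $\varphi$ from permutations-of-$[k]$ paired with unordered partitions of $[m]$ into $k$ blocks, which is just the ordered-partition/gap-set correspondence you describe, with the injectivity and surjectivity spelled out). Part (ii), however, takes a genuinely different and arguably more transparent route. The paper splits $\spsa_m(\emptyset,k)$ only on whether $S_1=\emptyset$ or $S_1\ne\emptyset$, which after a re-indexing gives $\spsac_m(\emptyset,k)=\spsbc_m(\emptyset,k+1)+\spsbc_m(\emptyset,k)$ (the paper has an off-by-one typo in stating this, but the subsequent display makes clear what is meant), and then verifies $(k+1)!\,\stirl{m}{k+1}+k!\,\stirl{m}{k}=k!\,\stirl{m+1}{k+1}$ by a short explicit manipulation of the alternating-sum formula for $\stirl{m}{k}$. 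You instead condition on the full size $j=|S_1|$, reduce each stratum to $\spsbc_{m-j}(\emptyset,k)=k!\,\stirl{m-j}{k}$ via part (i), and invoke the classical block-of-$(m+1)$ identity $\sum_j\binom{m}{j}\stirl{m-j}{k}=\stirl{m+1}{k+1}$. The paper's split is the special case of your conditioning that distinguishes only $j=0$ from $j>0$, so your argument subsumes theirs; what the paper's version buys is that it stays inside the $\spsbc$ framework and never needs the binomial-Stirling identity, at the cost of the explicit-formula computation.

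One small note: the first idea you entertained for (ii) and then abandoned actually works. Mapping a chain $\emptyset\subseteq{}S_1\subset\cdots\subset{}S_k\subset[m]$ to the gap sequence $(S_1\cup\{\star\},S_2\setminus{}S_1,\ldots,[m]\setminus{}S_k)$ is a bijection onto ordered partitions of $[m]\cup\{\star\}$ into $k+1$ blocks \emph{with the constraint that $\star$ lies in the first block}. That constraint kills the freedom to permute the $\star$-block among the others, so the count is $k!\,\stirl{m+1}{k+1}$, not $(k+1)!\,\stirl{m+1}{k+1}$, and matches the claim directly. Your worry that ``the ordering of the first block relative to the others is not free'' was precisely the correct observation, and it resolves the discrepancy rather than invalidating the bijection, so you could have stopped there.
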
 

\begin{proof} 
  %%%%%%%%%%%%%%%%%%% item (i) %%%%%%%%%%%%%%%%%%%%%%%%%
  Recall that the Stirling number $\stirl{m}{k}$ counts the number of elements
  of the set $\binom{[m]}{k}$ of all partitions of $[m]=\{1,2,\ldots,m\}$
  into $k$ subsets.
	
  In order to prove {\rm(i)} let $ \sigma:[k]\rightarrow[k]$ be a 
  permutation of the integers in $[k]$ and $T=(T_1,\ldots,T_k)$ be 
  a partition of $[m]$ into $k$ subsets. We claim 
  that  the map  $\varphi$ which sends each pair $(\sigma,T)$ 
  to  the chain 
  \begin{equation*}
    \emptyset\subset{}T_{\sigma(1)}\subset(T_{\sigma(1)}\cup  
    T_{\sigma(2)})\subset\cdots\subset\bigcup_{i=1}^{k-1} 
    T_{\sigma(i)}\subset\bigcup_{i=1}^{k}T_{\sigma(i)}=[m]
  \end{equation*}
  is a  bijection between 
  $[k]\times\binom{[m]}{k}$ and $\spsb_m(\emptyset,k).$ 
	 
  To prove our claim, notice first that, since the sets $T_1,\ldots,T_k$
  are non-empty, the inclusions in the chain $\varphi(\sigma,T)$
  are strict and thus $\varphi$ is well defined. 
  To prove that $\varphi$ is injective, let $\sigma,\tau$ be two permutations 
  of $[k]$, and $T=(T_1,\ldots,T_k),$ $T'=(T'_1,\ldots,T'_k)$ be two partitions 
  of $[m]$ into $k$ subsets. We  assume that 
  $\varphi(\sigma,T)=\varphi(\tau,T')$ and we will prove that $\sigma=\tau$ 
  and $\{T_1,\ldots,T_k\}=\{T'_1,\ldots,T'_k\}.$ We use induction on the size 
  of $[m]$, the case $m=1$ being trivial.	We next assume that our 
  assumption holds true for any proper subset of $[m]$ and any $k<m$ and we 
  prove it for $[m].$ To this end, since
  $\varphi(\sigma,T)=\varphi(\tau,T')$,
  we have that the chains 
  \begin{align*}
    \emptyset\subset{}T_{\sigma(1)}\subset(T_{\sigma(1)}\cup{}
    T_{\sigma(2)})\subset\cdots\subset\bigcup_{i=1}^{k-2}T_{\sigma(i)},
    \\
    %%%%			
    \emptyset\subset{}T'_{\tau(1)}\subset(T'_{\tau(1)}\cup{}T'_{\tau(2)})
    \subset\cdots\subset\bigcup_{i=1}^{k-2}T'_{\tau(i)}
  \end{align*}
  are identical. Thus, using the induction hypothesis, we deduce that 
  $T_{\sigma(i)}=T'_{\tau(i)}$ and $\sigma(i)=\tau(i)$ for all $1\leq{}i\leq{}k-1$.
  Clearly, $\sigma(k),\tau(k)\in{}K=[k]\sm{}\{\sigma(i):1\leq{}i\leq{}k-1\}.$
  But since $|K|=1$, we have that $\sigma(k)=\tau(k).$ Moreover, since 
  $[m]=\bigcup_{i=1}^{k-1}T_{\sigma(i)}=\bigcup_{i=1}^{k}T'_{\tau(i)}$
  we deduce that $T_{\sigma(k)}=T'_{\tau(k)}.$ This completes our induction.
  Finally, to prove that $\varphi$ is onto, we consider a chain
  $\emptyset\subset{}S_1\subset{}S_2\subset\cdots\subset{}S_{k-1}\subset[m]$
  in $\spsb_{m}(\emptyset,k)$ and we set
  $T_k:=[m]\setminus{}S_{k-1},T_{k-1}:=S_{k-1}\setminus{}S_{k-2},\dots,
  \allowbreak{}T_{2}:=S_{2}\setminus{}S_{1}$ and $T_{1}:=S_{1}.$
  It is immediate to see that $T_1,\ldots,T_k$ is a partition of $[m]$
  into $k$ non-empty sets and that
  $\varphi(A,{\tt id})=\emptyset\subset{}S_1\subset{}S_2\subset\cdots\subset{}S_k$,
  where ${\tt id}$ is the identity permutation in $[k].$
  %  

  %%%%%%%%%%%%%%%%%%% item (ii) %%%%%%%%%%%%%%%%%%%%%%%%%
	
  To prove {\rm (ii)}, notice that
  \begin{align*}
    \spsa_m(\emptyset,&\,k)
    =\,\{(S_1,\ldots,S_k)\mid{}
    \emptyset\subseteq{}S_1\subset{}\cdots\subset{}S_k\subset{}[m]\}\\  
    %%%
    =&\,\{(S_1,\ldots,S_k)\mid{}
    \emptyset\subset{}S_1\subset{}\cdots\subset{}S_k\subset{}[m]\}
    \,\bigcup\,\{(S_1,\ldots,S_k)\mid{}\emptyset=S_1
    \subset{}\cdots\subset{}S_k\subset{}[m]\}\\
    %%%
    =&\,\{(S_1,\ldots,S_k)\mid{}
    \emptyset\subset{}S_1\subset{}\cdots\subset{}S_k\subset{}[m]\}
    \,\bigcup\,\{(S_2,\ldots,S_k)\mid{}\emptyset\subset{}
    S_2\subset{}\cdots\subset{}S_k\subset{}[m]\}.
  \end{align*}
  Using {\rm(i)}, we have:
  \begin{align*}
    \spsac_m(\emptyset,k)
    &=\spsbc_m(\emptyset,k)+\spsbc_m( \emptyset,k-1)\\
    &=\sum_{j=0}^{k+1}(-1)^{k+1-j}\tbinom{k+1}{j}j^m+\sum_{j=0}^{k}(-1)^{k-j}
    \tbinom{k}{j}j^m\\
    &=(k+1)^m+\sum_{j=0}^k(-1)^{k+1-j}
    \biggl(\tbinom{k+1}{j}-\tbinom{k}{j}\biggr)\,j^m\\
    &=(k+1)^m+\sum_{j=0}^k(-1)^{k+1-j}\tbinom{k}{j-1}j^m\\
    &=(k+1)^m+\sum_{i=0}^{k-1}(-1)^{k-i}\tbinom{k}{i}(i+1)^m\\
    &=\sum_{i=0}^{k}(-1)^{k-i}\tbinom{k}{i}(i+1)^m\\
    &=k!\,\stirl{m+1}{k+1}.\qedhere
  \end{align*}  
\end{proof}
\medskip

The following combinatorial identities are used in the
proof of Lemma \ref{lem:hQ}. 
\begin{lemma}\label{stirl-euler}
For any $m\ge{}1$, we have:
\begin{equation}
  \sum_{i=0}^{m}i!\,\stirl{m+1}{i+1}(t-1)^{m-i}
  =\sum_{j=0}^{m-1}\eul{m}{j}t^{m-j},
  \label{WE1}
\end{equation} 
and
\begin{equation} 
  \label{WE2} 
  \sum_{i=0}^{m-1}(i+1)!\,\stirl{m}{i+1}(t-1)^{m-i-1}
  =\sum_{j=0}^{m-1}\eul{m}{j}t^{m-1-j}.
\end{equation} 
\end{lemma}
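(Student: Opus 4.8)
The plan is to prove both identities by reducing them to a single classical generating-function identity connecting Stirling numbers of the second kind and Eulerian numbers, namely Worpitzky's identity together with the Newton expansion $x^m=\sum_{i\ge 0} i!\,\stirl{m}{i}\binom{x}{i}$. First I would record the ordinary generating function in the variable $x$: for a fixed $m\ge 1$,
\begin{equation*}
  \sum_{n\ge 0} n^{m}\,z^{n}=\frac{\sum_{j=0}^{m-1}\eul{m}{j}\,z^{\,j+1}}{(1-z)^{m+1}},
\end{equation*}
which is the standard Eulerian generating identity (cf.~\cite{gkp-cm-89}). The idea is that the left-hand sides of \eqref{WE1} and \eqref{WE2} are exactly what one obtains by extracting, from suitable finite-difference/polynomial expansions, the coefficients $i!\,\stirl{m+1}{i+1}$ and $(i+1)!\,\stirl{m}{i+1}$, and that these are precisely the connection coefficients expressing the polynomial $x^{m}$ (resp.\ a shifted version) in the basis $\{\binom{x}{i}\}$; substituting $x\mapsto$ an appropriate linear function of $t$ turns the falling-factorial basis into powers of $t-1$, after which the Eulerian generating function produces the right-hand sides.

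Concretely, for \eqref{WE1} I would start from the Worpitzky-type expansion
$\sum_{i=0}^{m} i!\,\stirl{m+1}{i+1}\,\binom{x}{i}$ and identify it, via the recursion $\stirl{m+1}{i+1}=(i+1)\stirl{m}{i+1}+\stirl{m}{i}$ and the Newton expansion above, with the polynomial $\bigl((x+1)^{m+1}-x\,\dots\bigr)/(\cdot)$—more simply, one checks directly that $\sum_{i\ge 0} i!\,\stirl{m+1}{i+1}\binom{x}{i}$ equals the polynomial whose value at $x=n\in\naturals$ is $\sum_{\ell=0}^{n}\ell^{m}$ (a telescoping/summation identity for $\stirl{m+1}{i+1}$). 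Then set $x=\frac{t}{t-1}$ (so that $\binom{x}{i}$ times $(t-1)^{m}$ becomes a polynomial in $t$ of the required shape), clear denominators by multiplying through by $(t-1)^{m}$, and invoke the Eulerian generating function to collapse the sum to $\sum_{j=0}^{m-1}\eul{m}{j}t^{m-j}$. The identity \eqref{WE2} follows by the same device applied to $\sum_{i=0}^{m-1}(i+1)!\,\stirl{m}{i+1}\binom{x}{i}$, whose value at $x=n$ is $n\cdot n^{m-1}$-type (again a known summation for $\stirl{m}{i+1}$), with the substitution tuned to produce powers $t^{m-1-j}$; alternatively \eqref{WE2} can be deduced from \eqref{WE1} by applying the backward difference operator $t\mapsto t$, $f(t)\mapsto f(t)-f(t-1)$ and using $\stirl{m+1}{i+1}-\stirl{m}{i+1}=(i+1)\stirl{m}{i+1}\cdot[\text{shift}]$ together with the Eulerian recursion $\eul{m}{j}=(j+1)\eul{m-1}{j}+(m-j)\eul{m-1}{j-1}$.

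The main obstacle I anticipate is bookkeeping rather than depth: getting the exact linear substitution $x\mapsto t/(t-1)$ (or $x\mapsto 1/(1-t^{-1})$) right so that $i!\,\binom{x}{i}(t-1)^{m}$ lands on a polynomial in $t$ with the precise exponents appearing in \eqref{WE1}–\eqref{WE2}, and correctly matching the off-by-one shifts between $\stirl{m+1}{i+1}$ and $\stirl{m}{i+1}$ against the index ranges $0\le j\le m-1$ on the Eulerian side. A cleaner, fully self-contained alternative—which I would fall back on if the generating-function substitution gets unwieldy—is to prove both identities by induction on $m$ directly from the recursions $\stirl{m}{k}=k\stirl{m-1}{k}+\stirl{m-1}{k-1}$ and $\eul{m}{k}=(k+1)\eul{m-1}{k}+(m-k)\eul{m-1}{k-1}$: differentiating (or multiplying by $t$ and regrouping) the degree-$m{-}1$ instance and comparing coefficients of each power of $t$ reduces the inductive step to these two three-term recursions, with the base case $m=1$ giving $\stirl{2}{2}=\eul{1}{0}=1$ on both sides. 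Either route is routine; I would present the inductive one for brevity and robustness.
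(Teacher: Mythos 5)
Your fallback plan (induction on $m$ via the two three-term recursions) is sound and is a genuinely different route from the paper's proof of \eqref{WE1}. The paper proves \eqref{WE1} by a direct closed-form computation: it substitutes the explicit alternating-sum formula for $\stirl{m+1}{i+1}$, expands $(t-1)^{m-i}$ binomially, and collapses the resulting double sum with the Vandermonde-type convolution $\sum_i\binom{m-i}{j'}\binom{i}{j-1}=\binom{m+1}{j+j'}$ to recover the defining sum for $\eul{m}{j}$. Your inductive route avoids any such computation. It is worth making the structure of the induction explicit, since you left it as ``routine'': applying $\stirl{m+1}{i+1}=(i+1)\stirl{m}{i+1}+\stirl{m}{i}$ to the left-hand side of \eqref{WE1} and re-indexing gives exactly $t$ times the left-hand side of \eqref{WE2}, so \eqref{WE2} for $m$ implies \eqref{WE1} for $m$; and the paper's own proof of \eqref{WE2} shows \eqref{WE1} for $m-1$ implies \eqref{WE2} for $m$ (by splitting $(i+1)=m-(m-i-1)$, recognizing a derivative, and invoking the Eulerian recursion $\eul{m}{i}=(m-i)\eul{m-1}{i-1}+(i+1)\eul{m-1}{i}$). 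Chaining these two implications together, starting from the $m=1$ base case you verified, gives exactly what you want. So your approach and the paper's approach to \eqref{WE2} overlap; where you truly differ is \eqref{WE1}, where your inductive treatment is more elementary but also slightly longer, while the paper's Vandermonde manipulation is terse but requires recognizing the right convolution identity.

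Your first route (Worpitzky plus the substitution $x\mapsto t/(t-1)$, followed by clearing denominators against the Eulerian ordinary generating function) is too vague as stated to be assessed as correct. In particular, the claimed identity ``$\sum_i i!\,\stirl{m+1}{i+1}\binom{x}{i}$ equals the polynomial whose value at $x=n$ is $\sum_{\ell=0}^{n}\ell^m$'' needs to be stated and proved carefully before the substitution can be made precise, and the bookkeeping of which power of $(t-1)$ to multiply through by is exactly the off-by-one pitfall that produces the two different exponents $t^{m-j}$ and $t^{m-1-j}$ in \eqref{WE1} and \eqref{WE2}. You are right to treat that route as a sketch and fall back on the induction; I would simply present the induction.
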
 
%%%%%%%%%%%%%%%%%%%%%%%%%proof%%%%%%%%%%%%%%%%%%%%%%%%%%%%%%%%%%%%%%%%%%%%%%%%%%
\begin{proof}
Observe that:
\begin{align*}
  \sum_{i=0}^{m}i!\,\stirl{m+1}{i+1}(t-1)^{m-i}
  &=\sum_{i=0}^{m}\left(\frac{1}{i+1}\sum_{j=0}^{i+1}(-1)^{i+1-j}\tbinom{i+1}{j}
  j^{m+1}\right)\left(\sum_{j'=0}^{m-i}(-1)^{m-i-j'}\tbinom{m-i}{j'}t^{j'}
  \right)\\
  %%%	
  &=\sum_{i=0}^{m}\left(\sum_{j=0}^{i+1}(-1)^{i+1-j}\tbinom{i}{j-1}j^{m}\right)
  \left(\sum_{j'=0}^{m-i} (-1)^{m-i-j'}\tbinom{m-i}{j'}\,t^{j'}\right)\\
  %%%	
  &=\sum_{j=0}^{m+1}\sum_{j'=0}^{m}(-1)^{m+1-j-j'}j^{m}
  \sum_{i=0}^{m}\tbinom{m-i}{j'}\tbinom{i}{j-1}\,t^{j'}\\
  %%%	
  &=\sum_{j=1}^{m+1}\sum_{j'=0}^{m}(-1)^{m+1-j-j'}j^{m}\tbinom{m+1}{j+j'}\,
  t^{j'}\\
  %%%	
  &\hspace{-0.5cm}\overset{\ell:=m-j'}{=}\sum_{j=1}^{m+1}
  \sum_{\ell=0}^{m}(-1)^{\ell-j+1}j^{m}\tbinom{m+1}{m-\ell+j}\,t^{m-\ell}\\
  %%%	
  &\hspace{-0.5cm}\overset{i:=\ell-j+1}{=}\sum_{\ell=0}^{m}\sum_{i=0}^{\ell}
  (-1)^i(\ell-i+1)^{m}\tbinom{m+1}{i}\,t^{m-\ell}\\
  %%%	
  &=\sum_{\ell=0}^{m}\eul{m}{\ell}t^{m-\ell}=
  \sum_{\ell=0}^{m-1}\eul{m}{\ell}t^{m-\ell},
\end{align*}
where in the last sum we used the fact that $\eul{m}{m}=0$ for all  $m\geq{}1$.
	
To prove \eqref{WE2}, we distinguish between the cases $m=1$ and $m>1$.
For $m=1$ we have:
\begin{align*}
  \sum_{i=0}^{m-1}(i+1)!\stirl{m}{i+1}(t-1)^{m-i-1}&=(t-1)^0=1=t^0
  =\sum_{i=0}^{m-1}\eul{m}{i}t^{m-1-i},
\end{align*}
where we used the fact that $\stirl{1}{1}=\eul{1}{0}=1$.
For $m>1$, we set $\Sgen{m}{t}:=\sum_{i=0}^{m}i!\,\stirl{m+1}{i+1}t^{m-i}$ 
and $\E{m}{t}:=\sum_{i=0}^{m-1}\eul{m}{i}t^{m-i}$, so that relation
\eqref{WE1} is equivalent to $\Sgen{m}{t-1}=\E{m}{t}.$ We then have:
\begin{align*}
  \sum_{i=0}^{m-1}(i+1)!\stirl{m}{i+1}(t-1)^{m-i-1}&=
  m\sum_{i=0}^{m-1}i!\,\stirl{m}{i+1}(t-1)^{m-i-1}-
  \sum_{i=0}^{m-1}(m-i-1)\,i!\,\stirl{m}{i+1}(t-1)^{m-i-1}\\
  %%%			
  &=m\,\Sgen{m-1}{t-1}+(t-1)\,{\mathtt S}'_{m-1}(t-1)\\
  %%%			
  &=m\,\E{m-1}{t}-(t-1)\EE{m-1}{t}\\   
  %%%			
  &=\sum_{i=0}^{m-1}\biggl(m\eul{m-1}{i}-(m-1-i)\eul{m-1}{i}+(m-i)
  \eul{m-1}{i-1}\biggr)t^{m-1-i}\\
  %%%			
  &=\sum_{i=0}^{m-1}\eul{m}{i}t^{m-1-i},
\end{align*}
where in the last equality we used the recurrence relation of Eulerian
numbers:
\[\eul{m}{i}=(m-i)\eul{m-1}{i-1}+(i+1)\eul{m-1}{i}.\qedhere\]
\end{proof}
%%%%%%%%%%%%%%%%%%% end of proof%%%%%%%%%%%%%%%%%%%%%%%%%%%%%%%%%%%%%%%%%%%%%%%%

\section{Relations appearing in the derivation
  of the recurrence relation for the \texorpdfstring{$h$}{h}-vector of
  \texorpdfstring{$\fF_R$}{FR}}
\label{app:sec:recF}

%%%%%%%%%%%%%%%%%%%%%% new subsection %%%%%%%%%%%%%%%%%%%%%%%%%%%%%%%%%%%%%%%%%%
%%%%%%%%%%%%%%%%%%%%%%%%%%%%%%%%%%%%%%%%%%%%%%%%%%%%%%%%%%%%%%%%%%%%%%%%%%%%%%%%
\subsection{McMullen's relation restated}
\label{app:sec:McM-restated}

McMullen \cite{m-mnfcp-70}, in his original proof of the Upper Bound
Theorem for polytopes, proved that for any simplicial $d$-polytope $P$
the following relation holds:
\begin{equation}
  \label{equ:hkrelP}
  (k+1)h_{k+1}(\bx{P})+(d-k)h_k(\bx{P})
  =\sum_{v\in\text{vert}(\bx{P})}h_k(\bx{P}/v),\qquad 0\le{}k\le{}d-1.
\end{equation}
Below we rewrite these relations in terms of generating functions.

%%%%%%%%%%%%%McMullen's relation in terms of gen functioons %%%%%%%%%%%%%%%%%%%
%%%%%%%%%%%%%%%%%%%%%%%%%%%%%%%%%%%%%%%%%%%%%%%%%%%%%%%%%%%%%%%%%%%%%%%%%%%%%%%
\begin{lemma}[McMullen 1970]
\label{lem:McM}
For any simplicial $d$-polytope $P$ 
\begin{equation}
  d\,\h{\bx{P}}+(1-t)\hh{\bx{P}}=\sum_{v\in\text{vert}(\bx{P})}\h{\bx{P}/v}.
  \label{McM}
\end{equation}
\end{lemma}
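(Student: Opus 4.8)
The plan is to derive the generating-function identity \eqref{McM} directly from the classical element-wise relation \eqref{equ:hkrelP}, which is exactly McMullen's result from his 1970 proof of the Upper Bound Theorem. First I would recall the definitions: for the simplicial $(d-1)$-complex $\bx{P}$ we have $\h{\bx{P}}=\sum_{i=0}^d h_i(\bx{P})t^{d-i}$ and, for each vertex $v$, the link $\bx{P}/v$ is a simplicial $(d-2)$-complex, so $\h{\bx{P}/v}=\sum_{k=0}^{d-1}h_k(\bx{P}/v)t^{d-1-k}$. The goal is to recognize each term of \eqref{McM} as the generating function whose coefficient sequence is dictated by \eqref{equ:hkrelP}.

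The key computation is to expand the left-hand side of \eqref{McM}. Writing $\hh{\bx{P}}$ for $\tfrac{\partial}{\partial t}\h{\bx{P}}=\sum_{i=0}^d (d-i)h_i(\bx{P})t^{d-i-1}$, I would compute
\begin{align*}
  d\,\h{\bx{P}}+(1-t)\hh{\bx{P}}
  &=\sum_{i=0}^d d\,h_i(\bx{P})t^{d-i}
  +\sum_{i=0}^d (d-i)h_i(\bx{P})t^{d-i-1}
  -\sum_{i=0}^d (d-i)h_i(\bx{P})t^{d-i}\\
  &=\sum_{i=0}^d i\,h_i(\bx{P})t^{d-i}
  +\sum_{i=0}^d (d-i)h_i(\bx{P})t^{d-i-1}.
\end{align*}
In the first sum the $i=0$ term vanishes, so reindexing $i\mapsto k+1$ gives $\sum_{k=0}^{d-1}(k+1)h_{k+1}(\bx{P})t^{d-1-k}$; in the second sum the $i=d$ term vanishes, so with $i\mapsto k$ it is $\sum_{k=0}^{d-1}(d-k)h_k(\bx{P})t^{d-1-k}$. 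Hence the left-hand side equals $\sum_{k=0}^{d-1}\bigl((k+1)h_{k+1}(\bx{P})+(d-k)h_k(\bx{P})\bigr)t^{d-1-k}$. By \eqref{equ:hkrelP} the coefficient of $t^{d-1-k}$ is precisely $\sum_{v}h_k(\bx{P}/v)$, which is exactly the coefficient of $t^{d-1-k}$ in $\sum_{v\in\text{vert}(\bx{P})}\h{\bx{P}/v}$. Matching coefficients of all powers $t^{d-1-k}$, $0\le k\le d-1$, yields \eqref{McM}.

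The only genuine content here is \eqref{equ:hkrelP} itself, which I would cite as McMullen's theorem (\cite{m-mnfcp-70}) rather than reprove; the rest is bookkeeping with the derivative and a shift of index. The main obstacle, if any, is purely clerical: making sure the degree conventions for $\h{\cdot}$ of a $(d-1)$-complex versus a $(d-2)$-complex (the link) are handled consistently, so that both sides of \eqref{McM} are polynomials of the same degree $d-1$ in $t$ and the coefficient extraction is legitimate. Once the conventions are pinned down, the equivalence of \eqref{McM} and \eqref{equ:hkrelP} is immediate.
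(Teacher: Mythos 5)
Your proof is correct and takes essentially the same approach as the paper: both derivations reduce Lemma \ref{lem:McM} to pure bookkeeping starting from McMullen's element-wise relation \eqref{equ:hkrelP}, multiplying by powers of $t$, summing, and matching coefficients. The only cosmetic difference is the direction of travel (you expand the left-hand side of \eqref{McM} and read off coefficients, whereas the paper starts from \eqref{equ:hkrelP} multiplied by $t^{d-k-1}$ and assembles both sides), which does not change the substance of the argument.
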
 
%%%%%%%%%%%%%%%%%%% chain counting lemma %%%%%%%%%%%%%%%%%%%%%%%%%%%%%%%%%%%%%%%
\begin{proof}
  Multiplying both sides of \eqref{equ:hkrelP} by $t^{d-k-1}$, and
  summing over  all $ 0 \leq k \leq d$, we get:
  \begin{equation}
    \label{equ:rel1}
    \sum_{k=0}^{d} (k+1)h_{k+1}(\bx{P}) 
    t^{d-k-1}+\sum_{k=0}^{d}(d-k)h_k(\bx{P})t^{d-k-1} 
    =\sum_{k=0}^{d} \sum_{v\in\text{vert}(\bx{P})}h_k(\bx{P}/v)t^{d-k-1}.
  \end{equation}
  For the right-hand side of \eqref{equ:rel1} we have:
  \begin{equation} 
    \label{equ:rel1-rhs}
    \sum_{k=0}^{d} \sum_{v\in\text{vert}(\bx{P})}h_k(\bx{P}/v)t^{d-k-1}
    =\sum_{v\in\text{vert}(\bx{P})}\sum_{k=0}^{d}h_k(\bx{P}/v) t^{d-1-k}\\
    =\sum_{v\in\text{vert}(\bx{P})}   \h{\bx{P}/v},
  \end{equation}
  whereas for the left-hand side of \eqref{equ:rel1} we get:
  \begin{equation}\label{equ:rel1-lhs}
    \begin{aligned}
      \sum_{k=0}^{d} (k+1)h_{k+1}(\bx{P}) t^{d-k-1}
      &+   \sum_{k=0}^{d} (d-k)h_k(\bx{P})t^{d-k-1} \\
      &=
      \sum_{k=0}^{d} k h_{k}(\bx{P}) t^{d-k} +
      \sum_{k=0}^{d} (d-k)h_k(\bx{P})t^{d-k-1}\\
      &=
      d \sum_{k=0}^{d}  h_{k}(\bx{P}) t^{d-k}
      +(1-t)\sum_{k=0}^{d} (d-k) h_{k}(\bx{P}) t^{d-k-1}\\
      &=
      d\, \h{\bx{P}}+(1- t)\,  \hh{\bx{P}}.
    \end{aligned}
  \end{equation}
  Substituting \eqref{equ:rel1-rhs} and \eqref{equ:rel1-lhs} in
  \eqref{equ:rel1} we recover the relation in the statement of  the lemma.
\end{proof}
%%%%%%%%%%%%%%%%%%%%%%%%%%%%%%%%%%%%%%%%%%%%%%%%%%%%%%%%%%%%%%%%%%%%%%%%%%%%%%%%
%%%%%%%%%%%%%%%%%%%%%%%%%%%%%%%%%%%%%%%%%%%%%%%%%%%%%%%%%%%%%%%%%%%%%%%%%%%%%%%%
%%%%%%%%%%%%%%%%%%%%%%%%%%%%%%%%%%%%%%%%%%%%%%%%%%%%%%%%%%%%%%%%%%%%%%%%%%%%%%%%
\subsection{One more auxiliary set}
\label{app:sec:setD}

Recall that $\D{R}{T}{X}{\ell}$ denotes the cardinality of the set:
\begin{equation*}
  \mathcal{D}(R,T,X,\ell):=\{(S_1,\ldots,S_\ell):
  X\subseteq{}S_1\subset{}S_2\subset\cdots\subset{}S_{\ell}\subset{}R
  \mbox{ and }S_i=T\mbox{ for some }1\leq{}i\leq\ell\}.
\end{equation*} 
The following lemma expresses the sum of the cardinalities
$\D{R}{T}{X}{\ell}$, over all $T$ with $X\subseteq{}T\subset{}R$, in
terms of the Stirling numbers of the second kind.

%%%%%%%%%%%%%%%%%%%%%%%technical lemma for D and strirling numbers%%%%%%%%%%%%%%
\begin{lemma} For any $\ell\in\naturals$, and $X$, $R$ with
  $\emptyset\subseteq{}X\subset{}R$, we have:
\label{lem:D-S}
\begin{equation}
  \sum_{X\subseteq{}T\subset{}R}\D{R}{T}{X}{\ell}
  =\ell\,\ell!\,\stirl{|R|-|X|+1}{\ell+1}.
\label{equ:DS}
\end{equation}
\end{lemma}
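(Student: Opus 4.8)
The plan is to prove identity \eqref{equ:DS} by a double-counting argument that splits the chains $(S_1,\ldots,S_\ell)$ according to the position $i$ at which $S_i = T$, combined with the chain-counting results already established in Lemma \ref{lem:chain_counting}. First I would reduce to the case $X = \emptyset$: since every set in the chain contains $X$, replacing each $S_j$ by $S_j \setminus X$ and $R$ by $R\setminus X$ gives a bijection, so that $\D{R}{T}{X}{\ell} = \D{R\setminus X}{T\setminus X}{\emptyset}{\ell}$ and $|R|-|X|$ plays the role of the ground-set size; it therefore suffices to prove $\sum_{\emptyset \subseteq T \subset R}\D{R}{T}{\emptyset}{\ell} = \ell\,\ell!\,\stirl{|R|+1}{\ell+1}$.

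The main step is to count the left-hand side by conditioning on \emph{which} link in the chain equals the distinguished set. Fix a chain $\emptyset \subseteq S_1 \subset S_2 \subset \cdots \subset S_\ell \subset R$ in $\spsa_{|R|}(\emptyset,\ell)$. For this single chain, the number of pairs $(T, (S_1,\ldots,S_\ell))$ with $T$ ranging over $\{\emptyset \subseteq T \subset R\}$ and $S_i = T$ for some $i$ is precisely the number of indices $i\in\{1,\ldots,\ell\}$ — which is exactly $\ell$, since each $S_i$ is a legitimate choice of $T$ (all satisfy $\emptyset \subseteq S_i \subset R$) and the sets in the chain are distinct. Hence
\begin{equation*}
  \sum_{\emptyset \subseteq T \subset R}\D{R}{T}{\emptyset}{\ell}
  = \ell \cdot |\spsa_{|R|}(\emptyset,\ell)|
  = \ell \cdot \spsac_{|R|}(\emptyset,\ell).
\end{equation*}
Applying Lemma \ref{lem:chain_counting}(ii), $\spsac_{|R|}(\emptyset,\ell) = \ell!\,\stirl{|R|+1}{\ell+1}$, which yields $\ell\,\ell!\,\stirl{|R|+1}{\ell+1}$, as desired. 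Undoing the reduction to $X=\emptyset$ gives \eqref{equ:DS} in full.

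The one point that needs a little care — and is the only real obstacle — is the claim that for a \emph{fixed} chain, each index $i$ contributes a \emph{distinct} value of $T$, so that summing over $T$ and then over chains genuinely recovers $\ell$ times the chain count with no overcounting or undercounting. This holds because the inclusions $S_1 \subset S_2 \subset \cdots \subset S_\ell$ are strict (an element of $\spsa_{|R|}(\emptyset,\ell)$ is by definition a strictly increasing chain, with the convention that $\emptyset \subseteq S_1$ allows $S_1 = \emptyset$), so the $S_i$ are pairwise distinct and the map $i \mapsto S_i$ from indices to admissible values of $T$ is injective. One should also check the degenerate case $S_1 = \emptyset$: then $T = \emptyset$ is among the chosen values of $T$, which is consistent with the sum ranging over $\emptyset \subseteq T \subset R$. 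With this observation in place the computation is immediate and no further combinatorial identities beyond Lemma \ref{lem:chain_counting}(ii) are needed.
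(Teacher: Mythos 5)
Your proposal is correct and uses essentially the same double-counting idea as the paper: for a fixed chain, each of the $\ell$ positions gives one admissible $T$, so the sum over $T$ is $\ell$ times the chain count $\spsac_{|R|-|X|}(\emptyset,\ell)=\ell!\,\stirl{|R|-|X|+1}{\ell+1}$. The paper phrases this as a bijection $\mathcal{Y}\cong\{1,\ldots,\ell\}\times\mathcal{Z}$ rather than explicitly reducing to $X=\emptyset$, but that reduction is a cosmetic difference already absorbed in the identity $\spsac_m(S,k)=\spsac_{m-|S|}(\emptyset,k)$.
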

%%%%%%%%%%%%%%%%%%proof%%%%%%%%%%%%%%%%%%%%%%%%%%%%%%%%%%%%%%%%%%%%%%%%%%%%%%%%
\begin{proof}
  The left-hand side of \eqref{equ:DS} is the cardinality of the set
  \begin{equation*} 
    {\mathcal Y}=\{(S_1,\ldots,S_\ell):X\subseteq{}T\subset{}R,
    X\subseteq{}S_1\subset{}S_2\subset\cdots\subset{}S_{\ell}\subset{}R
    \mbox{ and }S_i=T\mbox{ for some }1\leq{}i\leq\ell\},
  \end{equation*} 
  which is nothing but $\ell$ copies of the set
  \begin{equation*}
    {\mathcal Z}=\{(S_1,S_2,\ldots,S_{\ell})		
    \mid{}X\subseteq{}S_1\subset{}S_2\subset{}\cdots\subset{}S_{\ell}\subset{}R\}.
  \end{equation*} 
  Indeed, 
  \begin{align*} 
    {\mathcal Y}=&\{(S_1,\ldots,S_\ell):X\subseteq{}T\subset{}R,
    X\subseteq{}S_1\subset{}S_2\subset\cdots\subset{}S_{\ell}\subset{}R
    \mbox{ and }S_i=T\mbox{ for some }1\leq{}i\leq\ell\}\\
    %%%		
    =&\{i:1\leq{}i\leq{}\ell\}\times\{(S_1,\ldots,S_\ell):
    X\subseteq{}T\subset{}R,X\subseteq{}S_1\subset{}S_2\subset\cdots
    \subset{}S_{\ell}\subset{}R\mbox{ and }S_i=T\}\\
    %%%		
    =&\{i:1\leq{}i\leq{}\ell\}\times\{(S_1,\ldots,S_\ell):
    X\subseteq{}S_1\subset{}S_2\subset\cdots\subset{}S_{\ell}\subset{}R\}\\
    %%%		
    =&\{i:1\leq{}i\leq{}\ell\}\times{\mathcal Z}.
  \end{align*} 
  By Lemma \ref{lem:chain_counting}{\rm(ii)}, the cardinality of 
  $\mathcal{Z}$ is $\ell!\,\stirl{|R|-|X|+1}{\ell+1}$ 
  and this completes our proof.
\end{proof}
%%%%%%%%%%%%%%%%%%%%%%%end of proof%%%%%%%%%%%%%%%%%%%%%%%%%%%%%%%%%%%%%%%%%%%%

\section{Determinants used in the tightness construction}
\label{ss:technical} 

\renewcommand{\ye}[2]{{\tilde y_{\scalebox{0.7}{$#1,#2$}}}}

%%%%%%%%%%%%%%%%%%%%%%%%%%%%%%%%%%%%%%%%%%%%%%%%%%%%%%%%%%%%%%%%%%%%%%%%

\begin{definition} 
\label{def:det}
Let $Y_i=\{y_{i,1},\ldots,y_{i,\kappa_1}\}$, $1\leq{}i\leq{}n$, be non-empty 
disjoint sets of real numbers. Set  $K:=\kappa_1+\kappa_2+\cdots+\kappa_n$, 
 $m:=K-2n-2$ and let $\mu_1<\mu_2<\cdots<\mu_{m}$ be non-negative integers. We 
 denote by  $\ptn[Y]$ the partition  $Y_1\cup\cdots\cup{}Y_n$ and we define 
 the  $K\times{}K$ matrix $\De{K}{\ptn[Y];\mu_1,\ldots,\mu_m}$ as follows:
\renewcommand{\y}[3]{y_{\scalebox{0.6}{$#1,#2$}}^{\scalebox{0.6}{$#3$}}}

\scalebox{0.85}{
$  
\De{K}{\ptn[Y];\mu_1,\ldots,\mu_m}:=
\left( \begin{array}{cccccccccccccc}
\y{1}{1}{\mu_1}&\cdots&\y{1}{\kappa_1}{\mu_1}&
\0  &\cd    &\0  &\0   &\cd
&\0  &\cd   &\0  &\cd   &\0\\
%%%%%%%%%%%%%%%%%%%%%%%%%%%%%%%%%
\0 &\cd    &\0  & 
\y{2}{1}{\mu_1}&\cdots&\y{2}{\kappa_2}{\mu_1}&
\0   &\cd   &\0  &\cd   &\0  &\cd   &\0\\
%%%%%%%%%%%%%%%%%%%%%%%%%%%%%%%%
\0 &\cd    &\0  &\0  &\cd    &\0  &
\y{3}{1}{\mu_1}&\cdots&\y{3}{\kappa_3}{\mu_1}&
 \cd   &  \0&\cd   &\0\\
%%%%%%%%%%%%%%%%%%%%%%%%%%%%%%%%% 
\vd&\dd    &\vd &\vd &\dd    &\vd &\vd  &\dd   &\vd &\cd   &\vd &\dd   &\vd\\
%%%%%%%%%%%%%%%%%%%%%%%%%%%%%%%%%%
\0 &\cd    &\0  &\0  &\cd    &\0  &\0   &\cd   &\0  &\cd  &
\y{n}{1}{\mu_1}&\cdots&\y{n}{\kappa_1}{\mu_1} \\
%%%%%%%%%%%%%%%%%%%%%%%%%%%%%%%%%%%%%%%%%%%%%%%%%%%%%%%%%%%%%%%%%%%%%%%%%%%%%%%
%%%%%%%%%%%%%%%%%%%%%%%%%%%%%%%%%%%%%%%%%%%%%%%%%%%%%%%%%%%%%%%%%%%%%%%%%%%%%%%
\y{1}{1}{\mu_2}&\cdots&\y{1}{\kappa_1}{\mu_2}&
\0&\cd&\0&
\0&\cd&\0&   \cd&
\0&\cd&\0\\
%%%%%
\0&\cd&\0&
\y{2}{1}{\mu_2}&\cdots&\y{2}{\kappa_2}{\mu_2}&
\0&\cd&\0&  \cd&
\0&\cd&\0\\
%%%%%
\0&\cd&\0& 
\0&\cd&\0& 
\y{3}{1}{\mu_2}&\cdots&\y{3}{\kappa_3}{\mu_2}&  \cd&
\0&\cd&\0\\
%%%%%
\vd&\dd&\vd& 
\vd&\dd&\vd&  
\vd&\dd&\vd& \cd&
\vd&\dd&\vd\\
%%%%%
\0&\cd&\0&
\0&\cd&\0&
\0&\cd&\0& \cd&
\y{n}{1}{\mu_2}&\cdots&\y{n}{\kappa_n}{\mu_2}\\
%%%%%%%%%%%%%%%%%%%%%%%%%%%%%%%%%%%%%%%%%%%%%%%%%%%%%%%%%%%%%%%%%%%%%%%%%%%%%%%%
%%%%%%%%%%%%%%%%%%%%%%%%%%%%%%%%%%%%%%%%%%%%%%%%%%%%%%%%%%%%%%%%%%%%%%%%%%%%%%%%
\y{1}{1}{\mu_3}&\cdots&\y{1}{\kappa_1}{\mu_3}&  
\y{2}{1}{\mu_3}&\cdots&\y{2}{\kappa_2}{\mu_3}&
\y{3}{1}{\mu_3}&\cdots&\y{3}{\kappa_3}{\mu_3}&  \cd& 
\y{n}{1}{\mu_3}&\cdots&\y{n}{\kappa_n}{\mu_3}\\
%%%%%%%%%%%%%%%%%%%%%%%%%%%%%%%%%%%%%%
\y{1}{1}{\mu_4}&\cdots&\y{1}{\kappa_1}{\mu_4}&  
\y{2}{1}{\mu_4}&\cdots&\y{2}{\kappa_2}{\mu_4}&
\y{3}{1}{\mu_4}&\cdots&\y{3}{\kappa_3}{\mu_4}&  \cd& 
\y{n}{1}{\mu_4}&\cdots&\y{n}{\kappa_n}{\mu_4}\\
%%%%%%%%%%%%%%%%%%%%%%%%%%%%%%%%%%%%%%
\vd&\dd&\vd& \vd&\dd&\vd&  \vd&\dd&\vd& \cd& \vd&\dd&\vd\\
%%%%%%%%%%%%%%%%%%%%%%%%%%%%%%%%%%%%%%
\y{1}{1}{\mu_m}&\cdots&\y{1}{\kappa_1}{\mu_m}&  
\y{2}{1}{\mu_m}&\cdots&\y{2}{\kappa_2}{\mu_m}&
\y{3}{1}{\mu_m}&\cdots&\y{3}{\kappa_3}{\mu_m}&  \cd& 
\y{n}{1}{\mu_m}&\cdots&\y{n}{\kappa_n}{\mu_m}
%%%%%%%%%%%%%%%%%%
\end{array}\right)
%%%%%%%%%%%%%%%%%%%%%%%%%%%%%%%%%%%%%%%%%%%%%%%%%%%%%%
\overset
{\substack{\mbox{\tiny row}\\\mbox{\tiny index}}}
{
\begin{array}{c}  
\mbox{\tiny 1}\\
\mbox{\tiny 2}\\
\mbox{\tiny 3}\\  \vd\\
\mbox{\tiny n}\\
\mbox{\tiny n+1}\\
\mbox{\tiny n+2}\\
\mbox{\tiny n+3}\\ \vd\\
\mbox{\tiny 2n}\\
\mbox{\tiny 2n+1}\\
\mbox{\tiny 2n+2}\\ \vd\\
\mbox{\tiny K}
\end{array}
}
$}
\\
We  denote by 
$\DDD{K}{\ptn[Y];\mu_1,\ldots,\mu_m}$ the signed determinant  
$(-1)^{\frac{n(n-1)}{2}}\left|\De{K}{\ptn[Y];\mu_1,\ldots,\mu_m}\right|. $  
\end{definition} 
%%%%%%%%%%%%%%%%%%%%%%%%%%%%%%%%%%%%%%%%%%%%%%%%%%%%%%%%%%%%%%%%%%%%%%%%%%%%%%%%
%%%%%%%%%%%%%%%%%%%%%%%%%%%%%%%%%%%%%%%%%%%%%%%%%%%%%%%%%%%%%%%%%%%%%%%%%%%%%%%%

We, now, parameterize all $y_{i,j}$'s
as follows: for each $1\leq{}i\leq{}n$ we choose arbitrary real numbers 
$0<x_{i,1}<x_{i,2}<\cdots<x_{i,\kappa_i}$ and  non-negative integers 
$0\leq\beta_n<\beta_{n-1}<\cdots<\beta_1$. Then, we set
$y_{i,j}:=x_{i,j}\tau^{\beta_i}$ where $\tau$ is a positive parameter, 
and consider  $\DDD{K}{\ptn[Y];\mu_1,\ldots,\mu_m}$ as a polynomial in 
$\tau$. In the next lemma we essentially show that, for sufficiently
small $\tau$, the determinant $\DDD{K}{\ptn[Y];\mu_1,\ldots,\mu_m}$ is
strictly positive.
%

%%%%%%%%%%%%%%%%%%%%%%%%%%%%%%%%%%%%%%%%%%%%%%%%%%%%%%%%%%%%%%%%%%%%%%%%%%%%%%%%
\begin{lemma} \label{lem:asymptotic-tau}
	If the elements of the sets $Y_i$, $1\leq{}i\leq{}r$, are parameterized as 
	above  and 	$\ptn[Y]=Y_1\cup\cdots\cup{}Y_n$,
	 then $\DDD{K}{\ptn[Y];\mu_1,\ldots,\mu_m}= 
	A\tau^a+O(\tau^{a+1})$, where $A>0$ and $a$ is a positive integer. 
\label{lem:dominant_term}
\end{lemma}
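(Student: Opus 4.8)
\textbf{Proof proposal for Lemma \ref{lem:asymptotic-tau}.}

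The plan is to compute the leading term of $\DDD{K}{\ptn[Y];\mu_1,\ldots,\mu_m}$ in the parameter $\tau$ by expanding the determinant as a signed sum over permutations and tracking the total exponent of $\tau$ contributed by each nonzero term. First I would substitute $y_{i,j}=x_{i,j}\tau^{\beta_i}$ into every entry of $\De{K}{\ptn[Y];\mu_1,\ldots,\mu_m}$, so that a generic nonzero entry in block-row $i$ (there are $n$ such blocks, each repeated once more in the ``lower'' half of the matrix, plus the final un-blocked rows indexed by $\mu_3,\ldots,\mu_m$) carrying exponent $\mu_\ell$ becomes $x_{i,j}^{\mu_\ell}\tau^{\beta_i\mu_\ell}$. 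A term in the Leibniz expansion is a product of $K$ entries, one from each row and each column; its $\tau$-exponent is $\sum \beta_i \mu_\ell$ over the entries selected. Because $\beta_1>\beta_2>\cdots>\beta_n\ge 0$ and $\mu_1<\mu_2<\cdots<\mu_m$, minimizing this exponent is a bipartite assignment problem: large $\beta_i$ should be paired with small $\mu_\ell$-values. The block structure of the matrix (rows $1$ through $n$ and $n+1$ through $2n$ only see columns in the corresponding $Y_i$) forces which $(\beta_i,\mu_\ell)$ pairings are even admissible, and I would argue that the unique exponent-minimizing selection of entries corresponds to a single ``staircase'' permutation, dictated precisely by the ordering hypotheses on the $\beta_i$ and $\mu_\ell$.

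The key steps, in order, are: (1) write $\DDD{K}{\ptn[Y];\mu_1,\ldots,\mu_m}=(-1)^{n(n-1)/2}\sum_{\pi}\operatorname{sgn}(\pi)\prod_{\text{row }\rho}(\De{K})_{\rho,\pi(\rho)}$ and discard all $\pi$ that hit a zero entry, leaving only those respecting the block pattern; (2) among the surviving $\pi$, show by an exchange argument that the $\tau$-exponent $e(\pi)=\sum_\rho \beta_{i(\rho)}\mu_{\ell(\rho)}$ is minimized at a unique $\pi^\star$, using strict monotonicity of $\{\beta_i\}$ and $\{\mu_\ell\}$ (this is a rearrangement-inequality/Monge-matrix type argument, where swapping a ``crossed'' pair of assignments strictly decreases the exponent); (3) collect the coefficient $A$ of $\tau^{a}$, $a:=e(\pi^\star)$, and observe that it is itself a signed product of generalized Vandermonde minors in the $x_{i,j}$ (one for each block), each of which is strictly positive because $0<x_{i,1}<\cdots<x_{i,\kappa_i}$ and the relevant exponent sets $\mu$ are strictly increasing; the sign $(-1)^{n(n-1)/2}$ together with the sign of $\pi^\star$ conspires, by construction of the signed determinant, to make $A>0$; and (4) conclude $a>0$ by noting that at least one $\beta_i\mu_\ell$ factor in $e(\pi^\star)$ is positive — indeed $\beta_1\ge 1$ and some positive $\mu_\ell$ must multiply it, since not all $\mu$'s can be routed to $\beta_n$.

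I expect step (2), pinning down that the exponent-minimizing assignment is \emph{unique} and identifying it explicitly, to be the main obstacle: the interplay between the forced block constraints (which columns are reachable from which rows) and the rearrangement inequality must be handled carefully, because the ``lower'' copies of each block and the final common rows indexed by $\mu_3,\ldots,\mu_m$ break the pure block-diagonal picture. Once $\pi^\star$ is identified, step (3) — recognizing the leading coefficient as a product of honest Vandermonde-type determinants with a controlled sign — should follow from standard multilinearity and column operations, in the same spirit as the classical proof that $C_d(n)$ is neighborly via Vandermonde minors referenced after Lemma \ref{lem:Qstar}. I would also remark that, for the purposes of the tightness construction, only positivity of $A$ and of $a$ matters (not their precise values), which lets me avoid an explicit closed form for $e(\pi^\star)$ and instead argue qualitatively.
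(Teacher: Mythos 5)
Your approach is correct in spirit but genuinely different from the paper's. You propose a raw Leibniz (permutation) expansion of $\De{K}{\ptn[Y];\mu_1,\ldots,\mu_m}$ and then organize the surviving permutations to extract the dominant $\tau$-power, whereas the paper applies the Binet--Cauchy theorem to the factorization $\De{K}{\ptn[Y];\mu_1,\ldots,\mu_m}=LR$ (with $L$ a $\{0,1\}$-selection matrix independent of $\tau$ and $R$ a block-diagonal matrix of moment columns). Binet--Cauchy automatically packages the permutations into a sum over column-index sets $J$, so that each term factors as $\det(L_{[K],J})\cdot\tau^{a(J)}\cdot\prod_i\GVD(J_i)$: the assignment of $\mu$-indices to blocks lives entirely in $J$, the generalized Vandermonde minors come out immediately, and the sign bookkeeping reduces to a single computation of $\det(L_{[K],J^\star})$. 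Your route would have to recover precisely this grouping by hand --- collecting together all permutations that agree on which $\mu$-exponent is routed to which block $Y_i$ --- which is exactly what Binet--Cauchy does in one stroke.

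Two cautions. First, your step (2) as stated is wrong: the $\tau$-exponent $e(\pi)$ is \emph{not} minimized at a unique permutation $\pi^\star$. It is minimized over an entire family of permutations (all of which assign the same multiset of $\mu$-exponents to each block $Y_i$ but differ in how the columns $y_{i,1},\ldots,y_{i,\kappa_i}$ are matched to the rows within that block), and summing over this family is precisely what produces the Vandermonde minors you invoke in step (3). Your steps (2) and (3) are thus internally inconsistent if read literally; what is unique is the optimal block-assignment of $\mu$-exponents (i.e., the set $J^\star$ in the paper's notation), not the minimizing permutation. Second, the exchange argument you gesture at for identifying that optimal assignment must respect two structural constraints that the paper isolates explicitly: the $2n$ forced column choices coming from the sparse identity rows of the matrix, and the ``no repeated dense index across blocks'' condition --- these correspond exactly to the paper's conditions (i) and (ii) for vanishing of $\det(L_{[K],J})$ or $\det(R_{J,[K]})$, and without them the rearrangement inequality alone does not pin down the staircase assignment. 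Your step (4) addressing $a>0$ is a reasonable addition (the paper leaves this implicit).
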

\begin{proof}
\newcommand{\mycd}{\cdots}
	To prove our claim we use the Binet-Cauchy 
	theorem \cite{cauchy-binet}. More precisely, let $J$ be a subset of 
	$\{1,2,\ldots,n(m+1)\}$ of size $K$. We denote by $L_{[K],J}$ the 
	$K\times{}K$ matrix whose columns are the columns of $L$  at indices from 
	$J$ and  by $ R_{J,[K]} $   the  $ K \times K$ matrix whose rows are the 
	rows  of $R$  at indices from $J.$ The Binet-Cauchy theorem   states that:
	\begin{equation}
	\det(L R) = \sum \limits_{J} \det(L_{[K],J}) \det(R_{J,[K]}), 
	\label{Binet}     
	\end{equation}
	where  the sum is taken over all subsets $J$ of $\{1,2,\ldots, n(m+1)\}$ of 
	size $K.$ 
	
	To apply the Binet-Cauchy theorem in our case, notice that the matrix
    $\De{K}{\ptn[Y];\mu_1,\ldots,\mu_m}$ can be factorized into a product of a 
    $K\times{}n(m+1)$ matrix $L$ and an $n(m+1)\times{}K$ matrix $R$ as shown 
    below:
\newcommand{\shi}[2]{#1^{{\langle #2\rangle}}}
\newcommand{\hi}[2]{\rotatebox{0}{\scalebox{0.5}{$\,#1\,\,$}}}
\newcommand{\hir}[2]{\rotatebox{90}{\scalebox{0.5}{$#1$}}}
\newcommand{\hti}[2]{\rotatebox{90}{\scalebox{0.5}{$#1 + #2(m+1)$}}}
\newcommand{\ti}[1]{\rotatebox{0}{\scalebox{0.5}{$#1$}}}
%%%%%%%%%%%%%%%%%%%%%%%%%%%%%%%%%%%%%%%%%%%%%%%%%%%%%%%%%%%%%%%%%%%%%%%%%%%%%%%%
\begin{center}
$L=$\scalebox{0.9}{%\hspace{-2.5cm}
$
\begin{array}{c}
\ti{1}\\
\ti{2}\\
\ti{3}\\ \scalebox{0.7}{$\vd$}\\
\ti{n}\\
\ti{n+1}\\
\ti{n+2}\\
\ti{n+3}\\ \scalebox{0.7}{$\vd$}\\
\ti{2n}\\
\ti{2n+1}\\
\ti{2n+2}\\ \scalebox{0.7}{$\vd$}\\
\ti{K}
\end{array}
%%%%%%%%%%%%%%%%%%%% matrix L %%%%%%%%%%%%%%%%%%%%%%%%%%%%%%%%%%%%%%%%%%%%%%%%%%
%%%%%%%%%%%%%%%%%%%%%%%%%%%%%%%%%%%%%%%%%%%
%%%%%%%%%%%%%% column indices %%%%%%%%%%%%%
\overset{
\begin{array}{cccccccccccccccccccccccccccccccccc}
\hi{1}{}&\hi{2}{}&\hi{3}{}&\hi{4}{}&\cdots&\hir{m+1}{}&
\hti{1}{1}&\hti{2}{1}&\hti{3}{1}&\hti{4}{1}&\cdots&\hti{m+1}{1}&
\hti{1}{2}&\hti{2}{2}&\hti{3}{2}&\hti{4}{2}&\cdots&\hti{m+1}{2}&
\cdots&
\hti{1}{(n-1)}&\hti{2}{(n-1)}&\hti{3}{(n-1)}&\hti{4}{(n-1)}&\cdots&\hti{m+1}{(n-1)}
\end{array}}
{\left(\begin{array}{ccccccccccccccccccccccccccccccccc}
%%%%%%%%%%%%%% 1 in the first place %%%%%%%%%%%%%%%%%%
1&\0&\0&\0&\cd&\0&
\0&\0&\0&\0&\cd&\0&
\0&\0&\0&\0&\cd&\0&  \mycd&
\0&\0&\0&\0&\cd&\0\\
%%%%%%%%%%%%%%%%%%%%%%%%
\0&\0&\0&\0&\cd&\0&
1&\0&\0&\0&\cd&\0&
\0&\0&\0&\0&\cd&\0&  \mycd&
\0&\0&\0&\0&\cd&\0\\
%%%%%%%%%%%%%%%%%%%%%%%%%%%%
\0&\0&\0&\0&\cd&\0&
\0&\0&\0&\0&\cd&\0&
1&\0&\0&\0&\cd&\0&   \mycd&
\0&\0&\0&\0&\cd&\0\\
%%%%%%%%%%%%%%%%%%%%%%%%%%%%%
\vd&\vd&\vd&\vd&\dd&\vd&
\vd&\vd&\vd&\vd&\dd&\vd&
\vd&\vd&\vd&\vd&\dd&\vd&  \dd&
\vd&\vd&\vd&\vd&\dd&\vd\\
%%%%%%%%%%%%%%%%%%%%%%%%%%%%%%%
\0&\0&\0&\0&\cd&\0&
\0&\0&\0&\0&\cd&\0&
\0&\0&\0&\0&\cd&\0&  \mycd&
1&\0&\0&\0&\cd&\0\\
%%%%%%%%%%%% 1 in the second place %%%%%%%%%%%%%%%%%%
\0&1&\0&\0&\cd&\0&
\0&\0&\0&\0&\cd&\0&
\0&\0&\0&\0&\cd&\0&  \mycd&
\0&\0&\0&\0&\cd&\0\\
%%%%%%%%%%%%%%%%%%%%%%
\0&\0&\0&\0&\cd&\0&
\0& 1&\0&\0&\cd&\0& 
\0&\0&\0&\0&\cd&\0& \mycd&
\0&\0&\0&\0&\cd&\0\\
%%%%%%%%%%%%%%%%%%%%%%%%%
\0&\0&\0&\0&\cd&\0&
\0&\0&\0&\0&\cd&\0&
\0& 1&\0&\0&\cd&\0& \mycd&
\0&\0&\0&\0&\cd&\0\\
%%%%%%%%%%%%%%%%%%%%%%%%%%%
\vd&\vd&\vd&\vd&\dd&\vd&
\vd&\vd&\vd&\vd&\dd&\vd&
\vd&\vd&\vd&\vd&\dd&\vd& \dd&
\vd&\vd&\vd&\vd&\dd&\vd\\
%%%%%%%%%%%%%%%%%%%%%%%%%%%%
\0&\0&\0&\0&\cd&\0&
\0&\0&\0&\0&\cd&\0&
\0&\0&\0&\0&\cd&\0&  \mycd&
\0&1&\0&\0&\cd&\0\\
%%%%%%%%%%%%%%%%%%%%%%%%%%%%%%%%%%%%%%%%%%%%%%%%%%%%%%%%%%%%%%%%%%%%%%%%%%%%%%%%
\0&\0&1&0&\cdots&0&
\0&\0&1&0&\cdots&0&
\0&\0&1&0&\cdots&0& \mycd&
\0&\0&1&0&\cdots&0\\
%%%%%%%%%%%%%%%%%%%%%
\0&\0& 0& 1&\cdots& 0&
\0&\0& 0& 1&\cdots& 0&
\0&\0& 0& 1&\cdots& 0&  \mycd&
\0&\0& 0& 1&\cdots& 0\\
%%%%%%%%%%%%%%%%%%%%%%%%
\vd&\vd&\vdots&\vdots&\ddots&\vdots&
\vd&\vd&\vdots&\vdots&\ddots&\vdots&
\vd&\vd&\vdots&\vdots&\ddots&\vdots& \dd& 
\vd&\vd&\vdots&\vdots&\ddots&\vdots\\
%%%%%%%%%%%%%%%%%%%%%%%%%%%%%%%%%%%%
\0&\0& 0& 0&\cdots& 1&
\0&\0& 0& 0&\cdots&1&
\0&\0& 0& 0&\cdots&1& \mycd&
\0&\0& 0& 0&\cdots& 1
\end{array}\right)}
%%%%%%%%%%%%%%%%%%%%% end of left matrix %%%%%%%%%%%%%%%%%%%%%%%%%%%%%%%%%%%%%%%
%%%%%%%%%%%%%%%%%%%%%%%%%%%%%%%%%%%%%%%%%%%%%%%%%%%%%%%%%%%%%%%%%%%%%%%%%%%%%%%%
$},\\
[15pt]

%%%%%%%%%%%%%%%%%%%%%%%%%%%%%%%%%%%%%%%%%%%%%%%%%%%%%%%%%%%%%%%%%%%%%%%%%%%%%%%%
%%%%%%%%%%%%%%%%%%%%%%%% matrix R  %%%%%%%%%%%%%%%%%%%%%%%%%%%%%%%%%%%%%%%%%%%%%
$R=$\,\scalebox{0.8}{%
$\left(\begin{array}{cccccccccccccc}
% 1&\cdots& 1&
\yy{1}{1}{\mu_1}&\cdots&\yy{1}{\kappa_1}{\mu_1}&
\0&\cd   &\0&
\0&\cd   &\0&   \mycd&
\0&\cd&\0\\
%%%%%%%%%%%%%%%%%%%
\yy{1}{1}{\mu_2}&\cdots&\yy{1}{\kappa_1}{\mu_2}&
\0&\cd&\0&
\0&\cd&\0& \mycd
&\0&\cd&\0\\
%%%%%%%%%%%%%%%%%%%%
\yy{1}{1}{\mu_3}&\cdots&\yy{1}{\kappa_1}{\mu_3}&
\0&\cd&\0&
\0&\cd&\0&  \mycd
&\0&\cd&\0\\
%%%%%%%%%%%%%%%%%%%%
\vdots&\ddots&\vdots&
\vd&\dd&\vd&
\vd&\dd&\vd& \dd
&\vd&\dd&\vd\\
%%%%%%%%%%%%%%%%%%%%%
\yy{1}{1}{\mu_m}&\cdots&\yy{1}{\kappa_1}{\mu_m}&
\0&\cd&\0&
\0&\cd&\0&  \mycd
&\0&\cd&\0\\
%%%%%%%%%%%%%%%%%%%%%%%%%%%second block %%%%%%%%%%%%%%%%%%%%%%%%%%%%%%%%%%%%%%%%
\0&\cd&\0&
\yy{2}{1}{\mu_1}&\cdots&\yy{2}{\kappa_2}{\mu_1}&
\0&\cd&\0&   \mycd&
\0&\cd&\0\\
%%%%%%%%%%%%%%%%%%%%%
\0&\cd&\0&
\yy{2}{1}{\mu_2}&\cdots&\yy{2}{\kappa_2}{\mu_2}&
\0&\cd&\0&   \mycd&
\0&\cd&\0\\
%%%%%%%%%%%%%%%%%%%%%
\0&\cd&\0&
\yy{2}{1}{\mu_3}&\cdots&\yy{2}{\kappa_2}{\mu_3}&
\0&\cd&\0&   \mycd&
\0&\cd&\0\\
%%%%%%%%%%%%%%%%%%%%%%
\vd&\dd&\vd&
\vdots&\ddots&\vdots&
\vd&\dd&\vd& \dd
&\vd&\dd&\vd\\
%%%%%%%%%%%%%%%%%%%%%
\0&\cd&\0&
\yy{2}{1}{\mu_m}&\cdots&\yy{2}{\kappa_2}{\mu_m}&
\0&\cd&\0&  \mycd&
\0&\cd&\0\\
%%%%%%%%%%%%%%%%%%%%%%%%%%% third block %%%%%%%%%%%%%%%%%%%%%%%%%%%%%%%%%%%%%%%%
\0&\cd&\0&
\0&\cd&\0&
\yy{3}{1}{\mu_1}&\cdots&\yy{3}{\kappa_3}{\mu_1}& \mycd &
\0&\cd&\0\\
%%%%%%%%%%%%%%%%%%%%%%%

\0&\cd&\0&
\0&\cd&\0&
\yy{3}{1}{\mu_2}&\cdots&\yy{3}{\kappa_3}{\mu_2}& \mycd
&\0&\cd&\0\\
%%%%%%%%%%%%%%%%%%%%%%%
\0&\cd&\0&
\0&\cd&\0&
\yy{3}{1}{\mu_3}&\cdots&\yy{3}{\kappa_3}{\mu_3}& \mycd
&\0&\cd&\0\\
%%%%%%%%%%%%%%%%%%%%%%%
\vd&\dd&\vd
&\vd&\dd&\vd
&\vdots&\ddots&\vdots& \mycd&
\vd&\dd&\vd\\
%%%%%%%%%%%%%%%%%%%%%%%
\0&\cd&\0&
\0&\cd&\0&
\yy{3}{1}{\mu_m}&\cdots&\yy{3}{\kappa_3}{\mu_m}& \mycd
&\0&\cd&\0\\
%%%%%%%%%%%%%%%%%%%%%%%%%%%%%%%%%%%%%%%%%%%%%%%%%%%%%%%%%%%%%%%%%%%%%%%%%%%%%%%%%
\vd&\dd&\vd&
\vd&\dd&\vd&
\vd&\dd&\vd& \ddots
&\vd&\dd&\vd\\
%%%%%%%%%%%%%%%%%%%% last block %%%%%%%%%%%%%%%%%%%%%%%%%%%%%%%%%%%%%%%%%%%%%%%
\0&\cd&\0&
\0&\cd&\0&
\0&\cd&\0& \mycd&
\yy{n}{1}{\mu_1}&\cdots&\yy{n}{\kappa_n}{\mu_1}\\
%%%%%%%%%%%%%%%%% 
\0&\cd&\0&
\0&\cd&\0&
\0&\cd&\0& \mycd&
\yy{n}{1}{\mu_2}&\cdots&\yy{n}{\kappa_n}{\mu_2}\\
%%%%%%%%%%%%%%%%%%
\0&\cd&\0&
\0&\cd&\0&
\0&\cd&\0& \mycd&
\yy{n}{1}{\mu_3}&\cdots&\yy{n}{\kappa_n}{\mu_3}\\
%%%%%%%%%%%%%%%%%%
\vd&\dd&\vd&
\vd&\dd&\vd&
\vd&\dd&\vd& \vd&
\vdots&\ddots&\vdots\\
%%%%%%%%%%%%%%%%%%
\0&\cd&\0&
\0&\cd&\0&
\0&\cd&\0& \mycd&
\yy{n}{1}{\mu_m}&\cdots&\yy{n}{\kappa_n}{\mu_m}\
%%%%%%%%%%%%%%%%%%
\end{array}\right)$}.
\end{center}
%%%%%%%%%%%%%%%%%%%%%%%%%%%%%%%%%%%%%%%%%%%%%%%%%%%%%%%%%%%%%%%%%%%%%%%%%%%%%%%%
%%%%%%%%%%%%%%%%%%%%%%%%%%%%%%%%%%%%%%%%%%%%%%%%%%%%%%%%%%%%%%%%%%%%%%%%%%%%%%%%
%%%%%%%%%%%%%%%%%%%%%%%%%%%%%%%%%%%%%%%%%%%%%%%%%%%%%%%%%%%%%%%%%%%%%%%%%%%%%%%%
The numbers over and sideways of $L$ indicate the column and row
numbers, respectively.

Recall that  $y_{i,j}=x_{i,j}\tau^{\beta_i}.$ Then it is not hard to see 
that, for each $J\subseteq\{1,\ldots,n(m+1)\}$  with $|J|=K$,  the  
sub-matrix $L_{[K],J}$ is independent of $\tau$ while $R_{J,[K]}$  is a 
block-diagonal matrix whose blocks are  generalized  Vandermonde  
determinants (cf. \cite{g-atm-05}) from which we can extract powers of $\tau$.  
More precisely, we set $\shi{k}{i}:=k+(i-1)(m+1)$ and we write each index set 
$J$ of the Binet-Cauchy expansion as  $J_1\cup{}J_2\cup\cdots\cup{}J_{n}$, where 
$J_i\subseteq \left\{\shi{1}{i},\ldots,\shi{(m+1)}{i} \right\}$.
We then have:
\begin{equation}
  \det(R_{J,[K]}) = \tau^{a(J)} \prod_{i=1}^n \GVD(J_i),
  \label{a}
\end{equation}
where
\begin{equation*}
  a(J)=\beta_1\sum\limits_{\shi{j}{1}\in{}J_1}\mu_j
  +\beta_2\sum \limits_{\shi{j}{2}\in J_2}\mu_j
  +\cdots+\beta_n\sum\limits_{\shi{j}{n}\in J_n}\mu_j,
\end{equation*}
and $\GVD(J_i)$ is a positive generalized Vandermonde determinant%
\footnote{It is a well-known fact that, if the parameters in the
  columns of the generalized Vandermonde determinant are in
  strictly increasing order, then the Vandermonde determinant is itself 
  strictly positive (see \cite{g-atm-05} for a proof of this fact).},
independent of $\tau$, depending on the $y_{i,j}$'s with $j\in{}J_i$.
Thus, combining  \eqref{Binet} and \eqref{a}, we deduce that 
$\DDD{K}{\ptn[Y];\mu_1,\ldots,\mu_m}$ is a polynomial in $\tau.$ 
To prove our claim it suffices to find the subset $J$ for which
%$\tau^{a(J)}$
$a(J)$ is minimal and, for this  $J$,
evaluate the sign of the coefficient of $\tau^{a(J)}$.

Notice that a term $\det(L_{[K],J}) \det(R_{J,[K]})$ in the Cauchy-Binet  
expansion of $\DDD{K}{\ptn[Y];\mu_1,\mu_2,\allowbreak\ldots,\allowbreak\mu_m}$
vanishes in the following two cases:
\begin{itemize} 
\item[{\rm (i)}]
  $\shi{k}{i},\shi{k}{j}\in{}J$ for some $3\leq{}k\leq{}m+1$; in this case
  the $\shi{k}{i}$-th and $\shi{k}{j}$-th columns of $L_{[K],J}$
  are identical, and thus  $\det(L_{[K],J})=0$.
\item[{\rm (ii)}] 
  $|J_i|\neq{}k_i$ for at least some $1\leq{}i\leq{}n$; in this case 
  $R_{J,[K]}  $ is a block-diagonal square matrix  with non-square  
  non-zero blocks. The determinant of such a  matrix is always 	  
  zero.\footnote{To see this, consider the  Laplace expansion of the 
    matrix with respect to the  columns of its top-left block.}
\end{itemize}
Among all possible  index sets $J=J_1\cup\cdots\cup{}J_n$ for which   the
product $\det(L_{[K],J}){}\det(R_{J,[K]})$ does not vanish, we have to find the
one for which the exponent  $a(J)$ in \eqref{a} is the minimum possible.
To do this,  we combine condition {\rm(i)} above with the fact 
that  $\beta_1>\cdots>\beta_n$ and we deduce that the 
minimum exponent  $M(J)$ is attained if, for all $1\leq{}i\leq{}r:$
\begin{itemize}
\item
 $\shi{1}{i},\shi{2}{i}\in{}J_i$, and 
\item 
if $\shi{\kappa}{i}\in{}J_i$ and $\shi{\lambda}{i+1}\in{}J_{i+1}$ 
for some $\kappa,\lambda>2$,  then $\kappa<\lambda$.
\end{itemize}
Moreover, since from condition {\rm(ii)} we have $|J_i|=k_i$, 
we conclude that:
\begin{itemize}
	\item $J_1=J_1^\star:=
	\{\shi{1}{1},\shi{2}{1},\shi{3}{1},\ldots,\shi{k_1}{1}\}$
	 $=\{1,\ldots,k_1\}$, 
	\item 
	$J_2=J_2^\star:=\{\shi{1}{2},\shi{2}{2},\shi{(k_1+1)}{2},
	\ldots,\shi{(k_1+k_2-2)}{2}\}$,
	\item 
	$J_3=J_3^\star:=\{\shi{1}{3},\shi{2}{3},\shi{(k_1+k_2-1)}{3},
	\ldots,\shi{(k_1+k_2+k_3-4)}{3}\}$
	\item[] etc.
	\end{itemize}
For the above choice of $J^\star=J_1^\star\cup{}\cdots\cup{}J_n^\star$, the 
matrix $L_{[K],J}$ is:

\renewcommand{\ti}[1]{\rotatebox{0}{\scalebox{0.8}{$#1$}}}

\begin{equation*}
  L_{[K],J^\star} = 
  \scalebox{0.7}{$
  \left(\begin{array}{ccc:ccc:ccccccccc}
        1 &\0 &\0 &
       \0 &\0 &\0 &
       \0 &\0 &\0 & 
       \cd &
       \0 &\0 &\0 \\
		%%%%%%%%       
       \0 &\0 & \0&
        1 &\0 &\0 &
        \0&\0 &\0 &
        \cd &
       \0 &\0 &\0 \\
  		%%%%%%%%   
       \0 &\0 &\0 &
       \0 &\0 &\0 &
        1 &\0 &\0 &
       \cd &
       \0 &\0 &\0 \\
		%%%%%%%%%%       
		\vd &\vd &\vd &
		\vd &\vd &\vd &
		\vd &\vd &\vd &
		\dd &
		\vd &\vd &\vd \\
		%%%%%%%%% 
		\0 &\0 &\0 &
		\0 &\0 &\0 &
		\0 &\0 &\0 &
		\cd &
		1 &\0 &\0 \\
          %%%%%%%%%
\hdashline
       \0 & 1 &\0 &
       \0 &\0 &\0 &
       \0 & \0&\0 &
	   \cd &
       \0 &\0 &\0 \\
  		%%%%%%%%      
       \0 & \0&\0 &
       \0 & 1 &\0 &
       \0 &\0 &\0 &
       \cd &
       \0 &\0 &\0 \\
  		%%%%%%%%   
       \0 & \0&\0 &
       \0 & \0&\0 &
       \0 & 1 &\0 &
       \cd & \0 & \0 & \0\\
		%%%%%%%%%%       
		\vd &\vd &\vd &
		\vd &\vd &\vd &
		\vd &\vd &\vd &
		\dd &
		\vd &\vd &\vd \\
		%%%%%%%%% 
		\0 &\0 &\0 &
		\0 &\0 &\0 &
		\0 &\0 &\0 &
		\cd &
		\0 & 1 &\0 \\
		%%%%%%%%%
           \hdashline   
	%%%%%%%%%%%      
      \0  &\0 &{\rm{}I}_{k_1-2}&
       \0 &\0 &\0 &
       \0 &\0 &\0 &
       \cd &
       \0 & \0 & \0
      \\
       \0 &\0 & \0&
       \0 &\0 &{\rm{}I}_{k_2-2}&
       \0 &\0 &\0 &
       \cd &
		\0 & \0 & \0       
       \\
       \0 &\0 & \0&
       \0 &\0 & \0& 
       \0 &\0 & {\rm{}I}_{k_3-2} &
      \cd &
      \0 & \0 & \0
       \\
       %%%%%%%%%%       
		\vd &\vd &\vd &
		\vd &\vd &\vd &
		\vd &\vd &\vd &
		\dd &
		\vd &\vd &\vd \\
		%%%%%%%%% 
		\0 &\0 &\0 &
		\0 &\0 &\0 &
		\0 &\0 &\0 &
		\cd &
		\0 & \0 & {\rm{}I}_{k_n-2}
		%%%%%%%%%
       \end{array}\right)
		\begin{array}{c}
		\mbox{\small row }\\
		\mbox{\small index}\\
		\ti{1}\\
		\ti{2}\\
		\ti{3}\\
		\vdots\\
		\ti{n}\\
		\hdashline
		\ti{n+1}\\
		\ti{n+2}\\
		\ti{n+3}\\
		\vdots\\
		\ti{2n}\\
		\hdashline
		\vdots\\
		\\
		\\\\ \\ \\ \\ 
		\end{array}
$}
\end{equation*}
Thus, in order to find the sign of our original determinant, we have to
evaluate $\det(  L_{[K],J^\star})$. To do this, we perform the appropriate 
row and column swaps so that $L_{[K],J^\star} $ becomes the identity matrix.  
More precisely, 
\begin{itemize}
\item we perform $n-1+(n-2)+(n-3)+\cdots+1=\tfrac{n(n-1)}{2}$
  row swaps so that, for all $1\le{}i\le{}n$, row $n+i$ is shifted upwards 
  and paired with row $i$, to become a $2\times{}2$ identity matrix,
\item we then perform an even number of column swaps to shift each $I_{k_i-2}$
  to its ``proper'' position (i.e., so that we get an identity matrix
  along with the corresponding $2\times{}2$ block of the previous step).
\end{itemize}
We therefore conclude that the sign of the dominant term of the expansion of 
the determinant of the matrix $\De{K}{\ptn[Y];\mu_1,\ldots,\mu_m}$ 
as a polynomial in $\tau$, is $(-1)^{\tfrac{n(n-1)}{2}}$ and 
this completes our proof. 
\end{proof}

\end{document}